\newcommand{\TODO}[1]{\textcolor{red}{[TODO\@ifnotempty{#1}{: #1}]}}
\newcommand\num{\addtocounter{equation}{1}\tag{\theequation}}
\newtheorem{proposition}[theorem]{Proposition}
\title{\textbf{Near-optimal fitting of ellipsoids to random points
}}
\author[1]{Aaron Potechin\footnote{Email: \textit{potechin@uchicago.edu}. Supported in part by NSF grant CCF-2008920.}}
\affil[1]{University of Chicago}
\author[2]{Paxton Turner\footnote{Email: \textit{paxtonturner@g.harvard.edu}.}}
\author[2]{Prayaag Venkat\footnote{Email: \textit{pvenkat@g.harvard.edu}. Part of this work was done while visiting the Simons Institute for the Theory of Computing. Part of this work was done at Harvard, supported by an NSF Graduate Fellowship under grant DGE1745303 and Boaz Barak's Simons Investigator Fellowship, NSF grant DMS-2134157, DARPA grant W911NF2010021, and DOE grant DE-SC0022199, support from Oracle Labs and past support by the NSF, as well as the Packard and Sloan foundations and the BSF.}}
\affil[2]{Harvard}
\author[3]{Alexander S.\ Wein\footnote{Email: \textit{aswein@ucdavis.edu}. Part of this work was done while visiting the Simons Institute for the Theory of Computing, supported by a Simons-Berkeley Research Fellowship. Part of this work was done at Georgia Tech, supported by NSF grants CCF-2007443 and CCF-2106444.}}
\affil[3]{UC Davis}
\date{}
\begin{document}
\maketitle

\begin{abstract}
Given independent standard Gaussian points $v_1, \ldots, v_n$ in dimension $d$, for what values of $(n, d)$ does there exist with high probability an origin-symmetric ellipsoid that simultaneously passes through all of the points? This basic problem of fitting an ellipsoid to random points has connections to low-rank matrix decompositions, independent component analysis, and principal component analysis. Based on strong numerical evidence, Saunderson, Parrilo, and Willsky \cite{saunderson2011subspace, saunderson_parrilo_willsky13} conjectured that the ellipsoid fitting problem transitions from feasible to infeasible as the number of points $n$ increases, with a sharp threshold at $n \sim d^2/4$. We resolve this conjecture up to logarithmic factors by constructing a fitting ellipsoid for some $n = d^2/\mathrm{polylog}(d)$. Our proof demonstrates feasibility of the least squares construction of \cite{saunderson2011subspace, saunderson_parrilo_willsky13} using a convenient decomposition of a certain non-standard random matrix and a careful analysis of its Neumann expansion via the theory of graph matrices.

% Our proof demonstrates feasibility of the least squares construction of \cite{saunderson2011subspace, saunderson_parrilo_willsky13} using a careful analysis of the eigenvectors and eigenvalues of a certain non-standard random matrix.
\end{abstract}

\newpage

\tableofcontents 

\section{Introduction}
Let $v_1, \ldots, v_n \in \R^d$ be a collection of points. We say that this collection has the \emph{ellipsoid fitting property} if there exists a symmetric matrix $X \in \R^{d \times d}$ such that $X \succeq 0$ and $v_i^T X v_i = 1$ for all $i \in [n]$. That is, the eigenvectors and eigenvalues of the matrix $X$ describe the directions and reciprocals of the squared-lengths of the principal axes of an origin-symmetric ellipsoid that passes through all of $v_1, \ldots, v_n$. From the definition, it is clear that testing whether the ellipsoid fitting property holds for a given set of points reduces to solving a certain semidefinite program. It is known that if $v_1, \ldots, v_n$ satisfy the ellipsoid fitting property, then $\pm v_1, \ldots, \pm v_n$ lie on the boundary of their convex hull\footnote{A point $v_i$ lies on the boundary of the convex hull of $\pm v_1, \ldots, \pm v_n$ if there exists $x \in \R^d$ such that $\ip{x}{v_i} = 1$ and $|\ip{x}{v_j}| \leq 1$ for all $j \neq i$.} and that the converse holds when $n \le d+1$ (Corollary 3.6 of \cite{saunderson2012diagonal}).

In this paper, we study the ellipsoid fitting property for \emph{random} points. Specifically, let $v_1, \ldots, v_n \sim \calN(0, I_d)$ be i.i.d.\ standard Gaussian vectors in $\R^d$. Treating $n = n(d)$ as a function of $d$, we ask: what is the largest value of $n$ for which $n$ standard Gaussian vectors have the ellipsoid fitting property with high probability\footnote{Here and throughout, \emph{high probability} means probability tending to $1$ as $d \rightarrow \infty$.} as $d \to \infty$? Since the probability of the ellipsoid fitting property is non-increasing as a function of $n$, it is natural to ask if it exhibits a sharp phase transition from 1 to 0 asymptotically as $n$ increases. 

If $n \le d + 1$, then with probability 1, the points $\pm v_1, \ldots, \pm v_n$ have the aforementioned convex hull
property and hence satisfy the ellipsoid fitting property. However, it turns out that for random points, the ellipsoid fitting property actually holds for much larger values of $n$. Intriguing experimental results due to Saunderson et al.~\cite{saunderson2011subspace,saunderson2012diagonal,saunderson_parrilo_willsky13} suggest that the ellipsoid fitting property undergoes a sharp phase transition at the threshold $n \sim d^2/4$. Formally, we restate their conjecture:
\begin{conjecture}
\label{conj:ellipsoid-fitting}
Let $\epsilon > 0$ be a constant and $v_1, \ldots, v_n \sim \calN(0, I_d)$ be i.i.d.\ standard Gaussian vectors in $\R^d$. 
\begin{enumerate}
    \item If $n \le (1-\epsilon) \frac{d^2}{4}$, then $v_1, \ldots, v_n$ have the ellipsoid fitting property with probability $1-o(1)$.
    \item If $n \ge (1+\epsilon) \frac{d^2}{4}$, then $v_1, \ldots, v_n$ have the ellipsoid fitting property with probability $o(1)$. 
\end{enumerate}
\end{conjecture}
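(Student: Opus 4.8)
The plan is to establish the feasibility direction (Item~1), in the slightly relaxed form where the threshold $d^2/4$ is weakened to $d^2/\mathrm{polylog}(d)$, by exhibiting an explicit fitting ellipsoid. Following Saunderson, Parrilo and Willsky, we look for $X$ of the form $X = \frac1d I_d + M$ for a symmetric correction $M$. Since $\|v_i\|^2$ concentrates near $d$, the constraint $v_i^T X v_i = 1$ becomes $v_i^T M v_i = y_i := 1 - \|v_i\|^2/d$, a system of $n$ linear equations on $M$ with right-hand side of typical magnitude $|y_i| \approx \sqrt{2/d}$, while the PSD constraint $X \succeq 0$ is implied by $\|M\|_{\mathrm{op}} \le 1/d$. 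We take $M$ to be the minimum-Frobenius-norm solution of this linear system, which has the closed form $M = \sum_{i=1}^n z_i v_i v_i^T$ with $z = Q^{-1} y$ and $Q \in \R^{n \times n}$ the Gram matrix $Q_{ij} = \ip{v_i}{v_j}^2$ (invertible with high probability, as the analysis below verifies). Everything then reduces to the single estimate $\|M\|_{\mathrm{op}} \le 1/d$ with high probability when $n = d^2/\mathrm{polylog}(d)$.

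The first step is a convenient decomposition of $Q$. Writing $r_i = \|v_i\|$, $\hat v_i = v_i / r_i$, and $P_i = \hat v_i \hat v_i^T - \frac1d I_d$ for the traceless part of the normalized projection, one computes $Q = \Lambda\left(\tfrac1d \mathbf{1}\mathbf{1}^T + G\right)\Lambda$, where $\Lambda = \mathrm{diag}(r_1^2, \ldots, r_n^2)$ and $G_{ij} = \ip{P_i}{P_j}$ is the Gram matrix of the traceless parts, so that $G = (1 - \tfrac1d) I_n + E$ with $E$ a hollow noise matrix, $E_{ij} = \ip{\hat v_i}{\hat v_j}^2 - \frac1d$ for $i \ne j$. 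Substituting into $M$, the diagonal factors $\Lambda$ cancel and one obtains the clean expression
\[
  M = \sum_{i=1}^n u_i \hat v_i \hat v_i^T = \frac{\mathbf{1}^T u}{d} I_d + \sum_{i=1}^n u_i P_i, \qquad u := \left(\tfrac1d \mathbf{1}\mathbf{1}^T + G\right)^{-1} \tilde y, \quad \tilde y_i := y_i / r_i^2 .
\]
A useful structural point is that $\tilde y$ depends only on the norms $r_i$, hence is independent of all the directions $\hat v_i$; after conditioning on the norms it is a fixed vector with $\mathbf{1}^T \tilde y \approx 2n/d^2$ and $\|\tilde y\|_2 \approx \sqrt{2n}\, d^{-3/2}$. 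The scalar term is handled directly: by the Sherman--Morrison formula $\mathbf{1}^T u = \mathbf{1}^T G^{-1} \tilde y \,/\, (1 + \tfrac1d \mathbf{1}^T G^{-1} \mathbf{1})$, and since $G$ is well-conditioned (below) the numerator is $\approx \mathbf{1}^T \tilde y \approx 2n/d^2$ while the denominator is $\approx n/d$, so $|\mathbf{1}^T u| = O(1/d)$ and this term contributes only $O(1/d^2)$ to $\|M\|_{\mathrm{op}}$.

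The heart of the argument is the bound $\|\sum_i u_i P_i\|_{\mathrm{op}} \lesssim 1/d$. The matrix $E$, being hollow with entries of standard deviation $\Theta(1/d)$ (and only mild dependence), has $\|E\|_{\mathrm{op}} = o(1)$ whenever $n = o(d^2)$, which makes $G$ and $Q$ invertible and legitimizes the Neumann expansion $G^{-1} = \sum_{k \ge 0} (1 - \tfrac1d)^{-(k+1)} (-E)^k$. Inserting this into $u$ (and using Sherman--Morrison once more to strip off the rank-one $\frac1d \mathbf{1}\mathbf{1}^T$, whose contribution carries the tiny scalar $\mathbf{1}^T u / d$ already bounded) writes $\sum_i u_i P_i$ as a series whose $k$-th term is
\[
  \pm \left(1 - \tfrac1d\right)^{-(k+1)} \sum_{i_0, \ldots, i_k} E_{i_0 i_1} E_{i_1 i_2} \cdots E_{i_{k-1} i_k} \, \tilde y_{i_k} \, P_{i_0},
\]
a $d \times d$ random matrix that is a low-degree polynomial in the (conditionally i.i.d.\ uniform) directions $\hat v_i$. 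Grouping the walks $i_0 \to \cdots \to i_k$ by which indices coincide expresses each such term as a sum of graph matrices, and we bound each via the norm bounds from the theory of graph matrices; the traceless-ness of the $P_i$ and the small ``dangling'' factor $\tilde y_{i_k} = O(d^{-3/2})$ are exactly what make the resulting estimates summable over $k$ and over shapes. Carried out, this yields $\|M\|_{\mathrm{op}} = \tilde O(\sqrt n / d^2)$, which is $\le 1/d$ precisely when $n \le d^2/\mathrm{polylog}(d)$. A matrix-concentration heuristic, pretending that $u$ were independent of the $P_i$, already suggests this same bound; the genuine difficulty, and the reason the theory of graph matrices is needed, is that $u = Q^{-1} y$ in fact depends strongly on the very same vectors $P_i$ appearing in $\sum_i u_i P_i$, so no elementary random-matrix argument applies and one must control the full Neumann series of the non-standard matrix $Q$ term by term --- dealing with the normalization $\hat v_i = v_i / \|v_i\|$, the fluctuations inside $\tilde y$, and, most delicately, the combinatorial sum over all shapes without losing more than logarithmic factors. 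This last point is where the bulk of the technical work lies.

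Finally, Item~2 (infeasibility) is not the focus here and we only sketch it: once $n > \binom{d+1}{2}$ the linear system $v_i^T X v_i = 1$ is overdetermined and a.s.\ infeasible, already giving a threshold at $n \sim d^2/2$; pushing this down to the conjectured $d^2/4$ must exploit the PSD constraint, e.g.\ by exhibiting a suitable dual certificate or by strengthening the ``convex hull'' necessary condition, and we leave this sharpening aside since the feasibility direction is the new contribution.
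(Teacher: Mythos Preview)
First, a framing point: the statement is a \emph{conjecture}, and the paper does not prove it. What the paper proves is Theorem~\ref{thm:main}, the positive side weakened to $n \le d^2/\mathrm{polylog}(d)$; you correctly target the same weakened statement, and for Item~2 you (like the paper) do not go beyond the trivial $n>\binom{d+1}{2}$ bound.

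Your route to the weakened Item~1 is genuinely different from the paper's main argument. The paper analyzes the least-squares construction $X_{\text{LS}}=\calA^*(\calA\calA^*)^{-1}1_n$, subtracts a rank-two piece $W=w1_n^T+1_nw^T+d1_n1_n^T$ so that $B=\calA\calA^*-W$ is close to $(d^2+d)I_n$, inverts via Woodbury, and reduces to bounding $\calA^*(B^{-1}1_n)$ and $\calA^*(B^{-1}w)$. The second of these is the hard one: because $w_i=\|v_i\|^2-d$ is correlated with the same Gaussians appearing in $\calA^*$ and $B$, the Neumann-series/graph-matrix analysis needs the delicate ``critical edge'' bookkeeping (the $w_{\text{actual}}$ scheme and Lemma~\ref{lem:pay-for-extra-square}). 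By contrast, you analyze the identity-perturbation construction $X_{\text{IP}}$, factor out the norms $r_i$ so that $\tilde y_i = 1/r_i^2 - 1/d$ depends only on the $r_i$, and condition on the norms. After conditioning, $\tilde y$ is a \emph{fixed} vector and the remaining randomness sits in the i.i.d.\ directions $\hat v_i$; the analogue of the troublesome $\calA^*(B^{-1}w)$ becomes $\sum_i (G^{-1}\tilde y)_i P_i$ with $\tilde y$ deterministic, which is structurally the easy case (the paper's $\calA^*(B^{-1}1_n)$, handled by the simpler local weighting in Section~6.6). This is precisely the norm/direction decoupling that the paper attributes to Kane--Diakonikolas; it buys you a real simplification, eliminating the hardest part of the paper's proof. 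The paper's approach, in turn, buys robustness: it never uses the Gaussian-specific independence of norm and direction, so it extends to other product distributions, whereas yours does not.

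One technical point you gloss over: the graph-matrix norm bounds you invoke (Theorem~\ref{thm:graph-matrix-norm-bound}) are stated for polynomials in i.i.d.\ Gaussian \emph{entries} via the Hermite basis, but your $E_{ij}=\langle \hat v_i,\hat v_j\rangle^2-1/d$ and $P_i=\hat v_i\hat v_i^T-\tfrac1dI$ are functions of uniform-on-sphere vectors, whose coordinates are not independent. You either need a spherical analogue of the graph-matrix machinery, or you must rewrite everything back in terms of the Gaussians $v_i$ with the (concentrated) $r_i$ factors carried along, and check that the resulting shapes still fall under Theorem~\ref{thm:graph-matrix-norm-bound}. This is fixable, but it is not free, and it is exactly the place where ``carried out, this yields $\|M\|_{\mathrm{op}}=\tilde O(\sqrt n/d^2)$'' hides nontrivial work.
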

By genericity of the random linear constraints and the fact that any $d \times d$ PSD matrix (in fact, symmetric matrix) is described by $d(d+1)/2$ parameters, it can be verified that the system of random linear constraints alone (without the PSD constraint) becomes infeasible with probability 1 if and only if $n > d(d+1)/2$ (see Lemma~\ref{lemma:invertible}). Fascinatingly, Conjecture~\ref{conj:ellipsoid-fitting} posits the existence of a range of values $n \in \left(\frac{d^2}{4}, \frac{d(d+1)}{2} \right)$ for which with high probability, there exists a \emph{symmetric} matrix satisfying the linear constraints, but no such \emph{positive semidefinite} matrix exists. Saunderson et al.~\cite{saunderson2011subspace,saunderson_parrilo_willsky13} made partial progress towards resolving the positive part of this conjecture: they showed that for any $\epsilon > 0$, when $n < d^{\,6/5- \epsilon}$, the ellipsoid fitting property holds with high probability. A special case of Theorem 1.4 of Ghosh, Jeronimo, Jones, Potechin, and Rajendran~\cite{ghosh2020sum}, developed in the context of certifying upper bounds on the Sherrington--Kirkpatrick Hamiltonian, guarantees that for any $\epsilon > 0$, when $n < d^{3/2 - \epsilon}$, there exists with high probability a fitting ellipsoid $X$ whose diagonal entries are all equal to $1/d$.

The ellipsoid fitting problem, a basic question in high-dimensional probability and convex geometry, is further motivated by connections to other problems in machine learning and theoretical computer science. First, Conjecture~\ref{conj:ellipsoid-fitting} was first formulated by Saunderson et al.~\cite{saunderson2011subspace,saunderson2012diagonal,saunderson_parrilo_willsky13} in the context of decomposing an observed $n \times n$ data matrix as the sum of a diagonal matrix and a random rank-$r$ matrix.  They proposed a convex-programming heuristic, called ``Minimum-Trace Factor Analysis (MTFA)'' for solving this problem and showed it succeeds with high probability if the ellipsoid fitting property for $n$ standard Gaussian vectors in $d = n-r$ dimensions holds with high probability.

Second, Podosinnikova et al.~\cite{podosinnikova2019overcomplete} identified a close connection between the ellipsoid fitting problem and the overcomplete independent component analysis (ICA) problem, in which the goal is to recover a mixing component of the model when the number of latent sources $n$ exceeds the dimension $d$ of the observations. They show that the ability of an SDP-based algorithm to recover a mixing component is related to the feasibility of a variant of the ellipsoid fitting problem in which the norms of the random points fluctuate with higher variance than in our model. They give experimental evidence that the SDP succeeds when $n < d^2/4$, the same phase transition behavior described in Conjecture~\ref{conj:ellipsoid-fitting}, and show rigorously that it succeeds for some $n = \Omega(d \log d)$.

Third, the ellipsoid fitting property for random points is directly related to the ability of a canonical SDP relaxation to certify lower bounds on the discrepancy of nearly-square random matrices. The discrepancy of random matrices is a topic of recent interest, with connections to controlled experiments~\cite{turner2020balancing}, the Ising Perceptron model from statistical physics~\cite{AubPerZde19}, and the negatively-spiked Wishart model~\cite{bandeira2019computational,venkat2022efficient}. A result implicit in the work of Saunderson, Chandrasekaran, Parrilo, and Willsky~\cite{saunderson2012diagonal} states that if the ellipsoid fitting property for $n$ Gaussian points in dimension $d$ holds 
with high probability, then the SDP fails to certify a non-trivial lower bound on the discrepancy of a $(n-d) \times n$ matrix with i.i.d.\ standard Gaussian entries (see Appendix~\ref{sec:disc} for further discussion). In addition, this provides further evidence of the algorithmic phase transition for the detection problem in the negatively-spiked Wishart model that was previously predicted by the low-degree likelihood ratio method~\cite{bandeira2019computational}. 

Finally, a current active area of research in theoretical computer science aims to give rigorous evidence for information-computation gaps in average-case problems by characterizing the performance of powerful classes of algorithms, such as the Sum-of-Squares (SoS) SDP hierarchy. Often, the most challenging technical results in this area involve proving lower bounds against these SDP-based algorithms. Moreover, there are relatively few examples for which predicted phase transition behavior has been sharply characterized (see e.g.~\cite{barak2019nearly,ghosh2020sum,hopkins2017power,hsieh2022algorithmic,jones2022sum,kothari2021stress,mohanty2020lifting,schoenebeck2008linear}), all proven using the same technique of ``pseudo-calibration''. We remark that proving the positive side of Conjecture~\ref{conj:ellipsoid-fitting} amounts to proving the feasibility of an SDP with random linear constraints. This also arises
in average-case SoS lower bounds, although the linear constraints for average-case SoS lower bounds are generally very intricate.

The main contribution of our work is to resolve the positive side of Conjecture \ref{conj:ellipsoid-fitting} up to logarithmic factors. (Recall that the negative side of Conjecture~\ref{conj:ellipsoid-fitting} has already been resolved up to a factor of 2.)

\begin{theorem}
\label{thm:main}
There is a universal constant $C > 0$ so that if $n \le  d^2/ \log^C(d)$, then $v_1, \ldots, v_n \sim \calN(0, I_d)$ have the ellipsoid fitting property with high probability.
\end{theorem}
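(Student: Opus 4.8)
The plan is to establish feasibility of the least-squares construction of Saunderson et al.~\cite{saunderson2011subspace,saunderson_parrilo_willsky13}, which outputs the symmetric matrix closest in Frobenius norm to $\tfrac1d I_d$ among all matrices satisfying the linear constraints. Writing $\Phi:\mathrm{Sym}(d)\to\R^n$ for the map $\Phi(M)_i = v_i^T M v_i = \ip{v_iv_i^T}{M}$, so that $\Phi^*(y)=\sum_i y_i v_iv_i^T$ and $G:=\Phi\Phi^*$ is the $n\times n$ matrix with $G_{ij}=\ip{v_i}{v_j}^2$, this candidate is
\[
X \;=\; \tfrac1d I_d + \Delta, \qquad \Delta := \Phi^* G^{-1} b, \qquad b_i := 1 - \tfrac1d\|v_i\|^2 ,
\]
because $\Phi(X) = \big(\tfrac1d\|v_i\|^2\big)_i + \Phi\Phi^* G^{-1} b = \big(\tfrac1d\|v_i\|^2\big)_i + b = \mathbf 1$. (One first checks $G$ is invertible with high probability, which is elementary in our regime.) Since $X$ automatically satisfies the equality constraints, the whole theorem reduces to showing $\lambda_{\min}(\Delta)\ge -1/d$ with high probability; it suffices to prove $\|\Delta\|_{\mathrm{op}}\le 1/d$, and we will in fact get $\|\Delta\|_{\mathrm{op}}\le\tfrac1{2d}$, so that $X\succeq\tfrac1{2d}I_d\succ 0$.

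The first main step is a convenient decomposition of the non-standard random matrix $G$. The naive splitting $G=\mathbb E[G]+(G-\mathbb E[G])$ is useless: writing $\ip{v_i}{v_j}^2=\|v_i\|^2\|v_j\|^2\ip{u_i}{u_j}^2$ with $u_i:=v_i/\|v_i\|$, the centered matrix $G-\mathbb E[G]$ inherits the $\Theta(\sqrt d)$ fluctuations of the norms $\|v_i\|^2$ and has operator norm $\Omega(d^{5/2}/L)\gg d^2$, where $L:=d^2/n\ge\log^C d$. Instead we take $G = B + W$ with
\[
B \;:=\; d(d-1)\,I_n + \tfrac1d\,\rho\rho^T, \qquad \rho_i:=\|v_i\|^2,
\]
a positive-definite diagonal-plus-rank-one matrix, invertible in closed form by Sherman--Morrison (its entries are degree-four polynomials in the $v_i$ divided by a scalar that concentrates and is handled on a high-probability event). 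The remainder has off-diagonal entries $W_{ij}=\|v_i\|^2\|v_j\|^2\big(\ip{u_i}{u_j}^2-\tfrac1d\big)$, and $\big[\ip{u_i}{u_j}^2-\tfrac1d\big]_{ij}$ is precisely the Gram matrix of the near-unit vectors $u_i^{\otimes 2}-\tfrac1d\mathrm{vec}(I_d)\in\R^{d^2}$; since $n\ll d^2$, this Gram matrix lies within $O(1/\sqrt L)$ of the identity in operator norm, whence $\|W\|_{\mathrm{op}}\le d^2\,\mathrm{polylog}(d)/\sqrt L$ and $\|B^{-1}W\|_{\mathrm{op}}\le \mathrm{polylog}(d)/\sqrt L<\tfrac12$ once $C$ is large. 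The Neumann expansion $G^{-1}=\sum_{k\ge0}(-B^{-1}W)^k B^{-1}$ then converges with geometric ratio, giving $\Delta=\sum_{k\ge0}(-1)^k\,\Phi^*(B^{-1}W)^k B^{-1}b$.

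The second main step is to bound each term of this expansion using the theory of graph matrices. Expanding $W_{ij}$, $b_i$, $B^{-1}$, and the factor $v_{i_0}v_{i_0}^T$ contributed by $\Phi^*$ into monomials in the Gaussian coordinates $\{v_{i,p}\}$, the $k$-th term becomes a linear combination, over combinatorial shapes $\alpha$ recording which coordinates coincide along the index-walk $i_0,i_1,\dots,i_k$, of graph matrices $M_\alpha\in\R^{d\times d}$, plus lower-order corrections from index coincidences and from the scalar normalizations in $B^{-1}$. The standard operator-norm bounds for graph matrices (which cost $\mathrm{polylog}(d)$ per vertex) then show that the $k$-th term, summed over all shapes, has operator norm at most $\tfrac1d\cdot(\mathrm{polylog}\,d)^{O(k)}(2/\sqrt L)^k$; taking $C$ large enough that $(\mathrm{polylog}\,d)^{O(1)}\cdot 2/\sqrt L\le\tfrac12$ makes the series sum to $\|\Delta\|_{\mathrm{op}}\le\tfrac1{2d}$ with high probability. (The leading terms $k\le 2$ can alternatively be controlled by direct matrix-concentration estimates.)

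I expect the main obstacle to be the graph-matrix bookkeeping in this last step: correctly enumerating the shapes produced by the nested expansion of $(B^{-1}W)^k$ --- especially the interactions among the rank-one $\rho\rho^T$ part of $B^{-1}$, the $u_i^{\otimes 2}$-type structure inside each $W$, and the $v_{i_0}v_{i_0}^T$ factor from $\Phi^*$ --- isolating the dominant (essentially path-shaped) contributions, and controlling the combinatorial explosion in the number of shapes so that it is beaten by the geometric decay, all while tracking the $\mathrm{polylog}$ factors precisely enough to fix the constant $C$. A secondary difficulty is establishing $\|W\|_{\mathrm{op}}\le d^2\,\mathrm{polylog}(d)/\sqrt L$ with the correct power of $\log d$ --- itself essentially a trace-moment / graph-matrix estimate, complicated by the degree-four, heavy-tailed entries of $W$ --- and verifying throughout that the spike direction (the vector $\rho$, nearly parallel to $\mathbf 1$), on which $B^{-1}$ is tiny and to which $b$ is nearly orthogonal, contributes only negligibly to $\Delta$.
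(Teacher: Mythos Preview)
Your proposal is correct and very close to the paper's proof; the differences are organizational rather than substantive. First, what you describe (the Frobenius-closest feasible matrix to $\tfrac1d I_d$) is what the paper calls the \emph{identity-perturbation} construction $X_{\text{IP}}$, not the least-squares construction $X_{\text{LS}}=\calA^\dagger(1_n)$; the paper's main argument is for $X_{\text{LS}}$ but it also handles $X_{\text{IP}}$ in an appendix via the same lemmas. Second, your decomposition $G=B'+W'$ with $B'=d(d-1)I+\tfrac1d\rho\rho^T$ is essentially the paper's trick in disguise: since $\rho=d\cdot 1_n+w$ with $w_i=\|v_i\|^2-d$, your rank-one piece $\tfrac1d\rho\rho^T$ absorbs exactly the $w1_n^T+1_nw^T+d1_n1_n^T$ block that the paper peels off as a rank-two $W$ and inverts via Woodbury; indeed your $W'$ equals the paper's $B-(d^2+d)I$ up to $O(n)$-norm corrections, so your bound $\|W'\|_{\mathrm{op}}=\tilde O(d\sqrt n)$ is their concentration lemma for $B$, and your Neumann terms $\Phi^*(B'^{-1}W')^kB'^{-1}b$ correspond term-for-term to their $\calA^*(\Delta^k w)$. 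The paper's Woodbury-then-Neumann organization isolates the scalar prefactors a bit more cleanly, while your Sherman--Morrison-inside-$B'^{-1}$ route is slightly more direct; neither simplifies the hard part, which---as you correctly anticipate---is the graph-matrix bookkeeping for the low-order Neumann terms, where naive operator-norm bounds lose a $\sqrt d$ factor and one must exploit the interaction between $\Phi^*$ and the $w$-like vector $b$.
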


As a first corollary of Theorem~\ref{thm:main}, we conclude that MTFA in this setting succeeds provided $r \leq n - \sqrt{n} \polylog(n)$, improving on the bound $r \leq n - \omega(n^{2/3})$ from a combination of the results of Saunderson et al.~\cite{saunderson2011subspace,saunderson_parrilo_willsky13} and Ghosh et al.~\cite{ghosh2020sum}. Second, Theorem~\ref{thm:main} implies the following ``finite-size'' phase transition result: a canonical SDP cannot distinguish between an $m \times n$ matrix with i.i.d.\ standard Gaussian entries and one with a planted Boolean vector in its in kernel when $m \leq n - \sqrt{n} \polylog(n)$ (see Appendix~\ref{sec:disc}), again improving on the bound $m \leq n - \omega(n^{2/3})$ that follows from \cite{ghosh2020sum}.

\paragraph{Experimental results}

It is natural to wonder whether our proof of Theorem~\ref{thm:main} can be sharpened to make further progress on Conjecture~\ref{conj:ellipsoid-fitting}. Our proof is based on a least-squares construction
% , which we denote by $X_{\text{LS}}$ (see Section~\ref{sec:technical-overview}), 
that was first studied in \cite{saunderson2011subspace,saunderson_parrilo_willsky13} (see Section \ref{sec:technical-overview}). Although the least-squares construction always satisfies the linear constraints, in Section~\ref{sec:future-work} we corroborate experimental evidence of Saunderson et al.\ suggesting that it fails to be positive semidefinite strictly below the conjectured $n \sim d^2/4$ threshold. We also introduce a new method called the ``identity-perturbation" construction that also always satisfies the linear constraints and appears to improve on the least-squares construction in experiments, while having similar time complexity. Our simulations in Section \ref{sec:future-work} provide numerical evidence that the positive semi-definiteness of the least-squares and identity-perturbation constructions undergo sharp phase transitions at roughly $n \approx d^2/17$ and $n \approx d^2/10$, respectively. 
We did not run eperiments on the pseudo-calibration construction of~\cite{ghosh2020sum} because this construction has the drawback that it involves logarithmic degree polnomials of the input and is thus very hard to compute.

These results suggest that a full resolution of Conjecture~\ref{conj:ellipsoid-fitting} requires either sharply analyzing the pseudocalibration construction of~\cite{ghosh2020sum} (if it achieves the threshold $n \approx \frac{d^2}{4}$, which is unknown), inventing a new construction and analyzing it, or reasoning indirectly about the ellipsoid fitting property without considering any explicit candidate.

\paragraph{Related work}
We now discuss two closely related works that study a simpler variant of the ellipsoid fitting problem. In this variant, the constraints $v_i^T X v_i = 1$ in the definition of the ellipsoid fitting property are replaced by $\ip{X}{G_i} = 1$, where $G_1, \ldots, G_n \in \R^{d \times d}$ have i.i.d.\ standard Gaussian entries. Amelunxen, Lotz, McCoy, and Tropp~\cite{amelunxen2014living} give a general framework for characterizing phase transition behavior of convex programs with random constraints. Interestingly, their framework shows that the conclusion of Conjecture~\ref{conj:ellipsoid-fitting} is true for the simpler variant. Moreover, they explain that the occurrence of the phase transition at $n \sim d^2/4$ arises from the fact that $d(d+1)/4$ is the ``statistical dimension'' of the cone of $d \times d$ PSD matrices. The known proofs of these results are either based on conic geometry or Gaussian process techniques that crucially rely on the fact that the entries of the constraint matrices are i.i.d.\ and Gaussian. Despite the strikingly similar phase transition behavior for the two models of random constraints, it appears unlikely that these techniques can be used to resolve Conjecture~\ref{conj:ellipsoid-fitting}. In this simpler i.i.d.\ setting of Amelunxen et al., Hsieh and Kothari~\cite{hsieh2022algorithmic} show that when $n \leq d^2/ \polylog(d)$, the ellipsoid fitting SDP (which corresponds to the degree-2 SoS SDP) equipped with some additional symmetry constraints (corresponding to the degree-4 SoS SDP) is still feasible with high probability. 

Ghosh et al.~\cite{ghosh2020sum} consider the original setting in which the constraint matrices are of the form $v_i v_i^T$ and also impose the constraint that the diagonal entries of $X$ satisfy $X_{ii}=1/d$ for all $i \in [d]$. They show that for any $\epsilon > 0$ this SDP, even when augmented with more constraints corresponding to higher degree SoS, remains feasible with high probability for some $n = \Omega(d^{3/2 - \epsilon})$ and conjecture that this should even hold for some $n = \Omega(d^{2-\epsilon})$. The proofs of the results of Ghosh et al.~\cite{ghosh2020sum} and Hsieh and Kothari~\cite{hsieh2022algorithmic} are based on the pseudo-calibration technique. Due to technical complications that arise when analyzing higher degree SoS pseudocalibration constructions, \cite{ghosh2020sum} can only prove feasibility for some $n = \Omega(d^{3/2 - \epsilon})$. However, as we detail in Appendix~\ref{sec:pseudocalibration}, these technical complications do not arise when specialized to the degree-2 case, which gives an alternative proof of Theorem~\ref{thm:main}. 

On a technical level, our proof heavily relies on the recently introduced machinery of \textit{graph matrices} \cite{ahn2016graph}, a powerful tool for obtaining norm bounds of structured random matrices using a certain graphical calculus (see Section \ref{sec:graph_matrices}). Ours is among the first works to apply graph matrices outside of the Sum-of-Squares lower bound literature, and we expect graph matrices to be useful for other probabilistic applications beyond average-case complexity theory. 

Shortly after a revised version of this paper was posted on the arXiv, independent work of Kane and Diakonikolas~\cite{kane2022nearly} analyzed the identity-perturbation construction and showed it improves on the logarithmic factor in Theorem~\ref{thm:main}. Their short proof crucially uses the fact that the norms and directions of a standard Gaussian are independent. Our proof is more technically involved but can be adapted to handle non-Gaussian distributions whose coordinates are independent and sufficiently well-concentrated. Our work analyzes the least-squares construction, and its analysis also carries over easily to the analysis of the identity perturbation construction, as we demonstrate in Section \ref{appendix:identity_perturbation}.
% Moreover, we analyze the least-squares construction, though our techniques can also be applied to analyze the identity-perturbation construction as well.

%The small-ball method, a technique for lower-bounding the minimum of a non-negative empirical process, was originally introduced by Koltchinskii and Mendelson~\cite{koltchinskii2015bounding} and later extended by Tropp~\cite{tropp2015convex} to analyze convex programs with general random constraints. Observe that showing the ellipsoid fitting property fails to hold is equivalent to showing that the random variable $\gamma = \inf_{X \succeq 0} \sum_{i=1}^n (v_i^T X v_i - 1)^2$ is positive. While we believe it is possible to use the small-ball method to show that there is some constant $c > 0$ such that the ellipsoid fitting property fails with high probability for $n > cd^2$, we do not believe this method is sharp enough to yield a small enough constant $c < 1/2$ that beats the ``trivial'' bound of $c= 1/2$.

\section{Technical overview}
\label{sec:technical-overview}
We now give an overview of the proof of Theorem~\ref{thm:main}. To begin, we introduce some convenient notation. Define the linear operator $\calA : \R^{d \times d} \rightarrow \R^n$ by $\calA(X) := (v_1^T X v_1, \ldots, v_n^T X v_n)^T$ and let $\calA^\dagger$ be its pseudoinverse. The fitting ellipsoid in Theorem~\ref{thm:main} is obtained  via the \textit{least-squares construction}:
\begin{align}
\label{eqn:least_squares}
    X_{\text{LS}} = \calA^{\dagger} (1_n),
\end{align} 
which is the minimum Frobenius norm solution to the linear constraints. This construction was first studied by Saunderson et al. \cite{saunderson2011subspace,saunderson_parrilo_willsky13}. Our analysis builds on their work and also introduces additional probabilistic and linear-algebraic ideas, such as the application of graph matrices, leading to nearly-sharp bounds for the ellipsoid fitting problem. 

To prove Theorem~\ref{thm:main}, it suffices to verify that $\calA(X_{\text{LS}}) = 1_n$ and $X_{\text{LS}} \succeq 0$ with high probability, for appropriate values of $n$ and $d$. The first condition can be easily verified: with probability 1, the $n \times n$ matrix $\calA \calA^*$ is invertible (Lemma~\ref{lemma:invertible}), so we may write $\calA^\dagger = \calA^* (\calA \calA^*)^{-1}$ and compute that indeed $\calA(X_{\text{LS}}) = 1_n$, where the adjoint $\calA^* : \R^{n} \rightarrow \R^{d \times d}$ satisfies $\calA^* (c) = \sum_{i=1}^n c_i v_i v_i^T$. 

% We now provide an overview of the proof of Theorem~\ref{thm:main}. 

% Our approach builds on the analysis of Saunderson, Parrilo, and Willsky~\cite{saunderson2011subspace,saunderson_parrilo_willsky13} and uses additional

% and introduces some additional

% generally follows the approach set forward by Saunderson, Parrilo, and Willsky~\cite{saunderson2011subspace,saunderson_parrilo_willsky13}. 

% We now provide an overview of proof of Theorem~\ref{thm:main}, which generally follows the approach set forward by Saunderson, Parrilo, and Willsky~\cite{saunderson2011subspace,saunderson_parrilo_willsky13}. 

% We begin by defining the least-squares construction $X_{\text{LS}} = \calA^{\dagger} (1_n)$. 

The challenging part of the proof is to verify that $X_{\text{LS}} \succeq 0$ with high probability. We now give some intuition for why this condition holds. First, observe that if we take $X_0 = \frac{1}{d} I_d$, then a simple application of a tail bound for the $\chi^2$ distribution and a union bound over the $n$ constraints yields $\norm{\calA (X_0) - 1_n}_\infty = O(\sqrt{\log (n)/d})$ with high probability. In words, $X_0$ defines an ellipsoid that approximately fits the points $v_1, \ldots, v_n$ and whose eigenvalues are well-separated from 0. 

Second, there is a sense in which $X_{\text{LS}}$ is (approximately) a projection of $X_0$ onto the affine subspace $\{X \in \R^{d \times d}: \calA(X) = 1_n\}$. Recall that $X_{\text{LS}}$ can be expressed as the solution of the following optimization problem:
\[
\min_{X \in \R^{d \times d},\, \calA(X) = 1_n} \norm{X}_F^2.
\]
In fact, since the above minimization is over $X$ that satisfy $\calA(X) = 1_n$, $X_{\text{LS}}$ is also the solution of
\[
\min_{X \in \R^{d \times d},\, \calA(X) = 1_n} \norm{X - \frac{1}{dn} \sum_{i=1}^n v_i v_i^T}_F^2.
\]
For $n \gg d$, it is the case that $\frac{1}{dn} \sum_{i=1}^n v_i v_i^T \approx X_0$ with high probability. Thus, we interpret $X_{\text{LS}}$ as an (approximate) projection of $X_0$ onto the affine subspace $\{X \in \R^{d \times d}: \calA(X) = 1_n\}$.

We now provide an outline of the proof that $X_{\text{LS}} \succeq 0$ and describe some of its challenges. A basic approach is to center around the deterministic matrix $M = (d^2 + d)I_n + d 1_n 1_n^T$. A straightforward rearrangement yields
\begin{equation*}
X_{\text{LS}} = \calA^* (\calA \calA^*)^{-1} 1_n = \calA^* (I_n - M^{-1} \tilde \Delta)^{-1} M^{-1}1_n,
\end{equation*}
where $\tilde \Delta = M - \calA \calA^*$. To invert the matrix $I_n - M^{-1} \tilde \Delta$, we may expand it as a Neumann series
\begin{equation}
\label{eqn:Neumann}
(I_n - M^{-1} \tilde \Delta)^{-1} = I_n +  M^{-1} \tilde \Delta + \sum_{i=2}^\infty (M^{-1}\tilde \Delta)^i,
\end{equation}
that converges if $\norm{M^{-1} \tilde \Delta}_{op} < 1$. Observe that if the vector
\[
u  = (\calA \calA^*)^{-1} 1_n= (I_n - M^{-1} \tilde \Delta)^{-1} M^{-1}1_n
\]
has non-negative coordinates with high probability, then we may immediately conclude that $X_{\text{LS}} \succeq 0$ since $\calA^* (u) = \sum_i u_i v_i v_i^T$ is automatically PSD. 

While it is possible to show that when $n \ll d^{3/2}$, the vector $u$ indeed has positive coordinates, this argument suffers from a significant problem. It turns out that there is a phase transition at $n \asymp d^{3/2}$, beyond which the vector $u$ switches from having non-negative coordinates to having both positive and negative ones. One reason for this is that the approximation $M^{-1} \tilde\Delta \approx I_n$ breaks down at the same $n \asymp d^{3/2}$ barrier. A previous version of this paper contained an error related to this non-negativity phenomenon. In Section~\ref{sec:prev-approach}, we describe how this error can be fixed. However, we now present a cleaner approach that avoids this issue altogether.  %We remark that the previously best known guarantee, due to Ghosh et al.~\cite{ghosh2020sum}, proves feasibility of the ellipsoid fitting SDP precisely when $n = o(d^{3/2})$. 

To handle the positive and negative coordinates of $u$ requires a different approach that more precisely takes into account its correlations with $\calA^*$. We achieve this by first removing a rank-two component from $\calA \calA^*$ that prevents it from being close to the identity when $n \gg d^{3/2}$. Define
\begin{align*}
	B = \calA \calA^* - (w 1_n^T + 1_n w^T + d 1_n 1_n^T)
 % +
	% &= \big( \mathcal{A} \mathcal{A}^*  -
	% (w 1_n^T + 1_n w^T + d 1_n 1_n^T) 
	% - \alpha I_n \big)
	% + (w 1_n^T + 1_n w^T + 1_n 1_n^T) 
	% + \alpha I_n 
	% \\& = \Gamma + W + \alpha I_n
	% \\&= B+W.
\end{align*}
where $w \in \R^n$ is defined by $w_i = \| v_i \|^2 - d$.  As we show in Lemma \ref{lemma:B-concentrates}, $B$ is close to $(d^2 + d)I_n$ for all $n \leq d^2/\log^C(d)$. For this reason, $B$ is well-behaved and amenable to Neumann expansion arguments. 

Next, since $\calA \calA^*$ is the sum of $B$ and a low rank matrix, we obtain a convenient expression for $(\calA \calA^*)^{-1}$ using the Woodbury matrix formula \cite{woodbury1950inverting}, which results in the following useful decomposition of the vector $u$ (see Lemma \ref{lemma:woodbury}): 
\[
u = (\calA \calA^*)^{-1} 1_n
= \rho \cdot \big(  \lambda_1 \,  B^{-1} 1_n  + \lambda_2 \, B^{-1} w \big) ,
\]
where $\rho, \lambda_1, \lambda_2$ are certain scalar random variables. We show that $\rho > 0$, $\lambda_1 = 1 + 1_n^T B^{-1} w \sim 1$, and $\lambda_2 = -1_n^T B^{-1} 1_n = o(n/d^2) \ll \lambda_1$ with high probability. The proof then reduces to showing that
\begin{align}
\label{eqn:pos_part}
    \calA^* (B^{-1} 1_n)  &\succeq (1 - o(1)) \frac{n}{d^2} I_d \\
    \| \calA^* (B^{-1} w) \|_{op} &= o(1). 
    \label{eqn:mixed_part}
\end{align}
Intuitively, \eqref{eqn:pos_part} has non-negative coordinates since $B$ is close to a multiple of the identity for the entire range $n \leq d^2/\log^C(d)$ and we have $B^{-1} 1_n \approx 1_n$. With the same intuition, we expect that $B^{-1} w$ behaves like a multiple of $w$, which has i.i.d.\ centered coordinates. If $w$ were independent of $\calA^*$, significant cancellation would happen among the rank one vectors $\{ v_i v_i^T\}_{i = 1}^n$ , yielding the bound \eqref{eqn:mixed_part} (by matrix Bernstein  or its variants, see e.g.~\cite{tropp2012user}). 

However, making this argument precise to take into account interactions between $\calA^*$ and $B^{-1}w$ is a considerable technical challenge. To handle this, we expand $B^{-1}$ as a Neumann series, similarly to \eqref{eqn:Neumann}. We then analyze terms of this series using the framework of graph matrices \cite{ahn2016graph}. Graph matrices provide a powerful tool for controlling the operator norm of certain matrices whose entries can be expressed as low-degree polynomials in i.i.d.\ random variables. Graph matrices serve to transform the analytic problem of controlling the operator norm of a random matrix $X$ into a more tractable combinatorial one that involves studying certain weights of graphs associated to $X$. This part of the argument forms the bulk of our analysis and is detailed in Section \ref{sec:graph_matrices}.

\section{Future work}
\label{sec:future-work}
\paragraph{Towards the positive side of Conjecture~\ref{conj:ellipsoid-fitting}}
Towards understanding whether an explicit construction can be used resolve the positive side of Conjecture~\ref{conj:ellipsoid-fitting}, we now discuss the following ``identity perturbation'' construction that is inspired by previous work~\cite{saunderson_parrilo_willsky13}:
\[
X_{\text{IP}} = \frac{1}{d}I_d + \calA^*(\alpha) = \frac{1}{d}I_d + \sum_{i=1}^n \alpha_i v_i v_i^T,
\]
where $\alpha \in \R^n$ is defined to be the unique solution of $\calA (\calA^*(\alpha)) = 1_n - \calA(\frac{1}{d}I_d)$.\footnote{A similar construction is suggested in~\cite{saunderson_parrilo_willsky13}, although no specific initialization is given.} By definition of $\alpha$, it always holds that $\calA (X_{\text{IP}}) = 1_n$. In words, $X_{\text{IP}}$ is obtained from the approximately fitting ellipsoid $\frac{1}{d}I_d$ by adding multiples of the constraint matrices $\{v_i v_i^T\}_{i=1}^n$ so that it exactly satisfies the linear constraints. 

Our experimental results are depicted in Figure~\ref{fig:plots}. In summary, there appear to be two constants $c_{\text{LS}} \approx 1/17$ and $c_{\text{IP}} \approx 1/10$ such that the probabilities of PSD-ness of $X_{\text{LS}}$ and $X_{\text{IP}}$ undergo phase transitions from 1 to 0 asymptotically at $n = c_{\text{LS}} d^2$ and $n = c_{\text{IP}} d^2$, respectively. We emphasize that $c_{\text{LS}} < c_{\text{IP}} < 1/4$, meaning that there actually appear to be \emph{three} distinct phase transitions related to the ellipsoid fitting problem. These results suggest that it is unlikely that the positive side of Conjecture~\ref{conj:ellipsoid-fitting} can be resolved by a sharper analysis of either of these two natural constructions.

In this work, we show that both the least-squares  and identity perturbation constructions are positive semidefinite provided that $n \leq d^2/\text{polylog}(d)$. However, we still believe it is an interesting problem to sharply characterize the behavior of $X_{\text{LS}}$ and $X_{\text{IP}}$. Given that $X_{\text{IP}}$ appears to outperform $X_{\text{LS}}$, we now explain how one might approach this problem for $X_{\text{IP}}$. Again the central challenge is to show that $X_{\text{IP}} \succeq 0$ with high probability. Observe that $\alpha = (\calA \calA^*)^{-1}b$ and so $X_{\text{IP}} \succeq 0$ is implied by $\norm{\calA^* ((\calA \calA^*)^{-1}b)}_{op} \leq 1/d$. We immediately recognize that to proceed with the analysis, we must invert $\calA \calA^*$ as in the analysis of $X_{\text{LS}}$. Applying the Neumann series expansion thus encounters the same bottlenecks as in the analysis of $X_{\text{LS}}$. % Since the only technique we know of is to expand $(\calA \calA^*)^{-1}$ as a Neumann series, we will encounter 
We leave the problem of precisely characterizing the eigenvalues and eigenvectors of the random inner-product matrix $\calA \calA^*$ as a direction for future research. 

Additionally, we remark that computing either of $X_{\text{LS}}$ or $X_{\text{IP}}$ amounts to applying the inverse of a certain $n \times n$ matrix to a vector. In contrast, testing whether the ellipsoid fitting property holds for a given set of points involves solving a semidefinite program, which requires a large polynomial runtime. To the best of our knowledge, it is an open question to find a faster algorithm achieving the conjectured threshold $n \sim d^2/4$, even in simulations. 

\paragraph{Towards the negative side of Conjecture~\ref{conj:ellipsoid-fitting}}
As noted earlier, a simple dimension-counting argument (see Lemma~\ref{lemma:invertible}) shows that when $n > d(d+1)/2$, the linear constraints alone are infeasible with probability 1. Any proof of the failure of the ellipsoid fitting property with high probability for $n > cd^2$ for a constant $c \in (1/4, 1/2)$ would likely yield significant insight into Conjecture~\ref{conj:ellipsoid-fitting}.

\paragraph{Applications to other random SDPs}
In Appendix~\ref{sec:disc}, we prove a negative result showing that a certain SDP (which corresponds to the degree-2 SoS SDP relaxation) cannot certify a non-trivial lower bound on the discrepancy of random Gaussian matrices with $m$ rows and $n$ columns when $m < n - \sqrt{n} \polylog(n)$. As we have mentioned, for simpler variants of the ellipsoid fitting problem, there are results of this type for SDPs corresponding to higher-degree SoS relaxations (e.g.~\cite{ghosh2020sum,hsieh2022algorithmic}). Is it true that higher-degree SoS SDPs also fail to certify non-trivial discrepancy lower bounds in the regime described above?

More generally, can one apply either the least-squares or identity-perturbation constructions to prove average-case SDP lower bounds for other problems? We expect that these constructions are tractable to analyze for SDPs with a PSD constraint and ``simple'' random linear constraints, such as the degree-2 SoS SDP relaxation of the clique number (see e.g.\ Section 2.2 of~\cite{barak2019nearly}) and SoS relaxations of random systems of polynomial equations of the type in~\cite{hsieh2022algorithmic}.

\begin{figure}
    \centering
    \begin{subfigure}[b]{0.3\textwidth}
        \includegraphics[width=\textwidth]{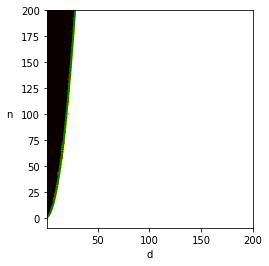}
        \caption{Ellipsoid fitting SDP, $c = 1/4$}
        \label{fig:sdp}
    \end{subfigure}
    ~ %add desired spacing between images, e. g. ~, \quad, \qquad, \hfill etc. 
      %(or a blank line to force the subfigure onto a new line)
    \begin{subfigure}[b]{0.3\textwidth}
        \includegraphics[width=\textwidth]{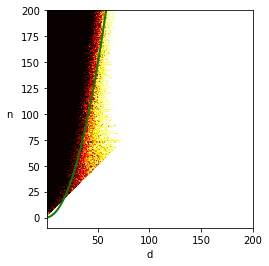}
        \caption{Least-squares, $c = 1/17$}
        \label{fig:ls}
    \end{subfigure}
    ~ %add desired spacing between images, e. g. ~, \quad, \qquad, \hfill etc. 
    %(or a blank line to force the subfigure onto a new line)
    \begin{subfigure}[b]{0.3\textwidth}
        \includegraphics[width=\textwidth]{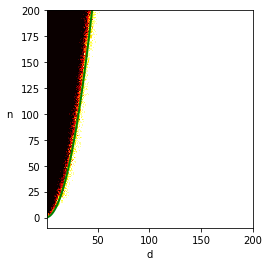}
        \caption{Identity perturbation, $c = 1/10$}
        \label{fig:ip}
    \end{subfigure}
    \caption{For each $(n,d)$ with $1 \leq d \leq n \leq 200$, we generated 10 independent instances of the ellipsoid fitting problem with $n$ points in $\R^d$ and computed the fraction of instances for which each of the three constructions (original SDP, least-squares, and identity perturbation) was a valid fitting ellipsoid. The color of each cell corresponds to the fraction of ``successful'' instances, increasing in the following order: black (zero), red, orange, yellow, white (one). In each plot, the green curve corresponds to a function of the form $n(d) = cd^2$ for some constant $c$. See Appendix~\ref{sec:experiment-details} for further details.}
    \label{fig:plots}
\end{figure}
\section{Proof of Theorem~\ref{thm:main}}

As discussed in Section~\ref{sec:technical-overview}, it suffices to show that $X = X_{\text{LS}} \succeq 0$. We make the simplification that $n = d^2/\text{polylog}(d)$, as recorded in the following remark.
\begin{remark}
\label{rmk:n=d2-polylog}
By monotonicity (with respect to $n$) of the probability of the ellipsoid fitting property holding, it suffices to fix $n = d^2/\log^C (d)$ for some sufficiently large constant $C > 0$ to be determined. In fact, all of our technical lemmas below hold under the more general assumption that $d \leq n \leq d^2/\log^C (d)$.
\end{remark} 

We proceed to showing $X \succeq 0$ by first separating out the low-rank and high-rank terms from $\calA \calA^*$ and then expanding the inverse as a Neumann series. Define the vector $w \in \R^n$ by $w_i = \norm{v_i}_2^2 - d$ for every $i \in [n]$. Next, define the rank 2 matrix $W = w 1_n^T + 1_n w^T + d 1_n 1_n^T \in \R^{n \times n}$, the high-rank matrix $\Gamma = \mathcal{A} \mathcal{A}^* - W - \alpha I_n \in \R^{n \times n}$, where $\alpha = d^2 + d$, and $B = \Gamma + \alpha I_n$.
Then, we have the following decomposition:
\begin{align*}
	\mathcal{A} \mathcal{A}^* 
	&= \big( \mathcal{A} \mathcal{A}^*  -
	(w 1_n^T + 1_n w^T + d 1_n 1_n^T) 
	- \alpha I_n \big)
	+ (w 1_n^T + 1_n w^T + d 1_n 1_n^T) 
	+ \alpha I_n 
	\\& = \Gamma + W + \alpha I_n
	\\&= B+W.
\end{align*}

The following lemma is a consequence of the Woodbury matrix identity \cite{woodbury1950inverting}. We defer the proof to Appendix \ref{appendix:woodbury}. 

\begin{lemma}
\label{lemma:woodbury}
	Let $B = \Gamma + \alpha I_n$.  We have
	\begin{align}
	(\calA \calA^*)^{-1} 1_n = \frac{1}{s^2 - r u} \cdot 
		\bigg( (1 + 1_n^T B^{-1} w)  B^{-1} 1_n  - (1_n^T B^{-1} 1_n) B^{-1} w   \bigg)
	\end{align}
where $r, s, u$ are defined as
\begin{align*}
	 \begin{pmatrix}
		r & s \\
		s & u
	\end{pmatrix} := \begin{pmatrix}
		1_n^T B^{-1} 1_n & 1 + 1_n^T B^{-1} w \\
		1 + 1_n^T B^{-1} w & -d + w^T B^{-1} w
	\end{pmatrix}. 
\end{align*}
\end{lemma}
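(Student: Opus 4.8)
The plan is to apply the Woodbury (Sherman--Morrison--Woodbury) matrix identity to the decomposition $\calA\calA^* = B + W$, where $W = w1_n^T + 1_n w^T + d\,1_n1_n^T$ is a rank-two perturbation of the well-conditioned matrix $B$. First I would write $W$ as $UCV$ for explicit low-rank factors. A convenient choice is $U = V^T = [\,1_n \ \ w\,] \in \R^{n\times 2}$ together with a $2\times 2$ symmetric middle matrix $C = \begin{pmatrix} d & 1 \\ 1 & 0 \end{pmatrix}$, so that $U C U^T = d\,1_n1_n^T + 1_n w^T + w 1_n^T = W$ as desired. Then the Woodbury identity gives
\[
(\calA\calA^*)^{-1} = (B + UCU^T)^{-1} = B^{-1} - B^{-1} U \big( C^{-1} + U^T B^{-1} U \big)^{-1} U^T B^{-1}.
\]
Here one should note (and briefly justify, using $B \approx \alpha I_n$ from Lemma~\ref{lemma:B-concentrates} so that $B$ is invertible with high probability, and $C$ is invertible deterministically) that all the inverses appearing are well-defined on the relevant event.

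Next I would compute the $2\times 2$ matrix $C^{-1} + U^T B^{-1} U$ explicitly. We have $C^{-1} = \begin{pmatrix} 0 & 1 \\ 1 & -d \end{pmatrix}$, and
\[
U^T B^{-1} U = \begin{pmatrix} 1_n^T B^{-1} 1_n & 1_n^T B^{-1} w \\ w^T B^{-1} 1_n & w^T B^{-1} w \end{pmatrix},
\]
so that
\[
C^{-1} + U^T B^{-1} U = \begin{pmatrix} 1_n^T B^{-1} 1_n & 1 + 1_n^T B^{-1} w \\ 1 + 1_n^T B^{-1} w & -d + w^T B^{-1} w \end{pmatrix} = \begin{pmatrix} r & s \\ s & u \end{pmatrix},
\]
which is precisely the matrix in the statement. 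Its inverse is $\frac{1}{ru - s^2}\begin{pmatrix} u & -s \\ -s & r \end{pmatrix}$, with determinant $ru - s^2 = -(s^2 - ru)$, matching the scalar prefactor $\frac{1}{s^2 - ru}$ after tracking the sign.

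Finally I would apply both sides of the Woodbury expression to the vector $1_n$ and simplify. Writing $U^T B^{-1} 1_n = \big(1_n^T B^{-1} 1_n,\ w^T B^{-1} 1_n\big)^T = (r, s)^T$ (using $s = 1 + 1_n^T B^{-1}w$ means one should instead carry $(1_n^T B^{-1}1_n,\ w^T B^{-1}1_n)^T$ and keep the "$+1$" bookkeeping straight), we get
\[
(\calA\calA^*)^{-1} 1_n = B^{-1} 1_n - B^{-1} U \cdot \frac{1}{ru - s^2}\begin{pmatrix} u & -s \\ -s & r \end{pmatrix} \begin{pmatrix} 1_n^T B^{-1} 1_n \\ w^T B^{-1} 1_n \end{pmatrix},
\]
and expanding $B^{-1}U = [\,B^{-1}1_n \ \ B^{-1}w\,]$ and collecting the coefficients of $B^{-1}1_n$ and $B^{-1}w$ yields the claimed formula
\[
(\calA\calA^*)^{-1} 1_n = \frac{1}{s^2 - ru}\Big( (1 + 1_n^T B^{-1} w)\, B^{-1} 1_n - (1_n^T B^{-1} 1_n)\, B^{-1} w \Big).
\]
The only genuinely delicate point is the linear-algebraic bookkeeping: one must choose the factorization of $W$ and the pairing of coordinates so that the off-diagonal entry of $C^{-1} + U^T B^{-1}U$ comes out as $1 + 1_n^T B^{-1}w$ rather than $1_n^T B^{-1} w$, and so that the two surviving scalar coefficients collapse to exactly $(1 + 1_n^T B^{-1}w)$ and $-(1_n^T B^{-1}1_n)$ — everything else is a routine substitution. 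I expect no real obstacle beyond carefully matching signs and the "$+1$" terms, so this is essentially a verification lemma; accordingly the authors defer it to the appendix.
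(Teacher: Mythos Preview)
Your proposal is correct and follows essentially the same route as the paper: the same factorization $W = UCU^T$ with $U = [\,1_n\ \ w\,]$ and $C = \begin{pmatrix} d & 1 \\ 1 & 0\end{pmatrix}$, the same identification of $C^{-1} + U^T B^{-1} U$ with $\begin{pmatrix} r & s \\ s & u\end{pmatrix}$, and the same final collapse after applying Woodbury to $1_n$. The only place where you hedge is exactly the point the paper makes explicit, namely that $U^T B^{-1} 1_n = (r,\, s-1)^T$ rather than $(r,s)^T$; once that is written down the coefficient of $B^{-1}1_n$ simplifies from $1 - \frac{ru - s^2 + s}{ru - s^2}$ to $\frac{-s}{ru - s^2}$, matching the claim.
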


\noindent By Lemma~\ref{lemma:woodbury}, we have that
\begin{align*}
	X &= \frac{1}{s^2 - ru}
	\calA^*\bigg( (1 + 1_n^T B^{-1} w)  B^{-1} 1_n  - (1_n^T B^{-1} 1_n) B^{-1} w \bigg)  \\
	&= \frac{1}{s^2 - ru} \left( (1 + 1_n^T B^{-1} w)  \, \calA^* (B^{-1} 1_n) 
	-  (1_n^T B^{-1} 1_n) \calA^* (B^{-1} w) \right).
\end{align*}
Clearly, $X \succeq 0$ follows if the next two conditions are satisfied:
\begin{equation}
\label{cond:pos-coeff}
s^2 - ru \geq 0,
\end{equation}
and
\begin{equation}
\label{cond:psd}
 (1 + 1_n^T B^{-1} w)  \, \calA^* (B^{-1} 1_n) -  (1_n^T B^{-1} 1_n) \calA^* (B^{-1} w) \succeq 0.
\end{equation}

We verify that these two conditions are satisfied with high probability by invoking the following lemmas, whose proofs are deferred to the next section.

\begin{lemma}
\label{lemma:1T-Binv-1}
There is some constant $C > 0$ such that if $d \le n \le d^2/\log^C (d)$, then  $1_n^T B^{-1} 1_n = \Theta (n/d^2)$ with high probability.
\end{lemma}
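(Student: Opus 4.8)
The plan is to derive the estimate directly from the concentration of $B$ around $\alpha I_n$. Recall that $\alpha = d^2 + d = \Theta(d^2)$ and $B = \alpha I_n + \Gamma$ with $\Gamma = \calA \calA^* - W - \alpha I_n$. The only probabilistic input is an operator-norm bound on $\Gamma$: with high probability $\norm{\Gamma}_{op} = o(\alpha)$, and in particular $\norm{\Gamma}_{op} \le \alpha/2$, for $n \le d^2/\log^C(d)$ with $C$ sufficiently large. This is exactly the content of Lemma~\ref{lemma:B-concentrates}. Given this bound, the remainder of the argument is deterministic linear algebra.

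Work on the event $\norm{\Gamma}_{op} \le \alpha/2$. Then $B = \alpha I_n + \Gamma \succeq (\alpha - \norm{\Gamma}_{op}) I_n \succ 0$, so $B$ is positive definite and invertible. (This step genuinely uses the norm bound, since $B = \calA \calA^* - W$ is not automatically PSD: $\calA \calA^* \succeq 0$ but $W$ is indefinite.) Consequently
\[
\frac{1}{\alpha + \norm{\Gamma}_{op}}\, I_n \;\preceq\; B^{-1} \;\preceq\; \frac{1}{\alpha - \norm{\Gamma}_{op}}\, I_n,
\]
and evaluating the quadratic form at $1_n$, using $\norm{1_n}_2^2 = n$, gives
\[
\frac{n}{\alpha + \norm{\Gamma}_{op}} \;\le\; 1_n^T B^{-1} 1_n \;\le\; \frac{n}{\alpha - \norm{\Gamma}_{op}}.
\]
Since $\norm{\Gamma}_{op} \le \alpha/2$ and $\alpha = \Theta(d^2)$, both sides are $\Theta(n/\alpha) = \Theta(n/d^2)$, which is the claim. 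If one wanted the sharp leading constant one could instead Neumann-expand $B^{-1} = \alpha^{-1}\sum_{k\ge 0}(-\Gamma/\alpha)^k$ and bound the $k \ge 1$ terms crudely by $\frac{n}{\alpha}\cdot\frac{\norm{\Gamma}_{op}/\alpha}{1-\norm{\Gamma}_{op}/\alpha} = o(n/\alpha)$, obtaining $1_n^T B^{-1} 1_n = (1+o(1))\,n/\alpha$; this refinement is not needed for a $\Theta$ bound.

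The main obstacle is the operator-norm bound $\norm{\Gamma}_{op} = o(\alpha)$, i.e.\ Lemma~\ref{lemma:B-concentrates}, which is the technical heart of the paper and is established separately using the graph-matrix machinery; I do not reprove it here. For orientation: the entries of $\Gamma$ are centered degree-four polynomials in the i.i.d.\ Gaussian coordinates --- on the diagonal $\Gamma_{ii} = (\norm{v_i}_2^2 - 1)^2 - 1 - d^2 \approx -2d$ (with fluctuations), and off the diagonal $\Gamma_{ij} = \sum_{k}(v_{ik}^2-1)(v_{jk}^2-1) + \sum_{k \ne l} v_{ik}v_{jk}v_{il}v_{jl}$, the subtraction of $W$ having been chosen precisely to remove the $\Theta(d^2)$-norm and $\Theta(d)$-mean pieces of $\calA \calA^*$ that would otherwise swamp the bound. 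One decomposes $\Gamma$ into a constant number of graph matrices indexed by the shapes of these monomials, applies the standard graph-matrix norm bounds, and checks that for $n \le d^2/\log^C(d)$ the dominant contribution is $o(\alpha)$. Within the present lemma, once that bound is in hand, only the three-line sandwich above is required.
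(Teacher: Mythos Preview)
Your proof is correct and follows essentially the same approach as the paper's: invoke Lemma~\ref{lemma:B-concentrates} to sandwich $B$ between $(\alpha \pm \|\Gamma\|_{op})I_n$, invert the inequalities, and evaluate the resulting quadratic form at $1_n$. The paper's proof is the same three-line sandwich, stated slightly more tersely and with the explicit $\tilde O(d\sqrt{n})$ bound from Lemma~\ref{lemma:B-concentrates} in place of the qualitative $o(\alpha)$.
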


\begin{lemma}
\label{lemma:wT-Binv-w}
There is some constant $C > 0$ such that if $d \le n \le d^2/\log^C (d)$, then $w^T B^{-1} w = \tilde  O (n/d)$ with high probability.
\end{lemma}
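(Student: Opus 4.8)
The plan is to use the fact, established in Lemma~\ref{lemma:B-concentrates}, that $B = \alpha I_n + \Gamma$ is a small perturbation of $\alpha I_n$ with $\alpha = d^2+d$, so that $B^{-1}$ shrinks Euclidean norms by a factor of order $1/d^2$, together with the elementary bound $\norm{w}_2^2 = \tilde{O}(nd)$. Since $B \succ 0$ with high probability, the Rayleigh quotient satisfies
\begin{equation*}
0 \;\le\; w^T B^{-1} w \;\le\; \norm{w}_2^2 \cdot \norm{B^{-1}}_{op} \;=\; \frac{\norm{w}_2^2}{\lambda_{\min}(B)},
\end{equation*}
so it suffices to prove (i) $\lambda_{\min}(B) = \Omega(d^2)$ and (ii) $\norm{w}_2^2 = \tilde{O}(nd)$, each with high probability; the claimed bound $w^T B^{-1} w = \tilde{O}(n/d)$ is then immediate from the right-hand inequality.

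For (i): by Lemma~\ref{lemma:B-concentrates}, with high probability $\norm{\Gamma}_{op} = \norm{B - \alpha I_n}_{op} = o(\alpha)$ throughout the regime $d \le n \le d^2/\log^C(d)$; in particular $\norm{\Gamma}_{op} < \alpha/2$, so $B$ is invertible, $B \succ 0$, and every eigenvalue of $B$ lies in $(\alpha/2, 2\alpha)$. Hence $\lambda_{\min}(B) \ge \alpha/2 = \Omega(d^2)$ and $\norm{B^{-1}}_{op} = O(1/d^2)$.

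For (ii): each coordinate $w_i = \norm{v_i}_2^2 - d$ is a centered random variable distributed as $\chi^2_d - d$, and the $w_i$ are i.i.d.; using the central moments of the $\chi^2_d$ distribution, $\E[w_i^2] = 2d$ and $\E[w_i^4] = \Theta(d^2)$, so $\mathrm{Var}(w_i^2) = \Theta(d^2)$. Therefore $\E[\norm{w}_2^2] = 2nd$ and $\mathrm{Var}(\norm{w}_2^2) = n\,\mathrm{Var}(w_1^2) = \Theta(nd^2)$, and since $n \ge d \to \infty$, Chebyshev's inequality gives $\norm{w}_2^2 = (2+o(1))\,nd$ with high probability (a Bernstein-type bound for sums of i.i.d.\ sub-exponential variables yields the same with an exponential tail, if a stronger statement is desired). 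Combining, with high probability $w^T B^{-1} w \le \norm{w}_2^2 / \lambda_{\min}(B) = \tilde{O}(nd) \cdot O(1/d^2) = \tilde{O}(n/d)$.

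There is essentially no obstacle internal to this lemma: the only nontrivial ingredient is Lemma~\ref{lemma:B-concentrates} (whose proof, via graph matrices, is where the real work happens), and the rest is routine scalar concentration plus a Rayleigh-quotient bound. If one wanted the sharp leading constant $w^T B^{-1} w = (2+o(1))\,n/d$, one would additionally expand $B^{-1} = \tfrac1\alpha I_n - \tfrac1\alpha B^{-1}\Gamma$ and control the correction via $\tfrac1\alpha |w^T B^{-1}\Gamma w| \le \tfrac{\norm{w}_2^2\,\norm{\Gamma}_{op}}{\alpha\,\lambda_{\min}(B)} = o(n/d)$, using $\norm{\Gamma}_{op} = o(\alpha)$; but this refinement is unnecessary for the stated $\tilde{O}$ bound.
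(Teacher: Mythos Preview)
Your proof is correct and follows essentially the same approach as the paper: bound $w^T B^{-1} w$ by $\norm{w}_2^2 \cdot \norm{B^{-1}}_{op}$, invoke Lemma~\ref{lemma:B-concentrates} to get $\norm{B^{-1}}_{op} = O(1/d^2)$, and then show $\norm{w}_2^2 = \tilde{O}(nd)$. The only minor difference is that the paper controls $\norm{w}_2^2$ by a Bernstein bound on each $|w_i|$ followed by a union bound, whereas you apply Chebyshev directly to the sum; both are valid and yield the same conclusion.
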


\begin{lemma}
\label{lemma:1T-Binv-w}
There is some constant $C > 0$ such that if $d \le n \le d^2/\log^C (d)$, then $|1_n^T B^{-1} w| = o(1)$ with high probability.
\end{lemma}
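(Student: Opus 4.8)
Throughout, work on the high-probability event of Lemma~\ref{lemma:B-concentrates} that $\|\Gamma\|_{op} \le \alpha/\mathrm{polylog}(d)$, so that $B = \alpha I_n + \Gamma \succ 0$ is invertible and all expressions below make sense. The cleanest route avoids expanding $B^{-1}$ at all: since $\mathrm{tr}(v_i v_i^T) = \|v_i\|_2^2 = d + w_i$, for every $c \in \R^n$ we have $\mathrm{tr}(\calA^*(c)) = \sum_{i=1}^n c_i \|v_i\|_2^2 = d\,(1_n^T c) + w^T c$. Taking $c = B^{-1}w$ and rearranging gives the identity
\[
1_n^T B^{-1} w \;=\; \frac{1}{d}\Big( \mathrm{tr}\big( \calA^*(B^{-1}w) \big) - w^T B^{-1} w \Big),
\]
hence $|1_n^T B^{-1} w| \le \| \calA^*(B^{-1}w) \|_{op} + \frac{1}{d}\,| w^T B^{-1} w |$. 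The first term is $o(1)$ by the operator-norm bound $\|\calA^*(B^{-1}w)\|_{op} = o(1)$ of~\eqref{eqn:mixed_part}, and the second is $\frac1d \cdot \tilde O(n/d) = \tilde O(n/d^2) = o(1)$ by Lemma~\ref{lemma:wT-Binv-w} together with $n \le d^2/\log^C(d)$, provided $C$ exceeds the fixed polylogarithmic exponent produced by Lemma~\ref{lemma:wT-Binv-w}. This proves the lemma, reducing it entirely to~\eqref{eqn:mixed_part}, one of the principal outputs of the graph-matrix analysis of Section~\ref{sec:graph_matrices}.

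A more self-contained route expands $B^{-1} = \frac1\alpha \sum_{k\ge 0} (-\Gamma/\alpha)^k$ and estimates $1_n^T B^{-1} w = \sum_{k\ge 0} \frac{(-1)^k}{\alpha^{k+1}}\, 1_n^T \Gamma^k w$ in three ranges. The $k=0$ term equals $\frac1\alpha\, 1_n^T w = \frac1\alpha \sum_i(\|v_i\|_2^2 - d)$, which is distributed as $(\chi^2_{nd} - nd)/\alpha$ and hence is $\tilde O(\sqrt{nd}/\alpha) = \tilde O(\sqrt{n}/d^{3/2}) = o(1)$ with high probability by a $\chi^2$ tail bound and $n \le d^2/\log^C(d)$. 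For the tail $k > K$, with $K$ a slowly growing threshold (e.g.\ $K = \Theta(\log d / \log\log d)$), the crude estimate $\left| \sum_{k>K} \frac{(-1)^k}{\alpha^{k+1}} 1_n^T\Gamma^k w \right| \le \frac{\|1_n\|\,\|w\|}{\alpha}\sum_{k>K}(\|\Gamma\|_{op}/\alpha)^k$ is $o(1)$ once $K$ is a large enough such threshold, since $\|\Gamma\|_{op}/\alpha \le 1/\mathrm{polylog}(d)$, $\|w\| = O(\sqrt{nd})$ with high probability, and $\|1_n\|\,\|w\|/\alpha = O(d^{1/2})$.

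The substantive part is the middle range $1 \le k \le K$, which I expect to be the main obstacle. The trivial estimate $|1_n^T\Gamma^k w| \le \|1_n\|\,\|\Gamma\|_{op}^k\,\|w\| \asymp \sqrt{n}\cdot\|\Gamma\|_{op}^k\cdot\sqrt{nd}$ is too weak by roughly a factor $\sqrt n$ (because $n$ is nearly $d^2$), and that factor cannot be recovered by any Cauchy--Schwarz-type argument; it must come from the near-orthogonality $|1_n^T w| = \tilde O(\sqrt{nd}) \ll \|1_n\|\,\|w\|$ of the two contracting vectors. I would recover it with the graph-matrix calculus of Section~\ref{sec:graph_matrices}: decompose $\Gamma$ into graph matrices (the dominant shape having boundary $\{i,j\}$, two dimension-type vertices $a, b$, and single edges $ia, ib, ja, jb$, plus shapes of strictly smaller operator norm for the $\sum_a h_2(v_{ia}) h_2(v_{ja})$ part and for the diagonal of $\Gamma$); regard $1_n$ as the edgeless shape-vector and $w$ as the shape-vector consisting of one double edge to a fresh dimension-vertex; expand $1_n^T\Gamma^k w$ into a sum of at most $\exp(O(k))$ scalar graph matrices on closed shapes; and bound each via the norm estimates of that section. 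The vertex count of these closed shapes produces exactly the $\sqrt n$ gain, because the degree-$2$ ``diagonal'' edge contributed by $w$ is Hermite-orthogonal to the dominant $4$-cycle structure of $\Gamma$, so $w$ behaves as if it were an independent random vector. Summing over $1 \le k \le K$ (and checking that the $\exp(O(k))$ combinatorial overhead stays $d^{o(1)}$ for the chosen $K$) gives $o(1)$. Since this middle-range estimate is essentially a weak form of~\eqref{eqn:mixed_part}, both routes ultimately rest on the same graph-matrix input, and everything outside it is routine concentration and linear algebra.
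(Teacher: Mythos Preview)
Your first route is correct and is a genuinely cleaner argument than the paper's. The trace identity $\mathrm{tr}(\calA^*(c)) = d\,(1_n^T c) + w^T c$ (immediate from $\mathrm{tr}(v_iv_i^T)=\|v_i\|^2=d+w_i$) together with $|\mathrm{tr}(M)|\le d\|M\|_{op}$ for the symmetric $d\times d$ matrix $M=\calA^*(B^{-1}w)$ reduces Lemma~\ref{lemma:1T-Binv-w} directly to Lemmas~\ref{lemma:wT-Binv-w} and~\ref{lemma:Astar-Binv-w}, with no separate graph-matrix computation needed for the scalar $1_n^T B^{-1} w$. The paper instead proves Lemma~\ref{lemma:1T-Binv-w} by the Neumann expansion you sketch in your second route: truncate at $T=C\log d$, control the tail as in your Lemma~\ref{lemma:series-tail-1T-Binv-w} analogue, and bound each $|1_n^T\Delta^k w|$ for $k<T$ by rerunning the graph-matrix machinery with the shape $\alpha_{1_n^T}$ in place of the $\alpha_{A^*,j}$ shapes (Lemma~\ref{lemma:1T-Delta-k-w}). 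Your observation shows this separate pass is unnecessary: the scalar bound is already implied by the matrix bound.

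Two minor remarks on the second route. First, the combinatorial overhead the paper obtains is $(\log n)^{O(k)}$, not $\exp(O(k))$; with your cutoff $K=\Theta(\log d/\log\log d)$ this would be $d^{O(1)}$ rather than $d^{o(1)}$, so you would need $K$ a suitable constant times $\log d/\log\log d$ with the constant chosen against the polylog exponent in $n\le d^2/\log^C d$ (the paper simply takes $T=C\log d$, which also works). Second, the inequality $|1_n^T\Delta^k w|\le\|\calA^*(\Delta^k w)\|_{op}$ appearing in the paper's Lemma~\ref{lemma:1T-Delta-k-w} is not a pointwise algebraic inequality but a statement that the same graph-matrix bound applies to both quantities; your trace identity supplies exactly the missing algebraic link, so your first route in effect streamlines that step as well.
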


\begin{lemma}
\label{lemma:Astar-Binv-1}
There is some constant $C > 0$ such that if $d \le n \le d^2/ \log^C (d)$, then $\calA^* (B^{-1} 1_n) \succeq (1-o(1)) \frac{n}{d^2} I_d$ with high probability.
\end{lemma}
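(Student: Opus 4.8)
The plan is to show that, with high probability, every coordinate of the vector $x := B^{-1} 1_n$ is at least $(1-o(1))/\alpha$, where $\alpha = d^2+d$; the lemma then follows at once. Indeed, $\calA^*(x) = \sum_{i=1}^n x_i\, v_i v_i^T$, and since every $v_i v_i^T \succeq 0$ this gives $\calA^*(x) \succeq (\min_i x_i)\sum_{i=1}^n v_i v_i^T$. Since $n = d^2/\log^C(d)$ (Remark~\ref{rmk:n=d2-polylog}), in particular $n/d \to \infty$, so standard non-asymptotic bounds for Wishart matrices (see, e.g., \cite{tropp2012user}) give $\sum_{i=1}^n v_i v_i^T \succeq (1-o(1)) n I_d$ with high probability. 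Combining these bounds and using $1/\alpha = (1-o(1))/d^2$ yields $\calA^*(B^{-1} 1_n) \succeq (1-o(1)) \frac{n}{d^2} I_d$, as desired.

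So everything reduces to controlling the coordinates of $x = B^{-1}1_n$. From $Bx = 1_n$ and $B = \alpha I_n + \Gamma$ we obtain $x = \frac{1}{\alpha}(1_n - \Gamma x)$, i.e.\ $x_i = \frac{1}{\alpha}(1 - (\Gamma x)_i)$, so it suffices to prove $\|\Gamma x\|_\infty = o(1)$ with high probability. By Lemma~\ref{lemma:B-concentrates}, $\|\Gamma\|_{op} = o(\alpha)$ with high probability, so the Neumann series $x = \frac{1}{\alpha}\sum_{k \ge 0}(-\Gamma/\alpha)^k 1_n$ converges and $\Gamma x = \sum_{k \ge 1} (-1)^{k-1}\alpha^{-k}\,\Gamma^k 1_n$; hence it is enough to show
\[
\sum_{k \ge 1} \alpha^{-k}\,\|\Gamma^k 1_n\|_\infty = o(1) \qquad\text{with high probability.}
\]

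The term $k=1$ can be handled directly: conditionally on $v_i$, the entry $(\Gamma 1_n)_i = \sum_{j}\Gamma_{ij}$ is, up to the negligible diagonal contribution $\Gamma_{ii}$, a sum of $n-1$ i.i.d.\ centered random variables of the form $\langle v_i, v_j\rangle^2 - \|v_i\|^2 - \|v_j\|^2 + d$, each sub-exponential with parameter $O(d)$; Bernstein's inequality and a union bound over $i \in [n]$ give $\|\Gamma 1_n\|_\infty = \tilde O(d\sqrt n)$, so $\alpha^{-1}\|\Gamma 1_n\|_\infty = \tilde O(\sqrt n/d) = o(1)$ once $C$ is a large enough constant. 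The terms $k \ge 2$ are the crux. The naive estimate $\|\Gamma^k 1_n\|_\infty \le \|\Gamma^k 1_n\|_2 \le \|\Gamma\|_{op}^k\sqrt n$ is far too lossy — it leads to a bound on the sum that tends to infinity with $d$ — and must be replaced by a bound expressing that the vectors $\Gamma^k 1_n$ are ``spread out'', of the form $\|\Gamma^k 1_n\|_\infty \le (\polylog(d))^{O(k)}\,\|\Gamma\|_{op}^k$ (a gain of essentially a factor $\sqrt n$). Granting this, and using $\|\Gamma\|_{op}/\alpha \le \log^{-C/2}(d)$ (which holds with high probability by Lemma~\ref{lemma:B-concentrates}, since $\|\Gamma\|_{op}$ is essentially of order $d\sqrt n = d^2\log^{-C/2}(d)$), each summand with $k \ge 2$ is at most $\big((\polylog(d))^{O(1)}\log^{-C/2}(d)\big)^k \le \log^{-ak}(d)$ for a fixed $a > 0$ once $C$ is large, so $\sum_{k\ge2}\alpha^{-k}\|\Gamma^k 1_n\|_\infty = O(\log^{-2a}(d)) = o(1)$.

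Establishing the coordinate-wise bound $\|\Gamma^k 1_n\|_\infty \le (\polylog(d))^{O(k)}\,\|\Gamma\|_{op}^k$ is the main obstacle and is where the graph matrix machinery enters. Each entry $\Gamma_{ij}$ is a low-degree polynomial in the coordinates of the $v_i$'s and decomposes into a constant number of graph matrices, so $\Gamma^k 1_n$ expands as a sum, over walks of length $k$, of products of graph matrices; bounding a single coordinate $e_i^T \Gamma^k 1_n$ uniformly in $i$ then reduces, via estimates on its high moments together with the graphical calculus of \cite{ahn2016graph}, to a combinatorial problem about weights of the resulting graphs. This analysis is carried out in Section~\ref{sec:graph_matrices}; it is closely related to, but more benign than, the analysis needed for \eqref{eqn:mixed_part} (where the deterministic vector $1_n$ is replaced by the random vector $w$). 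The chief subtleties are making the graph-matrix bounds uniform over the relevant range of $k$ and controlling the accumulation of $\polylog(d)$ factors across that range, which is exactly why $C$ must be taken to be a sufficiently large constant.
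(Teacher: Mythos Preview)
Your approach is genuinely different from the paper's, and it has a real gap. The paper does not attempt to show coordinate-wise positivity of $B^{-1}1_n$. Instead, it bounds $\lambda_{\min}(\calA^*(B^{-1}1_n))$ directly by expanding $B^{-1}$ as a Neumann series and controlling the operator norms $\|\calA^*(\Delta^k 1_n)\|_{op}$ of the $d\times d$ matrices that arise (here $\Delta=-\Gamma$). Crucially, the paper needs graph matrix analysis only for $k=1$ (Lemma~\ref{lemma:Astar-Delta-k-ones}); for $k\ge 2$ the crude estimate $\|\calA^*(\Delta^k 1_n)\|_{op}\le \|\calA^*\|_{2\to op}\,\|\Delta\|_{op}^k\,\|1_n\|_2$ already suffices (Lemma~\ref{lemma:series-tail-Astar-Binv-ones}), because the single extra factor $\sqrt n$ coming from $\|1_n\|_2$ is absorbed by the additional factor of $\alpha^{-1}$ that accompanies passing from $k=1$ to $k\ge 2$. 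So for this particular lemma the paper's route is simpler than yours, not harder.

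Your route instead hinges on the $\ell_\infty$ bound $\|\Gamma^k 1_n\|_\infty\le(\polylog d)^{O(k)}\|\Gamma\|_{op}^k$ for all $k\ge 2$, and you defer this to Section~\ref{sec:graph_matrices}. But that section does not prove any such bound: it proves operator norm bounds on the matrices $\calA^*(\Delta^k 1_n)$ and $\calA^*(\Delta^k w)$, not entrywise bounds on the $n$-vector $\Gamma^k 1_n$. The graph matrix norm bound (Theorem~\ref{thm:graph-matrix-norm-bound}) controls $\|M_\alpha\|_{op}$; for an $n\times 1$ shape this is exactly the $\ell_2$ norm, which is the crude bound you already rejected, and for $U_\alpha=V_\alpha=\emptyset$ it gives the scalar $1_n^T\Gamma^k 1_n$, not a fixed coordinate. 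Obtaining the $\sqrt n$ gain you need in $\ell_\infty$ would require a separate argument (e.g., moment bounds on each fixed coordinate and a union bound over $i\in[n]$), carried out uniformly over all $k$ up to order $\log d$. That may well be feasible, but it is substantially more work than the paper does for this lemma, and your proposal does not supply it.
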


\begin{lemma}
\label{lemma:Astar-Binv-w}
There is some constant $C > 0$ such that if $d \le n \le d^2/\log^C (d)$, then $\norm{\calA^* (B^{-1} w)}_{op} = o(1)$ with high probability.
\end{lemma}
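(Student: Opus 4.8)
The plan is to expand $B^{-1}$ as a Neumann series about $\alpha I_n$, bound the high‑order terms by crude norm estimates, and analyze the low‑order terms via graph matrices, as sketched in Section~\ref{sec:technical-overview}. Writing $B = \alpha\bigl(I_n - (-\Gamma/\alpha)\bigr)$ and using $\|\Gamma\|_{op} = o(\alpha)$ with high probability (Lemma~\ref{lemma:B-concentrates}),
\[
\calA^*(B^{-1}w) \;=\; \sum_{k \ge 0} \frac{(-1)^k}{\alpha^{k+1}}\,\calA^*(\Gamma^k w),
\]
so it suffices to bound $\sum_{k\ge0}\alpha^{-(k+1)}\|\calA^*(\Gamma^k w)\|_{op}$. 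Cut the sum at a threshold $K = \Theta(\log d/\log\log d)$. For the tail $k>K$, the deterministic inequalities $\|\calA^*(c)\|_{op}\le(\max_i\|v_i\|_2^2)\|c\|_1\le 2d\sqrt n\,\|c\|_2$ and $\|\Gamma^k w\|_2\le\|\Gamma\|_{op}^k\|w\|_2$, together with the high‑probability facts $\|\Gamma\|_{op}\le\alpha/\log^{c_0}d$ (Lemma~\ref{lemma:B-concentrates}), $\max_i\|v_i\|_2^2\le 2d$, and $\|w\|_2\le\tilde O(\sqrt{nd})$, show that the tail contributes at most $\tilde O(nd^{3/2}/\alpha)\cdot\sum_{k>K}\log^{-c_0 k}d = o(1)$; this works because $nd^{3/2}/\alpha = \tilde O(n/\sqrt d)$ is only polynomially large in $d$, so geometric decay at rate $\log^{-c_0}d$ over $K = \Theta(\log d/\log\log d)$ steps kills it.

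For $k = 0$ we argue directly. Since $(vv^T)^2 = \|v\|_2^2\,vv^T$, rotational symmetry gives $\mathbb{E}[\calA^*(w)] = n\,\mathbb{E}[(\|v\|_2^2-d)vv^T] = 2nI_d$ and $\sum_i\mathbb{E}[w_i^2(v_iv_i^T)^2]\preceq(2d^2+O(d))\,n\,I_d$. Hence $\alpha^{-1}\mathbb{E}[\calA^*(w)] = \tfrac{2n}{\alpha}I_d = o(1)$ since $n\le d^2/\log^C d$, and on the high‑probability event where $\max_i|w_i|\le\sqrt d\log d$ and $\max_i\|v_i\|_2^2\le 2d$, matrix Bernstein (variance proxy $(2d^2+O(d))n$, uniform bound $\tilde O(d^{3/2})$ after truncation) gives $\|\calA^*(w)-2nI_d\|_{op} = \tilde O(\sqrt{nd^2}+d^{3/2}) = o(\alpha)$; so the $k=0$ term of the series is $o(1)$.

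The heart of the argument is the range $1\le k\le K$, handled via graph matrices. Expanding,
\[
\bigl[\calA^*(\Gamma^k w)\bigr]_{ab} \;=\; \sum_{i_0,\dots,i_k\in[n]} v_{i_0,a}\,v_{i_0,b}\;\Gamma_{i_0 i_1}\cdots\Gamma_{i_{k-1} i_k}\;w_{i_k},
\]
where each off‑diagonal factor equals $\Gamma_{ii'} = \ip{v_i}{v_{i'}}^2 - \|v_i\|_2^2 - \|v_{i'}\|_2^2 + d$ and $w_{i_k} = \|v_{i_k}\|_2^2 - d$, all low‑degree polynomials in the Gaussian entries $\{v_{i,c}\}$ with mean‑zero fluctuation parts. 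Substituting these, treating the diagonal corrections $i_j = i_{j+1}$ separately, and collecting monomials writes $\calA^*(\Gamma^k w)$ as a sum, over a collection of shapes $\tau$ each of size $O(k)$, of scalar multiples $\lambda_\tau M_\tau$ of graph matrices whose index vertices come in two types: ``point'' vertices ranging over $[n]$ (the $i_j$) and ``coordinate'' vertices ranging over $[d]$. Applying the graph‑matrix norm bounds to each $M_\tau$, the objective is to prove $\sum_\tau\alpha^{-(k+1)}|\lambda_\tau|\,\|M_\tau\|_{op}\le C_0\,(n/d^2)\,\beta^k$ for an absolute constant $C_0$ and some $\beta<1$ (a small negative power of $\log d$ suffices); summing the geometric series over $1\le k\le K$ and combining with the two previous estimates then yields $\|\calA^*(B^{-1}w)\|_{op} = O(n/d^2) = o(1)$.

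The main obstacle is exactly this last estimate. The crude bound used for the tail overshoots by the polynomially large factor $n/\sqrt d$, because it treats $\Gamma^k w$ as a generic vector of norm $\|\Gamma\|_{op}^k\|w\|_2$; in reality $\Gamma^k w$ is highly structured --- already $\Gamma w$ is, to leading order, a scalar multiple of $w$ itself (of order $n\,w$) rather than a generic vector, precisely because of the cancellation built into $\Gamma$ by subtracting $W$ from $\calA\calA^*$. Likewise, the worst‑case graph‑matrix norm bound applied term by term is too weak to establish the $\beta^k$ decay. Capturing the cancellations therefore requires identifying the \emph{dominant} shapes in the decomposition, showing that their contributions telescope into the benign terms of order $(n/d^2)^{k+1}$, and bounding the remaining shapes by a convergent sum; this combinatorial analysis, carried out via the graph‑matrix calculus of Section~\ref{sec:graph_matrices}, is where essentially all the work lies.
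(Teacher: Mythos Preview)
Your structural outline matches the paper's: expand $B^{-1}$ as a Neumann series, cut at a logarithmic threshold, kill the tail by crude operator-norm bounds using Lemma~\ref{lemma:B-concentrates} and $\|w\|_2 = \tilde O(\sqrt{nd})$, and handle the low-order terms via graph matrices. Your separate treatment of $k=0$ by matrix Bernstein is a nice alternative to folding it into the graph-matrix machinery; the paper's Lemma~\ref{lemma:Astar-Delta-k-w} simply includes $k=0$ in the uniform bound $\|\calA^*(\Delta^k w)\|_{op}\le(\log n)^{O(k)}\,d\sqrt n\,(n^{1/4}d^{3/2})^k$.

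Where your sketch diverges is in the description of the graph-matrix step. The paper does \emph{not} identify dominant shapes that ``telescope'' or exploit cancellation between shapes; every shape $\alpha_P$ arising from an identification pattern $P$ is bounded individually by Theorem~\ref{thm:graph-matrix-norm-bound}, and the work lies in designing a weight function $w_{\text{actual},P}$ on the constituent pieces $\beta_0,\ldots,\beta_{k+1}$ so that the product of weights upper-bounds the norm bound $\calB(\alpha_P)$ while each factor $|c(\beta_j)|\,w_{\text{actual},P}(\beta_j)$ stays below $d^{3/2}n^{1/4}$. The genuinely new obstacle (relative to Lemma~\ref{lemma:Astar-Delta-k-ones}) is that $\alpha_w$ contributes an extra square vertex which can become isolated when its edge vanishes after identification, inflating the norm bound by $\sqrt d$; the paper's fix is the ``critical edge'' argument (Lemma~\ref{lemma:vanishing-rc-to-lc} and Lemma~\ref{lem:pay-for-extra-square}), which locates an earlier shape $\beta_{j'}$ with a left-critical edge that can absorb the excess weight. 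Your heuristic that $\Gamma w$ is to leading order a multiple of $w$, and that this drives a telescoping, is not the mechanism at play and would not by itself yield the needed bound. The target you state, $\alpha^{-(k+1)}\|\calA^*(\Gamma^k w)\|_{op}\lesssim(n/d^2)\beta^k$, is also stronger than what the paper proves (the actual per-$k$ bound is $\tilde O((\sqrt n/d)(n^{1/4}/\sqrt d)^k)$), though either suffices for $o(1)$.
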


For Condition~\eqref{cond:pos-coeff}, if $n \le d^2/ \log^C (d)$ for a sufficiently large constant $C$, we have that with high probability
\begin{align}
\label{eqn:s2-ru}
s^2 - ru \geq -ru = 1^T_n B^{-1} 1_n(d  - w^T B^{-1} w) =   \Theta(n/d^2) (d - \Tilde{O}(n/d)) =  \Theta(n/d) \geq 0,
\end{align} 
for sufficiently large $n,d$, by Lemmas~\ref{lemma:1T-Binv-1} and \ref{lemma:wT-Binv-w}. For Condition~\eqref{cond:psd}, if $n \le d^2/ \log^C (d)$ for a sufficiently large constant $C$, we have that with high probability
\begin{align*}
(1 + 1_n^T B^{-1} w)  \calA^* (B^{-1} 1_n) -  (1_n^T B^{-1} 1_n) \calA^* (B^{-1} w) &\succeq (1- o(1)) \calA^* (B^{-1} 1_n) - \Theta \left( \frac{n}{d^2}\right) \norm{\calA^* (B^{-1} w)}_{op}  I_d \\
&\succeq \left( (1- o(1)) (1- o(1)) \frac{n}{d^2} - \Theta \left( \frac{n}{d^2}\right) \norm{\calA^* (B^{-1} w)}_{op} \right) I_d \\ 
&=\left( (1- o(1)) (1- o(1)) \frac{n}{d^2} - o \left( \frac{n}{d^2}\right) \right) I_d \succeq 0,
\end{align*}
% \begin{align*}
% (1 + 1_n^T B^{-1} w)  \calA^* (B^{-1} 1_n) -  (1_n^T B^{-1} 1_n) \calA^* (B^{-1} w) &\succeq (1- o(1)) \calA^* (B^{-1} 1_n) - \Theta \left( \frac{n}{d^2}\right) \calA^* (B^{-1} w) \\
% &\succeq \left( (1- o(1)) (1- o(1)) \frac{n}{d^2} - \Theta \left( \frac{n}{d^2}\right) \norm{\calA^* (B^{-1} w)}_{op} \right) I_d \\ 
% &=\left( (1- o(1)) (1- o(1)) \frac{n}{d^2} - o \left( \frac{n}{d^2}\right) \right) I_d \succeq 0,
% \end{align*}
for sufficiently large $n,d$, by Lemmas~\ref{lemma:1T-Binv-w}, \ref{lemma:1T-Binv-1}, \ref{lemma:Astar-Binv-1}, and \ref{lemma:Astar-Binv-w}.

\section{Proofs of remaining technical lemmas}

The proofs of the remaining technical lemmas all make use of the following result, whose proof is postponed to Section \ref{sec:shapes}. 
\begin{lemma}
\label{lemma:B-concentrates}
There is some constant $C > 0$ such that if $d \le n \le d^2/\log^C (d)$, then with high probability, $\norm{B - \alpha I_n}_{op} = \Tilde{O}(d\sqrt{n})$.
\end{lemma}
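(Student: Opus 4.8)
We need to show $\|B - \alpha I_n\|_{op} = \tilde O(d\sqrt n)$ where $B = \Gamma + \alpha I_n$, so equivalently $\|\Gamma\|_{op} = \tilde O(d\sqrt n)$. Here $\Gamma = \calA\calA^* - W - \alpha I_n$ with $W = w1_n^T + 1_n w^T + d 1_n 1_n^T$ and $\alpha = d^2 + d$. The first move is to write the entries of $\calA\calA^*$ explicitly: $(\calA\calA^*)_{ij} = \langle v_iv_i^T, v_jv_j^T\rangle = \langle v_i, v_j\rangle^2$. For $i = j$ this is $\|v_i\|^4$, and for $i \ne j$ it is $\langle v_i, v_j\rangle^2$. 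Subtracting off $W$ and $\alpha I_n$ should, by design, remove exactly the ``expectation-like'' pieces: on the diagonal, $\|v_i\|^4 - (d^2+d) - d\cdot w_i\cdot 2 \cdots$ — I'd carefully expand $\|v_i\|^4 = (d + w_i)^2 = d^2 + 2dw_i + w_i^2$, and check that the diagonal of $W + \alpha I_n$ is $2dw_i + d + (d^2 + d)$, leaving diagonal entries of $\Gamma$ equal to $w_i^2 - d$, which is a centered, $\tilde O(d)$-sized quantity. Off-diagonal, $\Gamma_{ij} = \langle v_i, v_j\rangle^2 - d$ for $i \ne j$ (since the off-diagonal of $W$ is $w_i + w_j + d$, and $\EE\langle v_i,v_j\rangle^2 = d$ — wait, I need to double-check: one expects $\Gamma_{ij} = \langle v_i,v_j\rangle^2 - w_i - w_j - d$, which has mean zero conditioned on appropriate things). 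The point is that $\Gamma$ decomposes into a diagonal part and a hollow part, each a low-degree polynomial in the i.i.d.\ Gaussian entries of the $v_i$'s with (conditional) mean zero.

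**Main steps.**

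(1) Bound the diagonal part: it is $\mathrm{diag}(w_i^2 - d)$, and since $w_i = \|v_i\|^2 - d$ is a centered $\chi^2$-type variable, $w_i^2 - d$ concentrates; a $\chi^2$ tail bound plus a union bound over $i \in [n]$ gives $\max_i |w_i^2 - d| = \tilde O(d)$ with high probability, so this part has operator norm $\tilde O(d) = \tilde O(d\sqrt n)$ trivially. (2) Bound the hollow part $\Gamma^{\circ}$ with entries $\langle v_i, v_j\rangle^2 - (\text{mean term})$ for $i\ne j$: this is the crux. I would express $\Gamma^{\circ}$ as a linear combination of graph matrices on the Gaussian entries $\{(v_i)_a : i \in [n], a \in [d]\}$ — the monomial $\langle v_i, v_j\rangle^2 = \sum_{a,b}(v_i)_a(v_i)_b(v_j)_a(v_j)_b$ is degree-$4$, and after removing the mean (the $a = b$ diagonal contributions that produce the $\|v_i\|^2$-type terms already absorbed into $W$), what remains is a sum over shapes with four edges. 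Then apply the standard graph matrix norm bounds (the machinery referenced in Section~\ref{sec:graph_matrices}, which we are permitted to assume), whose prediction scales like $\sqrt{n} \cdot d$ up to $\mathrm{polylog}$ factors for this particular shape structure — this is exactly the source of the $\tilde O(d\sqrt n)$ in the statement. I would enumerate the relevant shapes, compute the vertex/edge counts that determine each shape's norm contribution, and verify the dominant one gives $d\sqrt n$ and the rest are lower-order (absorbed into polylog).

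**The main obstacle.**

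The hard part is the graph-matrix analysis of the hollow part: correctly identifying which shapes appear after centering (ensuring no ``leftover mean'' survives that would blow up the bound), correctly accounting for the fact that the index sets have different sizes ($n$ for the ``$v$-index'' legs and $d$ for the ``coordinate'' legs), and checking that the worst-case shape contributes $\tilde O(d\sqrt n)$ rather than something larger like $\tilde O(d^2)$ or $\tilde O(n)$. Since $n$ can be as large as $d^2/\mathrm{polylog}(d)$, the bound $d\sqrt n$ can be as large as $d^2/\mathrm{polylog}(d) = \Theta(\alpha)/\mathrm{polylog}(d)$, so the polylog slack in the hypothesis $n \le d^2/\log^C(d)$ is precisely what guarantees $\|\Gamma\|_{op} = o(\alpha)$, making $B$ invertible and close to $\alpha I_n$ — I would make sure the constant $C$ is chosen large enough to dominate the polylog factors coming from the graph matrix norm bounds (which typically scale like $\log^{O(\text{degree})}$ of the ambient dimension). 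A secondary, more routine obstacle is bookkeeping the exact cancellations between $\calA\calA^*$, $W$, and $\alpha I_n$ on and off the diagonal; I would do this expansion carefully once at the start so that $\Gamma$ is presented as a clean sum of centered graph matrices before invoking any norm bound.
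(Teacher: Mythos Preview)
Your approach is essentially the same as the paper's: express $\Gamma = B - \alpha I_n$ as a linear combination of graph matrices in the Gaussian entries and apply the graph-matrix norm bound (Theorem~\ref{thm:graph-matrix-norm-bound}) shape by shape, with the dominant contribution $\tilde O(d\sqrt n)$ coming from the off-diagonal four-edge shape ($\alpha_1$ in the paper's notation). One arithmetic slip worth correcting: the diagonal of $W$ is $2w_i + d$, not $2dw_i + d$, so $\Gamma_{ii} = w_i^2 + 2(d-1)w_i - 2d$ rather than $w_i^2 - d$; this gives $\max_i|\Gamma_{ii}| = \tilde O(d^{3/2})$ rather than $\tilde O(d)$, which is still $\le \tilde O(d\sqrt n)$ since $n \ge d$, so the conclusion is unaffected.
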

We now show that Lemmas~\ref{lemma:1T-Binv-1} and \ref{lemma:wT-Binv-w} follow from Lemma~\ref{lemma:B-concentrates}.

\begin{proof}[Proof of Lemma~\ref{lemma:1T-Binv-1}]
By assumption on $n$ and Lemma~\ref{lemma:B-concentrates}, with high probability, it holds that
\[
 0 \preceq (\alpha - \Tilde{O}(d\sqrt{n})) I_n \preceq B \preceq (\alpha + \Tilde{O}(d\sqrt{n})) I_n.
\]
This implies that
\[
(\alpha + \Tilde{O}(d \sqrt{n}))^{-1} I_n \preceq B^{-1} \preceq (\alpha - \Tilde{O}(d \sqrt{n}))^{-1} I_n.
\]
The proof is complete by combining the previous line with the following fact:
\[
\lambda_{min}(B^{-1}) \|1_n\|^2 \le 1_n^T B^{-1} 1_n \le \lambda_{max}(B^{-1}) \|1_n\|^2.
\qedhere \]
\end{proof}

\begin{proof}[Proof of Lemma~\ref{lemma:wT-Binv-w}]
As in the proof of Lemma~\ref{lemma:1T-Binv-1}, by assumption on $n$ and Lemma~\ref{lemma:B-concentrates}, it holds with high probability that:
\begin{align}
\label{eqn:wwT_trace}
    w^T B^{-1} w = \Theta\left(\frac{1}{d^2} \right) \cdot \norm{w}_2^2.
\end{align}
To complete the proof, it suffices to show that $\norm{w}_2^2 = \sum_{i=1}^n (\norm{v_i}_2^2 - d)^2 = \tilde O(nd)$ with high probability. 

Note that for fixed $i$, we have conservatively that
\begin{align} 
\label{eqn:w_coord}
|\| v_i \|_2^2 - d| \leq C(\log n) \sqrt{d} 
\end{align} 
with probability $n^{-C \cdot \Omega(1)}$ by Bernstein's inequality~\cite{vershynin2018high}. Now for a large enough constant $C>0$, using the union bound we have that \eqref{eqn:w_coord} holds for all $1 \leq i \leq n$. Immediately we have $\norm{w}_2^2 = \tilde O(nd)$, proving Lemma \ref{lemma:wT-Binv-w} after combining with \eqref{eqn:wwT_trace}.
\end{proof}

\subsection{Proof of Lemma~\ref{lemma:Astar-Binv-1}}
Define the matrix $\Delta = -\Gamma = \alpha I_n - B \in \R^{n \times n}$. By Lemma~\ref{lemma:B-concentrates}, we have that $\norm{\Delta}_{op} = \Tilde{O}(\max(n, d\sqrt{n}))$ with high probability. By our assumption that $n = O(d^2/\polylog (d))$, we have $\norm{\alpha^{-1} \Delta}_{op} < 1$ (for $d$ large enough). We may then conduct the following (convergent) Neumann series expansion:
\begin{align*}
B^{-1} &= (\alpha I_n - \Delta)^{-1} \\
&= \alpha^{-1} (I_n - \alpha^{-1} \Delta)^{-1} \\
&= \alpha^{-1} \sum_{k=0}^\infty (\alpha^{-1} \Delta)^k.
\end{align*}

Thus, we have that
\[
\lambda_{min}(\calA^* (B^{-1} 1_n)) \geq \alpha^{-1} \lambda_{min} (\calA^{*}(1_n)) - \sum_{k=1}^\infty \alpha^{-(k+1)} \norm{\calA^*(\Delta^k 1_n)}_{op}.
\]
It is a standard fact from random matrix theory (see e.g.\ Theorem 4.7.1 of \cite{vershynin2018high}) that when $n = \omega(d)$, then with high probability:
\[
\lambda_{min}(\calA^{*}(1_n)) = \lambda_{min}\left(\sum_{i=1}^n v_i v_i^T\right) = (1-o(1))n.
\]
To complete the proof, it suffices to show that with high probability:
\[
\sum_{k=1}^\infty \alpha^{-(k+1)} \norm{\calA^*(\Delta^k 1_n)}_{op} = o\left(\frac{n}{d^2}\right).
\]
To this end, introduce a truncation parameter $T \in \N$ and write:
\[
\sum_{k=1}^\infty \alpha^{-(k+1)} \norm{\calA^*(\Delta^k 1_n)}_{op}  = \sum_{k=1}^{T-1} \alpha^{-(k+1)} \norm{\calA^*(\Delta^k 1_n)}_{op}  + \sum_{k=T}^\infty \alpha^{-(k+1)} \norm{\calA^*(\Delta^k 1_n)}_{op}.
\]
Now, take $T=2$ and recall $n \le d^2/\polylog (d)$. The proof is complete by invoking Lemma~\ref{lemma:Astar-Delta-k-ones} below with $k=1$ to control the first summation and Lemma~\ref{lemma:series-tail-Astar-Binv-ones} below with $T = 2$ to control the second summation. 

Although Lemma \ref{lemma:series-tail-Astar-Binv-ones} below is only required with $T = 2$ in order to prove Lemma \ref{lemma:Astar-Binv-1}, its general form with $T \geq 2$ is crucial to the proofs of Lemmas \ref{lemma:1T-Binv-w} and \ref{lemma:Astar-Binv-w}. 

\begin{lemma}
\label{lemma:series-tail-Astar-Binv-ones}
Suppose $T \geq 1$. There is some constant $C > 0$ such that if $d \le n \le d^2 / \log^C(d)$ then with high probability, it holds that
\[
\sum_{k=T}^\infty \alpha^{-(k+1)} \norm{\calA^*(\Delta^k 1_n)}_{op} =  \Tilde{O} \left( \frac{\sqrt{n}}{d} \right)^{T+1}.
\]
\end{lemma}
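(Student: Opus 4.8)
The plan is to reduce the infinite tail $\sum_{k=T}^\infty \alpha^{-(k+1)} \|\calA^*(\Delta^k 1_n)\|_{op}$ to a single dominant term, namely $k=T$, plus a geometrically decaying remainder. The key quantitative input will be a bound of the shape $\|\calA^*(\Delta^k 1_n)\|_{op} = \tilde O(\alpha^{k} \cdot (\sqrt n/d)^{\,k+1} \cdot d^2 / n) \cdot (\text{something})$; more precisely, I expect to invoke (a $k$-indexed version of) Lemma~\ref{lemma:Astar-Delta-k-ones}, which controls $\|\calA^*(\Delta^k 1_n)\|_{op}$ for a single $k$, and to combine it with the operator-norm bound $\|\Delta\|_{op} = \|B - \alpha I_n\|_{op} = \tilde O(d\sqrt n)$ from Lemma~\ref{lemma:B-concentrates} to handle the factors of $\Delta$ beyond the first few. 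The heuristic behind the target exponent is this: each application of $\Delta$ contributes a factor $\|\Delta\|_{op}/\alpha = \tilde O(d\sqrt n)/d^2 = \tilde O(\sqrt n/d)$ to the normalized quantity $\alpha^{-1}\|\calA^*(\Delta^k 1_n)\|$, and the ``base case'' $\alpha^{-1}\|\calA^*(1_n)\| = \alpha^{-1}\|\sum_i v_iv_i^T\|_{op} = \tilde O(n/d^2) = \tilde O((\sqrt n/d)^2)$; so the $k$-th term of the series scales like $(\sqrt n/d)^{k+1}$, and the whole tail from $k=T$ onward is dominated by its first term $\tilde O((\sqrt n/d)^{T+1})$, since $\sqrt n/d = \tilde o(1)$ in our regime makes the series geometric with ratio $\tilde o(1)$.

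Concretely, the steps I would carry out are: (i) Fix the high-probability event on which Lemma~\ref{lemma:B-concentrates} holds, so that $\|\Delta\|_{op} \le \tilde O(d\sqrt n)$; on this event $\|\alpha^{-1}\Delta\|_{op} = \tilde O(\sqrt n/d) =: \delta$, and for $C$ large enough $\delta \le 1/2$, so every Neumann-type series below converges. (ii) For each $k \ge T$, write $\Delta^k 1_n = \Delta^{k-T}(\Delta^T 1_n)$ and bound $\|\calA^*(\Delta^k 1_n)\|_{op}$. The subtlety is that $\calA^*$ is not multiplicative, so I cannot simply pull $\Delta^{k-T}$ outside; instead I use $\|\calA^*(x)\|_{op} \le \|\calA^*\|_{op\to op}\,\|x\|_2$ is too lossy, so the better route is: $\calA^*(\Delta^k 1_n) = \calA^*(\Delta^{k-1}(\Delta 1_n))$, and since $\calA^*(y) = \sum_i y_i v_i v_i^T$ we have $\|\calA^*(y)\|_{op} \le \|y\|_\infty \sum_i \|v_i\|^2$ or, better, $\|\calA^*(y)\|_{op} \le \|\calA^*\|_{2\to op}\|y\|_2$ where $\|\calA^*\|_{2\to op}$ is itself $\|\calA\calA^*\|_{op}^{1/2} = \tilde O(d)$ (using Lemma~\ref{lemma:B-concentrates} plus the low-rank part). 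Then $\|\calA^*(\Delta^k 1_n)\|_{op} \le \tilde O(d)\,\|\Delta^k 1_n\|_2 \le \tilde O(d)\,\|\Delta\|_{op}^{k-1}\|\Delta 1_n\|_2$. (iii) Sum over $k \ge T$: this gives a geometric series $\tilde O(d)\|\Delta 1_n\|_2 \sum_{k\ge T}\alpha^{-(k+1)}\|\Delta\|_{op}^{k-1} = \tilde O(d)\|\Delta 1_n\|_2\,\alpha^{-2}\|\Delta\|_{op}^{-1}\sum_{k\ge T}\delta^k \asymp \tilde O(d)\|\Delta 1_n\|_2\,\alpha^{-2}\|\Delta\|_{op}^{-1}\delta^T$, and then I plug in $\alpha = d^2+d$, $\|\Delta\|_{op} = \tilde O(d\sqrt n)$, $\delta = \tilde O(\sqrt n/d)$, and a bound $\|\Delta 1_n\|_2 = \tilde O(d\sqrt n\cdot\sqrt n) = \tilde O(d n)$ (from $\|\Delta\|_{op}\|1_n\|_2$), and check the powers of $n$ and $d$ collapse to $(\sqrt n/d)^{T+1}$ up to polylog.

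The main obstacle I anticipate is step (ii): the crude bound $\|\calA^*(\Delta^k 1_n)\|_{op}\le \tilde O(d)\|\Delta\|_{op}^{k-1}\|\Delta 1_n\|_2$ obtained by treating $\calA^*$ via its $\ell_2\to op$ norm may be too lossy to hit the target exponent $T+1$ exactly — it tends to lose roughly a factor $\sqrt n/d$ per step compared to what is needed, which is fine, but it may lose an extra $\sqrt{d}$ or $\polylog$ somewhere that the statement cannot afford. If so, the fix is to not bound $\calA^*(\Delta^k 1_n)$ monolithically but to keep one factor of $\Delta$ attached to $\calA^*$ and invoke Lemma~\ref{lemma:Astar-Delta-k-ones} (the graph-matrix bound on $\|\calA^*(\Delta 1_n)\|_{op}$, i.e.\ the $k=1$ case), peeling $\Delta^{k-1}$ off with operator norm and using $\calA^*(\Delta 1_n)$ as the refined base object; this is exactly the reason the lemma is phrased for general $T\ge 2$ even though only $T=2$ is needed for Lemma~\ref{lemma:Astar-Binv-1}. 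I would set up the induction/peeling so that the ``expensive'' graph-matrix estimate is used exactly once, at the bottom of the tower, and all remaining factors are handled by the cheap operator-norm bound from Lemma~\ref{lemma:B-concentrates}, union-bounding over the (at most logarithmically many relevant) values of $k$ after truncating the geometric tail.
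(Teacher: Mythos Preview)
Your approach is correct and is essentially the paper's proof, but you are overthinking it. The paper uses precisely the ``crude'' bound you sketch in step~(ii): it writes
\[
\|\calA^*(\Delta^k 1_n)\|_{op} \le \|\calA^*\|_{2\to op}\,\|\Delta\|_{op}^k\,\|1_n\|_2,
\]
invokes $\|\calA^*\|_{2\to op}=O(d)$ (cited from Lemma~3 of~\cite{saunderson2011subspace}) and $\alpha^{-1}\|\Delta\|_{op}=\tilde O(\sqrt n/d)$ from Lemma~\ref{lemma:B-concentrates}, and sums the geometric series. A one-line check shows the $k$th term is $\tilde O((\sqrt n/d)^{k+1})$, so the tail from $k=T$ is $\tilde O((\sqrt n/d)^{T+1})$ on the nose---no extra $\sqrt d$ is lost, and the graph-matrix fallback via Lemma~\ref{lemma:Astar-Delta-k-ones} is not needed for this lemma.

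Two small corrections to your write-up: (a) the split $\|\Delta^k 1_n\|_2 \le \|\Delta\|_{op}^{k-1}\|\Delta 1_n\|_2$ followed by $\|\Delta 1_n\|_2\le \|\Delta\|_{op}\|1_n\|_2$ is just $\|\Delta\|_{op}^k\|1_n\|_2$, so the detour through $\Delta 1_n$ buys nothing; (b) your identity $\|\calA^*\|_{2\to op}=\|\calA\calA^*\|_{op}^{1/2}$ is not right---that equality holds for the $2\to F$ norm (Frobenius on the target), not the $2\to op$ norm, and in fact $\|\calA\calA^*\|_{op}$ can be as large as $\Theta(dn)$ here because of the rank-one piece $d\,1_n1_n^T$ in $W$. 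Just cite the Saunderson bound directly, as the paper does.
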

\begin{proof}
Note that 
\[
\norm{\calA^*(\Delta^k 1_n)}_{op} \leq \norm{\calA^*}_{2 \rightarrow op} \norm{\Delta}_{op}^k \norm{1_n}_2.
\]
By Lemma~\ref{lemma:B-concentrates}, $\alpha^{-1} \norm{\Delta}_{op} = \Tilde{O}(\sqrt{n}/d)$ with high probability by assumption on $n$. Combining these with the fact that $\norm{\calA^*}_{2 \rightarrow op} = O(d)$ with high probability when $n = o(d^2)$ (see Lemma 3 of \cite{saunderson2011subspace}), we may conclude by the geometric decay of the terms in the series that \[
\sum_{k=T}^\infty \alpha^{-(k+1)} \norm{\calA^*(\Delta^k 1_n)}_{op} = (\alpha^{-T} \norm{\Delta}^T_{op}) \cdot \Tilde{O}(\alpha^{-1}d\sqrt{n}) = \Tilde{O} \left( \frac{\sqrt{n}}{d} \right)^{T+1}.
\]
\end{proof}

\begin{lemma}
\label{lemma:Astar-Delta-k-ones}
Let $k \in \Z_{\geq 1}$ be fixed. Then with probability $1 - n^{-\Omega(1)}$, it holds that
\[
\norm{\calA^*(\Delta^k 1_n)}_{op} \leq (\log n)^{O(k)} \cdot \sqrt{d}n^{3/4} \cdot O( \sqrt{n}d)^k.
\]
\end{lemma}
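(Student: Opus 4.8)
The plan is to expand $\calA^*(\Delta^k 1_n)$ combinatorially and recognize the resulting terms as graph matrices, then invoke the norm bounds of Section~\ref{sec:graph_matrices}. Writing $\calA^*(\Delta^k 1_n)=\sum_{i_0,\dots,i_k\in[n]}\Delta_{i_0 i_1}\cdots\Delta_{i_{k-1}i_k}\,v_{i_0}v_{i_0}^T$, the first step is to put the entries of $\Delta=\alpha I_n-B$ into centered form. Using $\langle v_i,v_j\rangle^2=\sum_{a\neq b}v_{ia}v_{ib}v_{ja}v_{jb}+\sum_a(v_{ia}^2-1)(v_{ja}^2-1)+\norm{v_i}_2^2+\norm{v_j}_2^2-d$, one finds, for $i\neq j$,
\[
\Delta_{ij}=-\sum_{a\neq b}v_{ia}v_{ib}v_{ja}v_{jb}-\sum_a (v_{ia}^2-1)(v_{ja}^2-1),
\]
together with a similar mean-zero expression for $\Delta_{ii}$ (a degree-$4$ Hermite term, a degree-$2$ loop term, and a term $-(2d+2)w_i$). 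The key point is that in this form every summand is already centered and carries no large scalar part, so no cancellations are hidden inside $\Delta$ itself — in contrast to the splitting $\Delta=\alpha I_n-B$, whose pieces have operator norm $\approx\alpha$ and cancel.

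Second, substitute these expressions into the walk sum, put $v_{i_0}v_{i_0}^T$ back in, and split on the coincidence pattern of the point-indices $i_0,\dots,i_k\in[n]$ and of the coordinate-indices (whether internal $[d]$-indices agree with each other or with the row and column indices $a,b$ of the matrix). This produces a decomposition $\calA^*(\Delta^k 1_n)=\sum_\tau c_\tau M_\tau$ into $O(1)^k$ graph matrices $M_\tau\in\R^{d\times d}$ on the two-colored index set ($[n]$ for points, $[d]$ for coordinates), each with boundary the two $[d]$-vertices indexing the matrix, with $O(k)$ vertices and edges, edge-labels given by Hermite polynomials of degree at most $4$, and scalar coefficients $c_\tau$ that are bounded-degree polynomials in $d$ (coming from the $(2d+2)$ factors). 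Third, apply the graph-matrix norm bound to each term: with failure probability $n^{-\Omega(1)}$ (taking a moderate-order moment), $\norm{M_\tau}_{op}\leq(\log n)^{O(|V(\tau)|+|E(\tau)|)}\cdot F(\tau)$, where $F(\tau)$ is the usual product of $\sqrt n$ and $\sqrt d$ factors determined by the vertices of $\tau$ lying outside a minimum vertex separator, with the appropriate edge-weight corrections; since $|V(\tau)|,|E(\tau)|=O(k)$ the prefactor is $(\log n)^{O(k)}$. A union bound over the $O(1)^k$ shapes completes the probabilistic part.

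The remaining task — and the bulk of the work — is the combinatorial optimization: to verify that $\max_\tau |c_\tau|\,F(\tau)\leq\sqrt d\,n^{3/4}\cdot O(\sqrt{n}d)^k$. The expected dominant shape is the ``generic'' one — a chain of $k$ four-cycles (the $\ell$-th cycle on the vertices $i_{\ell-1},a_\ell,i_\ell,b_\ell$), obtained by taking the degree-$4$ piece of every factor and keeping all indices distinct, with $i_0$ additionally joined to the two boundary vertices $a,b$ — and one checks its contribution is at most the claimed bound (the $O(\sqrt{n}d)^k$ reflecting one new point-vertex and the internal $[d]$-structure contributed by each factor). Then one verifies that all other shapes are no larger: shapes using degree-$2$ factors (one fewer internal $[d]$-vertex, but a heavier edge), shapes with coincidences among the $i_\ell$'s (which shorten the walk and create loop/$w_i$-type structure), and shapes with coincidences among coordinate-indices.

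The step I expect to be the main obstacle is exactly this case analysis over degenerate shapes. The delicate ones are the coincidence patterns that, in a careless bookkeeping, would yield scalar multiples of $I_n$ or $I_d$ of size $\sim\alpha^j$ and seem to overshoot the target; the centered form of $\Delta$ is what prevents those large constants from appearing (a coincidence $i_{\ell-1}=i_\ell$ now contributes only a factor of size $\tilde{O}(d^{3/2})$, not $\alpha$). Even so, one still has to control subterms such as $\calA^*(w)=\sum_i w_i v_iv_i^T$ tightly: this is genuinely $\approx 2n\,I_d$, much smaller than the crude bound $\norm{\calA^*}_{2\to op}\norm{w}_2$, and establishing that requires a small graph-matrix estimate of its own rather than a black-box norm bound. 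Everything else is routine given the machinery of Section~\ref{sec:graph_matrices}.
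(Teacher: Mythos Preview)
Your proposal is correct and follows essentially the same route as the paper: decompose $\Delta$ into proper shapes via Hermite polynomials, expand the product $\calA^*(\Delta^k 1_n)$ as a sum over identification patterns (your ``coincidence patterns''), apply Theorem~\ref{thm:graph-matrix-norm-bound} to each resulting graph matrix, and carry out a combinatorial optimization over shapes. The paper organizes the last step somewhat differently: rather than isolating a dominant shape and checking that all degenerations are no larger, it introduces weight functions $w_{\text{ideal},P}$ and $w_{\text{local},P}$ on the individual factors $\beta_0,\beta_1,\ldots,\beta_{k+1}$ satisfying three multiplicative properties (P.1--P.3), so that the global bound factors as $\prod_j w_{\text{local},P}(\beta_j)$ with each middle factor at most $d\sqrt{n}$ and the two boundary factors together at most $\max(\sqrt{d}n^{3/4},n)$. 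This packaging makes the case analysis short (Tables~\ref{table:alpha1-casework} and~\ref{table:alpha2a-casework}) and reusable for the harder companion lemmas. One minor remark: your concern about $\calA^*(w)$-type subterms needing their own graph-matrix estimate is more pertinent to Lemmas~\ref{lemma:Astar-Delta-k-w} and~\ref{lemma:1T-Delta-k-w} (where $\beta_{k+1}=\alpha_w$ and the critical-edge machinery is genuinely required) than to the present lemma, for which the local weight scheme already suffices without further adjustment.
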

The proof of this lemma is deferred to Section~\ref{sec:graph_matrices}.

\subsection{Proof of Lemma~\ref{lemma:Astar-Binv-w}}
Let $T \in \N$ be a truncation parameter. Using the same power series expansion as in the proof of Lemma~\ref{lemma:Astar-Binv-1} and the triangle inequality, we have that
\[
\norm{\calA^*(B^{-1}w)}_{op} \leq \sum_{k=0}^{T-1} \alpha^{-(k+1)} \norm{\calA^*(\Delta^k w)}_{op}  + \sum_{k=T}^\infty \alpha^{-(k+1)} \norm{\calA^*(\Delta^k w)}_{op}.
\]
Now, let $C > 0$ be an absolute constant whose value we determine in the following, let $T = C \log(d)$ be an integer and let $n = d^2 / \log^C (d)$. Our choice of $C>0$ depends on Lemmas~\ref{lemma:series-tail-Astar-Binv-w} and~\ref{lemma:Astar-Delta-k-w} that are stated below. There exists a sufficiently large choice of absolute constant $C>0$ such that invoking Lemma~\ref{lemma:series-tail-Astar-Binv-w} with $T = C \log (d)$ ensures the second summation above is $o(1)$ with high probability. There also exists a sufficiently large choice of absolute constant $C>0$ such that a union bound and invocation of Lemma~\ref{lemma:Astar-Delta-k-w},  for all $k \in \{0,\ldots, T-1\}$ ensures the first summation above is $o(1)$ with high probability. Setting $C$ to be the maximum of these two choices completes the proof.

\begin{lemma}
\label{lemma:series-tail-Astar-Binv-w}
Suppose $T \geq 1$. There is some constant $C > 0$ such that if $d \le n \le d^2 / \log^C(d)$ then with high probability, it holds that
\[
\sum_{k=T}^\infty \alpha^{-(k+1)} \norm{\calA^*(\Delta^k w)}_{op} = \sqrt{d}  \cdot \Tilde{O} \left( \frac{\sqrt{n}}{d} \right)^{T}.
\]
\end{lemma}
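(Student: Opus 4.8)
The plan is to follow the proof of Lemma~\ref{lemma:series-tail-Astar-Binv-ones} almost verbatim, the only change being that the crude bound $\norm{1_n}_2 = \sqrt n$ is replaced by the high-probability bound $\norm{w}_2 = \Tilde O(\sqrt{nd})$ established inside the proof of Lemma~\ref{lemma:wT-Binv-w}. First I would use submultiplicativity of the relevant norms to write
\[
\norm{\calA^*(\Delta^k w)}_{op} \le \norm{\calA^*}_{2\to op}\,\norm{\Delta}_{op}^k\,\norm{w}_2 .
\]

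Next I would collect three ingredients, each holding with high probability under $d \le n \le d^2/\log^C(d)$: (i) $\norm{\calA^*}_{2\to op} = O(d)$ by Lemma~3 of \cite{saunderson2011subspace}, valid since $n = o(d^2)$; (ii) $\alpha^{-1}\norm{\Delta}_{op} = \Tilde O(\sqrt n/d)$, which follows from Lemma~\ref{lemma:B-concentrates} since $\Delta = \alpha I_n - B$ has $\norm{\Delta}_{op} = \Tilde O(d\sqrt n)$ while $\alpha = d^2 + d \asymp d^2$; and (iii) $\norm{w}_2^2 = \sum_i (\norm{v_i}_2^2 - d)^2 = \Tilde O(nd)$, exactly as shown in the proof of Lemma~\ref{lemma:wT-Binv-w} via Bernstein's inequality and a union bound over the $n$ coordinates. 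Plugging these in gives, for each $k$,
\[
\alpha^{-(k+1)}\norm{\calA^*(\Delta^k w)}_{op} \le \big(\alpha^{-1}\norm{\Delta}_{op}\big)^k\cdot \alpha^{-1}\,\norm{\calA^*}_{2\to op}\,\norm{w}_2 = \Tilde O\!\left(\tfrac{\sqrt n}{d}\right)^{k}\cdot \Tilde O\!\left(\tfrac{\sqrt{nd}}{d}\right),
\]
and since $\Tilde O(\sqrt{nd}/d) = \sqrt d\cdot \Tilde O(\sqrt n/d)$, summing the geometric series from $k = T$ (whose ratio $\Tilde O(\sqrt n/d)$ is below $1$ once $C$ is large, because $n\le d^2/\log^C d$) yields
\[
\sum_{k=T}^\infty \alpha^{-(k+1)}\norm{\calA^*(\Delta^k w)}_{op} = \sqrt d\cdot \Tilde O\!\left(\tfrac{\sqrt n}{d}\right)^{T+1} = \sqrt d\cdot \Tilde O\!\left(\tfrac{\sqrt n}{d}\right)^{T},
\]
the last step again absorbing one factor of $\Tilde O(\sqrt n/d) < 1$.

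There is essentially no obstacle here beyond bookkeeping: the only points that need care are tracking the single extra factor of $\sqrt d$ relative to the $1_n$-version (coming from $\norm{w}_2 = \Tilde O(\sqrt{nd})$ versus $\norm{1_n}_2 = \sqrt n$), and choosing $C$ large enough that the Neumann tail is genuinely dominated by its leading ($k=T$) term. The genuinely hard estimate — controlling the individual terms $\norm{\calA^*(\Delta^k w)}_{op}$ for small $k$ via graph matrices — is isolated in Lemma~\ref{lemma:Astar-Delta-k-w} and is not needed here.
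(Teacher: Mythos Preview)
Your proposal is correct and follows essentially the same approach as the paper: submultiplicativity, then the three high-probability ingredients $\norm{\calA^*}_{2\to op}=O(d)$, $\alpha^{-1}\norm{\Delta}_{op}=\Tilde O(\sqrt n/d)$, and $\norm{w}_2=\Tilde O(\sqrt{nd})$, followed by summing the geometric tail. Your identification of the extra $\sqrt d$ factor as coming from $\norm{w}_2$ versus $\norm{1_n}_2$ is exactly right.
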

\begin{proof}
Note that 
\[
\norm{\calA^*(\Delta^k w)}_{op} \leq \norm{\calA^*}_{2 \rightarrow op} \norm{\Delta}_{op}^k \norm{w}_2.
\]
By Lemma~\ref{lemma:B-concentrates} and assumption on $n$, $\alpha^{-1} \norm{\Delta}_{op} = \Tilde{O}(\sqrt{n}/d)$ with high probability. A standard calculation (see the proof of Lemma \ref{lemma:wT-Binv-w}) reveals that $\norm{w}_2 = \Tilde{O}(\sqrt{nd})$ with high probability. Combining these with the fact that $\norm{\calA^*}_{2 \rightarrow op} = O(d)$ with high probability when $n = o(d^2)$ (see Lemma 3 of \cite{saunderson2011subspace}), we may conclude that \[
\sum_{k=T}^\infty \alpha^{-(k+1)} \norm{\calA^*(\Delta^k w)}_{op} = (\alpha^{-T} \norm{\Delta}^T_{op}) \cdot \Tilde{O}(\alpha^{-1}d^{3/2}\sqrt{n}) = \sqrt{d} \cdot \Tilde{O} \left( \frac{\sqrt{n}}{d} \right)^{T}.
\]
\end{proof}

\begin{lemma}
\label{lemma:Astar-Delta-k-w}
Let $k \in \Z_{\geq 0}$. Then with probability $1 - n^{-\Omega(1)}$, it holds that
\[
\norm{\calA^*(\Delta^k w)}_{op} \leq (\log n)^{O(k)} \cdot d\sqrt{n} \cdot O( \sqrt[4]{n}d^{3/2})^k.
\]
\end{lemma}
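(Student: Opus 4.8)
The plan is to prove this with the graph-matrix machinery of Section~\ref{sec:graph_matrices}, by writing $\calA^*(\Delta^k w)$ as a sum of graph matrices over the i.i.d.\ standard Gaussians $g_{ic} := (v_i)_c$ ($i \in [n]$, $c \in [d]$) and bounding each term. The starting point is a Hermite expansion of $\Delta$. Writing $h_2(x) = x^2 - 1$, $h_4(x) = x^4 - 6x^2 + 3$, and using $w_i = \sum_c h_2(g_{ic})$ and $\alpha = d^2 + d$, a direct computation gives, for $i \ne j$,
\[
\Delta_{ij} = -\sum_{c} h_2(g_{ic})h_2(g_{jc}) - \sum_{c\ne e} g_{ic}g_{ie}g_{jc}g_{je},
\]
and on the diagonal $\Delta_{ii} = -\sum_c h_4(g_{ic}) - \sum_{c\ne e} h_2(g_{ic})h_2(g_{ie}) - (2d+2)\sum_c h_2(g_{ic})$. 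Each of these summands, together with the linear map $x \mapsto \calA^*(x) = \sum_i x_i v_iv_i^T$ and the vector $w$, is (a constant or $O(d)$ multiple of) a graph matrix living over two kinds of vertices: ``point'' vertices indexed by $[n]$ and ``coordinate'' vertices indexed by $[d]$. For instance, the second off-diagonal piece of $\Delta$ is the graph matrix whose row and column indices are the point vertices $i,j$ and which has two internal coordinate vertices each joined to both $i$ and $j$ by degree-$1$ edges; the map $\calA^*$ is the atom with two distinguished coordinate vertices $a,b$ (the output row and column) and an internal point vertex joined to both by degree-$1$ edges; and $w$ is the atom with a point vertex joined to an internal coordinate vertex by a degree-$2$ edge.

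Next I would multiply out. Chaining the $\calA^*$-atom, a length-$k$ word of $\Delta$-atoms, and the $w$-atom along the shared point vertices $i_0, i_1, \dots, i_k$ (a diagonal atom forces $i_t = i_{t+1}$), and then contracting coordinate vertices according to an arbitrary pattern --- including contractions onto the distinguished vertices $a,b$, which produce the nonzero mean of $\calA^*(\Delta^k w)$ --- writes $\calA^*(\Delta^k w) = \sum_\tau \lambda_\tau M_\tau$ as a sum over shapes. Each $\tau$ is a caterpillar-like shape on $O(k)$ vertices with $O(k)$ edges, there are at most $k^{O(k)}$ of them, and the scalars satisfy $|\lambda_\tau| \le d^{O(k)}$ (picking up a factor of order $d$ for each diagonal atom and whenever a contraction turns a freely-summed coordinate vertex into a scalar). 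To each $M_\tau$ I would apply the graph-matrix norm bound of Section~\ref{sec:graph_matrices} (following \cite{ahn2016graph}): for a fixed shape $\tau$, with probability $1 - n^{-\omega(1)}$, $\|M_\tau\|_{op}$ is at most $(\log n)^{O(k)}$ times a monomial in $n$ and $d$ read off from $\tau$ --- essentially a product of $\sqrt{|\Omega_v|}$ over the vertices $v$ of $\tau$ (with $|\Omega_v| = n$ for point vertices and $|\Omega_v| = d$ for coordinate vertices), corrected according to a minimum vertex separator of $\tau$ between the row- and column-index vertices $a$ and $b$.

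It then remains to check, shape by shape, that $|\lambda_\tau| \cdot \|M_\tau\|_{op} \le (\log n)^{O(k)}\, d\sqrt n\, O(n^{1/4}d^{3/2})^k$ --- with the dominant contributions expected from the words rich in the second off-diagonal atom of $\Delta$, together with the degenerate shapes produced by the diagonal atoms --- and then to sum over the $\le k^{O(k)}$ shapes (absorbing $k^{O(k)}$ and the per-shape logarithmic factors into $(\log n)^{O(k)}$) and union-bound the $n^{-\omega(1)}$ failure probabilities, which remain below $n^{-\Omega(1)}$. The analogous but simpler calculation with $1_n$ in place of $w$ yields Lemma~\ref{lemma:Astar-Delta-k-ones} and can serve as a template.

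The main obstacle is the shape-by-shape norm bookkeeping in the last step: carrying the graph-matrix norm bound through the bipartite-type structure with the two distinct ground-set sizes $n$ and $d$; correctly accounting for the diagonal atoms of $\Delta$ and for the contracted and degenerate shapes, which a crude triangle-inequality argument does \emph{not} render negligible --- already $\calA^*(Dw)$, where $D$ is the diagonal part of $\Delta$, has mean of order $nd^2$, only a $\polylog$ factor below the claimed bound --- and verifying that the minimum vertex separators produce exactly the exponents $d^{1+3k/2}\, n^{1/2 + k/4}$ rather than something larger.
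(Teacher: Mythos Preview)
Your framework is the same as the paper's: expand $\Delta$ into the five proper shapes $\alpha_1,\alpha_{2a},\alpha_{3a},\alpha_{3b},\alpha_4$, write $\calA^*(\Delta^k w)$ as a sum over identification patterns $P$ of graph matrices $M_{\alpha_P}$, and bound each $\|M_{\alpha_P}\|_{op}$ via Theorem~\ref{thm:graph-matrix-norm-bound}. Your Hermite expansion of $\Delta$ is correct (up to normalization), the shape count $k^{O(k)}$ is fine, and the union bound is harmless.

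The gap is in the step you flag as ``shape-by-shape bookkeeping,'' and it is not just bookkeeping. You say the $1_n$ calculation ``can serve as a template''; the paper shows this is exactly where the two diverge. For $\calA^*(\Delta^k 1_n)$ the paper assigns to each $\beta_j$ a \emph{local} weight based only on which of its vertices are identified left/right; this already gives $|c(\beta_j)|\,w_{\text{local},P}(\beta_j) \le d^{3/2}n^{1/4}$ per factor, and the product yields Lemma~\ref{lemma:Astar-Delta-k-ones}. For $w$ that scheme fails: when the degree-$2$ edge of $\alpha_w$ is identified with an earlier edge and \emph{vanishes} under the Hermite linearization, the local scheme charges the square of $\alpha_w$ an extra $d^{1/4}$ that pushes the bound above $d\sqrt n \cdot (d^{3/2}n^{1/4})^k$. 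This is precisely the ``contracted/degenerate'' situation you worry about, and a per-shape triangle inequality does not fix it.

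The paper's resolution is a nonlocal redistribution of weight. It introduces \emph{right-critical} and \emph{left-critical} edges and proves (Lemma~\ref{lemma:vanishing-rc-to-lc}) that whenever the edge of $\alpha_w$ vanishes, there is some earlier $\beta_{j'}$ carrying a left-critical edge whose square endpoint is the same as $x_{\text{extra}}$. One then defines $w_{\text{actual},P}$ by underpaying $\alpha_w$ (setting $w_{\text{actual}}(x_{\text{extra}})=\sqrt d/n^{1/4}$) and overpaying that specific $\beta_{j'}$ by the matching factor; Lemma~\ref{lem:pay-for-extra-square} checks case by case that $w_{\text{actual},P}(\beta_{j'})w_{\text{actual},P}(\beta_{k+1}) \ge w_{\text{ideal},P}(\beta_{j'})w_{\text{ideal},P}(\beta_{k+1})$ while still keeping $|c(\beta_{j'})|\,w_{\text{actual},P}(\beta_{j'}) \le O(d^{3/2}n^{1/4})$. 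Without this pairing-and-transfer idea the argument does not close, so your proposal is missing the one genuinely new ingredient needed beyond the $1_n$ case.
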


% \pax{If the edge $\{u_w, x_{extra}\}$ does not vanish, I obtained the bound, 
% \[
% \norm{\calA^*(\Delta^k 1_n)}_{op} \leq (\log n)^{O(k)} \cdot ( d\sqrt{n})^{k+1},
% \]
% which is bigger than the bound in the lemma.
% I got this because $\beta_0$ can have weight $d^{1/2} n^{1/4}$ (ie for $A^*_1$), the shapes $\beta_1 , \ldots, \beta_k$ can have weight $d\sqrt{n}$, and $\beta_{k+1}$ would have weight $n^{1/4} \cdot d^{1/2}$. See also my comment on page 37.
% }
The proof of this lemma is deferred to Section~\ref{sec:graph_matrices}.

\subsection{Proof of Lemma~\ref{lemma:1T-Binv-w}}
Let $T \in \N$ be a truncation parameter. Using the same power series expansion as in the proof of Lemma~\ref{lemma:Astar-Binv-1} and the triangle inequality, we have that
\[
|1^T_n B^{-1}w| \leq \sum_{k=0}^{T-1} \alpha^{-(k+1)} |1^T_n\Delta^k w|  + \sum_{k=T}^\infty \alpha^{-(k+1)} |1^T_n\Delta^k w|.
\]
The argument requires Lemmas~\ref{lemma:series-tail-1T-Binv-w} and ~\ref{lemma:1T-Delta-k-w} stated below. Now, let $C > 0$ be some constant whose value we determine in the following, let $T = C \log(d)$ and let $n = d^2 / \log^C (d)$. There exists a sufficiently large choice of $C$ such that invoking Lemma~\ref{lemma:series-tail-1T-Binv-w} with $T = C \log (d)$ an integer ensures the second summation above is $o(1)$ with high probability. There also exists a sufficiently large choice of $C$ such that invoking Lemma~\ref{lemma:1T-Delta-k-w} for all $k \in \{0,\ldots, T-1\}$ with $\epsilon = o(1/T) = o(1/\log (d))$ ensures the first summation above is $o(1)$ with high probability. Setting $C$ to be the maximum of these two choices completes the proof.

\begin{lemma}
\label{lemma:series-tail-1T-Binv-w}
Suppose $T \geq 1$. There is some constant $C > 0$ such that if $d \le n \le d^2 / \log^C(d)$ then with high probability, it holds that
\[
\sum_{k=T}^\infty \alpha^{-(k+1)} |1^T_n\Delta^k w| =  \sqrt{d} \cdot \Tilde{O} \left( \frac{\sqrt{n}}{d} \right)^{T}.
\]
\end{lemma}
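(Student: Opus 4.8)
The plan is to mirror the proofs of Lemmas~\ref{lemma:series-tail-Astar-Binv-ones} and~\ref{lemma:series-tail-Astar-Binv-w}, but replacing the operator-norm bound on $\calA^*$ by a direct Cauchy--Schwarz estimate, since here we are contracting $\Delta^k w$ against the vector $1_n$ rather than pushing it through $\calA^*$. First I would write, for each $k \geq T$,
\[
|1_n^T \Delta^k w| \;\leq\; \|1_n\|_2 \, \|\Delta\|_{op}^k \, \|w\|_2 \;=\; \sqrt{n}\,\|\Delta\|_{op}^k\,\|w\|_2,
\]
using $\|1_n\|_2 = \sqrt{n}$ and submultiplicativity of the operator norm.

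Next I would plug in the two high-probability estimates already established: by Lemma~\ref{lemma:B-concentrates} and the assumption $n \leq d^2/\log^C(d)$, we have $\alpha^{-1}\|\Delta\|_{op} = \tilde O(\sqrt{n}/d)$, which in particular is $o(1)$ for $d$ large (so the geometric series converges); and by the $\chi^2$-concentration argument from the proof of Lemma~\ref{lemma:wT-Binv-w}, $\|w\|_2 = \tilde O(\sqrt{nd})$ with high probability. Summing the geometric series from $k=T$,
\[
\sum_{k=T}^{\infty} \alpha^{-(k+1)} |1_n^T \Delta^k w|
\;\leq\; \sqrt{n}\,\|w\|_2\,\alpha^{-1} \sum_{k=T}^\infty \big(\alpha^{-1}\|\Delta\|_{op}\big)^k
\;=\; \big(\alpha^{-1}\|\Delta\|_{op}\big)^{T} \cdot \tilde O\!\big(\alpha^{-1}\sqrt{n}\cdot\sqrt{nd}\big),
\]
and since $\alpha^{-1} n\sqrt{d} = \tilde O(\sqrt{d})$ (as $\alpha = d^2 + d \geq d^2$ and $n \leq d^2$), the right-hand side is $\sqrt{d}\cdot\tilde O\big(\sqrt{n}/d\big)^{T}$, which is the claimed bound. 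All the $\tilde O$ factors absorb the polylogarithmic losses from Lemma~\ref{lemma:B-concentrates} and the tail bound on $\|w\|_2$.

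I do not expect a genuine obstacle here: unlike Lemmas~\ref{lemma:1T-Delta-k-w} and~\ref{lemma:Astar-Delta-k-w}, which require the graph-matrix machinery to get per-term bounds that beat the trivial operator-norm estimate (the trivial estimate is too lossy for the \emph{low-order} terms $k < T$), the tail sum is controlled purely by the geometric decay coming from $\alpha^{-1}\|\Delta\|_{op} = o(1)$, so only crude norm bounds are needed. The one point to be careful about is bookkeeping the exponents of $n$, $d$, and $\alpha$ so that the final prefactor comes out as exactly $\sqrt{d}$ rather than some other power; this is the same prefactor that appears in Lemma~\ref{lemma:series-tail-Astar-Binv-w}, which is consistent because $\|\calA^*\|_{2\to op} = O(d)$ there plays the same role that $\|1_n\|_2 = \sqrt{n}$, together with the extra $\alpha^{-1}$ savings, plays here.
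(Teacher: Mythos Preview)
Your proposal is correct and matches the paper's proof essentially line for line: the paper also bounds $|1_n^T\Delta^k w| \leq \|1_n\|_2\,\|\Delta\|_{op}^k\,\|w\|_2$, invokes Lemma~\ref{lemma:B-concentrates} for $\alpha^{-1}\|\Delta\|_{op} = \tilde O(\sqrt{n}/d)$ and the proof of Lemma~\ref{lemma:wT-Binv-w} for $\|w\|_2 = \tilde O(\sqrt{nd})$, and then sums the geometric tail to get $(\alpha^{-T}\|\Delta\|_{op}^T)\cdot \tilde O(\alpha^{-1} n\sqrt{d}) = \sqrt{d}\cdot\tilde O(\sqrt{n}/d)^T$. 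Your additional commentary on why the crude bound suffices for the tail (versus the graph-matrix analysis needed for low-order terms) is accurate and helpful context.
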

\begin{proof}
Note that 
\[
|1^T_n\Delta^k w| \leq \norm{1_n}_{2} \norm{\Delta}_{op}^k \norm{w}_2.
\]
By Lemma~\ref{lemma:B-concentrates} and assumption on $n$, we have $\alpha^{-1} \norm{\Delta}_{op} = \Tilde{O}(\sqrt{n}/d)$ with high probability when $n = o(d^2)$. A standard calculation (see the proof of Lemma \ref{lemma:wT-Binv-w}) reveals that $\norm{w}_2 = \Tilde{O}(\sqrt{nd})$ with high probability. Combining these, we may conclude that \[
\sum_{k=T}^\infty \alpha^{-(k+1)} |1^T_n\Delta^k w| = (\alpha^{-T} \norm{\Delta}^T_{op}) \cdot \Tilde{O}(\alpha^{-1}n\sqrt{d}) = \sqrt{d} \cdot \Tilde{O} \left( \frac{\sqrt{n}}{d} \right)^{T}.
\]
\end{proof}

\begin{lemma}
\label{lemma:1T-Delta-k-w}
Let $k \in \Z_{\geq 0}$. Then with probability $1 - n^{-\Omega(1)}$, it holds that
\[
|1^T_n\Delta^k w| \leq \norm{\calA^*(\Delta^k w)}_{op} \leq (\log n)^{O(k)} \cdot d\sqrt{n} \cdot O( \sqrt[4]{n}d^{3/2})^k.
\]
\end{lemma}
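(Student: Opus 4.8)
The second inequality in the statement is exactly Lemma~\ref{lemma:Astar-Delta-k-w}, so the only real task is the first inequality $|1_n^T\Delta^k w| \le \norm{\calA^*(\Delta^k w)}_{op}$ --- or, what is enough for the displayed conclusion, that $|1_n^T\Delta^k w|$ is at most $\norm{\calA^*(\Delta^k w)}_{op}$ plus a term that is lower order than the bound of Lemma~\ref{lemma:Astar-Delta-k-w}. My plan is to pass from the scalar $1_n^T c$ (with $c := \Delta^k w$) to the matrix $\calA^*(c)$ by pairing the latter with $I_d$. Since $\calA(I_d) = (\norm{v_1}_2^2,\dots,\norm{v_n}_2^2) = d\,1_n + w$, adjointness of $\calA,\calA^*$ gives the exact identity
\[
\mathrm{tr}\big(\calA^*(c)\big) = \langle \calA^*(c), I_d \rangle = \langle c, \calA(I_d) \rangle = d\,(1_n^T c) + w^T c ,
\]
whence $d\,|1_n^T c| \le |\mathrm{tr}(\calA^*(c))| + |w^T c|$. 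Since $\calA^*(c) = \sum_{i} c_i v_i v_i^T$ is a symmetric $d \times d$ matrix, $|\mathrm{tr}(\calA^*(c))| \le d\,\norm{\calA^*(c)}_{op}$, and therefore
\[
|1_n^T\Delta^k w| \ \le\ \norm{\calA^*(\Delta^k w)}_{op} \ +\ \tfrac{1}{d}\,|w^T\Delta^k w| .
\]

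It remains to check that the correction term is absorbed into the bound of Lemma~\ref{lemma:Astar-Delta-k-w}. Crudely, $\tfrac1d |w^T\Delta^k w| \le \tfrac1d \norm{\Delta}_{op}^k \norm{w}_2^2$. On the high-probability event of Lemma~\ref{lemma:B-concentrates}, $\norm{\Delta}_{op} = \norm{B - \alpha I_n}_{op} = \tilde{O}(d\sqrt{n})$, and as in the proof of Lemma~\ref{lemma:wT-Binv-w} one has $\norm{w}_2^2 = \tilde{O}(nd)$, each with probability $1 - n^{-\Omega(1)}$; hence $\tfrac1d|w^T\Delta^k w| = \tilde{O}(d\sqrt{n})^k \cdot \tilde{O}(n)$. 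Comparing against the target $(\log n)^{O(k)} \cdot d\sqrt{n} \cdot O(n^{1/4}d^{3/2})^k$: the ratio of the $k$-th powers is $\big((d\sqrt{n})/(n^{1/4}d^{3/2})\big)^k = (n/d^2)^{k/4} \le \log^{-Ck/4}(d)$, and the ratio of the remaining factors is $n/(d\sqrt{n}) = (n/d^2)^{1/2} \le \log^{-C/2}(d)$; since the polylogarithmic factors hidden in the $\tilde{O}(\cdot)$'s contribute only an extra $(\log n)^{O(k)}$, taking $C$ a sufficiently large absolute constant makes $\tfrac1d|w^T\Delta^k w|$ dominated by the Lemma~\ref{lemma:Astar-Delta-k-w} bound, uniformly over $k \ge 0$. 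A union bound over this event, the event of Lemma~\ref{lemma:B-concentrates}, and the conclusion of Lemma~\ref{lemma:Astar-Delta-k-w} then yields the claim with probability $1 - n^{-\Omega(1)}$.

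I do not expect a genuine obstacle in this lemma: all the difficulty has been pushed into Lemma~\ref{lemma:Astar-Delta-k-w} (and, underneath it, the graph-matrix computations of Section~\ref{sec:graph_matrices}), which is where cancellations among the rank-one terms $v_i v_i^T$ are harvested so as to beat the trivial submultiplicative bound $\norm{\calA^*(\Delta^k w)}_{op} \le \norm{\calA^*}_{2 \rightarrow op}\,\norm{\Delta}_{op}^k\,\norm{w}_2 = O(d)\cdot\tilde{O}(d\sqrt{n})^k\cdot\tilde{O}(\sqrt{nd})$ --- an improvement by a factor $\approx\sqrt{d}$ already at $k=0$, and more for larger $k$. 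The one mild point of care in the present argument is that $\tfrac1d I_d$ does not fit the points exactly (it misses $1_n$ by $\tfrac1d w$), which is precisely the source of the correction term $\tfrac1d w^T\Delta^k w$; verifying that this term is truly lower order is what pins down how large $C$ must be taken.
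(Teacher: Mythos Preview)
Your argument is correct and takes a genuinely different route from the paper. The paper does not reduce $|1_n^T \Delta^k w|$ to $\norm{\calA^*(\Delta^k w)}_{op}$ at all; instead it treats the scalar $1_n^T \Delta^k w$ as a $1 \times 1$ graph matrix in its own right, taking $\beta_0 = \alpha_{1_n^T}$ (a single circle vertex with $U_{\alpha_P} = V_{\alpha_P} = \emptyset$) and re-running the weighting-scheme analysis of Section~\ref{sec:graph_matrices} with this $\beta_0$. Because the min-vertex separator is then empty and $w_{\text{actual},P}(\alpha_{1_n^T}) \le \sqrt{n}$, the same case analysis that bounds $\norm{\calA^*(\Delta^k w)}_{op}$ delivers the identical final bound for the scalar.

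Your trace identity $\mathrm{tr}(\calA^*(c)) = d\,(1_n^T c) + w^T c$ is a clean shortcut: it piggybacks entirely on Lemma~\ref{lemma:Astar-Delta-k-w} and avoids re-entering the graph-matrix machinery, at the cost of the correction $\tfrac{1}{d}|w^T \Delta^k w|$, which you correctly show is lower order using only the crude estimates $\norm{\Delta}_{op} = \tilde O(d\sqrt{n})$ and $\norm{w}_2^2 = \tilde O(nd)$. Note that neither your argument nor the paper's establishes the literal pointwise inequality $|1_n^T \Delta^k w| \le \norm{\calA^*(\Delta^k w)}_{op}$ (and there is no reason to expect it); both prove only that $|1_n^T \Delta^k w|$ obeys the stated upper bound. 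The paper's approach is more uniform with the surrounding section, while yours is shorter and makes transparent why the two quantities admit the same control.
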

% \pax{For this I got instead:
% \[
% |1^T_n\Delta^k w| \leq \norm{\calA^*(\Delta^k w)}_{op} \leq (\log n)^{O(k)} \cdot \sqrt{d} n^{3/4} \cdot (d \sqrt{n} )^k.
% \]
% This is because $\beta_0$ has a bound of $n^{1/2}$, the shapes $\beta_1, \ldots, \beta_k$ have a bound of $d\sqrt{n}$, and $\beta_{k+1}$ has the bound $\sqrt{d} n^{1/4}$. 
% }
The proof of this lemma is nearly identical to that of Lemma~\ref{lemma:Astar-Binv-w} and is deferred to Section~\ref{sec:graph_matrices}. 

\section{Graph matrices}
\label{sec:graph_matrices}
\subsection{Background}

We use the theory of \textit{graph matrices} to derive operator norm bounds on various random matrices that arise in our analysis. Graph matrices provide a natural basis for decomposing matrices whose entries depend on random inputs, where this dependence has lots of symmetry but may be nonlinear. For our setting, we can define graph matrices as follows. These definitions are a special case of the definitions in \cite{ahn2016graph} and are equivalent to the definitions in \cite{ghosh2020sum} except that instead of summing over ribbons, we sum over injective maps. This gives a constant factor difference (see Remark 2.17 of \cite{ahn2016graph}) in the final norm bounds.

In our analysis, many of the matrices we study, such as $\Delta^k$, are $n \times n$ and have entries that are sums of terms of the form
\begin{align}
\label{eqn:generic_term}
M_{i_1, i_r} = \sum_{\substack{i_2, \ldots, i_{r-1} \\ k_1, \ldots, k_s} }
\prod_{(x,y) \in E}
f_{x, y}( v_{i_x, k_y} )
\end{align}
where $E \subset [r] \times [s]$, $ v_{i_x, k_y}$ is the $k_y$ coordinate of $v_{i_x}$, $\{ f_{x, y}\}$ are low-degree Hermite polynomials, and the indices of summation obey certain restrictions, including that $i_2, \ldots, i_{r-1}$ are distinct as well as $k_1, \ldots, k_{s}$.

The framework of graph matrices provides a convenient way of encoding these restrictions and attaining good norm bounds. Concretely, each matrix as in term \eqref{eqn:generic_term} can be represented by a `shape' consisting of a graph with $r$ circle vertices, $s$ square vertices, and integer edge labels. For a term like \eqref{eqn:generic_term} which is an $n \times n$ matrix, there are two distinguished circle vertices that represent $i_1$ and $i_r$. The edges in the shape are specified by $E \subset [r] \times [s]$, and the vertices specify (distinct) indices of summation. The remaining circle vertices each represent an index of summation over $1 \leq i \leq n$ (i.e., one of $i_2, \ldots, i_{r-1}$) and a square vertex is used to represent an index of summation over $1 \leq k \leq d$ (i.e., one of $k_1, \ldots, k_s$, each of which indexes the dimension). The integer edge labels of the shape denote the degree of the Hermite polynomial that is applied to the random variable $v_{i_x, k_y}$. We make this precise with the following definitions.

\begin{definition}[Normalized Hermite polynomials, see e.g.~\cite{odonnell}, Chapter 11.2]
\label{def:hermite} 
    Define the sequence of normalized Hermite polynomials $h_0, h_1, h_2,\ldots$ by
    \[
h_j(z)  = \frac{1}{\sqrt{j!}} \cdot H_j(z),
    \]
    where $H_j$ are defined uniquely by the following formal power series in $z$: 
    \[
\exp( tz - \frac{1}{2} t^2)
=\sum_{j = 0}^\infty \frac{1}{j!} H_j(z) t^j. 
    \]
\end{definition}
The first few Hermite polynomials are
\[
h_0(z) = 1, \;\;
h_1(z) = z, \;\; h_2(z) = \frac{1}{\sqrt{2}} (z^2 - 1), \;\;
h_3(z) = \frac{1}{\sqrt{6}}(z^3 - 3z), \;\;\ldots .
\]
Recall $\E_{Z \sim N(0,1)}[ h_j(Z) h_k(Z)] = \delta_{jk}$, where $\delta_{jk}$ denotes the Kronecker function.

\begin{definition}
A \textit{shape} $\alpha$ is a graph that consists of the following:
\begin{enumerate}
    \item A set of vertices $\calV(\alpha)$. Each vertex is either a \textit{square} or \textit{circle}. We take $\calV_{\circ}(\alpha)$ to be the set of circle vertices in $\calV(\alpha)$ and we take $\calV_{\Box}(\alpha)$ to be the set of square vertices in $\calV(\alpha)$.
    
    \item Distinguished tuples of vertices $U_{\alpha}$, $V_{\alpha}$ (which may intersect), which we call the \textit{left} and \textit{right} vertices of $\alpha$, respectively. We also define the set of \textit{middle} vertices as $W_{\alpha} = \calV(\alpha) \setminus (U_\alpha \cup V_\alpha)$\footnote{We abuse notation slightly by identifying tuples with the set composed of the union of their elements.}. We take $U_{\alpha,\circ}$ to be the circle vertices of $U_{\alpha}$ (in the same order) and we take $U_{\alpha,\Box}$ to be the square vertices of $U_{\alpha}$ (in the same order). Similarly, we take $V_{\alpha,\circ}$ to be the circle vertices of $V_{\alpha}$ (in the same order) and we take $V_{\alpha,\Box}$ to be the square vertices of $V_{\alpha}$ (in the same order). We always take $U_\alpha = (U_{\alpha, \circ}, U_{\alpha, \Box})$ and $V_\alpha = (V_{\alpha, \circ}, V_{\alpha, \Box})$ so that circle vertices preceed square vertices in order.
    
    \item A set $E(\alpha)$ of edges, where each edge is between a circle vertex and a square vertex. For each edge $e \in E(\alpha)$, we have a label $l_e \in \Z_{\geq 1}$. We define $|E(\alpha)| := \sum_{e \in E(\alpha)}{l_e}$. If a shape contains a multi-edge (i.e., two or more edges with the same endpoints), we call it \textit{improper} (and \textit{proper} otherwise). In a multi-edge, each copy of the edge has its own label. We represent an edge with endpoints $u$ and $v$ and label $l$ by the notation $\{u,v\}_l$; we use the simpler notation $\{u,v\}$ when $l=1$.
\end{enumerate}
\end{definition}

% Add explanation here.
\begin{definition}
\label{def:graph_matrix}
Given a shape $\alpha$, we define $M_{\alpha}$ to be the $\frac{n!d!}{(n- |U_{\alpha,\circ}|)!(d- |U_{\alpha,\Box}|)!} \times \frac{n!d!}{(n- |V_{\alpha,\circ}|)!(d- |V_{\alpha,\Box}|)!}$ matrix with entries
\begin{equation}
\label{eqn:graph-matrix-entries}
M_{\alpha}(A,B) = \sum_{{{\pi_{\circ}: \calV_{\circ}(\alpha) \to [n], \pi_{\Box}: \calV_{\Box}(\alpha) \to [d]: \atop 
\pi_{\circ}, \pi_{\Box} \text{ are injective}} \atop 
\pi_{\circ}(U_{\alpha,\circ}) = A_{\circ}, \pi_{\Box}(U_{\alpha,\Box}) = A_{\Box},} \atop 
\pi_{\circ}(V_{\alpha,\circ}) = B_{\circ}, \pi_{\Box}(V_{\alpha,\Box}) = B_{\Box}}{\left(\prod_{e = \{u,v\} \in E(\alpha): u \in \calV_{\circ}(\alpha), v \in \calV_{\Box}(\alpha)}{h_{l_e}(v_{\pi_{\circ}(u), \pi_{\Box}(v)})}\right)}    
\end{equation}
where $A = ( A_{\circ}, A_{\Box})$ is an ordered tuple such that $A_{\circ}$ is an ordered tuple of $|U_{\alpha,\circ}|$ elements from $[n]$ and $A_{\Box}$ is an ordered tuple of $|U_{\alpha,\Box}|$ elements from $[d]$, and $B = ( B_{\circ}, B_{\Box})$ is an ordered tuple such that $B_{\circ}$ is an ordered tuple of $|V_{\alpha,\circ}|$ elements from $[n]$ and $B_{\Box}$ is an ordered tuple of $|V_{\alpha,\Box}|$ elements from $[d]$. 
  % $A_{\circ}$ consists of the elements $A$ (in the same order), $A_{\Box}$ consists of the square vertices of $A$ (in the same order), $B_{\circ}$ consists of the circle vertices of $B$ (in the same order), and $B_{\Box}$ consists of the square vertices of $B$ (in the same order).
\end{definition}
In the next section, we illustrate this definition by deriving the graph matrix representations of various matrices arising in our analysis. The proofs of Lemmas~\ref{lemma:Astar-Binv-1} and~\ref{lemma:Astar-Binv-w} boil down to obtaining norm bounds on $\calA^* (\Delta^k z)$ for $z \in \{1_n, w\}$. Such a matrix is $ d \times d$ and can be expressed as a sum of terms that are similar to \eqref{eqn:generic_term}:
\begin{align}
M_{k_1, k_s} = \sum_{\substack{i_1,i_2, \ldots, i_{r-1}, i_r \\ k_2, \ldots, k_{s-1}} }
\prod_{(x,y) \in E}  
f_{x, y}( v_{i_x, k_y} )
\end{align} 
where again this is a graph matrix, and restrictions on the indices are encoded by an associated shape as described in Definition~\ref{def:graph_matrix}. The difference between this and \eqref{eqn:generic_term} is that the distinguished vertices are now both squares instead of circles. Also, note that the restrictions imply that $i_1, \ldots, i_r$ are distinct, as well as $k_2, \ldots, k_{s-1}$.

\subsection{Graph matrix representations}
\label{sec:shapes}
In this section, we derive the graph matrix representations of various matrices that arise in our analysis. For the purposes of computing ${\calA}{\calA}^{*}$, we can view $\calA$ as an $n \times d^2$ matrix $A$ with rows indexed by $i \in [n]$ and columns indexed by an ordered pair of indices in $(j,k) \in [d] \times [d]$, with entry $A_{i, (j,k)} = (v_i)_j (v_i)_k$. Given this entry-wise expression, the correctness of the graph matrix representation of $A$ below can be directly verified by inspecting Equation~\eqref{eqn:graph-matrix-entries} for the shapes below.

We decompose $A$ as $A = M_{\alpha_{A1}} + M_{\alpha_{A2}}$ and $A^{*} = M_{\alpha_{A1}}^T + M_{\alpha_{A2}}^T$ for the following shapes $\alpha_{A1}$ and $\alpha_{A2}$ 
where we make the dimensions of $M_{\alpha_{A1}}$ and $M_{\alpha_{A2}}$ match by filling in the missing columns with zeros. These shapes are illustrated in Figure~\ref{fig:A-shapes}. Note that $\alpha_{A2}$ is improper.
%(note that $\alpha_{A2}$ is improper); see also Figure~\ref{fig:A-shapes}. 
For each shape $\alpha$ considered below, its vertices $\calV(\alpha)$ are given by $U_\alpha \cup V_\alpha \cup W_\alpha$:
\begin{itemize}
    \item $U_{\alpha_{A1}} = (u)$ where $u$ is a circle vertex, $V_{\alpha_{A1}} = (x_1,x_2)$ where $x_1,x_2$ are square vertices, $W_{\alpha_{A1}} = \{\}$ and $E(\alpha_{A1}) = \{\{u,x_1\}, \{u,x_2\}\}$. 
    %Following Definition~\ref{def:graph_matrix},
    The matrix $M_{\alpha_{A1}}$with zeros filled in for the columns of $M_{\alpha_{A2}}$ has dimensions $n \times d^2$. Its $(i, (j,k))$ entry, for $i \in [n]$ and $(j,k) \in [d] \times [d]$ with $j \neq k$, is given by:
    \[
    M_{\alpha_{A1}}(i, (j,k)) = h_1((v_i)_j) h_1 ((v_i)_k) = (v_i)_j (v_i)_k.
    \]
    Its $(i, (j,j))$ entry, for $i \in [n]$ and $j \in [d]$, is zero.
    
    \item $U_{\alpha_{A2}} = (u)$ where $u$ is a circle vertex, $V_{\alpha_{A2}} = (x,x)$ where $x$ is a square vertex, $W_{\alpha_{A2}} = \{\}$ and $E(\alpha_{A2}) = \{\{u,x\}, \{u,x\}\}$. 
    %Following Definition~\ref{def:graph_matrix},
    The matrix $M_{\alpha_{A2}}$with zeros filled in for the columns of $M_{\alpha_{A1}}$ has dimensions $n \times d^2$. Its $(i, (j,j))$ entry, for $i \in [n]$ and $j \in [d]$, is given by:
    \[
    M_{\alpha_{A1}}(i, (j,j)) = h_1((v_i)_j) h_1 ((v_i)_j) = (v_i)_j^2.
    \]
    Its $(i, (j,k))$ entry, for $i \in [n]$ and $j,k \in [d]$ with $j \neq k$, is zero.
\end{itemize}

\begin{figure}
    \centering
    \begin{subfigure}[t]{0.3\textwidth}
        \includegraphics[width=\textwidth]{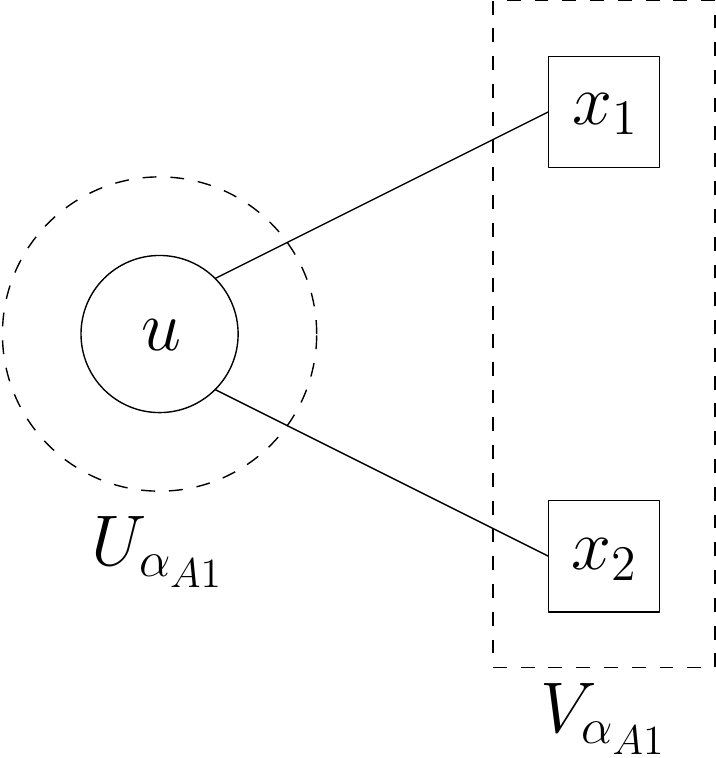}
        \caption{Shape $\alpha_{A1}$.}
        \label{fig:alphaA1}
    \end{subfigure}
    \qquad
    \begin{subfigure}[t]{0.3\textwidth}
        \includegraphics[width=\textwidth]{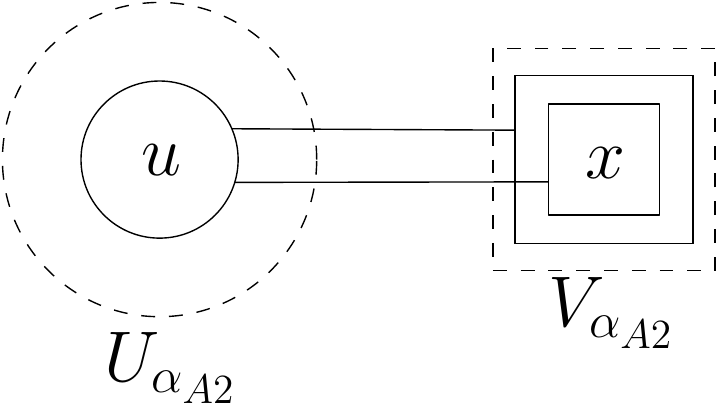}
        \caption{Shape $\alpha_{A2}$.}
        \label{fig:alphaA2}
    \end{subfigure}

    \caption{Shapes appearing in $A$. Here, we depict the shape $\alpha_{A2}$ by drawing the two identical copies of the square vertex $x$ as two overlapping squares sharing the label $x$. Note that the edge $\{u,x\}$ is a multi-edge, so the shape is improper.}
    \label{fig:A-shapes}
\end{figure} 

Multiplying $A$ and $A^{*}$, we see that ${\calA}{\calA}^{*} = {A}A^{*} \in \R^{n \times n}$ has $(i,j)$ entry $(\calA \calA^*)_{ij} = \ip{v_i}{v_j}^2$. We then obtain the following graph matrix representation:
\[
{\calA}{\calA}^{*} = M_{\alpha_1} + M_{\alpha_2} + M_{\alpha_3} + M_{\alpha_{4'}}
\]
where $\alpha_1$, $\alpha_2$, $\alpha_3$, and $\alpha_{4'}$ are the following shapes (note that $\alpha_2$, $\alpha_3$, and $\alpha_{4'}$ are improper):
\begin{itemize}
    \item $U_{\alpha_{1}} = (u)$ and  $V_{\alpha_{1}} = (v)$ where $u, v$ are circle vertices, $W_{\alpha_{1}} = \{x_1,x_2\}$ where $x_1,x_2$ are square vertices, and $E(\alpha_{1}) = \{\{u,x_1\}, \{u,x_2\}, \{x_1,v\},\{x_2,v\}\}$; see Figure~\ref{fig:alpha1}. The matrix $M_{\alpha_{1}}$ has dimensions $n \times n$. Its $(i, j)$ entry, for $i,j \in [n]$ with $i \neq j$ is given by:
    \[
    M_{\alpha_{1}}(i, j) = \sum_{k,l \in [d], k \neq l} h_1((v_i)_k) h_1((v_j)_k) h_1((v_i)_l) h_1((v_j)_l) = \sum_{k,l \in [d], k \neq l} (v_i)_k (v_j)_k (v_i)_l (v_j)_l.
    \]
    If $i = j$, then note that $M_{\alpha_1}(i,j) = 0$.
    
    \item $U_{\alpha_{2}} = (u)$ and  $V_{\alpha_{2}} = (v)$ where $u, v$ are circle vertices, $W_{\alpha_{2}} = \{x\}$ where $x$ is a square vertex, and $E(\alpha_{2}) = \{\{u,x\}, \{u,x\}, \{x,v\},\{x,v\}\}$; see Figure~\ref{fig:alpha2}. The matrix $M_{\alpha_{2}}$ has dimensions $n \times n$. Its $(i, j)$ entry, for $i,j \in [n]$ with $i \neq j$ is given by:
    \[
    M_{\alpha_{2}}(i, j) = \sum_{k \in [d]} h_1((v_i)_k)^2 h_1((v_j)_k)^2 = \sum_{k \in [d]} (v_i)^2_k (v_j)^2_k.
    \]
    If $i = j$, then note that $M_{\alpha_2}(i,j) = 0$.
    
    \item $U_{\alpha_{3}} = V_{\alpha_{3}} = (u)$ where $u$ is a circle vertex, $W_{\alpha_{3}} = \{x_1,x_2\}$ where $x_1,x_2$ are square vertices, and $E(\alpha_{3}) = \{\{u,x_1\}, \{u,x_1\}, \{u,x_2\}, \{u,x_2\}\}$; see Figure~\ref{fig:alpha3}.
    
    \item $U_{\alpha_{4'}} = V_{\alpha_{4'}} = (u)$ where $u$ is a circle vertex, $W_{\alpha_{4'}} = \{x\}$ where $x$ is a square vertex, and $E(\alpha_{4'}) = \{\{u,x\}, \{u,x\}, \{u,x\},\{u,x\}\}$; see Figure~\ref{fig:alpha4prime}.
\end{itemize}

\begin{figure}
    \centering
    \begin{subfigure}[t]{0.3\textwidth}
        \includegraphics[width=\textwidth]{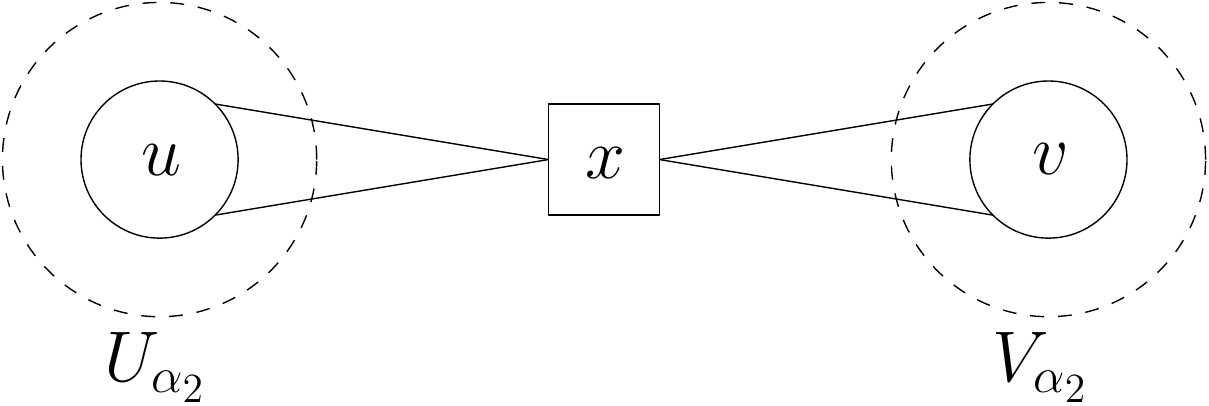}
        \caption{Shape $\alpha_{2}$.}
        \label{fig:alpha2}
    \end{subfigure}
    \qquad
    \begin{subfigure}[t]{0.3\textwidth}
        \includegraphics[width=\textwidth]{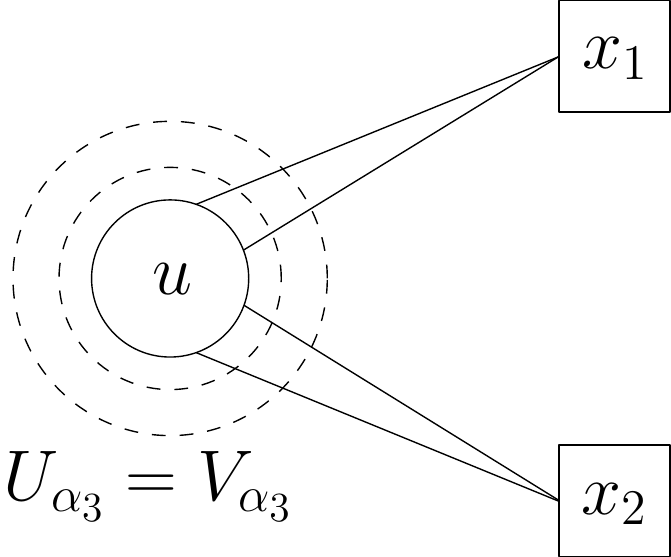}
        \caption{Shape $\alpha_{3}$.}
        \label{fig:alpha3}
    \end{subfigure}

    \begin{subfigure}[t]{0.3\textwidth}
        \includegraphics[width=\textwidth]{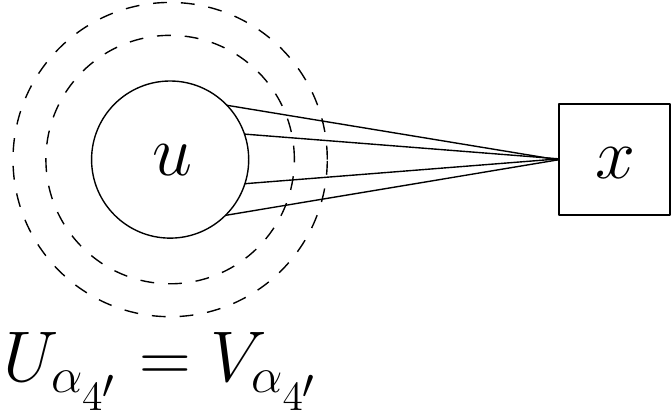}
        \caption{Shape $\alpha_{4'}$.}
        \label{fig:alpha4prime}
    \end{subfigure}

    \caption{Improper shapes appearing in $\calA \calA^*$.}\label{fig:AAstar-improper-shapes}
\end{figure} 

Finally, we express the vectors $w, 1_n \in \R^n$ as $n \times 1$ graph matrices. Recall that $w_i = \norm{v_i}_2^2 -d$ and that $h_2(z) = \frac{1}{\sqrt{2}} (z^2 - 1)$. So, $w$ is represented by the shape $\alpha_w$ with leading coefficient $\sqrt{2}$, and $1_n$ is represented by the shape $\alpha_{1_n}$ with leading coefficient $1$:
\begin{itemize}
    \item $U_{\alpha_w} = (u)$ where $u$ is a circle vertex, $V_{\alpha_w} = \emptyset$, $W_{\alpha_w} = \{x\}$ where $x$ is a square vertex, and $E(\alpha_w) = \{ \{u, x\}_2 \}$.
    \item $U_{\alpha_{1_n}} = (u)$ where $u$ is a circle vertex, $V_{\alpha_{1_n}} = \emptyset$, and $E(\alpha_{1_n}) = \emptyset$. 
\end{itemize}

\subsubsection{Resolving multi-edges}
As we demonstrate later, it is important for the purposes of our analysis that all shapes we work with are proper. To shift from improper shapes (i.e.\ ones with multi-edges) to proper shapes (i.e.\ ones without multi-edges), we record the following proposition.

\begin{proposition}
\label{prop:resolve-multi-edges}
Let $\alpha$ be a shape which contains two or more copies of an edge $e$. Consider two such copies of $e$ that have labels $i,j \in \Z_{\geq 1}$, respectively. Then, we have
\[
M_{\alpha} = \sum_{k=0}^\infty c_k M_{\alpha_k},
\]
where $\alpha_k$ is the shape that is identical to $\alpha$, except that the two labeled copies of $e$ are replaced by a single copy of $e$ with label $k$, and $\{c_k: k \in \Z_{\geq 0}\}$ are coefficients that satisfy:
\[
h_i(x)h_j(x) = \sum_{k = 0}^{\infty}{c_{k}h_k(x)}.
\]
That is, the coefficients are obtained by writing the polynomial $h_i \cdot h_j$ in the Hermite basis. In particular, it holds that $c_{k} = 0$ unless $i + j + k$ is even and $k \leq i + j$. In other words, for each term we obtain, the parity of $k$ is the same as the parity of $i + j$.
We regard any edge with label 0 as a non-edge and say that such an edge \textit{vanishes}.
\end{proposition}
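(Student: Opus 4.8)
The plan is to verify the claimed identity entrywise and then observe that equality of matrices follows for free: replacing the labels on two copies of an edge changes neither $\calV(\alpha)$ nor $U_\alpha$ nor $V_\alpha$, so each $M_{\alpha_k}$ has the same dimensions as $M_\alpha$, and (by the support properties below) the right-hand side is a finite sum of such matrices.

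First I would record two elementary facts about the normalized Hermite polynomials. Since $h_m$ has degree exactly $m$, the product $h_i h_j$ is a polynomial of degree $i+j$; expanding it in the orthonormal system $\{h_k\}_{k\ge 0}$ gives $h_i h_j = \sum_{k\ge 0} c_k h_k$ with $c_k = \E_{Z\sim N(0,1)}[h_i(Z)h_j(Z)h_k(Z)]$, a finite sum in which $c_k = 0$ for $k > i+j$. Moreover the generating function in Definition~\ref{def:hermite} gives $h_m(-z) = (-1)^m h_m(z)$ (substitute $(t,z)\mapsto(-t,-z)$), so $h_i(z)h_j(z)$ is an even function when $i+j$ is even and odd when $i+j$ is odd; comparing parities forces $c_k = 0$ unless $k \equiv i+j \pmod 2$. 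This establishes exactly the asserted support properties of $\{c_k\}$.

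Next I would fix admissible boundary tuples $(A,B)$ and expand $M_\alpha(A,B)$ via \eqref{eqn:graph-matrix-entries}. Let $u \in \calV_\circ(\alpha)$, $v \in \calV_\Box(\alpha)$ be the common endpoints of the two distinguished copies of $e$. For each injective pair $(\pi_\circ,\pi_\Box)$ respecting the boundary conditions, set $z = v_{\pi_\circ(u),\pi_\Box(v)}$; then the edge-product in \eqref{eqn:graph-matrix-entries} factors as $h_i(z)h_j(z)\cdot P(\pi)$, where $P(\pi)$ is the product of the factors $h_{l_{e'}}(\cdot)$ over all edges $e'$ other than the two distinguished copies of $e$. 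Substituting $h_i(z)h_j(z)=\sum_k c_k h_k(z)$ turns the summand into $\sum_k c_k\, h_k(z)P(\pi)$, and the point is that $h_k(z)P(\pi)$ is exactly the edge-product of the shape $\alpha_k$ evaluated at the same $\pi$: for $k \ge 1$ it reinstates a single edge $\{u,v\}_k$, and for $k=0$ we have $h_0\equiv 1$, so the factor vanishes, matching the label-$0$/non-edge convention. Since the sum over $\pi$ has finitely many terms and the sum over $k$ has at most $i+j+1$, I may interchange them to obtain $M_\alpha(A,B) = \sum_{k\ge 0} c_k M_{\alpha_k}(A,B)$, which, holding for all $(A,B)$, proves the proposition.

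I do not expect a real obstacle here; the content is the Hermite linearization formula plus careful bookkeeping, and the only slightly delicate points are the dimension-matching that legitimizes summing the $M_{\alpha_k}$ (automatic, since only edge labels change) and the $k=0$ term (handled by the non-edge convention). I would additionally note, for later use, that if $\alpha$ has three or more parallel copies of $e$ then the shapes $\alpha_k$ with $k\ge 1$ may still be improper, so in practice one applies the proposition repeatedly: each application strictly decreases the multiplicity of the edge in question without increasing any other multiplicity, so after finitely many steps $M_\alpha$ is written as a finite linear combination of graph matrices of proper shapes.
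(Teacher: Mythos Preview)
Your proposal is correct and matches the paper's approach: the paper simply asserts that the identity follows directly from Definition~\ref{def:graph_matrix} together with elementary Hermite polynomial facts (deferred to Section~\ref{sec:hermite-calculations}), and you have written out precisely that argument---the entrywise substitution of $h_i(z)h_j(z)=\sum_k c_k h_k(z)$ inside \eqref{eqn:graph-matrix-entries}, plus the degree and parity observations establishing the support of $\{c_k\}$. Your parity argument via $h_m(-z)=(-1)^m h_m(z)$ is in fact cleaner than the paper's appendix, which only records specific low-degree linearization identities rather than the general statement.
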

Note that we may convert two or more parallel labeled edges into a single labeled edges by repeated application of Proposition \ref{prop:resolve-multi-edges}. The proof of this result follows from Definition~\ref{def:graph_matrix}. The parity result follows from elementary calculations involving the Hermite polynomials, which we defer to Section~\ref{sec:hermite-calculations}.

Given Proposition~\ref{prop:resolve-multi-edges}, we replace the improper shapes from the previous section with proper shapes to obtain the following graph matrix representation:
\begin{align*}
{\calA}{\calA}^{*} &= M_{\alpha_1} + 2M_{\alpha_{2a}} + \sqrt{2}M_{\alpha_{2b}} + \sqrt{2}M_{\alpha_{2c}} + dM_{\alpha_{2d}} + 2M_{\alpha_{3a}} + 2\sqrt{2}(d-1)M_{\alpha_{3b}} + (d^2 - d)M_{\alpha_{3c}} \\
&\qquad+ \sqrt{24}M_{\alpha_{4}} + 6\sqrt{2}M_{\alpha_{3b}} + 3dM_{\alpha_{3c}}   
\end{align*}
where $\alpha_1$, $\alpha_{2a}$, $\alpha_{2b}$, $\alpha_{2c}$, $\alpha_{2d}$, $\alpha_{3a}$, $\alpha_{3b}$, $\alpha_{3c}$, and $\alpha_{4}$ are the following proper shapes that we define below. First, $\alpha_1$ is the same as above since it is already proper. 

Second, observe that $\alpha_2$ has two sets of double edges. Using the following identities (see also Section~\ref{sec:hermite-calculations}):
\begin{align*}
h_1(x)^2 &= \sqrt{2} h_2(x) + 1 \\
h_1(x)^2 h_1(y)^2 &= 2 h_2(x) h_2(y) + \sqrt{2} h_2(x) + \sqrt{2} h_2(y) + 1,
\end{align*}
we can replace each of these double edges by a linear combination of an edge with label 2 and a non-edge. We then write $M_{\alpha_2} = 2M_{\alpha_{2a}} + \sqrt{2} M_{\alpha_{2b}} + \sqrt{2} M_{\alpha_{2c}} + d \cdot M_{\alpha_{2d}}$ as a linear combination of $2 \times 2 = 4$ graph matrices associated with the shapes $\alpha_{2a}, \alpha_{2b}, \alpha_{2c}, \alpha_{2d}$ defined as follows: 
\begin{itemize}
    \item $U_{\alpha_{2a}} = (u)$ and  $V_{\alpha_{2a}} = (v)$ where $u, v$ are circle vertices, $W_{\alpha_{2a}} = \{x\}$ where $x$ is a square vertex, and $E(\alpha_{2a}) = \{\{u,x\}_2, \{x,v\}_2\}$. The matrix $M_{\alpha_{2a}}$ has dimensions $n \times n$. Its $(i, j)$ entry, for $i,j \in [n]$ with $i \neq j$ is given by:
    \[
    M_{\alpha_{2a}}(i, j) = \sum_{k \in [d]} h_2((v_i)_k) h_2((v_j)_k).
    \]
    If $i = j$, then note that $M_{\alpha_{2a}}(i,j) = 0$.
    
    \item $U_{\alpha_{2b}} = (u)$ and  $V_{\alpha_{2b}} = (v)$ where $u, v$ are circle vertices, $W_{\alpha_{2b}} = \{x\}$ where $x$ is a square vertex, and $E(\alpha_{2b}) = \{\{u,x\}_2\}$.
    \item $U_{\alpha_{2c}} = (u)$ and  $V_{\alpha_{2c}} = (v)$ where $u, v$ are circle vertices, $W_{\alpha_{2c}} = \{x\}$ where $x$ is a square vertex, and $E(\alpha_{2c}) = \{\{x,v\}_2\}$.
    \item $U_{\alpha_{2d}} = (u)$ and  $V_{\alpha_{2d}} = (v)$ where $u, v$ are circle vertices, $W_{\alpha_{2d}} = \{\}$, and $E(\alpha_{2d}) = \{\}$. Note that we have made the following simplification in describing $\alpha_{2d}$, which arises when we replace each of the double edges in $\alpha_2$ by non-edges. This will leave $\alpha_{2d}$ with an isolated middle square vertex $x$. However, observe that from Equation~\eqref{eqn:graph-matrix-entries}, we may equivalently delete the isolated vertex $x$ and multiply the resulting graph matrix by a $d$ factor. In summary, we work with the definition of $\alpha_{2d}$ which does not contain a middle square vertex, but which has an associated scalar coefficient of $d$.
\end{itemize}

Third, using the same approach as in re-expressing $\alpha_2$, we write $M_{\alpha_3} = 2M_{\alpha_{3a}} + 2\sqrt{2} (d-1) M_{\alpha_{3b}} + d(d-1) M_{\alpha_{3c}}$ as a linear combination of $3$ graph matrices associated with the shapes $\alpha_{3a}, \alpha_{3b}, \alpha_{3c}$ defined as follows:
\begin{itemize}
    \item $U_{\alpha_{3a}} = V_{\alpha_{3a}} = (u)$ where $u$ is a circle vertex, $W_{\alpha_{3a}} = \{x_1,x_2\}$ where $x_1,x_2$ are square vertices, and $E(\alpha_{3a}) = \{\{u,x_1\}_2, \{u,x_2\}_2\}$.
    \item $U_{\alpha_{3b}} = V_{\alpha_{3b}} = (u)$ where $u$ is a circle vertex, $W_{\alpha_{3b}} = \{x\}$ where $x$ is a square vertex, and $E(\alpha_{3b}) = \{\{u,x\}_2\}$.
    \item $U_{\alpha_{3c}} = V_{\alpha_{3c}} = (u)$ where $u$ is a circle vertex, $W_{\alpha_{3c}} = \{\}$, and $E(\alpha_{3c}) = \{\}$.
\end{itemize}

Fourth, for $\alpha_{4'}$, we use the following identity to replace its quadruple edge by a linear combination of an edge with label 4, an edge with label 2, and a non-edge (see also Section~\ref{sec:hermite-calculations}):
\[
h_1(x)^4 = (x^4 - 6x^2 + 3) + 6(x^2 - 1) + 3 = \sqrt{24}h_4(x) + 6\sqrt{2}h_2(x) + 3.
\]
We write $M_{\alpha_{4'}} = \sqrt{24} M_{\alpha_{4}} + 6\sqrt{2} M_{\alpha_{3b}} + 3d M_{\alpha_{3c}}$ as a linear combination of $3$ graph matrices associated with the shapes $\alpha_{4}, \alpha_{3b}, \alpha_{3c}$, where $\alpha_4$ is defined as follows:
\begin{itemize}
    \item $U_{\alpha_{4}} = V_{\alpha_{4}} = (u)$ where $u$ is a circle vertex, $W_{\alpha_{4}} = \{x\}$ where $x$ is a square vertex, and $E(\alpha_{4}) = \{\{u,x\}_4\}$.
\end{itemize}
We may further simplify by observing that $M_{\alpha_{2d}} = 1_{n}1_{n}^T - I_n$ and $M_{\alpha_{3c}} = I_n$, which leads to following graph matrix representation involving only proper shapes: 
\[
{\calA}{\calA}^{*} = (d^2 + d)I_{n} + d{1_n}{1_n^T} + M_{\alpha_1} + 2M_{\alpha_{2a}} + \sqrt{2}M_{\alpha_{2b}} + \sqrt{2}M_{\alpha_{2c}} + 2M_{\alpha_{3a}} + (2\sqrt{2}d + 4\sqrt{2})M_{\alpha_{3b}} + \sqrt{24}M_{\alpha_{4}}.
\]

In order to decompose $B = {\calA}{\calA}^{*} - W$ in terms of graph matrices, we first decompose $W$ as follows:
\[
W = w 1_n^T + 1_n w^T + d 1_n 1_n^T = \sqrt{2}M_{\alpha_{2b}} + \sqrt{2}M_{\alpha_{2c}} + 2\sqrt{2}M_{\alpha_{3b}} + d{1_n} 1_n^T.
\]
Combining these decompositions, we have:
\begin{align}
    \label{eqn:Bdef}
    B &= \calA \calA^* - W = (d^2 + d)I_{n} + M_{\alpha_1} + 2M_{\alpha_{2a}} +  2M_{\alpha_{3a}} + (2\sqrt{2}d + 2\sqrt{2})M_{\alpha_{3b}} + \sqrt{24}M_{\alpha_{4}}, \\
    \Delta &= -M_{\alpha_1} -2M_{\alpha_{2a}} - 2M_{\alpha_{3a}} - (2\sqrt{2}d + 2\sqrt{2})M_{\alpha_{3b}} - \sqrt{24}M_{\alpha_{4}}.
    \label{eqn:Delta_def} 
\end{align}
Define the index set $\calI = \{1,2a,3a,3b,4\}$, which collects the indices of non-identity shapes appearing in $B$; see Figure~\ref{fig:delta-shapes}. For a given index $i \in \calI$, we define $\lambda_i$ to be the scalar coefficient appearing in front of $M_{\alpha_i}$ in the expression for $B$ above.

\begin{figure}
    \centering
    \begin{subfigure}[b]{0.3\textwidth}
        \includegraphics[width=\textwidth]{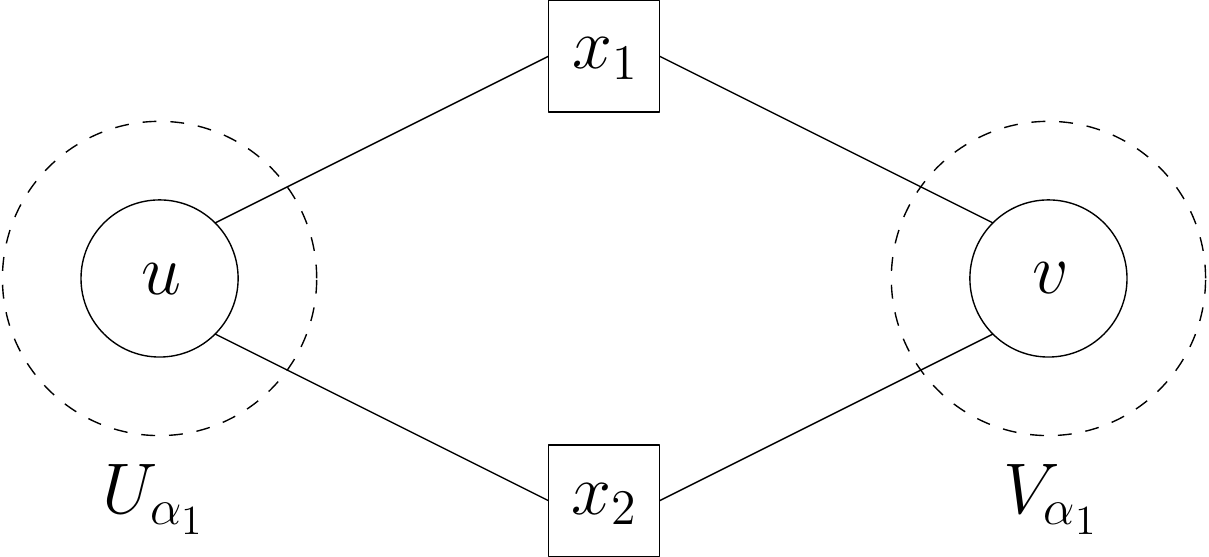}
        \caption{Shape $\alpha_1$}
        \label{fig:alpha1}
    \end{subfigure}

    \begin{subfigure}[b]{0.3\textwidth}
        \includegraphics[width=\textwidth]{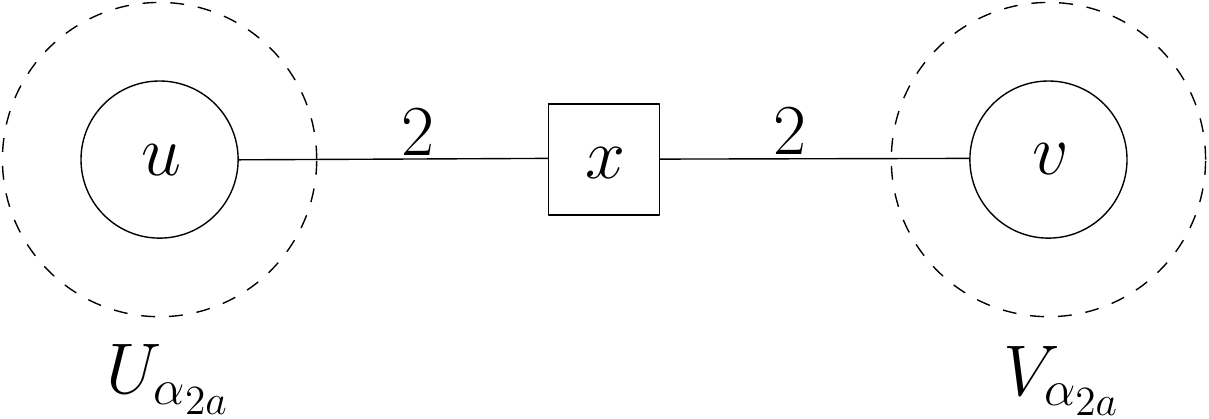}
        \caption{Shape $\alpha_{2a}$}
        \label{fig:alpha2a}
    \end{subfigure}

    \begin{subfigure}[b]{0.3\textwidth}
        \includegraphics[width=\textwidth]{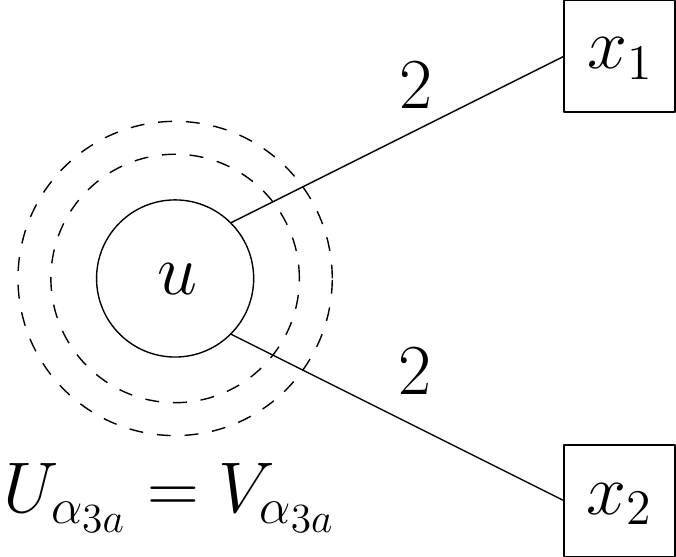}
        \caption{Shape $\alpha_{3a}$}
        \label{fig:alpha3a}
    \end{subfigure}
    
    \begin{subfigure}[b]{0.3\textwidth}
        \includegraphics[width=\textwidth]{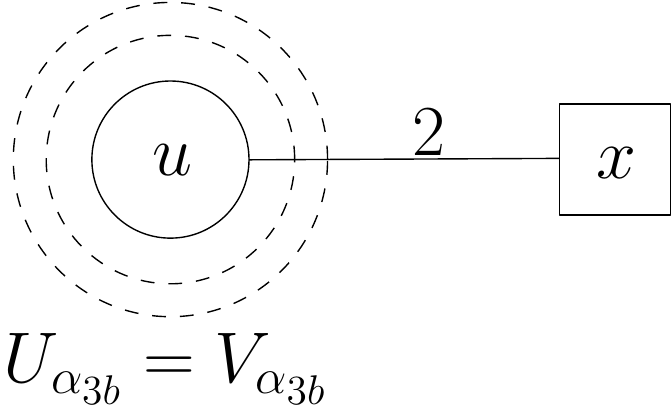}
        \caption{Shape $\alpha_{3b}$}
        \label{fig:alpha3b}
    \end{subfigure}
    
    \begin{subfigure}[b]{0.3\textwidth}
        \includegraphics[width=\textwidth]{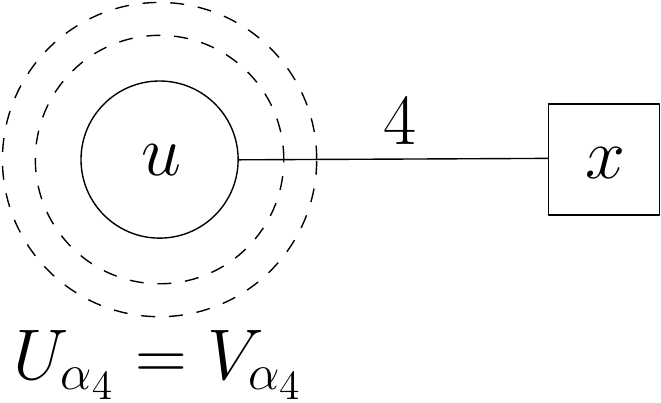}
        \caption{Shape $\alpha_{4}$}
        \label{fig:alpha4}
    \end{subfigure}
    \caption{Shapes appearing in $\Delta$.}\label{fig:delta-shapes}
\end{figure} 

\subsection{Graph matrix norm bounds}
As mentioned earlier, graph matrices admit norm bounds that only depend on certain combinatorial parameters associated with the graph, as expressed by the following theorem that follows from~\cite{ahn2016graph}. We defer its proof to Appendix \ref{sec:norm_bounds}. To complete the proofs of Lemmas~\ref{lemma:B-concentrates}, \ref{lemma:Astar-Delta-k-ones}, \ref{lemma:Astar-Delta-k-w}, and \ref{lemma:1T-Delta-k-w}, we will derive graph matrix representations of the relevant matrices, estimate their associated combinatorial parameters, and then invoke the theorem below. To state the theorem, we borrow some notions from \cite{ahn2016graph}.

Given a shape $\alpha$ and a vertex $a \in \calV(\alpha)$, we define 
\[
\varphi(a) =\begin{cases}
1 \quad &\text{if } a \text{ is a circle,} \\
\log_n(d) \quad &\text{if } a \text{ is a square.}
\end{cases} 
\]
For a (potentially empty) subset of vertices $S \subset \calV(\alpha)$, define $\varphi(S) = \sum_{s \in S} \varphi(s)$.
\begin{definition}
Define the \textit{min-vertex separator} $S_{\min}$ of $\alpha$ to be a (potentially empty) subset of  vertices of $ \alpha$ with the smallest value of $\varphi(S)$ such that all paths from $U_\alpha$ to $V_\alpha$ intersect $S$. Here, we allow for paths of length $0$, so any separator between $U_\alpha$ and $V_\alpha$ must contain $U_\alpha \cap V_\alpha$. 
% if $a$ and its incident edges are removed from $\alpha$. 
We also define $\mathrm{Iso}(\alpha)$ to consist of all isolated vertices lying in $\calV(\alpha)\backslash ( U_\alpha \cup V_\alpha)$. 
\end{definition}

\begin{theorem}
\label{thm:graph-matrix-norm-bound}
Given $D_V,D_E \in \mathbb{N}$ such that $D_E \geq D_V \geq 2$ and $\epsilon > 0$, with probability at least $1 - \epsilon$, for all shapes $\alpha$ on square and circle vertices such that $|\calV(\alpha)| \leq D_V$, $|E_{\alpha}| \leq D_E$, $|U_{\alpha}| \leq 1$, and $|V_{\alpha}| \leq 1$,
%and $U_{\alpha}$ and $V_{\alpha}$ either both contain at least one circle vertex or both contain at least one square vertex (which may be the same vertex),
\[
\|M_{\alpha}\|_{op} \leq \left((2D_E + 2) \ln(D_V) + \ln(11n) + \ln\left(\frac{1}{\epsilon}\right)\right)^{|\calV(\alpha)| + |E(\alpha)|}n^{\frac{\varphi(\calV(\alpha)) - \varphi(S_{\text{min}}) + \varphi(\mathrm{Iso}(\alpha))}{2}}
\]
where $S_{\text{min}}$ is a min-vertex separator of $\alpha$.
\end{theorem}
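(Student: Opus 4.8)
The plan is to establish Theorem~\ref{thm:graph-matrix-norm-bound} by the trace (moment) method, following and specializing the argument of Ahn et al.~\cite{ahn2016graph} to the case of shapes with $|U_\alpha|, |V_\alpha| \le 1$. Fix a shape $\alpha$ and an even integer $q$ to be chosen at the end. Since $M_\alpha M_\alpha^T \succeq 0$, we have $\|M_\alpha\|_{op}^{2q} \le \mathrm{tr}\big((M_\alpha M_\alpha^T)^q\big)$, so after taking expectations it suffices to bound $\E\big[\mathrm{tr}((M_\alpha M_\alpha^T)^q)\big]$ and then combine Markov's inequality with a union bound over all admissible shapes. Expanding the trace, $\mathrm{tr}((M_\alpha M_\alpha^T)^q)$ equals a sum over ``closed walks'': one takes $2q$ copies of $\alpha$ (alternately $\alpha$ and $\alpha^T$), glues them cyclically along their $U$- and $V$-boundaries, and sums, over all assignments $\pi$ of values in $[n]$ to circle vertices and in $[d]$ to square vertices that are injective within each copy, the product over the $2q$ copies of the associated monomials in the normalized Hermite polynomials $h_{l_e}$ of the Gaussian coordinates.

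The first reduction is parity: by $\E_{Z\sim N(0,1)}[\prod_m h_{j_m}(Z)] = 0$ whenever some coordinate of some $v_i$ appears with odd total degree (a consequence of Hermite orthogonality), a closed walk contributes a nonzero expectation only if, in the multigraph obtained by overlaying all $2q$ copies, every vertex-pair edge is traversed with even total multiplicity. This is the graph-matrix analogue of the ``every edge used at least twice'' phenomenon in Wigner-type moment calculations, and it is what renders the sum tractable. For each surviving walk the expectation factors as a product of one-dimensional Gaussian moments $\E[\prod_m h_{j_m}(Z)]$, one per actually-used (circle, square)-incidence; each is bounded in absolute value by a quantity depending only on the edge labels of $\alpha$, and these bounds (together with the count of the number of walk ``topologies'') contribute a polylogarithmic-in-$n$ factor raised to a power $O(|\calV(\alpha)| + |E(\alpha)|)$ per copy.

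The heart of the proof is the count of surviving index assignments. Traversing the glued closed walk vertex by vertex, classify each visited vertex as \emph{fresh} (assigned a value not yet used in its copy) or a \emph{return} (re-using an earlier value). A fresh circle vertex costs a factor of at most $n$ and a fresh square vertex a factor of at most $d = n^{\log_n d}$, whereas each return step can be described using only a combinatorial (polylogarithmic) amount of information via the standard encoding of~\cite{ahn2016graph}. The crucial structural point is that, because consecutive blocks share a glued boundary and because any ``excursion'' of the walk into a region must enter and leave through a vertex separator of $\alpha$, the $\varphi$-weighted number of fresh vertices contributed per block is at most $\varphi(\calV(\alpha)) - \varphi(S_{\min})$ --- the separator vertices being ``remembered'' across blocks and hence not re-paid --- while isolated middle vertices, constrained by no edge, each contribute an additional free $\varphi(\mathrm{Iso}(\alpha))$. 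Carrying out this accounting, and including a leading factor of at most $n$ from the choice of the (size-$\le 1$) boundary index, yields a bound of the shape
\[
\E\big[\mathrm{tr}((M_\alpha M_\alpha^T)^q)\big] \;\le\; n \cdot L^{\,O(q(|\calV(\alpha)| + |E(\alpha)|))} \cdot n^{\,q\,(\varphi(\calV(\alpha)) - \varphi(S_{\min}) + \varphi(\mathrm{Iso}(\alpha)))} \cdot n^{O(\varphi(S_{\min}))},
\]
where $L = (2D_E + 2)\ln(D_V) + \ln(11n) + \ln(1/\epsilon)$ is the target polylogarithmic base.

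To conclude, one takes $2q$-th roots and chooses $q$ proportional to $(2D_E+2)\ln(D_V) + \ln(11n) + \ln(1/\epsilon)$: this is large enough that the prefactor $n$, the lower-order exponent $n^{O(\varphi(S_{\min}))}$ (note $\varphi(S_{\min}) \le |\calV(\alpha)| \le D_V$), the loss $(1/\epsilon)^{1/2q}$ from Markov's inequality, and the number $2^{O(D_V)}\cdot D_V!\cdot \mathrm{poly}(D_E)^{D_E}$ of admissible shapes over which we union-bound are all absorbed into the polylogarithmic base, leaving exactly $\|M_\alpha\|_{op} \le L^{|\calV(\alpha)|+|E(\alpha)|}\, n^{(\varphi(\calV(\alpha)) - \varphi(S_{\min}) + \varphi(\mathrm{Iso}(\alpha)))/2}$ with probability at least $1-\epsilon$. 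The main obstacle is making the fresh-vertex count rigorous --- proving that the $\varphi$-weighted number of fresh vertices per block never exceeds $\varphi(\calV(\alpha)) - \varphi(S_{\min})$, uniformly over all ``even'' overlay multigraphs --- which is precisely the delicate combinatorial encoding at the core of~\cite{ahn2016graph}; a secondary point requiring care (though causing no essential difficulty, since $\varphi$ is designed exactly for this) is that here $d$ is polynomially smaller than $n$, so $\log_n d \approx 1/2$ and square and circle vertices must be weighted differently throughout.
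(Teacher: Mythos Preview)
Your sketch is correct in spirit --- the trace/moment method with fresh/return encoding and the separator accounting is exactly how the bounds in \cite{ahn2016graph} are proved --- but you are redoing work the paper simply imports. The paper's own proof is a two-step black-box argument: it invokes Corollary~8.16 of \cite{ahn2016graph} directly (which already gives a per-shape high-probability bound of the right form), observes that for the shapes at hand $\varphi(S_{\min}) \le 1$ so the exponent simplifies, chooses the per-shape failure probability as $\epsilon' = \epsilon / (11 D_V^{2D_E+2})$, and union-bounds over the $O(D_V^{2D_E+2})$ admissible shapes (this count is Proposition~\ref{prop:number-of-shapes-bound}). All of the moment combinatorics you outline --- the parity reduction, the fresh-vertex count tied to $\varphi(\calV(\alpha)) - \varphi(S_{\min})$, the choice of $q$ --- is encapsulated inside that cited corollary and need not be reproduced here.

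What each approach buys: the paper's route is short and modular, at the cost of depending on a fairly heavy external statement. Your route is self-contained and would make the paper independent of \cite{ahn2016graph}'s exact corollary numbering, but it is substantially longer, and the step you flag as the ``main obstacle'' (the rigorous fresh-vertex accounting via separators) is genuinely delicate and is the technical core of \cite{ahn2016graph}; sketching it at this level of detail does not quite constitute a proof. If you want to keep your approach, you should either fill in that encoding argument in full or, more pragmatically, do what the paper does and cite the needed corollary, then carry out only the simplification and union bound.
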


In our applications of this result, we invoke it with $\epsilon = 1/\poly(n)$ on shapes $\alpha$ for which $|\calV(\alpha)|, |E_\alpha
| \leq \polylog(n)$, so that the norm bounds take the form:
\[
\| M_\alpha \|_{op} = (\log n)^{O(|\calV(\alpha)| + |E(\alpha)|)} \cdot n^{\frac{\varphi(\calV(\alpha)) - \varphi(S_{\text{min}}) + \varphi(\mathrm{Iso}(\alpha))}{2}}.
\]
With Theorem~\ref{thm:graph-matrix-norm-bound} and the graph matrix representations from Section~\ref{sec:shapes} in hand, we can immediately prove Lemma~\ref{lemma:B-concentrates}. The proof also demonstrates how the usage of graph matrices allows us to reduce the challenging problem of bounding the norm of a ``complicated'' random matrix to a significantly simpler combinatorial problem.

\begin{proof}[Proof of Lemma~\ref{lemma:B-concentrates}]
Recall from Equation~\eqref{eqn:Bdef} that we have
\[
\norm{B- \alpha I_n}_{op} \leq \norm{M_{\alpha_1}}_{op} + 2 \norm{M_{\alpha_{2a}}}_{op} +  2 \norm{M_{\alpha_{3a}}}_{op} + (2\sqrt{2}d + 2\sqrt{2}) \norm{M_{\alpha_{3b}}}_{op} + \sqrt{24} \norm{M_{\alpha_{4}}}_{op}.
\]
To complete the proof, we will invoke Theorem~\ref{thm:graph-matrix-norm-bound} to upper bound the norm of each of the 5 graph matrices above. In the following, for each of the 5 shapes, we identify the min-vertex separator and then estimate the combinatorial parameters appearing in the bound in Theorem~\ref{thm:graph-matrix-norm-bound}.

Recall that for $a \in \calV(\alpha)$, we have $\varphi(a) = \log_n(n)=1$ if $a$ is a circle and $\varphi(a) = \log_n(d) \approx 1/2$ if $a$ is a square (since we consider the regime $n \leq d^2/\polylog(d)$) and that $\mathrm{Iso}(\alpha)$ is the set of isolated vertices that do not lie in $U_\alpha$ or $V_\alpha$.

\begin{itemize}
    \item \textbf{Term $M_{\alpha_1}$:} Consider the following vertex separators: $\{ u\}, \{ v \}, \{ x_1, x_2 \}$. By inspection, any other vertex separator contains one of these three. The weights of $\{ u\}$  and $\{ v \}$ are both $1$. The weight of $\{ x_1, x_2\}$ is $2\log_n(d) > 1$. Thus we may choose $u$ as a min-vertex separator without loss of generality. Thus for $\alpha_1$, we have
	\[
	\frac{\varphi(\calV(\alpha)) - \varphi(S_{\text{min}}) + \varphi(\mathrm{Iso}(\alpha))}{2}
	= \frac{ (2 \log_n(d) + 2) - 1 + 0 }{2} 
	= \frac{2 \log_n(d) + 1}{2},
	\]
	leading to a norm bound $\|M_{\alpha_1}\|_{op} = \tilde O( d\sqrt{n})$ with high probability by Theorem \ref{thm:graph-matrix-norm-bound}. 
	
		% On removing $u$, the vertices $v, x_1, x_2$ are nonisolated, leading to a norm bound of $\tilde O(d \sqrt{n}$, with high probability by Theorem \ref{thm:graph-matrix-norm-bound}.
	
	\item \textbf{Term $M_{\alpha_{2a}}$:} Every vertex is a separator of $U_{\alpha_{2a}}$ and $V_{\alpha_{2a}}$. Since $x$ has weight $\log_n(d)<1$ and $u, v$ have weight $1$ in $\alpha_2$, the minimum weight vertex separator is $x$. Thus for $\alpha_{2a}$, we have
	\[
	\frac{\varphi(\calV(\alpha)) - \varphi(S_{\text{min}}) + \varphi(\mathrm{Iso}(\alpha))}{2}
	= \frac{  ( 2 + \log_n(d)  ) - \log_n(d) + 0}{2}
	= 1,
	\] 
	leading to a high probability norm bound $\|M_{\alpha_{2a}}\|_{op} = \tilde O(n)$ by Theorem \ref{thm:graph-matrix-norm-bound}.
	
	% On removing $x$, the remaining circle vertices are isolated, which leads to a norm bound of $\tilde O(n)$ with high probability by Theorem \ref{thm:graph-matrix-norm-bound}.
\end{itemize}
The remaining shapes represent matrices that are diagonal; thus $u$ is the min-vertex separator. 
\begin{itemize}
    \item \textbf{Term $M_{\alpha_{3a}}$:} % $u$ must be removed to separate $U_{\alpha_{3a}}$ and $V_{\alpha_{3a}}$. 
	By similar arguments to the above,  Theorem \ref{thm:graph-matrix-norm-bound} yields $\| M_{\alpha_{3a}} \| =  \tilde O( d ) $.
	\item \textbf{Term $M_{\alpha_{3b}}$:} We obtain a norm bound $\tilde O( n^{(1+ \log_n d - 1 + 0)/2} = \tilde O( d^{1/2} )$, so its contribution to $\Delta$ has operator norm at most $\tilde O(d^{3/2})$ (see \eqref{eqn:Delta_def}).
	\item \textbf{Term $M_{\alpha_{4}}$:} Similarly, we obtain the norm bound $\tilde O( d^{1/2})$.
\end{itemize}
Assembling these bounds completes the proof.
\end{proof}

\subsection{Tools for dealing with products of shapes}
\label{sec:product_rules}
As mentioned earlier, our analysis involves large powers the matrix $\Delta$. In the following, we will derive a graph matrix representation for $\Delta$. Unfortunately, explicitly writing down such a representation for $\Delta^k$ for arbitrary $k \in \N$ is complicated. To overcome this issue, we now introduce some definitions and technical results that allow us to express in a systematic way the graph matrix representation of a product of matrices in terms of the representations of the individual matrices. The following result follows directly from the formula in Definition~\ref{def:graph_matrix}.

\begin{proposition}[Multiplication rule]
\label{prop:mult-rule}
Given shapes $\alpha$ and $\beta$ such that $V_{\alpha}$ and $U_{\beta}$ match (i.e.\ $V_{\alpha}$ and $U_{\beta}$ have the same number of circle and square vertices), the product $M_{\alpha}M_{\beta}$ is a linear combination of graph matrices $M_\gamma$ of shapes $\gamma$ of the following form:
\begin{enumerate}
    \item Glue $\alpha$ and $\beta$ together by setting $V_{\alpha} = U_{\beta}$; these vertices now become middle vertices of $\gamma$\footnote{Note that if, say, $V_{\alpha}$ has repeated vertices, then $U_{\beta}$ must have the same number and type of repeated vertices. Otherwise, the dimensions of the two matrices are not compatible for multiplication.}. We set $U_\gamma = U_{\alpha}$ to be the left side of $\gamma$ and we set $V_\gamma = V_{\beta}$ to be the right side of $\gamma$.
    \item The possible realizations of $\gamma$ are obtained by considering all possible ways in which the vertices in $\calV(\beta) \setminus U_{\beta}$ may intersect with the vertices in $\calV(\alpha) \setminus V_{\alpha}$. For a possible intersection of $\calV(\beta) \setminus U_{\beta}$ and $\calV(\alpha) \setminus V_{\alpha}$ to give rise to $\gamma$, it must satisfy the following constraints:
    \begin{enumerate}
        \item For a given intersection of $\calV(\beta) \setminus U_{\beta}$ and $\calV(\alpha) \setminus V_{\alpha}$ and a vertex $v$ which is in this intersection, we say that the occurrence of $v$ in $\calV(\beta) \setminus U_{\beta}$ is \textit{identified with} the occurrence of $v$ in $\calV(\alpha) \setminus V_{\alpha}$. Circle vertices can only be identified with other circle vertices and square vertices can only be identified with other square vertices.
        \item The vertices in $\calV(\alpha) \setminus V_{\alpha}$ must remain distinct as well as the vertices in $\calV(\beta) \setminus U_{\beta}$. In other words, each vertex can only be identified with at most one other vertex (which must be in the other shape).
    \end{enumerate}
\end{enumerate}
\end{proposition}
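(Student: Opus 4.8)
The claim follows by directly expanding the matrix product $M_\alpha M_\beta$ via the entrywise formula of Definition~\ref{def:graph_matrix} and then reorganizing the resulting sum according to which vertices of $\alpha$ and $\beta$ are forced to coincide. First, fix ordered tuples $A$ and $C$ of the types matching $U_\alpha$ and $V_\beta$ respectively, and write
\[
(M_\alpha M_\beta)(A,C) = \sum_{B} M_\alpha(A,B)\, M_\beta(B,C),
\]
where $B$ ranges over ordered tuples of the type of $V_\alpha$, which by hypothesis equals that of $U_\beta$. Substituting Definition~\ref{def:graph_matrix}, $M_\alpha(A,B)$ is a sum over pairs $\pi=(\pi_\circ,\pi_\Box)$ of injective maps $\pi_\circ:\calV_\circ(\alpha)\to[n]$, $\pi_\Box:\calV_\Box(\alpha)\to[d]$ with $\pi(U_\alpha)=A$ and $\pi(V_\alpha)=B$, weighted by $\prod_{e=\{u,v\}\in E(\alpha)} h_{l_e}(v_{\pi_\circ(u),\pi_\Box(v)})$, and likewise $M_\beta(B,C)$ is a sum over injective $\sigma=(\sigma_\circ,\sigma_\Box)$ with $\sigma(U_\beta)=B$, $\sigma(V_\beta)=C$. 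Performing the sum over $B$ merely removes the requirement that $\pi(V_\alpha)$ be a prescribed tuple and replaces it by the constraint $\pi(V_\alpha)=\sigma(U_\beta)$ — this is exactly the gluing operation of item~1, turning the common image of $V_\alpha$ and $U_\beta$ into middle vertices of the glued shape. Thus $(M_\alpha M_\beta)(A,C)$ becomes a sum over pairs $(\pi,\sigma)$, each injective on its own vertex set, agreeing on the glued vertices, with $\pi(U_\alpha)=A$ and $\sigma(V_\beta)=C$, weighted by the product of the Hermite factors of $\alpha$ and of $\beta$.

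Next, I would partition this sum over $(\pi,\sigma)$ according to the collision pattern $P$ recording, for each vertex of $\calV(\alpha)\setminus V_\alpha$, whether $\pi$ sends it to the same index as some vertex of $\calV(\beta)\setminus U_\beta$ under $\sigma$, and if so which one. Injectivity of $\pi$ and of $\sigma$ forces each vertex to be paired with at most one partner, and circle vertices (indices in $[n]$) can only collide with circle vertices, likewise for square vertices (indices in $[d]$) — these are precisely the constraints of item~2. For each fixed pattern $P$, let $\gamma = \gamma_P$ be the shape obtained from the glued graph by additionally identifying the paired vertices; by construction $U_\gamma = U_\alpha$, $V_\gamma = V_\beta$, the middle vertices of $\gamma$ are all the remaining ones, and $E(\gamma)$ is the multiset union $E(\alpha)\cup E(\beta)$ with inherited labels (so $\gamma$ may be improper even if $\alpha$ and $\beta$ are not). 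The core of the argument is the bijection: pairs $(\pi,\sigma)$ realizing pattern $P$ correspond one-to-one with injective maps $\tau$ on $\calV(\gamma)$ satisfying $\tau(U_\gamma)=A$, $\tau(V_\gamma)=C$, because injectivity of $\tau$ on $\calV(\gamma)$ is equivalent to injectivity of $\pi$ and $\sigma$ together with ``no collisions other than those recorded in $P$,'' and the weight attached to $(\pi,\sigma)$ equals that attached to $\tau$: the Hermite factors simply multiply, and when two edges of $\gamma$ share both endpoints the corresponding factor is a product $h_i(z)h_j(z)$ evaluated at a single Gaussian coordinate, which is exactly what $M_\gamma$ prescribes for an improper shape. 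Summing over all patterns yields $M_\alpha M_\beta = \sum_P M_{\gamma_P}$, and applying Proposition~\ref{prop:resolve-multi-edges} to each improper $\gamma_P$ then re-expresses the right-hand side as a linear combination of graph matrices of proper shapes.

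I expect the only real difficulty to be the bookkeeping in this last bijection rather than any substantive idea, consistent with the fact that everything follows by unwinding Definition~\ref{def:graph_matrix}. One must check that every pair $(\pi,\sigma)$ lies in exactly one pattern $P$, so that nothing is over- or under-counted; that the injectivity conditions transfer faithfully, in particular that a vertex of $\alpha$ is never identified with another vertex of $\alpha$ (and likewise for $\beta$), which is what makes the compatibility requirement of the footnote — that repeated vertices in $V_\alpha$ and in $U_\beta$ occur with matching multiplicities and types — necessary for the two matrices to be conformable in the first place; and that the edge labels and any multiplicities created by gluing are tracked correctly, so that the scalars that ultimately appear are precisely those of the Hermite product expansion in Proposition~\ref{prop:resolve-multi-edges}. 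The notation is delicate, but each of these verifications is mechanical.
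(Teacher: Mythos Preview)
Your proof is correct and is precisely what the paper intends: it simply asserts that the result ``follows directly from the formula in Definition~\ref{def:graph_matrix},'' and your expansion of the matrix product, partitioning over collision patterns $P$, and bijection between pairs $(\pi,\sigma)$ with pattern $P$ and injective maps $\tau$ on $\calV(\gamma_P)$ is exactly the mechanical unwinding that justifies this. The only extraneous step is your final invocation of Proposition~\ref{prop:resolve-multi-edges}, since the statement itself does not require the resulting shapes $\gamma$ to be proper---but this does no harm.
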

\noindent We illustrate Proposition~\ref{prop:mult-rule} in Figure~\ref{fig:mult-example} by multiplying two shapes that arise in the multiplication $\Delta w$.
\begin{figure}
    \includegraphics[width=\textwidth]{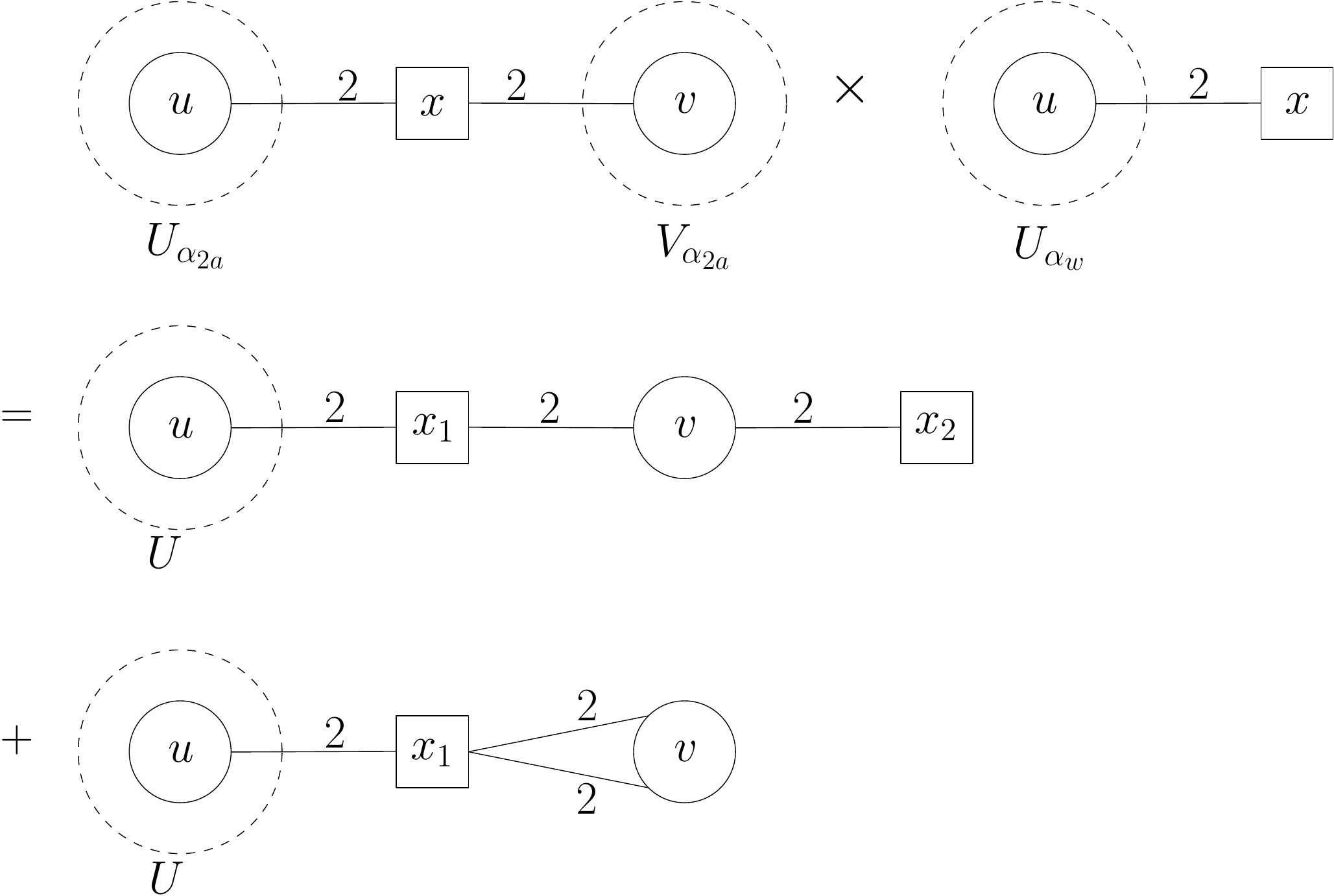}
    \caption{Multiplication of $M_{\alpha_{2a}}$ and $M_{\alpha_w}$. Since $V_{\alpha_w} = \emptyset$, it is not depicted above; as result, the shapes in the resulting product also have $V = \emptyset$. Also, note that the second shape of the resulting product has a multi-edge. Using Proposition~\ref{prop:resolve-multi-edges} as before, we can express this multi-edge as a linear combination of 3 labeled edges, with labels 0, 2, and 4, respectively.}\label{fig:mult-example}
\end{figure}

\subsubsection{Action of $\calA^*$ as a graph matrix}
In this section, we record for later use how to express the action of the linear operator $\calA^*$ in terms of graph matrices. Taking the transpose of the shapes $\alpha_{A1}, \alpha_{A2}$ from Section~\ref{sec:shapes} and applying Proposition~\ref{prop:resolve-multi-edges} to resolve multi-edges, we define the following shapes $\alpha_{A^{*},1}$, $\alpha_{A^{*},2}$, and $\alpha_{A^{*},3}$; see also Figure~\ref{fig:Astar}.
\begin{enumerate}
    \item $U_{\alpha_{A^{*},1}} = (u)$ and  $V_{\alpha_{A^{*},1}} = (v)$ where $u, v$ are square vertices, $W_{\alpha_{A^{*},1}} = \{x\}$ where $x$ is a circle vertex, and $E(\alpha_{A^{*},1}) = \{\{u,x\}, \{x,v\}\}$. This shape has an associated coeffcient of $1$.
    \item $U_{\alpha_{A^{*},2}} = V_{\alpha_{A^{*},2}} = (u)$ where $u$ is a square vertex, $W_{\alpha_{A^{*},2}} = \{x\}$ where $x$ is a circle vertex, and $E(\alpha_{A^{*},2}) = \emptyset$. This shape has an associated coeffcient of $1$.
    \item $U_{\alpha_{A^{*},3}} = V_{\alpha_{A^{*},3}} = (u)$ where $u$ is a square vertex, $W_{\alpha_{A^{*},3}} = \{x\}$ where $x$ is a circle vertex, and $E(\alpha_{A^{*},3}) = \{\{u,x\}_2\}$. This shape has an associated coeffcient of $\sqrt{2}$.
\end{enumerate}

\begin{figure}
    \centering
    \begin{subfigure}[b]{0.3\textwidth}
        \includegraphics[width=\textwidth]{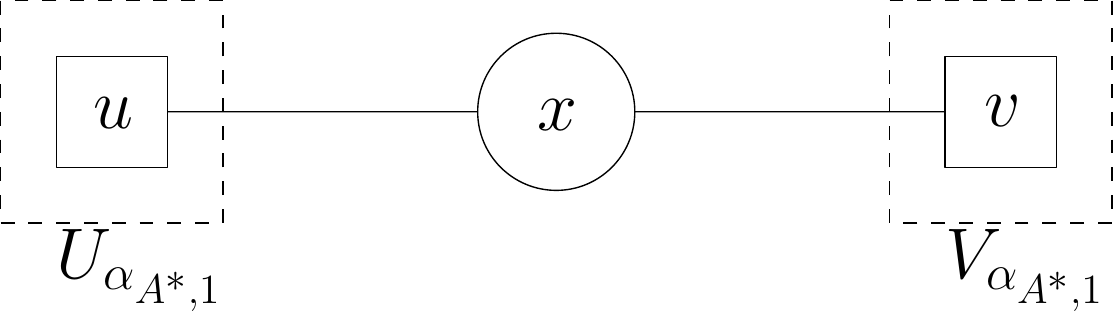}
        \caption{Shape $\alpha_{A^*,1}$}
        \label{fig:alphaAstar1}
    \end{subfigure}
    
    \begin{subfigure}[b]{0.3\textwidth}
        \includegraphics[width=\textwidth]{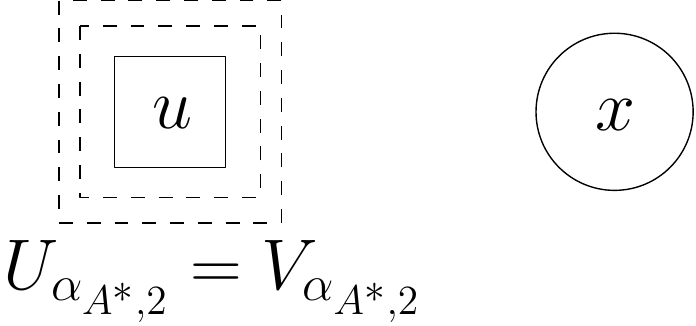}
        \caption{Shape $\alpha_{A^{*},2}$}
        \label{fig:alphaAstar2}
    \end{subfigure}
    
    \begin{subfigure}[b]{0.3\textwidth}
        \includegraphics[width=\textwidth]{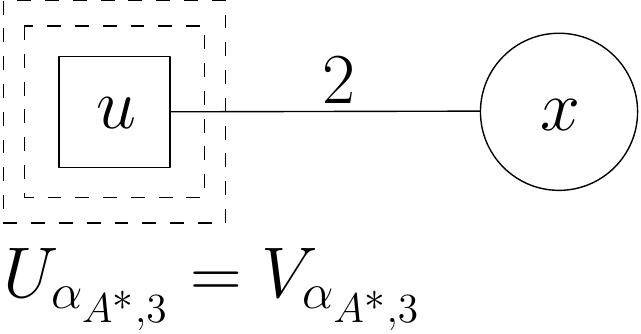}
        \caption{Shape $\alpha_{A^*,3}$}
        \label{fig:alphaAstar3}
    \end{subfigure}
    \caption{Shapes appearing in $\calA^*$.} \label{fig:Astar}
\end{figure}

Let $x \in \R^n$ be a ($n \times 1$) graph matrix represented by the shape $\beta$. The $d \times d$ matrix $\calA^*(x)$ can be represented as a linear combination of graph matrices in the following way. 
\begin{enumerate}
    \item First, ``re-shape'' each of the $\alpha_{A^*,j}$ for $j=1,2,3$ into shapes that represent $d^2 \times n$ matrices by redefining $U_{\alpha_{A^*,j}} \leftarrow U_{\alpha_{A^*,j}} \cup V_{\alpha_{A^*,j}}$ and $V_{\alpha_{A^*,j}} \leftarrow \{x\}$.
    \item Invoke Proposition~\ref{prop:mult-rule} to multiply each of these shapes by $x$.
    \item Reshape the resulting shapes, which represent $d^2 \times 1$ matrices, into shapes representing $d \times d$ matrices by defining the left vertex set to contain only $u$ and the right vertex set to contain only $v$ (for $\alpha_{A^*,1}$) or the left vertex set and right vertex set to contain only $u$ (for $\alpha_{A^*,2}, \alpha_{A^*,3}$) and defining all other vertices to be middle vertices.
\end{enumerate}
See Figures~\ref{fig:ident-pattern} and \ref{fig:ident-pattern-result} for an example of such a multiplication arising in the product $\calA^*(\Delta w)$.

\subsection{Norm bound strategy using graph matrices}
\label{sec:normbound_strat}
We need to bound norms of matrices of the following forms: 
\begin{enumerate}
    \item $\{\mathcal{A}^{*}({\Delta^k}w):  k \in \mathbb{N}\}$ 
    \item $\{\mathcal{A}^{*}({\Delta^k}{1_n}): k \in \mathbb{N}\}$
    \item $\{1_n^T{\Delta^k}w: k \in \mathbb{N}\}$. (Here, we regard the scalars as $1 \times 1$ matrices.)
\end{enumerate}

For a fixed $k$, consider one of the matrices in 1--3 above. In Section \ref{sec:shapes}, we expressed $\calA^*$ and $\Delta$ as a linear combination of shapes. Thus, for fixed $k$, any matrix in 1--3 above can be written as linear combination of terms, where each term is a product of shapes, say 
\begin{align}
\label{eqn:general_product}
    M_{\beta_{0}} \cdot M_{ \beta_1} \cdots M_{\beta_k} \cdot M_{\beta_{k+1}},
\end{align} where each $\beta_i$ is a (proper) shape from Section \ref{sec:shapes}. For each term as in \eqref{eqn:general_product}, we further decompose it into a linear combination of several sub-terms represented by new (proper) shapes using Proposition~\ref{prop:resolve-multi-edges}. Let $\alpha_{P}$ denote a shape that arises as a subterm. We also define an scalar $c_P$ associated with $\alpha_P$ that is used to form the its coefficient in the aforementioned linear combination. The shape $\alpha_{P}$ and scalar $c_P$ are constructed by the following procedure: 

% For any fixed $k$, when we expand out these expressions, we obtain a various terms that are products of shapes coming from $\calA^*, \Delta, 1_n$ and $w$, each of which is obtained through the following procedure.  

\begin{enumerate}
    % \item Each term as in \eqref{eqn:general_product} arises from the product of $k+2$ shapes $\beta_{0}, \beta_1, \ldots, \beta_{k}, \beta_{k+1}$, where
    % $\beta_{0}$ is one of $\alpha_{A^{*},1}$, $\alpha_{A^{*},2}$,  $\alpha_{A^{*},3}$, or $I_n$, $\beta_1,\ldots,\beta_{k}$ are shapes in $\{\alpha_i: i \in \calI\}$ (coming from $\Delta$), and $\beta_{k+1}$ is one of $\alpha_{1_n}$ or $\alpha_{w}$. 

    \item We start with $\beta_{0}$. If $\beta_{0}$ is $\alpha_{A^{*},1}$, $\alpha_{A^{*},2}$, or $\alpha_{A^{*},3}$, we call the circle vertex $x$ to be the (initial) \textit{loose end}. If $\beta_{0} = \alpha_{1_n^T}$, we call the single  vertex in this shape, which is a circle, the (initial) \textit{loose end}. In all cases, we set $U_{\alpha_P} = U_{\beta_{0}}$ and $V_{\alpha_P} = V_{\beta_{0}} $. Note that $U_{\alpha_{P}} = V_{\alpha_{P}} = \emptyset$ if $\beta_{0} = \alpha_{1_n^T}$.
    
    % If $\beta_{0} = \alpha_{1_n^T}$, instead of having the circle vertex be in $V$, we make $U$ and $V$ empty and call the circle vertex to be the (initial) \textit{loose end}.

    \item We now do the following for each $j \in [k]$  
    \begin{enumerate}
        \item[(a)] Append the shape $\beta_{j}$ by identifying $U_{\beta_j} = (u)$ with the current loose end and making $V_{\beta_j} = (v)$ the new loose end.
        \item[(b)] For each vertex in $\calV(\beta_j) \setminus U_{\beta_j}$, either leave it alone or identify it with an existing vertex of the same type (circle or square) which is not $u$ and has not yet been identified with a vertex in $\calV(\beta_j) \setminus U_{\beta_j}$.
        \item[(c)] If this creates two parallel edges with integer labels $N$ and $M$, we either (i) remove these parallel edges if $N+M$ is even or (ii) replace those parallel edges with a single labeled edge that has the same parity as $N+M$ and lies in $[N+ M]$. 

        In either case (i) or (ii), assign the edge (or empty edge) a coefficient according to the rule in Proposition~\ref{prop:resolve-multi-edges}. 
    \end{enumerate}

    \item Finally, apply the same procedure as described in 2(a--c) to $\beta_{k+1}$. Concretely, if the last term in the product is $1_n$ (so $\beta_{k+1} = \alpha_{1_n}$), we stop here. If the last term in the product is $w$ (so $\beta_{k+1} = \alpha_{w}$), we append $\alpha_w$ to the existing shape by identifying the current loose end with $U_{\alpha_w} = (u)$. Then, for each set of resulting parallel edges, we remove or replace them as described in 2(c) and assign them coefficients.

    \item Form the scalar $c_P$ by multiplying together all coefficients of the labeled edges (including non-edges) that are output by the conversion procedure in Steps 2(c) and 3.
    
    % Recalling $W_{\alpha_w} = \{x\}$, we then either leave $x$ alone or identify it with an existing vertex which is not $u$. If this creates two parallel edges with integer labels $N$ and $M$, we replace them with a single labeled edge that has the same parity as $N+M$, that lies in $[N+ M]$, and that is assigned a coefficient according to the rule in Definition~\ref{prop:mult-rule}. 
    
    % \item We apply this same procedure for $\beta_{k+1}$ as in 2. except that since $V_{\beta_{k+1}} = \emptyset$, we tie up the loose end instead of creating a new loose end. 
    
    % In other words, if the last term in the product is $1_n$ (so $\beta_{k+1} = \alpha_{1_n}$), we stop here (tying up the loose end). If 

    % \item Define 
 
\end{enumerate}

As we described, the matrices $\calA^*(\Delta^k w)$, $\calA^*(\Delta^k 1_n)$, and $1_n^T \Delta^k w$ are linear combinations of terms of the form $\alpha_{P}$. We now describe the coefficients associated to a particular term $\alpha_P$ in this linear combination. To do so, we introduce the following definitions.
% regarding these product terms. 
\begin{definition}
For each shape $\beta_j$ for $j \in [k]$, we define its \textbf{coefficient} $c(\beta_j)$ to be its coefficient in the graph matrix decomposition of  $\Delta$.
\end{definition}

\begin{definition}
Let shapes $\beta_{0}, \beta_1, \ldots, \beta_{k}, \beta_{k+1}$ be as described above.
\begin{enumerate}
    \item An \textbf{identification pattern} $P$ on $\beta_{0}, \beta_1, \ldots, \beta_{k}, \beta_{k+1}$ specifies which vertices are identified with each other (according to Proposition~\ref{prop:mult-rule}) and which labelled edge is chosen when we convert parallel labeled edges into a single labeled edge (according to Proposition~\ref{prop:resolve-multi-edges}) as in step 2(c) above.
    \item We define $\calP_{\beta_{0}, \beta_1, \ldots, \beta_{k}, \beta_{k+1}}$ to be the set of all identification patterns on the shapes $\beta_{0}, \beta_1, \ldots, \beta_{k}, \beta_{k+1}$.
    \item Given an identification pattern $P$, we define $\alpha_{P}$ to be the shape resulting from $P$ and we define $c_{P}$ to be the coefficient, so that the resulting  term is ${c_P}\cdot \prod_{j=1}^{k}{c(\beta_j)} \cdot M_{\alpha_P}$.
\end{enumerate}
\end{definition}
In other words, $c_P$ captures the part of the coefficient of $M_{\alpha_P}$ which comes from converting parallel labeled edges into a single labeled edge (or non-edge). Note that the (constant) coefficients coming from $\beta_{0}$ and $\beta_{k+1}$ are also absorbed into $c_P$. In our argument it is particularly important to keep track of any non-edges that result from resolving two or more parallel labeled edges into a single labeled edge, which we make precise in the definition below.

\begin{definition}[Vanishing edges]
\label{def:vanish}
Consider an identification pattern $P$ on $\beta_0, \beta_1, \ldots, \beta_{k+1}$. The \textbf{vanishing edges} are the edges with label $0$ (i.e., non-edges) that result from resolving parallel edges according to Proposition \ref{prop:resolve-multi-edges}, as in step 2(c) above.  Moreover, we say that a non-edge in $\alpha_P$ \textbf{vanishes} if it is in the set of vanishing edges.
\end{definition}

Next, we define a method for concisely summarizing certain information about a given identification pattern. Given $j \neq j'$, an identification pattern $P$, and  a vertex $y \in \beta_j$, we say below that $y$ \textit{appears} in $\beta_{j'}$ if $P$ identifies $y$  with a vertex $y' \in \beta_{j'}$.

\begin{definition}
For a given identification pattern $P$, we define a \textbf{decoration} $\tau: \cup_{j = 0}^{k+1} \calV(\beta_j) \rightarrow \{\emptyset, L, R, LR\}$ to summarize information about $P$ in the following way. For each $j$ and each vertex $y \in \beta_j$, define:
\[
\tau(y) = \begin{cases}
    L & \text{if $y$ appears in $\beta_{j'}$ for $j' < j$ and does not appear in any $\beta_{j''}$ for $j'' > j$},\\
    R & \text{if $y$ appears in $\beta_{j'}$ for $j' > j$ and does not appear in any $\beta_{j''}$ for $j'' < j$},\\
    LR & \text{if $y$ appears in $\beta_{j'}$ for $j' < j$ and also appears in some $\beta_{j''}$ for $j'' > j$},\\
    \emptyset & \text{otherwise}.
\end{cases}
\]
In particular, note that:
\begin{itemize}
    \item For any $j \geq 1$ and $y \in U_{\beta_j}$, $y$ automatically appears in $\beta_{j-1}$, so $\tau(y) \in \{L,LR\}$.
    \item For any $j < k+1 $ and $y \in V_{\beta_j}$, $y$ automatically appears in $\beta_{j+1}$, so $\tau(y) \in \{R,LR\}$.
    \item For any $y \in \calV(\beta_0)$, $\tau(y) \in \{\emptyset, R\}$.
    \item For any $y \in \calV(\beta_{k+1})$, $\tau(y) \in \{\emptyset, L\}$.
\end{itemize}
\end{definition}

\begin{figure}
    \includegraphics[width=\textwidth]{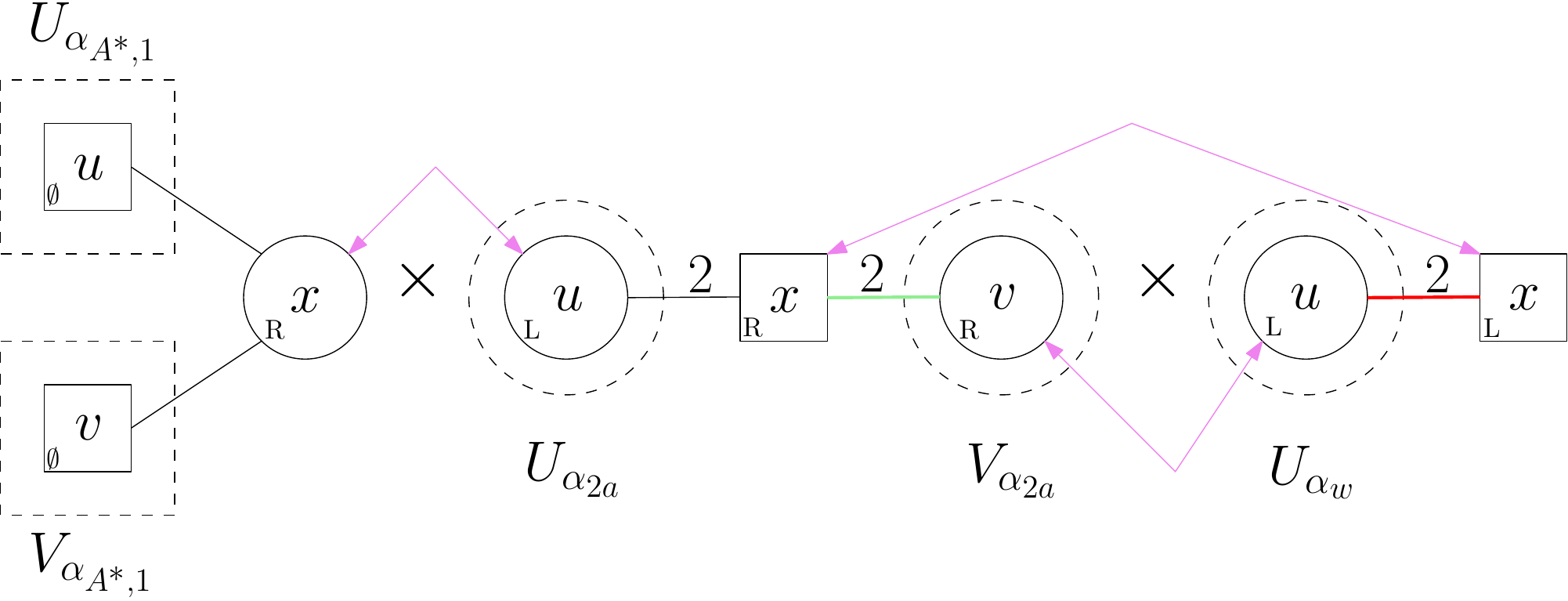}
    \caption{Example of an identification pattern for multiplying $M_{\alpha_{A^*,1}}$, $M_{\alpha_{2a}}$, and $M_{\alpha_w}$. The violet arrows denote which vertices are identified with each other. As in Figure~\ref{fig:mult-example}, if these vertices are identified with each other, the red and green edges become a multi-edge which gets converted to a linear combination of labeled edges using Proposition~\ref{prop:resolve-multi-edges}. The identification pattern that is consistent with the violet vertex identifications and which also picks edge label 0 to replace the multi-edge results in a shape that is depicted in Figure~\ref{fig:ident-pattern-result}. The decorations of each vertex are written in the bottom left corner of each square or circle.
    The red edge is a right-critical edge and the green edge is a left-critical edge.}
    \label{fig:ident-pattern}
\end{figure}

\begin{figure}
    \centering\includegraphics[width=0.7\textwidth]{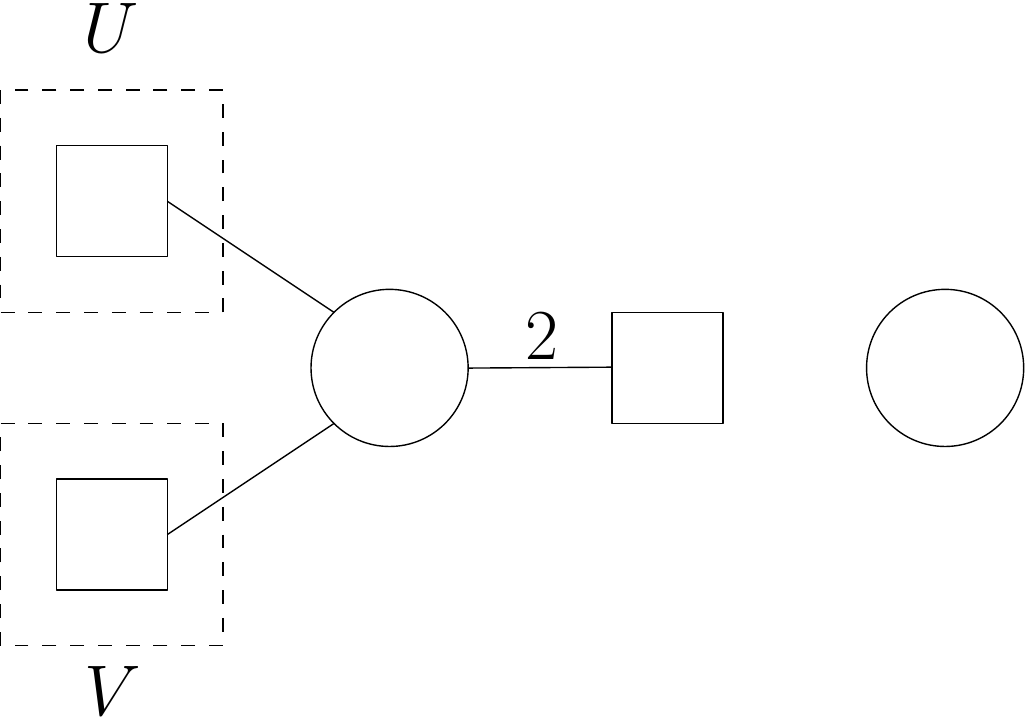}
    \caption{Resulting shape from the identification pattern in Figure~\ref{fig:ident-pattern}. Observe that the minimum weight vertex separator can be taken to be one of the square vertices in $U$ or $V$.}
    \label{fig:ident-pattern-result}
\end{figure}

See Figures~\ref{fig:ident-pattern} and \ref{fig:ident-pattern-result} for an example of an identification pattern and the associated decoration that can arise when multiplying shapes from the product $\calA^* (\Delta w)$. With these definitions and for a fixed 
% $\beta_{0}$, $\beta_{k+1}$, and 
$k$, each of $\calA^*( \Delta^k 1_n)$, $\calA^*( \Delta^k w)$, and $1_n^T(\Delta^k) w$ can each be expressed as a summation of the following form:
\begin{align} 
\label{eqn:pattern_decomposition} 
\sum_{\beta_1,\ldots,\beta_k \in \{\alpha_i: i \in \calI\}} \, \, {\sum_{P \in \calP_{\beta_{0},\beta_1,\ldots,\beta_k,\beta_{k+1}}}{\bigg( {c_P}\cdot \prod_{j=1}^{k}{c(\beta_j)} \bigg) M_{\alpha_P}}}.
\end{align} 
To bound the norm of this expression, we apply the triangle inequality and bound separately $\norm{M_{\alpha_P}}_{op}$ for each identification pattern $P$.
\subsubsection{Upper bounding $\norm{M_{\alpha_P}}_{op}$ via weights}
\label{sec:norm-bounds-via-weights}
In order to upper bound $\norm{M_{\alpha_P}}_{op}$, we will consider the contribution to $\norm{M_{\alpha_P}}_{op}$ from each of the shapes $\beta_{0},\beta_1,\ldots,\beta_k,\beta_{k+1}$. In order to invoke the bound in Theorem~\ref{thm:graph-matrix-norm-bound}, we must be able to calculate the min-vertex separator of $\alpha_P$ and status (i.e.\ square vs.\ circle and isolated vs.\ non-isolated) of each of the vertices of $\alpha_P$. Specifically, define the following short-hand notation for the dominant term in the norm bound of Theorem~\ref{thm:graph-matrix-norm-bound}:
\[
\calB(\alpha_P) = n^{\frac{\varphi(V(\alpha_P)) - \varphi(S_{\text{min}}) + \varphi(\mathrm{Iso}(\alpha_P))}{2}}
\]
where $S_{\text{min}}$ is a min-vertex separator of $\alpha_P$.

To compute the combinatorial quantities that define $\calB(P)$, we will design an ideal weight function $w_{\text{ideal},P}$ and an actual weight function $w_{\text{actual},P}$, each of which assigns a \textit{weight} to each of the shapes 
$\beta_{0},\beta_1,\ldots,\beta_k,\beta_{k+1}$. Intuitively, the ideal weight function allows us to accurately estimate the right-hand side of the norm bound in Theorem~\ref{thm:graph-matrix-norm-bound}. However, as we explain later, determining this ideal weight function exactly is intractable. Instead, we show that the actual weight function is a faithful ``relaxation'' of the ideal weight function, tractable to calculate and still leads to sufficiently good norm bounds. 

More precisely, the following properties must be satisfied: 
\begin{enumerate}[label=\textbf{P.\arabic*}]
    \item \label{property:ideal_vs_norm_bound} $\calB(\alpha_P) = \prod_{j=0}^{k+1}{w_{\text{ideal},P}(\beta_{j})}$ where $\beta_{0} = \beta_{0}$ and $\beta_{k+1} = \beta_{k+1}$. This ensures that the product of the ideal weights over the shapes $\beta_j$ faithfully estimates the dominant term of the norm bound in Theorem~\ref{thm:graph-matrix-norm-bound}.
    \item \label{property:actual_vs_ideal} $\prod_{j=0}^{k+1}{w_{\text{actual},P}(\beta_{j})} \geq \prod_{j=0}^{k+1}{w_{\text{ideal},P}(\beta_{j})}$. In fact, for almost all shapes $\beta_j$ we will have that \\ $w_{\text{actual},P}(\beta_{j}) \geq w_{\text{ideal},P}(\beta_{j})$. This ensures that the actual weight function gives a norm bound that is valid (i.e.\ no smaller than the ``true'' norm bound given by the ideal weight function).
    \item \label{property:actual_bound_small} For all $j \in [k]$, $|w_{\text{actual},P}(\beta_j)c(\beta_j)| \leq d^{\frac{3}{2}} \sqrt[4]{n}$. This ensures that the norm bound given by the actual weight function is sufficiently small to complete the proofs of our technical lemmas.
\end{enumerate}
We note that the number of possibilities for $\beta_1,\ldots,\beta_k$, the number of possible identification patterns on $\beta_{0},\beta_1,\ldots,\beta_k,\beta_{k+1}$, the maximum coefficient $c_P$ for any identification pattern $P$, and the ratio $\frac{\norm{M_{\alpha_P}}_{op}}{\calB(\alpha_P)}$ (assuming the probabilistic norm bound in Theorem~\ref{thm:graph-matrix-norm-bound} holds) are all at most $(\log n)^{O(k)}$. This follows from the observation that we have $k+2$ shapes $\beta_{0},\beta_1,\ldots,\beta_k,\beta_{k+1}$, each consisting of $O(1)$ vertices and edges, and a simple combinatorial fact which we defer to the appendix (Proposition~\ref{prop:number-of-shapes-bound}).

Thus, if we can specify weight functions satisfying the above properies, then we may bound the expression in Equation~\ref{eqn:pattern_decomposition} as follows:
\begin{align*}
    &\sum_{\beta_1,\ldots,\beta_k \in \{\alpha_i : i \in \calI\}}{\sum_{P \in \calP_{\beta_{0},\beta_1,\ldots,\beta_k,\beta_{k+1}}}{{|c_P|}\left(\prod_{j=1}^{k}{|c(\beta_j)|}\right)\norm{M_{\alpha_P}}_{op}}} \\
    &=\sum_{\beta_1,\ldots,\beta_k}{\sum_{P \in \calP_{\beta_{0},\beta_1,\ldots,\beta_k,\beta_{k+1}}}{{|c_P|}\left(\prod_{j=1}^{k}{|c(\beta_j)|}\right)\left(\prod_{j=0}^{k+1}{w_{\text{ideal},P}(\beta_j)}\right)\frac{\norm{M_{\alpha_P}}_{op}}{\calB(\alpha_P)}}} \\
    &\leq (\log n)^{O(k)} \sum_{\beta_1,\ldots,\beta_k} \sum_{P \in \calP_{\beta_{0},\beta_1,\ldots,\beta_k,\beta_{k+1}}} |c_P| \left(\prod_{j=1}^k |c(\beta_j)| \right) \left( \prod_{j=0}^{k+1}{w_{\text{ideal},P}(\beta_j)} \right)\\
    &\leq (\log n)^{O(k)} \sum_{\beta_1,\ldots,\beta_k} \sum_{P \in \calP_{\beta_{0},\beta_1,\ldots,\beta_k,\beta_{k+1}}} |c_P| \left(\prod_{j=1}^k |c(\beta_j)| \right) \left( \prod_{j=0}^{k+1}{w_{\text{actual},P}(\beta_j)} \right)\\
    &\leq (\log n)^{O(k)} \left(d^{\frac{3}{2}}\sqrt[4]{n}\right)^{k}\max_P{\{w_{\text{actual},P}(\beta_{0})w_{\text{actual},P}(\beta_{k+1})\}} \num \label{eqn:w_actual_bound}.
\end{align*}
\subsubsection{Min-vertex separator of $\alpha_P$}
Before specifying the weight functions, we recall that: 
\[
\calB(\alpha_P) = d^{\frac{|\calV_{\Box}(\alpha_P)| + |\mathrm{Iso}(\alpha_P) \cap \calV_{\Box}(\alpha_P)| - |S_{\text{min}} \cap \calV_{\Box}(\alpha_P)| }{2}} \times n^{\frac{|\calV_{\circ}(\alpha_P)| + |\mathrm{Iso}(\alpha_P) \cap \calV_{\circ}(\alpha_P)| - |S_{\text{min}} \cap \calV_{\circ}(\alpha_P)| }{2}}
\]
where $S_{\text{min}}$ is a min-vertex separator of $\alpha_P$. We now determine the min-vertex separator for $\alpha_P$ so that we can apply the bound in Theorem~\ref{thm:graph-matrix-norm-bound}. When $\beta_{0} = \alpha_{1_n^T}$, $U_{\alpha_P} = V_{\alpha_P} = \emptyset$ so $S_{\text{min}} = \emptyset$. As we now show, when $\beta_{0}$ is $\alpha_{A^{*},1}$, $\alpha_{A^{*},2}$, or $\alpha_{A^{*},3}$, the min-vertex separator of $\alpha_P$ consists of a single square. See Figure~\ref{fig:ident-pattern-result} for an example.
\begin{lemma}
\label{lem:min-vertex-sep}
If $\beta_{0}$ is $\alpha_{A^{*},1}$, $\alpha_{A^{*},2}$, or $\alpha_{A^{*},3}$ then the min-vertex separator of $\alpha_P$ consists of a single square.
\end{lemma}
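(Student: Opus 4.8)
The plan is to exhibit a concrete single-square separator and then argue no cheaper one exists. First I would recall that when $\beta_0$ is one of $\alpha_{A^*,1}$, $\alpha_{A^*,2}$, or $\alpha_{A^*,3}$, the shape $\alpha_P$ is $d\times d$, meaning $U_{\alpha_P}$ and $V_{\alpha_P}$ each consist of a single \emph{square} vertex — call them $u$ and $v$ (for $\alpha_{A^*,2},\alpha_{A^*,3}$ we have $u=v$, and the claim is trivial since then $S_{\min}=\{u\}$ is forced and is a single square). So assume $\beta_0=\alpha_{A^*,1}$, with $U_{\alpha_P}=(u)$, $V_{\alpha_P}=(v)$ distinct squares. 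Observe from the construction procedure in Section~\ref{sec:normbound_strat} that $u$ has exactly one neighbor in $\alpha_P$, namely the circle vertex $x$ of $\beta_0$ that served as the initial loose end; similarly, since $\beta_{k+1}\in\{\alpha_w,\alpha_{1_n}\}$ attaches only at a single circle, $v$ (which is $V_{\alpha_{A^*,1}}$, a leaf square of $\beta_0$) also has exactly one neighbor, namely $x$. Hence $\{x\}$ is a vertex separator between $u$ and $v$: every $u$–$v$ path must pass through $x$. Since $x$ is a circle, $\varphi(\{x\})=1$.

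Next I would show one cannot do better than weight $1$. Any separator $S$ must, as noted in the definition, contain $U_{\alpha_P}\cap V_{\alpha_P}$; here that intersection is empty, so we cannot immediately conclude $S\neq\emptyset$ — but $S=\emptyset$ would require $u$ and $v$ to lie in different connected components of $\alpha_P$, which is false because $u-x-v$ is a path. Hence $\varphi(S)>0$ for every separator $S$. Now $\varphi$ takes values that are nonnegative integer combinations of $1$ (per circle) and $\log_n d\approx 1/2$ (per square) in the regime $n\le d^2/\polylog(d)$; so $\varphi(S)>0$ forces $\varphi(S)\ge \log_n d$, with equality only if $S$ is a single square and no circles. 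The remaining case to rule out is therefore: does there exist a \emph{single square} separating $u$ from $v$? If such a square $s$ existed, then deleting $s$ disconnects $u$ from $v$; but $u$'s unique neighbor is the circle $x$ and $v$'s unique neighbor is the same circle $x$, so the square $s$ cannot be adjacent to $u$ (its only neighbor is $x$, a circle) nor to $v$, and in fact the path $u-x-v$ avoids all squares entirely. Thus no single square is a separator, and the cheapest separator has weight exactly $1$, realized by $\{x\}$ — a single circle.

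Wait — this shows $S_{\min}$ is a single \emph{circle}, contradicting the lemma's claim of a single square. I would therefore re-examine the structure: the resolution is that in $\alpha_P$ the loose-end circle $x$ of $\beta_0$ typically gets \emph{identified} with circle vertices from later shapes $\beta_1,\dots$, so after all identifications $u$ and $v$ may each still have $x$ as a unique neighbor, but $x$ may no longer be the unique cut vertex; crucially one should instead argue about $U$ and $V$ directly. The correct argument: since in $\alpha_{A^*,1}$ the right vertex $v$ is a leaf square adjacent only to $x$, and $x$ is a circle of weight $1$ while $v$ is a square of weight $\log_n d<1$, one checks that \emph{either} $\{v\}$ \emph{or} the analogous square attached near $u$ is already a separator — indeed $\{u\}$ itself, being a square of weight $\log_n d$, separates $u$ from $v$ whenever $u\notin V_{\alpha_P}$, because we may always take $S\supseteq U_{\alpha_P}$ and $\{u\}$ trivially blocks all length-$0$ and longer paths out of $u$. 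Since $\varphi(\{u\})=\log_n d$ is the minimum possible positive weight (a single square beats any configuration involving a circle), and we argued $\varphi(S_{\min})>0$, we conclude $S_{\min}$ is a single square, namely $u$ (or symmetrically $v$).

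The main obstacle I anticipate is the bookkeeping in the second paragraph: one must be careful that after all vertex identifications in the product shape, $U_{\alpha_P}$ and $V_{\alpha_P}$ genuinely remain single squares and that $\{u\}$ (equivalently any single square in $U_{\alpha_P}$) is always a valid separator — this uses that a separator is allowed, indeed required, to contain $U_\alpha\cap V_\alpha$ and more generally that taking $S\supseteq U_\alpha$ always yields a separator. Combined with the weight comparison $\log_n d < 1$ and the impossibility of a weight-$0$ (empty) separator, this pins down $S_{\min}$ as a single square.
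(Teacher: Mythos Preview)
Your overall structure is right: you want an upper bound $\varphi(S_{\min})\le \log_n d$ (from the observation that $U_{\alpha_P}=\{u\}$ is itself a separator) and a lower bound $\varphi(S_{\min})>0$ (from connectivity of $u$ and $v$). The upper bound is fine. The gap is in the lower bound.

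Your argument that $S=\emptyset$ is not a separator rests entirely on ``$u-x-v$ is a path.'' But in $\alpha_P$ this path need not survive. The squares $u,v\in\beta_0$ can be identified (via step 2(b) of the construction) with square vertices of later shapes---for instance, if $\beta_1=\alpha_1$ and its square $x_1$ is identified with $u$, then $\beta_1$ contributes a label-$1$ edge between $u$ and $x_0$ (since $U_{\beta_1}$ was glued to $x_0$). That edge is parallel to the label-$1$ edge $\{u,x_0\}$ from $\beta_0$, and by Proposition~\ref{prop:resolve-multi-edges} two parallel label-$1$ edges can resolve to a label-$0$ edge, i.e., vanish. The same can happen to $\{x_0,v\}$. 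So your direct path is not guaranteed to exist in $\alpha_P$, and your claim in the first paragraph that $u$ and $v$ each have $x$ as a unique neighbor is likewise false after identifications. You flagged ``bookkeeping'' as the obstacle, but the real obstacle is that you have given no argument for connectivity that survives edge-vanishing.

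The paper sidesteps this entirely with a parity argument. It shows (Lemma~\ref{lem:connectivitylemma}) that when $u\neq v$, those two squares are the \emph{only} vertices of odd degree in $\alpha_P$: in $\beta_0=\alpha_{A^*,1}$ the squares $u,v$ start with degree $1$ and the circle with degree $2$; every other shape $\beta_1,\dots,\beta_{k+1}$ contributes only even-degree vertices; identifying two vertices sums their degree parities; and resolving multi-edges preserves parity (Proposition~\ref{prop:resolve-multi-edges}). The handshaking lemma then forces $u$ and $v$ into the same connected component, since each component must have even total degree. This is the missing idea: connectivity follows from degree parities, not from tracking any specific path.
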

\begin{proof}
Since $U_{\alpha_P}$ consists of a single square and is a vertex separator, the minimum weight vertex separator is either a single square or no vertices at all. To show that the minimum weight vertex separator has at least one vertex, we prove that $U_{\alpha_P}$ must be connected to $V_{\alpha_P}$.

To prove this, it is sufficient to prove the following lemma. Here by `degree' of a vertex $a$ in a shape, we mean the sum of all edge labels of edges incident to $a$.

\begin{lemma}\label{lem:connectivitylemma}
If $\beta_{0}$ is $\alpha_{A^{*},1}$, $\alpha_{A^{*},2}$, or $\alpha_{A^{*},3}$ then either $U_{\alpha_P} = V_{\alpha_P} = (u)$ where $u$ is a square vertex, or $U_{\alpha_P} = (u)$, $V_{\alpha_P} = (v)$ where $u$ and $v$ are distinct square vertices and $u$ and $v$ are the only vertices with odd degree.
\end{lemma}
With this lemma, the result follows easily. If $U_{\alpha_P} = V_{\alpha_P} = (u)$ then the result is trivial. If $U_{\alpha_P} = (u)$ and $V_{\alpha_P} = (v)$ where $u$ and $v$ are distinct square vertices and are the only vertices with odd degree then $u$ and $v$ must be in the same connected component of $\alpha_P$ due to the following fact.

\begin{proposition}
\label{prop:handshake}
For any undirected graph $G$ with integer edge-labels, for any connected component $C$ of $G$, $\sum_{v \in C}{deg(v)}$ is even.
\end{proposition}
\begin{proof}
This is the handshaking lemma and can be proved by observing that $\sum_{v \in C}{deg(v)} = 2|E(C)|$ which is even.
\end{proof}

To be concrete, suppose that for the sake of contradiction $u$ and $v$ lie in distinct connected components $C_u$ and $C_v$ of $\alpha_P$. Then the sum of degrees in $C_u$ is odd by Lemma \ref{lem:connectivitylemma}, which contradicts Proposition \ref{prop:handshake}.

We proceed to prove Lemma \ref{lem:connectivitylemma}.
\begin{proof}[Proof of Lemma \ref{lem:connectivitylemma}]
Let $U_{\alpha_P} = (u)$ and $V_{\alpha_P} = (v)$. We make the following observations about the process for building the shape $\alpha_{P}$
\begin{enumerate}
    \item In $\beta_{0}$, either $u = v$ (in which case $u$ has even degree) or $u$ and $v$ are distinct square vertices which have odd degree. The circle vertex in $\beta_{0}$ always has even degree.
    
    \item For all of the shapes $\beta_1, \ldots, \beta_{k}, \beta_{k+1}$, all of the vertices have even degree.

    \item Whenever two vertices are identified, the parity of the degree of the resulting vertex is equal to the parity of the sum of the degrees of the original vertices.
    
    \item No other operation affects the parities of the degrees of the vertices.

\end{enumerate}
Together, these observations imply that either $u = v$ or $u$ and $v$ are the only vertices with odd degree. 
\end{proof}
\noindent The proof of Lemma~\ref{lem:min-vertex-sep} is now complete.
\end{proof}

\begin{corollary}\label{cor:alphaPapproximatenorm}
If $\beta_{0}$ is $\alpha_{A^{*},1}$, $\alpha_{A^{*},2}$, or $\alpha_{A^{*},3}$ then 
\[
\calB(\alpha_P) = d^{\frac{|\calV_{\Box}(\alpha_P)| + |\mathrm{Iso}(\alpha_P) \cap \calV_{\Box}(\alpha_P)| - 1}{2}} \cdot n^{\frac{|\calV_{\circ}(\alpha_P)| + |\mathrm{Iso}(\alpha_P) \cap \calV_{\circ}(\alpha_P)|}{2}}.
\]
If $\beta_{0}$ is $\alpha_{1_n^T}$ then 
\[
\calB(\alpha_P) = d^{\frac{|V_{\Box}(\alpha_P)| + |\mathrm{Iso}(\alpha_P) \cap \calV_{\Box}(\alpha_P)|}{2}} \cdot n^{\frac{|V_{\circ}(\alpha_P)| + |\mathrm{Iso}(\alpha_P) \cap \calV_{\circ}(\alpha_P)|}{2}}.
\]
\end{corollary}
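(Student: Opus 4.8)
The plan is to obtain both identities as immediate specializations of the definition
\[
\calB(\alpha_P) = d^{\frac{|\calV_{\Box}(\alpha_P)| + |\mathrm{Iso}(\alpha_P) \cap \calV_{\Box}(\alpha_P)| - |S_{\text{min}} \cap \calV_{\Box}(\alpha_P)| }{2}} \times n^{\frac{|\calV_{\circ}(\alpha_P)| + |\mathrm{Iso}(\alpha_P) \cap \calV_{\circ}(\alpha_P)| - |S_{\text{min}} \cap \calV_{\circ}(\alpha_P)| }{2}},
\]
by substituting the size and type of a min-vertex separator $S_{\text{min}}$ of $\alpha_P$ in each of the two cases; every other quantity in the exponents depends only on $\alpha_P$ and is unaffected.

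In the case $\beta_{0} \in \{\alpha_{A^{*},1}, \alpha_{A^{*},2}, \alpha_{A^{*},3}\}$, I would invoke Lemma~\ref{lem:min-vertex-sep}, which exhibits a min-vertex separator $S_{\text{min}}$ consisting of exactly one square vertex. Hence $|S_{\text{min}} \cap \calV_{\Box}(\alpha_P)| = 1$ and $|S_{\text{min}} \cap \calV_{\circ}(\alpha_P)| = 0$; plugging these into the displayed exponents yields precisely the first claimed formula.

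In the case $\beta_{0} = \alpha_{1_n^T}$, recall from the construction of $\alpha_P$ that $U_{\alpha_P} = V_{\alpha_P} = \emptyset$. Then there are no paths (not even of length $0$) from $U_{\alpha_P}$ to $V_{\alpha_P}$, so the empty set is already a valid separator and $S_{\text{min}} = \emptyset$. Thus $|S_{\text{min}} \cap \calV_{\Box}(\alpha_P)| = |S_{\text{min}} \cap \calV_{\circ}(\alpha_P)| = 0$, and substituting into the displayed exponents gives the second formula (here $V_{\Box}(\alpha_P), V_{\circ}(\alpha_P)$ in the statement should be read as $\calV_{\Box}(\alpha_P), \calV_{\circ}(\alpha_P)$).

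I do not expect any real obstacle: the corollary is a one-line bookkeeping consequence of Lemma~\ref{lem:min-vertex-sep} together with the trivial observation about the empty separator. All of the substance lies in Lemma~\ref{lem:min-vertex-sep} itself, whose proof rests on the handshaking/parity argument forcing the two distinguished square vertices of $\alpha_P$ (when distinct) to lie in the same connected component, so that the min-vertex separator is nonempty and hence a single square.
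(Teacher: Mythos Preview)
Your proposal is correct and matches the paper's approach exactly: the corollary is not given a separate proof in the paper, since it follows immediately from Lemma~\ref{lem:min-vertex-sep} (yielding a single-square separator in the $\alpha_{A^*,j}$ cases) together with the remark just before that lemma that $U_{\alpha_P}=V_{\alpha_P}=\emptyset$ forces $S_{\min}=\emptyset$ when $\beta_0=\alpha_{1_n^T}$, both substituted into the displayed definition of $\calB(\alpha_P)$.
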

We remark that the factor of $\frac{1}{\sqrt{d}}$ appearing in the first expression turns out to be crucial to obtaining satisfactory norm bounds.

\subsubsection{The ideal weight function}
Given that we have identified the min-vertex separator of the shape $\alpha_P$, we now specify an ideal weight function $w_{\text{ideal},P}$ such that $\calB(\alpha_P) = \prod_{j=0}^{k+1}{w_{\text{ideal},P}(\beta_{j})}$. First we introduce and formalize some intuitive terminology. We say that $v \in \calV(\alpha_P)$ \textit{appears} in a shape $\beta_j$ (or that $\beta_j$ \textit{contains} $v$) if $v$ is the result of identifying one or more vertices according to the identification pattern $P$, at least one of which lies in $\beta_j$. We order the indices $j_1 < \ldots < j_r$ (which we refer to as discrete \textit{times})  of the shapes in $\beta_{j_1} , \ldots, \beta_{j_r}$ where $v$ appears. We refer to $j_1$ as the \textit{first time} $v$ appears and $j_r$ as the \textit{last time} $v$ appears. 

Each vertex $v \in \calV(\alpha_P)$ has an associated  value 
%(i.e., $\sqrt{n}$ and $\sqrt{d}$ for non-isolated circles and squares, $n$ and $d$ for isolated circles and squares), 
coming from the expression in Corollary~\ref{cor:alphaPapproximatenorm}. This value, which we call $w_{\text{ideal},P}(v)$, is as follows:
\begin{equation*}
w_{\text{ideal},P}(v) = \begin{cases}
      \sqrt{d} & \text{if $v$ is a square and not isolated}, \\
      d & \text{if $v$ is a square and isolated\footnotemark}, \\
      \sqrt{n} & \text{if $v$ is a circle and not isolated}, \\
      n & \text{if $v$ is a circle and isolated}. 
    \end{cases}
    \footnotetext{Note that vertices in $U_{\alpha_P} \cup V_{\alpha_P}$ do not count as isolated.}
\end{equation*}

We ``split'' this value among (at most) 2 shapes that contain $v$ by assigning the square root of the value of $v$ to each of $\beta_j$ and  $\beta_{j'}$, where $j$ is the smallest index for which $v$ appears in $\beta_j$ and $j'$ is the largest index for which $v$ appears in $\beta_{j'}$. If $v$ only appears once, then we assign the full value of $v$ to the shape in which it appears. Formally, we define this procedure in the following way. 
%The weight of a vertex $v \in \alpha_P$ is given by:
%\aw{is the following sentence redundant given the previous paragraph?} If a vertex is identified with other vertices, then we ``split'' its weight between the first and last times it appears and define the weight of a shape to be the product of these ``split'' weights of the vertices it contains. 
For $j \in [k+1]$, we let $w_{\text{ideal},P}(\beta_j)$ be the total weight which is assigned to $\beta_j$. This weight is 
\[
w_{\text{ideal},P}(\beta_j) = \prod_{v \in \calV(\beta_j)}{w_{\text{ideal},P}(v)^{1 - \frac{1}{2} \I \{\tau(v) \in \{L, LR\}\} - \frac{1}{2} \I \{\tau(v) \in \{R, LR\}\}}}.
\]
For $\beta_0$, we adjust this to take the minimum weight vertex separator into account. In particular, if $\beta_0 = \alpha_{1_n^T}$ then:
\[
w_{\text{ideal},P}(\beta_0) = \prod_{v \in \calV(\beta_0)}{w_{\text{ideal},P}(v)^{1  - \frac{1}{2} \I \{\tau(v) = R\}}},
\]
while if $\beta_0$ is $\alpha_{A^{*},1}$, $\alpha_{A^{*},2}$, or $\alpha_{A^{*},3}$ then 
\[
w_{\text{ideal},P}(\beta_0) = \frac{1}{\sqrt{d}}\prod_{v \in \calV(\beta_0)}{w_{\text{ideal},P}(v)^{1  - \frac{1}{2}\I\{\tau(v) = R\}}}.
\]
From these definitions, we may immediately conclude that $\calB(\alpha_P) = \prod_{j=0}^{k+1}{w_{\text{ideal},P}(\beta_{j})}$ (i.e.\ Property~\ref{property:ideal_vs_norm_bound} is satisfied). See the first row of Table~\ref{table:weights-example} for an example of how $w_{\text{ideal}}$ is computed for a particular shape and identification pattern arising in $\calA^*(\Delta w)$.

\subsubsection{The local weight function}
\label{sec:local_weights} 
While the ideal weight function yields the correct norm bound, it cannot be computed separately for each shape $\beta_j$ because in order to determine if a vertex in $\beta_j$ is isolated or not, we need to consider the entire identification pattern $P$. To handle this, we introduce a different weight function $w_{\text{local},P}$ which can be computed separately for each shape $\beta_j$ by considering only the ``local data'' consisting of the decorations on vertices in $\calV(\beta_j)$. To define  $w_{\text{local},P}(v)$, for each $j$ and each $v \in \calV(\beta_j)$, we upper bound $w_{\text{ideal},P}(v)$ based on the local data of $\beta_j$. In particular, if a vertex is incident to an edge which cannot vanish based on the local data at $\beta_j$, then we know it cannot be isolated and $w_{\text{ideal},P}(v)$ is $\sqrt{d}$ or $\sqrt{n}$. For $j = 0$, we also know $w_{\text{ideal},P}(v) = \sqrt{d}$ for the vertices $v \in U_{\alpha_P} \cup V_{\alpha_P}$ since we never consider vertices in $U_{\alpha_P} \cup V_{\alpha_P}$ to be isolated.
%we know they cannot be isolated by. \aw{was there supposed to be a thm number here?} 
For other vertices, we conservatively upper bound $w_{\text{ideal},P}(v)$ by $d$ or $n$. 
%\begin{enumerate}
%    \item If $j > 0$, for each vertex in $V(\beta_{j})$, does this vertex appear in $\beta_{j'}$ for some $j' < j$? Note that the answer is automatically yes for the vertex in $U_{\beta_j}$.
%    \item If $j < k+1$, for each vertex in $V(\beta_{j})$, does this vertex appear in $\beta_{j'}$ for some $j' > j$? Note that the answer is automatically yes for the vertex in $V_{\beta_j}$ if $j > 0$ and for the circle vertex/loose end in $\beta_0$.
%    \item If there is a ``right-critical'' edge in $E(\beta_j)$ (defined below), does it vanish?
%\end{enumerate}
We introduce the following definitions in order to formally define $w_{\text{local},P}$.
\begin{definition}
We say that an edge $e = \{u,v\}_l$ is \textbf{safe} for shape $\beta_j$ if both of the following hold:
\begin{enumerate}
    \item Either $\tau(u) \in \{\emptyset, R\}$ or $\tau(v) \in \{\emptyset, R\}$.
    \item Either $\tau(u) \in \{\emptyset, L\}$ or $\tau(v) \in \{\emptyset, L\}$.
\end{enumerate}
\end{definition}
Observe that a safe edge cannot vanish (i.e.\ it appears in $\alpha_P$ with a positive edge label). For every $j$ and $v \in V(\beta_j)$, we define the ``full'' local weight as:
\[
b_j(v) = \begin{cases}
      \sqrt{d} & \text{if $v$ is a square and incident to a safe edge for $\beta_j$}, \\
      \sqrt{d} & \text{$j=0$ and $v \in U_{\alpha_P} \cup V_{\alpha_P}$}, \\
      d & \text{if $v$ is any other square}, \\
      \sqrt{n} & \text{if $v$ is a circle and incident to a safe edge for $\beta_j$}, \\
      n & \text{if $v$ is any other circle}.
    \end{cases}
\]

As before, if a vertex is identified with other vertices, then we ``split'' the full weight between the first and last times it appears. So we define, for every $j$ and $v \in V(\beta_j)$, the ``split'' local weight as: 
\begin{equation}
    \label{eqn:bbar_def}
    \overline{b}_j(v) = b_j(v)^{1- \frac{1}{2} \I \{\tau(v) \in \{L,LR\}) - \frac{1}{2} \I(\tau(v) \in \{R,LR\}\}}.  
\end{equation}
Recall that for $1 \leq j \leq k$ ,  $w_{\text{local},P}(\beta_j)$ is the product of the $\overline{b}_j(y)$'s, see \eqref{eqn:local_weight}. Using these definitions, we define the weight function $w_{\text{local},P}$ as: 
\begin{align} 
\label{eqn:local_weight}
w_{\text{local},P}(\beta_j) = \left(\frac{1}{\sqrt{d}} \right)^{\mathcal{I}(\beta_j)} \cdot 
\prod_{v \in \calV(\beta_j)} \overline{b}_j(v),
\end{align} 
where 
\[
\mathcal{I}(\beta_j) = {\I\bigg\{j = 0 \text{ and } \, \beta_0 \in \{ \alpha_{A^*,1},  \alpha_{A^*,2},\alpha_{A^*,3} \}\bigg\}}. 
\]
In other words, we divide by $\sqrt{d}$ for $j = 0$ if $\beta_0$ is $\alpha_{A^{*},1}$, $\alpha_{A^{*},2}$, or $\alpha_{A^{*},3}$. We may immediately conclude from these definitions that property~\ref{property:actual_vs_ideal} is satisfied; in the next section we verify Property~\ref{property:actual_bound_small}. We record for future use a simple upper bound on the weights of vertices.
\begin{proposition}
\label{prop:trivial-weight-bound}
Let $j \in \{0, \ldots, k+1\}$. If $v \in \calV_\circ (\beta_j)$, then the contribution of $v$ to $w_{\text{local},P}(\beta_j)$ is at most $\sqrt{n}$. If $v \in \calV_\Box (\beta_j)$, then the contribution of $v$ to $w_{\text{local},P}(\beta_j)$ is at most $\sqrt{d}$. The same results also apply to $w_{\text{ideal},P}$.  
\end{proposition}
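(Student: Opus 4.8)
The plan is to bound, one vertex at a time, the factor $\overline{b}_j(v)$ that $v$ contributes to $w_{\text{local},P}(\beta_j)$ (and the analogous factor $w_{\text{ideal},P}(v)^{e}$ contributed to $w_{\text{ideal},P}(\beta_j)$), with the whole argument turning on the decoration $\tau(v)$. Recall from \eqref{eqn:bbar_def} that the exponent $e = 1 - \frac{1}{2}\I\{\tau(v)\in\{L,LR\}\} - \frac{1}{2}\I\{\tau(v)\in\{R,LR\}\}$ equals $1$ precisely when $\tau(v)=\emptyset$ and is at most $\frac{1}{2}$ otherwise; the same is true of the exponent used for $w_{\text{ideal},P}(\beta_0)$, since every $v\in\calV(\beta_0)$ has $\tau(v)\in\{\emptyset,R\}$. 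Now $b_j(v)\le n$ for every circle vertex and $b_j(v)\le d$ for every square vertex (and likewise $w_{\text{ideal},P}(v)\le n$, resp.\ $\le d$), and all these weights are at least $1$. Hence when $\tau(v)\neq\emptyset$ the bound is immediate: $\overline{b}_j(v)\le b_j(v)^{1/2}\le\sqrt{n}$ (circle) or $\le\sqrt{d}$ (square), and the same for $w_{\text{ideal},P}$.

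It therefore remains to handle $\tau(v)=\emptyset$, in which case $v$ appears in exactly one shape $\beta_j$ and the exponent is $1$, so one must show directly that $b_j(v)\le\sqrt{n}$ (circle) or $b_j(v)\le\sqrt{d}$ (square). If $v$ is incident to an edge in $\beta_j$, then that edge is safe for $\beta_j$: both conditions in the definition of a safe edge hold because $\tau(v)=\emptyset$ lies in both $\{\emptyset,R\}$ and $\{\emptyset,L\}$; hence $b_j(v)\in\{\sqrt{n},\sqrt{d}\}$ by definition. If instead $v$ is isolated in $\beta_j$, I would appeal to a short case check over the shapes listed in Section~\ref{sec:shapes}: the only vertices that are ever isolated inside some $\beta_j$ are the loose-end circle vertices (the vertex $x$ of $\alpha_{A^{*},2}$, the vertex $u$ of $\alpha_{1_n^T}$, and the $U$-vertex $u$ of $\alpha_{1_n}$, this last glued to the preceding shape), together with the square vertex $u$ of $\alpha_{A^{*},2}$. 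Each loose-end vertex is always identified with a vertex of an adjacent shape, so it has $\tau\neq\emptyset$, contradicting our assumption; and the square vertex $u$ of $\alpha_{A^{*},2}$ lies in $U_{\alpha_P}\cup V_{\alpha_P}$ with $j=0$, so the corresponding clause in the definition of $b_j$ forces $b_0(v)=\sqrt{d}$. This completes the case $\tau(v)=\emptyset$ for $w_{\text{local},P}$.

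For $w_{\text{ideal},P}$ the identical two-case analysis goes through once one observes that, when $\tau(v)=\emptyset$, the edges of $\alpha_P$ incident to $v$ are exactly the edges of $\beta_j$ incident to $v$ — none of them can vanish, since vanishing requires a pair of parallel edges produced by an identification, and the unidentified vertex $v$ together with its neighbours within the single shape $\beta_j$ cannot be forced to coincide. Hence $v$ is isolated in $\alpha_P$ if and only if it is isolated in $\beta_j$, and either $v$ has an incident edge (so $v$ is non-isolated in $\alpha_P$ and $w_{\text{ideal},P}(v)\in\{\sqrt{n},\sqrt{d}\}$) or $v$ is the square vertex of $\alpha_{A^{*},2}$ lying in $U_{\alpha_P}\cup V_{\alpha_P}$, which is by convention not counted as isolated and so again has $w_{\text{ideal},P}(v)=\sqrt{d}$.

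I expect the only step requiring any care to be the sub-case $\tau(v)=\emptyset$ with $v$ isolated in $\beta_j$, where a priori the contributed factor could be as large as $n$ or $d$; its resolution rests on the bookkeeping fact that such a vertex must be the distinguished square vertex of $\alpha_{A^{*},2}$, for which the definitions of $b_j$ and $w_{\text{ideal},P}$ were set up precisely so as to assign it weight $\sqrt{d}$ rather than $d$. All remaining steps reduce to the elementary inequality $x^{1/2}\le\sqrt{N}$ for $1\le x\le N$.
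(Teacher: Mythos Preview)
Your proof is correct and follows the same two-case split as the paper's proof: either $v$ is identified with another vertex (so the exponent drops to at most $\tfrac{1}{2}$ and the bound is immediate), or $\tau(v)=\emptyset$ and one must check that the full weight is already $\sqrt{n}$ or $\sqrt{d}$. The paper's proof is extremely terse here---it simply asserts ``if $v$ is not identified with any other vertex, then it cannot be isolated'' and leaves the reader to unpack why---whereas you spell out the mechanism (any edge incident to a vertex with $\tau=\emptyset$ is automatically safe) and then explicitly enumerate the handful of isolated vertices across all shapes $\beta_j$, verifying each is either a loose end (hence $\tau\neq\emptyset$) or the special square of $\alpha_{A^*,2}$ covered by the $U_{\alpha_P}\cup V_{\alpha_P}$ clause. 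Your treatment of $w_{\text{ideal},P}$ via the non-vanishing of edges at an unidentified vertex is likewise correct and makes explicit what the paper leaves implicit.
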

\begin{proof}
Consider the case $v \in \calV_\circ (\beta_j)$. If $v$ is not identified with any other vertex, then it cannot be isolated. Hence, it contributes no more than $\sqrt{n}$ to $w_{\text{local},P}(\beta_j)$. On the other hand, if $v$ is identified with some other vertex, then the maximum possible weight of $n$ is split between $v$ and some other vertex. Thus, it contributes no more than $\sqrt{n}$ to $w_{\text{local},P}(\beta_j)$.
\end{proof}

See the second row of Table~\ref{table:weights-example} for an example of how $w_{\text{local}}$ is computed for a particular shape and identification pattern arising in $\calA^*(\Delta w)$.

\subsection{Proof of Lemma~\ref{lemma:Astar-Delta-k-ones}}
In this section, we show that the local weight function $w_{\text{local},P}$ from the previous section is sufficient to complete the proof of Lemma~\ref{lemma:Astar-Delta-k-ones}. As mentioned earlier, by definition of $w_{\text{local},P}$, it satisfies Property~\ref{property:actual_vs_ideal}. It remains to verify the following:
\begin{enumerate}
    \item If $\beta_{0}$ is any of $\alpha_{A^*,1}, \alpha_{A^*,2}, \alpha_{A^*,3}$ and $\beta_{k+1} = \alpha_{1_n}$, then $w_{\text{local},P}(\beta_{0}) w_{\text{local},P}(\beta_{k+1}) \leq \max(\sqrt{d}n^{3/4},n)$.
    
    \item For all $j \in [k]$, $|w_{\text{local},P}(\beta_j)c(\beta_j)| \leq d^{\frac{3}{2}} \sqrt[4]{n}$ (corresponding to Property~\ref{property:actual_bound_small}).
\end{enumerate}
Given these and Equation~\eqref{eqn:w_actual_bound}, we have the following bound with probability $1 - n^{-\Omega(1)}$:
\begin{align*}
% \alpha^{-(k+1)} 
\norm{\calA^*(\Delta^k 1_n)}_{op}   &\leq (\log n)^{O(k)} \cdot \left(d^{\frac{3}{2}} \sqrt[4]{n}\right)^{k} \cdot \max_{P \in \calP_{\beta_{0},\beta_1,\ldots,\beta_k,\beta_{k+1}}}{|c_P| \cdot  w_{\text{local},P}(\beta_{0}) \cdot w_{\text{local},P}(\beta_{k+1}) }\\
&\leq (\log n)^{O(k)} \cdot \left(d^{\frac{3}{2}} \sqrt[4]{n}\right)^{k} \cdot \max(\sqrt{d}n^{3/4},n),
\end{align*}
which completes the proof of Lemma~\ref{lemma:Astar-Delta-k-ones}.

We now verify the two conditions on the local weight function. For the first condition, note that $w_{\text{local},P}(\beta_{k+1}) = w_{\text{local},P}(\alpha_{1_n}) = \sqrt{n}$. To handle the contribution from $\beta_{0}$, we enumerate the following cases:
\begin{itemize}
    \item \textbf{Case $\beta_{0} = \alpha_{A^*,1}$:} We know that the two squares $u, v$ are not in $\text{Iso}(\alpha_P)$, so $b_0(u) = b_0(v) = \sqrt{d}$. Suppose at least one of the two edges in $\alpha_{A^*,1}$ are safe. Then, $b_0(x) = \sqrt{n}$ and $w_{\text{local},P} (\alpha_{A^*,1}) \leq \frac{1}{\sqrt{d}} b_0(u) b_0(v) b_0(x)^{1/2} = \sqrt{d} n^{1/4}$. Otherwise, $\tau(u) = \tau(v) = R$, so 
    \[
    w_{\text{local},P} (\alpha_{A^*,1}) = \frac{1}{\sqrt{d}} b_0(u)^{1/2} b_0(v)^{1/2} b_0(x)^{1/2} \leq \sqrt{n}.
    \]
    \item \textbf{Case $\beta_{0} = \alpha_{A^*,2}$ or $\alpha_{A^*,3}$:}  Again, we know $b_0(u) = \sqrt{d}$. So, $w_{\text{local},P} (\alpha_{A^*,2}) = \frac{1}{\sqrt{d}} b_0(u) b_0(x)^{1/2} \leq \sqrt{n}$.
\end{itemize}
This completes the proof that $w_{\text{local},P}(\beta_{0}) w_{\text{local},P}(\beta_{k+1}) \leq \max(\sqrt{d}n^{3/4},n)$.

For the second condition, we fix $j \in [k]$. Below, for each shape $\alpha_i$ for $i \in \mathcal{I}$, we consider all possibilities of the decorations of the vertices of $\alpha_i$, reducing the number of cases when possible by symmetry. The tables below handle the essential cases. In the tables below, we use the term `Any' to denote that a vertex may have any of the decorations described above. We now analyze the possible cases for $\beta_j$, repeatedly making use of Proposition~\ref{prop:trivial-weight-bound} when appropriate:
\begin{itemize}
    \item \textbf{Case $\beta_j = \alpha_1$:} A priori, there are a total of $2\cdot 4 \cdot 4 \cdot 2 = 64$ cases since $\tau(u) \in \{ L, LR \}, \tau(x_1), \tau(x_2) \in \{ \emptyset, L, R, LR\}$, and $\tau(v) \in \{ R, LR \}$. By symmetry, each case reduces to one considered in Table \ref{table:alpha1-casework}. Note that the first row of Table \ref{table:alpha1-casework} stands for $32$ different cases. For this row, we slightly abuse notation and use $\bar{b}_j(y)$ to specify an upper bound on the split local weight of $y \in \{u, x_1, x_2, v\} \subset V(\alpha_{1})$ for all of these $32$ cases. 
    
    By inspection of Table \ref{table:alpha1-casework} and using that $n = d^2/\polylog(d)$ (see Remark \ref{rmk:n=d2-polylog}), we conclude that 
    \begin{align} 
    \label{eqn:alpha2a_local}
    w_{\text{local},P}(\beta_j) = w_{\text{local},P}(\alpha_{1}) 
    \leq d \sqrt{n}.  
    \end{align}
    Also note that $c(\beta_j) = c(\alpha_{1}) = O(1)$.
    
    \item \textbf{Case $\beta_j = \alpha_{2a}$:} A priori, there are a total of $2 \cdot 4 \cdot 2 = 16$ cases since $\tau(u) \in \{ L, LR \}, \tau(x) \in \{ \emptyset, L, R, LR\}$, and $\tau(v) \in \{ R, LR \}$. By symmetry, each case reduces to one considered in Table \ref{table:alpha2a-casework}. Note that the first row of Table \ref{table:alpha2a-casework} stands for $4$ different cases. For this row, we slightly abuse notation and use $\bar{b}_j(y)$ to specify an upper bound on the split local weight of $y \in \{u, x, v\} \subset  V(\alpha_{2a})$ for all of these $4$ cases. 
    
    By inspection of Table \ref{table:alpha2a-casework} and using that $n = d^2/\polylog(d)$ (see Remark \ref{rmk:n=d2-polylog}), we conclude that 
    \begin{align} 
    \label{eqn:alpha2a_local}
    w_{\text{local},P}(\beta_j) = w_{\text{local},P}(\alpha_{2a}) 
    \leq n.  
    \end{align}
    Also note that $c(\beta_j) = c(\alpha_{2a}) = O(1)$.
    
    \item \textbf{Case $\beta_j = \alpha_{3a}$:} First, note that $w_{\text{local},P}(\beta_j) = b_j(u)^0 \Bar{b}_j(x_1) \Bar{b}_j(x_2) = \Bar{b}_j(x_1) \Bar{b}_j(x_2)$ because $u$ is identified with vertices to the left and right of $\beta_j$. It is straightforward to enumerate the possible decorations of $x_1,x_2$ and verify that $\Bar{b}_j(x_1) \Bar{b}_j(x_2) \leq d$. Thus, $w_{\text{local},P}(\alpha_{3a}) \leq d$. Also, recall that $|c(\beta_j)| = |c(\alpha_{3a})| = O(1)$.
    
    \item \textbf{Case $\beta_j = \alpha_{3b}$:} Recall that $\alpha_{3b}$ has $U_{\alpha_{3b}} = V_{\alpha_{3b}} = (u)$. By the rules for graph matrix multiplication (see Section \ref{sec:product_rules}), $u \in \alpha_{3b}$ is identified with a circle vertex in $\beta_{j-1}$ and $\beta_{j+1}$. Hence $\tau(u) = LR$. By \eqref{eqn:bbar_def}, we have $\overline{b}_j(u) = 1$. Moreover, $\overline{b}_j(x) \leq \sqrt{d}$. Therefore,
    \begin{align}
    \label{eqn:alpha3b_local}
    w_{\text{local},P}(\beta_j)
    = w_{\text{local},P}(\alpha_{3b}) \leq \sqrt{d}. 
    \end{align}
    Also recall that $|c(\beta_j)| = |c(\alpha_{3b})| = O(d)$. 

    \item \textbf{Case $\beta_j = \alpha_4$:} The proof is very similar to the one for $\alpha_{3b}$. Since $U_{\alpha_{4}} = V_{\alpha_4} = (u)$, we see that $\tau(u) = LR$, so $\overline{b}_j(u) = 1$. And automatically, $\overline{b}_j(x) \leq \sqrt{d}$. Hence
    \begin{align}
    \label{eqn:alpha3b_local}
    w_{\text{local},P}(\beta_j)
    = w_{\text{local},P}(\alpha_{4}) \leq \sqrt{d}. 
    \end{align}
    Also recall that $|c(\beta_j)| = |c(\alpha_{4}) |= O(1)$.
\end{itemize}

\begin{table}[h!]
\centering
\begin{tabular}{ c|c|c|c|c|c|c|c|c } 

 $\tau(u)$ & $\tau(x_1)$ & $\tau(x_2)$ & $\tau(v)$ & $\overline{b}_j(u)$ & $\overline{b}_j(x_1)$ & $\overline{b}_j(x_2)$ & $\overline{b}_j(v)$ & Product \\ 
 \hline
 LR & Any & Any & Any & $1$ & $d^{1/2}$ & $d^{1/2}$ & $n^{1/2}$ & $n^{1/2} d$ \\
 L & $\emptyset$ & $\emptyset$ & R & $n^{1/4}$ & $d^{1/2}$ & $d^{1/2}$ & $n^{1/4}$ & $n^{1/2}d$ \\
 L & $\emptyset$ & L & R & $n^{1/4}$ & $d^{1/2}$ & $d^{1/4}$ & $n^{1/4}$ & $n^{1/2} d^{3/4}$ \\
 L & $\emptyset$ & LR & R & $n^{1/4}$ & $d^{1/2}$ & $1$ & $n^{1/4}$ & $n^{1/2} d^{1/2}$ \\
 L & L & L & R & $n^{1/2}$ & $d^{1/4}$ & $d^{1/4}$ & $n^{1/4}$ & $n^{3/4} d^{1/2}$ \\
 L & L & R & R & $n^{1/4}$ & $d^{1/4}$ & $d^{1/4}$ & $n^{1/4}$ & $n^{1/2} d^{1/2}$ \\
 L & L & LR & R & $n^{1/2}$ & $d^{1/4}$ & $1$ & $n^{1/4}$ & $n^{3/4} d^{1/4}$ \\
 L & LR & LR & R & $n^{1/2}$ & $1$ & $1$ & $n^{1/2}$ & $n$ \\
 
\end{tabular}
\caption{Case work for $\alpha_1$.}
\label{table:alpha1-casework}
\end{table}

\begin{table}[h!]
\centering
\begin{tabular}{ c|c|c|c|c|c|c } 
 % \hline
 $\tau(u)$ & $\tau(x)$ & $\tau(v)$ & $\overline{b}_j(u)$ & $\overline{b}_j(x)$ & $\overline{b}_j(v)$ & Product \\ 
\hline
 LR & Any & LR & $1$ & $\sqrt{d}$ & $1$ & $\sqrt{d}$\\
 LR & R & R & $1$ & $d^{1/2}$ & $n^{1/2}$ & $n^{1/2} d^{1/2}$ \\ 
 LR &  $\emptyset$ & R & $1$ & $d^{1/2}$ & $n^{1/4}$ & $n^{1/4} d^{1/2}$ \\ 
 LR & L  & R & $1$ & $d^{1/4}$ & $n^{1/4}$ & $n^{1/4} d^{1/4}$ \\ 
 L & $\emptyset$ & R & $n^{1/4}$ & $d^{1/2}$ & $n^{1/4}$ & $n^{1/2} d^{1/2}$ \\ 
 L & L & R & $n^{1/2}$ & $d^{1/4}$ & $n^{1/4}$ & $n^{3/4} d^{1/4}$ \\
 L & LR & R & $n^{1/2}$ & $1$ & $n^{1/2}$ & $n$ \\
 % \hline
\end{tabular}
\caption{Case work for $\alpha_{2a}$.}
\label{table:alpha2a-casework}
\end{table}

\subsection{Modifying the local weighting scheme}
Unfortunately, while the local weight function is sufficient for proving Lemma~\ref{lemma:Astar-Delta-k-ones}, the bound it gives is too loose for terms arising in $\calA^* (\Delta^k w)$ (for Lemma~\ref{lemma:Astar-Delta-k-w}) and $1^T \Delta^k w$ (for Lemma~\ref{lemma:1T-Delta-k-w}). Specifically, it is too conservative when assigning weight to the vertices in $\alpha_w$ in the case that its single edge vanishes with respect to an identification pattern $P$ (see Definition \ref{def:vanish}). To handle this bad case, we define a modified weight function $w_{\text{actual},P}$, for any given identification pattern $P$,  by decreasing the weight on the square vertex of $\alpha_w$ when its edge vanishes. While this guarantees that $w_{\text{actual}}(\beta_{k+1})$ is small (making it possible to satisfy \ref{property:actual_bound_small}), previous arguments do not immediately imply that \ref{property:actual_vs_ideal} holds in the case that the edge $\{u,x_1\}_2$ of $\alpha_w$ vanishes. We will show this is compensated for by an increase in the weight on squares in other shapes in a way that ensures \ref{property:actual_vs_ideal} and \ref{property:actual_bound_small} are satisfied simultaneously. To carry out this strategy, we introduce the notion of \textit{critical edges}; see also Figure~\ref{fig:ident-pattern} for an example.
\begin{definition}
\label{def:right_critical_edge}
We define the following two types of edges to be \textbf{right-critical edges}:
\begin{enumerate}
    \item If the square vertex $x_1$ in $\alpha_w$ satisfies $\tau (x_1) = L$ then the edge in $\alpha_w$ is a right-critical edge.
    \item If $\beta_j = \alpha_{2a}$, the circle vertex $u$ in $U_{\beta_j}$ satisfies $\tau(u) = L$, the circle vertex $v$ in $V_{\beta_j}$ satisfies $\tau(v) = R$, and the square vertex $x$ satisfies $\tau(x) = LR$, then the edge $\{u,x\}_2$ in $\beta_j = \alpha_{2a}$ is a right-critical edge.
\end{enumerate}
\end{definition}
\begin{definition}
We define a \textbf{left-critical edge} of $\beta_j$ to be an edge $e = \{u,v\}$ such that one of the following two cases holds:
\begin{enumerate}
    \item $l_e = 2$, $\tau(u) \in \{R, LR\}$, and $\tau(v) \in \{R, LR\}$.
    \item $l_e = 1$, $\tau(u) = \tau(v) = LR$.
\end{enumerate}
\end{definition}

With these definitions in hand, our high-level strategy is as follows. If the right-critical edge in $\beta_{k+1} = \alpha_w$ does not vanish, then the proof strategy of Lemma \ref{lemma:Astar-Delta-k-ones} that employs the local weight scheme $w_{\text{local},P}$ of Section \ref{sec:local_weights} suffices to directly yield the bounds of Lemmas \ref{lemma:Astar-Delta-k-w} and \ref{lemma:1T-Delta-k-w}. If instead the right-critical edge $e$ in $\beta_{k+1} = \alpha_w$ vanishes, we use Lemma \ref{lemma:vanishing-rc-to-lc} to pair $\beta_j$ with a shape $\beta_{j'}$ that contains a left-critical edge. We adjust the weights of $\beta_{j'}$ and $\alpha_w$ directly according to the \textit{actual} weight scheme defined in Section \ref{sec:defn-w-actual} in order to satisfy \ref{property:actual_vs_ideal} and \ref{property:actual_bound_small}. On the remaining shapes $\beta_j$ where $j \in [k]\backslash \{j'\}$, we employ the local weight scheme of Section \ref{sec:local_weights} (i.e.\ the actual weights correspond with the local weights on these remaining shapes). Multiplying together the actual weights for all shapes then yields the bounds of Lemmas \ref{lemma:Astar-Delta-k-w} and \ref{lemma:1T-Delta-k-w}.

% We now show that if there is a vanishing right-critical edge then there must be another shape which contains a left-critical edge but no vanishing right-critical edge. We construct our actual weighting scheme so that all of the shapes which have a left-critical edge but no vanishing right-critical edge can compensate for the ``excess'' weight on the square of $\alpha_w$.

\begin{lemma}
\label{lemma:vanishing-rc-to-lc}
Given an identification pattern $P$ on $\beta_{0},\beta_1,\ldots,\beta_{k},\beta_{k+1}$, if there is a $j \in [k+1]$ such that $\beta_j$ has a vanishing right-critical edge $e$ then there is a $j' < j$ such that $\beta_{j'}$ has no vanishing right-critical edge and has a left-critical edge $e'$ whose square endpoint is identified with the square endpoint of $e$.
\end{lemma}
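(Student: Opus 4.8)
The plan is to trace the vanishing edge $e$ backwards through the shape-by-shape gluing procedure of Section~\ref{sec:normbound_strat} and exhibit a left-critical edge in an earlier shape. First I record the local picture. In either type of right-critical edge (Definition~\ref{def:right_critical_edge}), $e$ has label $2$, its circle endpoint is $U_{\beta_j}$ — equivalently the loose end at which $\beta_j$ was appended — and its square endpoint lies in an identification class of $\alpha_P$, which I call $X$; moreover the decoration of that square ($L$ in the type-$1$ case where $\beta_j=\alpha_w$, and $LR$ in the type-$2$ case where $\beta_j=\alpha_{2a}$) forces $X$ to appear already in some shape with index $<j$. Write $C$ for the circle vertex of $\alpha_P$ containing $U_{\beta_j}$.

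The key observation is that, since multi-edges are resolved one shape at a time, at the moment $\beta_j$ is appended the parallel partner of $e$ is a \emph{single} edge of the accumulated shape on $\beta_0,\ldots,\beta_{j-1}$ joining $C$ and $X$, of some label $\ell$; because $e$ must resolve to a label-$0$ edge, Proposition~\ref{prop:resolve-multi-edges} together with $\E[h_2 h_\ell]=\delta_{2\ell}$ forces $\ell=2$. Thus there is a label-$2$ edge between $C$ and $X$ present after gluing $\beta_0,\ldots,\beta_{j-1}$. Tracing the history of this edge back through the merges that produced it, I claim one reaches an original edge $e'$ of some shape $\beta_{j'}$ with $j'<j$ that is left-critical with square endpoint in class $X$: the edge $e'$ lies on $C$ and $X$, and since both $C$ and $X$ also appear in $\beta_j$ (index $>j'$), their decorations in $\beta_{j'}$ lie in $\{R,LR\}$, which is exactly condition~1 of a left-critical edge when $l_{e'}=2$; in the residual cases, where the merge history only involves label-$1$ $C$--$X$ edges (the label-$4$ edges of $\alpha_4$ and labels produced by repeated merges must be absorbed here), one shows via the same Hermite bookkeeping that the history contains at least two label-$1$ $C$--$X$ edges at distinct indices, and the later of these has $C$ and $X$ appearing both before and after it, hence has both decorations equal to $LR$, satisfying condition~2.

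Finally I ensure $\beta_{j'}$ carries no vanishing right-critical edge. Right-critical edges occur only in $\alpha_w=\beta_{k+1}$ and in $\alpha_{2a}$; since $j'<j\le k+1$, we have $\beta_{j'}\neq\alpha_w$. If $\beta_{j'}=\alpha_{2a}$ does have a vanishing right-critical edge, its square endpoint is the unique square of $\alpha_{2a}$ and hence again lies in class $X$, so I restart the argument with $\beta_{j'}$ in place of $\beta_j$. The index strictly decreases with each restart, so the procedure terminates at a shape with a left-critical edge at $X$ and no vanishing right-critical edge, and transitivity of the identification relation keeps the square endpoint of that edge identified with the square endpoint of the original $e$.

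The step I expect to be the main obstacle is the Hermite/parity accounting in the middle paragraph: showing that whenever the label-$2$ right-critical edge $e$ vanishes, the $C$--$X$ edges it was merged with always contain one meeting a left-critical condition — this amounts to a careful bookkeeping of which multisets of Hermite labels (including label $4$ and labels produced by iterated merges) can have a nonzero $h_0$-coefficient together with an $h_2$, and to checking that in each such case a label-$1$ or label-$2$ edge with the appropriate $\{R,LR\}$ or $LR$ decoration is present.
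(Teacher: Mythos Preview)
Your approach matches the paper's: locate an earlier shape $\beta_{j'}$ containing an edge between (the classes of) $C$ and $X$, argue it is left-critical from its label together with the fact that $C$ and $X$ reappear at index $j$, and descend inductively when $\beta_{j'}$ itself carries a vanishing right-critical edge. Your extra observation that the accumulated $C$--$X$ label just before step $j$ must equal $2$ is correct and marginally sharper than what the paper writes, but the skeleton is identical.

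The obstacle you flag at the end is real, and neither your sketch nor the paper's two-case split on $l_{e'}\in\{1,2\}$ closes it. Your claim that in the residual cases ``the history contains at least two label-$1$ $C$--$X$ edges'' can fail outright: take $\beta_{j-2}=\beta_{j-1}=\alpha_4$ with their squares both identified to $X$ and their circle (the loose end) equal to $C=U_{\beta_j}$. Then the accumulated $C$--$X$ label evolves $4\to 2$ (since $h_4h_4$ has a nonzero $h_2$ coefficient, namely $\E[h_4h_4h_2]=4\sqrt{2}$), a label-$2$ right-critical edge in $\beta_j=\alpha_{2a}$ vanishes against it, and every earlier $C$--$X$ edge has label $4$, which the current definition of left-critical excludes. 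The clean repair is to enlarge the left-critical definition to include label-$4$ edges whose endpoints are both decorated in $\{R,LR\}$; Lemma~\ref{lem:pay-for-extra-square} already handles the case $\beta_{j'}=\alpha_4$, so the rest of the argument goes through unchanged once this is done.
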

\begin{proof}
We prove this lemma by induction on $j$. Assume the result is true for $j = m$ and assume that $\beta_{j}$ has a vanishing right-critical edge where $j = m+1$.
%\pax{I revised the rest of this proof for clarity (the old version is commented out below)}
Recall that we say a vertex $v$ of $\alpha_P$ \textit{appears} in $\beta_\ell$ if it is the result of identifying several vertices according to $P$, one of which lies in $\beta_\ell$. We say that an edge $e$ of $\alpha_P$ \textit{appears} in a shape $\beta_\ell$ if both of its endpoints appear in $\beta_\ell $. Suppose that $\beta_j$ contains a right-critical edge $e$. We claim that $e$ does not appear in $\beta_{j'}$ for $j' > j$. If $\beta_j = \alpha_w$, this claim follows immediately because then $j = k+1$. Now suppose that $e$ is the second type of right-critical edge in Definition \ref{def:right_critical_edge}, in which case we have $\beta_j = \alpha_{2a}$. Since $\tau(u) = L$, it also holds in this case that $e$ does not appear in $j'> j$. 

Let $j'$ denote the largest index such that $j' < j$ and $e$ appears in $\beta_{j'}$ (such an edge must exist as otherwise $e$ cannot vanish). We claim that $e$ must be a left-critical edge in $\beta_{j'}$. To see this, observe that $e$ appears in $\beta_{j}$ where $j > j'$. If $l_e = 2$, then $e$ is automatically a left-critical edge. If $l_e = 1$, then $e$ must also appear in $\beta_{j''}$ for some $j'' < j'$ as otherwise $e$ cannot vanish (two parallel edges with labels $1$ and $2$ give rise to a term with label $1$ and a term with label $3$, so they do not vanish). Thus, $e$ is a left-critical edge in this case as well.

If $\beta_{j'}$ does not have a vanishing right-critical edge, then we are done. If $\beta_{j'}$ does have a vanishing right critical edge (in which case it must be $\alpha_{2a}$) then by the inductive hypothesis there is a $j'' < j'$ which has a left-critical edge but does not have a vanishing right-critical edge, as needed.

% ===========================
% For both of the cases where there is a right-critical edge $e$ ($\alpha_w$ where the square appears earlier and $\alpha_{2a}$ where the square appears both earlier and later, the circle in $U$ does not appear later on and the circle in $V$ does not appear earlier), it cannot appear in $\beta_{j'}$ for any $j' > j$. Let $j'$ the largest index such that $j' < j$ and $e$ appears in $\beta_{j'}$ (such an edge must exist as otherwise $e$ cannot vanish).

% Observe that $e$ must be a left-critical edge in $\beta_{j'}$. To see this, observe that $e$ appears in $\beta_{j}$ where $j > j'$. If $l_e = 2$, then $e$ is automatically a left-critical edge. If $l_e = 1$, then $e$ must also appear in $\beta_{j''}$ for some $j'' < j'$ as otherwise $e$ cannot vanish (two parallel edges with labels $1$ and $2$ give a term with label $1$ and a term with label $3$ so they do not vanish). Thus, $e$ is a left-critical edge in this case as well.

% If $\beta_{j'}$ does not have a vanishing right-critical edge, then we are done. If $\beta_{j'}$ does have a vanishing right critical edge (in which case it must be $\alpha_{2a}$) then by the inductive hypothesis there is a $j'' < j'$ which has a left-critical edge but does not have a vanishing right-critical edge, as needed.
\end{proof}

\subsection{Formal definition of $w_{\text{actual}}$}
\label{sec:defn-w-actual}
We now give a formal definition of $w_{\text{actual}}$. Let $u_w, x_{\text{extra}}$ denote the vertices corresponding to $u, x$, respectively, in $\alpha_w$. If $\beta_{k+1} = \alpha_w$, then define the \textit{per-vertex} \textit{actual weights} for $\beta_{k+1}$ as follows:
\begin{enumerate}
\item If the edge $\{u_w,x_{\text{extra}}\}_2$ in $\alpha_w$ does not vanish, then set $w_{\text{actual}}(u_w) = \sqrt[4]{n}$ and $w_{\text{actual}}(x_{\text{extra}}) = \sqrt{d}$. Furthermore, set $w_{\text{actual}}$ equal to $w_{\text{local}}$ for all other vertices and shapes.
\item If the edge $\{u_w,x_{\text{extra}}\}_2$ in $\alpha_w$  vanishes, then set $w_{\text{actual},P}(u_w) = \sqrt{n}$ and $w_{\text{actual},P}(x_{\text{extra}}) = \frac{\sqrt{d}}{\sqrt[4]{n}}$. For the remaining shapes, define $w_{\text{actual}}$ as below.
\end{enumerate}
In the second case above, we modify $w_{\text{local}}$ further to define $w_{\text{actual}}$. Let $j < k+1$ be such that $\beta_j$ has a left-critical edge and no vanishing right-critical edge (whose existence is guaranteed by Lemma~\ref{lemma:vanishing-rc-to-lc}). Note that $\beta_j$ must be one of $\alpha_{A^*,3}, \alpha_{1}, \alpha_{2a}, \alpha_{3a}, \alpha_{3b}$ or $\alpha_4$ by definition of left- and right-critical edges. We set $w_{\text{actual}}$ to be equal to $w_{\text{local}}$ on all shapes $\beta_l$ for $l \neq k+1, j$. To compensate for the reduction in weight on the shape $\beta_{k+1}$ in the case that its edge vanishes, we define $w_{\text{actual}}(\beta_j)$ in the following way. 

For $\ell \in \{j, k+1\}$ we set  
\begin{align*}
   \num  \label{eqn:w_actual_def}
    w_{\text{actual},P}(\beta_\ell)
    = \prod_{v \in \beta_\ell}  w_{\text{actual},P}(v)
\end{align*}
where $w_{\text{actual},P}(v)$ are \textit{per-vertex} actual weights. If $\ell = k+1$, the per-vertex actual weights are defined in 1 and 2 above. Note that this ensures $w_{\text{actual},P}(\beta_{k+1}) \leq \sqrt{d} \sqrt[4]{n}$ and that $w_{\text{actual},P}(u_w) \geq  w_{\text{ideal},P}(u_w)$. If $\ell = j$, the per-vertex actual weights are defined below according to the cases of $\beta_j$. 

% If $w_{\text{actual}}$ is not explicitly defined for a vertex or shape, then we define it to agree with $w_{\text{local}}$. 
\begin{itemize}
    \item \textbf{Case $\beta_j = \beta_0 = \alpha_{A^{*},3}$:} Define $w_{\text{actual},P}(u) = \sqrt[4]{\frac{n}{d}} \cdot w_{\text{ideal},P}(u) = \sqrt[4]{n}$. Here, $w_{\text{ideal},P}(u) = \sqrt[4]{d}$ follows from the fact that $u \in U_{\beta_0} = V_{\beta_0}$, so it is not a middle vertex and thus cannot be in $\text{Iso}(\alpha_P)$.

    \item \textbf{Case $\beta_j = \alpha_1$:} By symmetry, it suffices to consider the case where the edge $\{u_w,x_{\text{extra}}\}_2$ of $\alpha_w$ is identified with the edge $\{u, x_1\}$ of $\alpha_1$.  We now define $w_{\text{actual},P}(u) = 1$, $w_{\text{actual},P}(x_1) = \sqrt[4]{n}$, $w_{\text{actual},P}(x_2) = \sqrt{d}$, $w_{\text{actual},P}(v) = \sqrt{n}$.
    
    \item \textbf{Case $\beta_j = \alpha_{2a}$:} We divide the definition for this case into two sub-cases, based on whether or not the edge $\{u,x\}_2$ in $\alpha_{2a}$ vanishes. 
    \begin{itemize}
        \item \textbf{Sub-case $\{u,x\}_2$ does not vanish:} We define $w_{\text{actual},P}(u) = n^{1/4}$, $w_{\text{actual},P}(x) = n^{1/4}$, $w_{\text{actual},P}(v) = n^{1/2}$.
        
        \item \textbf{Sub-case $\{u,x\}_2$ vanishes:} We define $w_{\text{actual},P}(u) = 1$, $w_{\text{actual},P}(x) = d$, $w_{\text{actual},P}(v) = \sqrt{n}$.
    \end{itemize}

    \item \textbf{Case $\beta_j = \alpha_{3a}$:} We define $w_{\text{actual},P}(x_1) = \sqrt{d}\sqrt[4]{n}$ and $w_{\text{actual},P}(x_2) = \sqrt{d}\sqrt[4]{n}$. 
    
    \item \textbf{Case $\beta_j = \alpha_{3b}$ or $\alpha_{4}$:} We define $w_{\text{actual},P}(x) = \sqrt{d}\sqrt[4]{n}$.
\end{itemize}

See Table~\ref{table:weights-example} for an example of how $w_{\text{actual}}$ is computed for a particular shape and identification pattern arising in $\calA^*(\Delta w)$. This example also demonstrates a shape and identification pattern for which $w_{\text{local}}$ is too conservative and overestimates $w_{\text{ideal}}$ (which corresponds to the ``correct'' norm bound), yet $w_{\text{actual}}$ corrects this issue.

\begin{table}[h!]
\centering
\begin{tabular}{ |c|c|c|c|c|c|c|c|c| } 
\hline
& \multicolumn{8}{|c|}{Shapes} \\
\hline
& \multicolumn{3}{|c|}{$\alpha_{A^*,1}$} & \multicolumn{3}{|c|}{$\alpha_{2a}$} & \multicolumn{2}{|c|}{$\alpha_w$} \\
\hline
  & $u$ & $v$ & $x$ & $u$ & $x$ & $v$ & $u$ & $x$ \\
\hline
Ideal & $\sqrt{d}$ & $\sqrt{d}$ & $\sqrt[4]{n}$ & $\sqrt[4]{n}$ & $\sqrt[4]{d}$ & $\sqrt{n}$ & $\sqrt{n}$ & $\sqrt[4]{d}$  \\ 
Local & $\sqrt{d}$ & $\sqrt{d}$ & $\sqrt[4]{n}$ & $\sqrt[4]{n}$ & $\sqrt[4]{d}$ & $\sqrt{n}$ & $\sqrt{n}$ & {\color{red} $\sqrt{d}$}  \\ 
Actual & $\sqrt{d}$ & $\sqrt{d}$ & $\sqrt[4]{n}$ & $\sqrt[4]{n}$ & {\color{Green} $\sqrt[4]{n}$} & $\sqrt{n}$ & $\sqrt{n}$ & {\color{Green} $\sqrt{d}/ \sqrt[4]{n}$}  \\  \hline
\end{tabular}
\caption{Comparison of the three different weighting schemes applied to the shape and identification pattern from Figures~\ref{fig:ident-pattern} and \ref{fig:ident-pattern-result}. Each column indicates the weight contributions of a vertex to the total weight of the shape that contains it, for each of the three weighting schemes. So, the first row corresponds to weights under $w_{\text{ideal}}$, but which are ``split'' if a vertex is identified with other vertices. The second row corresponds to the values $\overline{b}_j (\cdot)$ from \ref{eqn:bbar_def}. The third row is the same as the second, but adjusted according to the definition of $w_{\text{actual}}$ in Section~\ref{sec:defn-w-actual}. The red entry indicates that $w_{\text{local}}$ assigns more weight than the ``true'' weight as in $w_{\text{ideal}}$; this leads to an over-estimate of the true norm bound by a $\sqrt[4]{d}$ factor. The green entries indicate the weights that are modified so that $w_{\text{actual}}$ gives the correct norm bound.}
\label{table:weights-example}
\end{table}

\subsection{Paying for the extra square: completing the proofs of Lemmas~\ref{lemma:Astar-Delta-k-w} and \ref{lemma:1T-Delta-k-w}}
To prove Lemmas~\ref{lemma:Astar-Delta-k-w} and \ref{lemma:1T-Delta-k-w}, we follow the proof of Lemma~\ref{lemma:Astar-Delta-k-ones}, but use $w_{\text{actual}}$ in place of $w_{\text{local}}$ and the following crucial lemma: 
\begin{lemma}
\label{lem:pay-for-extra-square}
Suppose that the right-critical edge $e$ in $\alpha_w$ vanishes. Let 
% Let $x_{\text{extra}}$ be the square in $\alpha_w$ 
$0\leq j < k+1$ be such that $\beta_j$ has no vanishing right-critical edge and has a left-critical edge $e'$ whose square endpoint is identified with the square endpoint of $e$ (whose existence is guaranteed by Lemma~\ref{lemma:vanishing-rc-to-lc}). Then,
\begin{align} 
\label{eqn:actual_condition}
w_{\text{actual},P}(\beta_j)w_{\text{actual},P}(\beta_{k+1}) \geq w_{\text{ideal},P}(\beta_j)w_{\text{ideal},P}(\beta_{k+1}).
\end{align} 
Moreover, it holds that $c(\beta_j) w_{\text{actual},P}(\beta_j) \leq O(d^{3/2} \sqrt[4]{n})$ if $j > 0$, and $c(\beta_j) w_{\text{actual},P}(\beta_j) \leq O(n^{3/4} / \sqrt{d})$ if $j = 0$.
\end{lemma}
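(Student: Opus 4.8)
The plan is to localize the single inefficiency of the local weighting scheme---the square vertex $x_{\text{extra}}$ of $\alpha_w$ when the edge $\{u_w,x_{\text{extra}}\}_2$ vanishes---and to show that the compensating boost that $w_{\text{actual}}$ places on the paired shape $\beta_j$ (the shape carrying the left-critical edge $e'$ furnished by Lemma~\ref{lemma:vanishing-rc-to-lc}) is at once large enough to restore Property~\ref{property:actual_vs_ideal} and small enough to keep Property~\ref{property:actual_bound_small}. Since $e$ vanishes, the second case of the construction in Section~\ref{sec:defn-w-actual} gives $w_{\text{actual},P}(u_w) = \sqrt{n}$ and $w_{\text{actual},P}(x_{\text{extra}}) = \sqrt{d}/\sqrt[4]{n}$, so $w_{\text{actual},P}(\beta_{k+1}) = n^{1/4}\sqrt{d}$. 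By Proposition~\ref{prop:trivial-weight-bound}, $w_{\text{ideal},P}(u_w) \leq \sqrt{n} = w_{\text{actual},P}(u_w)$; and since the right-critical edge of $\alpha_w$ exists we have $\tau(x_{\text{extra}}) = L$, whence $w_{\text{ideal},P}(x_{\text{extra}})$ equals $\sqrt{d}$ if the merged vertex $x_{\text{extra}}$ is isolated in $\alpha_P$ and $d^{1/4}$ otherwise. Therefore \eqref{eqn:actual_condition} is implied by
\[
\frac{w_{\text{actual},P}(\beta_j)}{w_{\text{ideal},P}(\beta_j)} \;\geq\; \frac{\sqrt[4]{n}}{\sqrt{d}}\, w_{\text{ideal},P}(x_{\text{extra}}),
\]
that is, by $w_{\text{actual},P}(\beta_j) \geq \sqrt[4]{n}\, w_{\text{ideal},P}(\beta_j)$ if $x_{\text{extra}}$ is isolated and by the weaker $w_{\text{actual},P}(\beta_j) \geq \sqrt[4]{n/d}\, w_{\text{ideal},P}(\beta_j)$ if it is not.

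Next I would record the combinatorial constraints on $\beta_j$. Since $\beta_j$ carries a left-critical edge $e'$, of label $1$ or $2$, it is one of $\alpha_{A^*,3}, \alpha_1, \alpha_{2a}, \alpha_{3a}, \alpha_{3b}, \alpha_4$ (with $\alpha_{A^*,3}$ possible only when $j=0$); the remaining candidates for $\beta_0$, namely $\alpha_{A^*,1}, \alpha_{A^*,2}, \alpha_{1_n^T}$, cannot carry a left-critical edge because every vertex of $\beta_0$ has $\tau \in \{\emptyset, R\}$. The definition of a left-critical edge forces its square endpoint---identified with $x_{\text{extra}}$---to have decoration in $\{R, LR\}$, and in the shapes $\alpha_{A^*,3}, \alpha_{3a}, \alpha_{3b}, \alpha_4$, where $U_{\beta_j} = V_{\beta_j}$, the circle vertex $u$ automatically has $\tau(u) = LR$ and hence split local weight $\overline{b}_j(u) = 1$. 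I would then check the two required bounds shape by shape: substitute the per-vertex values of $w_{\text{actual},P}$ prescribed in Section~\ref{sec:defn-w-actual}, bound $w_{\text{ideal},P}(\beta_j)$ vertex-by-vertex via Proposition~\ref{prop:trivial-weight-bound} and the forced decorations (and $w_{\text{local},P} \geq w_{\text{ideal},P}$ on the unmodified vertices), and verify both the displayed ratio inequality and $|c(\beta_j)|\, w_{\text{actual},P}(\beta_j) = O(d^{3/2}\sqrt[4]{n})$ when $j > 0$, respectively $O(n^{3/4}/\sqrt{d})$ when $j = 0$; throughout, $n \leq d^2/\polylog(d)$ closes the relevant inequalities (e.g.\ $\sqrt{d} \geq \sqrt[4]{n}$ and $n^{3/4}\sqrt{d} \leq d^{3/2}\sqrt[4]{n}$). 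For $\beta_j = \alpha_{2a}$ the computation further subdivides according to whether the left-critical edge itself vanishes, matching the two sub-cases in the definition of $w_{\text{actual}}$.

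I expect the main obstacle to be that Property~\ref{property:actual_bound_small} leaves essentially no slack, so the bookkeeping must be exact rather than crude. The tightest cases are $\beta_j \in \{\alpha_{3b}, \alpha_4\}$, where $|c(\beta_j)| = \Theta(d)$: the boosted square vertex receives weight exactly $\sqrt{d}\,\sqrt[4]{n}$ and the circle $u$ contributes exactly $\overline{b}_j(u) = 1$, so $|c(\beta_j)|\, w_{\text{actual},P}(\beta_j) = \Theta(d)\cdot\sqrt{d}\,\sqrt[4]{n}$ exactly meets the bound, while at the same time the boost of $\sqrt[4]{n}$ against $w_{\text{ideal},P}(x) \leq \sqrt{d}$ is precisely what the ratio inequality of the first paragraph requires. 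A second delicate point is $j = 0$, $\beta_0 = \alpha_{A^*,3}$: here the square endpoint of $e'$ is the vertex of $U_{\alpha_P} = V_{\alpha_P}$, which is never isolated, so $w_{\text{ideal},P}(x_{\text{extra}}) = d^{1/4}$ and only the weaker boost $\sqrt[4]{n/d}$ is needed; it is supplied by $w_{\text{actual},P}(u) = \sqrt[4]{n}$ against $w_{\text{ideal},P}(u) = \sqrt[4]{d}$, while the $1/\sqrt{d}$ that $\beta_0$ contributes to $\calB(\alpha_P)$ via its min-vertex separator (cf.\ Corollary~\ref{cor:alphaPapproximatenorm}) accounts for the $O(n^{3/4}/\sqrt{d})$ form of the bound in this case.
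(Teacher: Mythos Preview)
Your proposal is correct and follows essentially the same approach as the paper: reduce \eqref{eqn:actual_condition} via $w_{\text{actual},P}(u_w) \geq w_{\text{ideal},P}(u_w)$ to a comparison involving only $x_{\text{extra}}$ and $\beta_j$, identify the same list of admissible shapes $\{\alpha_{A^*,3}, \alpha_1, \alpha_{2a}, \alpha_{3a}, \alpha_{3b}, \alpha_4\}$, and verify case by case using the per-vertex values from Section~\ref{sec:defn-w-actual}. One minor clarification: for $\beta_j = \alpha_{2a}$ the two sub-cases in the definition of $w_{\text{actual}}$ are keyed on whether the edge $\{u,x\}_2$ vanishes (not necessarily the left-critical edge $e'$ itself), but since you explicitly say you are matching those sub-cases, your plan lines up with the paper's argument.
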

%HERE
Let $j < k+1$ be the special index as in Lemma~\ref{lem:pay-for-extra-square}. Then, as in the proof of Lemma~\ref{lemma:Astar-Delta-k-ones}, the proof of Lemma~\ref{lemma:Astar-Delta-k-w} will be complete provided we can show the following:
\begin{enumerate}
    \item If $\beta_{0}$ is any of $\alpha_{A^*,1}, \alpha_{A^*,2}, \alpha_{A^*,3}$ and $\beta_{k+1} = \alpha_{w}$, then $w_{\text{actual},P}(\beta_{0}) w_{\text{actual},P}(\beta_{k+1}) \leq \max(\sqrt{d}n^{3/4},n)$.
    
    \item For all $l \in [k] \setminus \{j\}$, $|w_{\text{actual},P}(\beta_l)c(\beta_l)| \leq d \sqrt{n}$.
\end{enumerate}
The second condition above follows in exactly the same way as in the proof of Lemma~\ref{lemma:Astar-Delta-k-ones}, since $w_{\text{actual},P}(\beta_l) = w_{\text{local},P}(\beta_l)$ for $l \in [k] \setminus \{j\}$. To verify the first condition, we enumerate two cases:
\begin{itemize}
    \item \textbf{Case $j \neq 0$:} If $j \neq 0$, then $w_{\text{actual},P}(\beta_0) = w_{\text{local},P}(\beta_0) \leq \sqrt{n}$ (from the proof of Lemma~\ref{lemma:Astar-Delta-k-ones}) and $w_{\text{actual},P}(\beta_{k+1}) \leq \sqrt{d} \sqrt[4]{n}$ by definition. So, we immediately have $w_{\text{actual},P}(\beta_{0}) w_{\text{actual},P}(\beta_{k+1}) \leq \sqrt{d}n^{3/4}$.
    \item \textbf{Case $j = 0$:} If $j = 0$, then $w_{\text{actual},P}(\beta_{0}) w_{\text{actual},P}(\beta_{k+1}) \leq (n^{3/4}/\sqrt{d}) \cdot \sqrt{d} \sqrt[4]{n} = n$, by Lemma~\ref{lem:pay-for-extra-square} and definition of $w_{\text{actual},P}(\beta_{k+1}) = w_{\text{actual},P}(\alpha_{w})$.
\end{itemize}

Given Lemma \ref{lem:pay-for-extra-square}, the proof of Lemma~\ref{lemma:1T-Delta-k-w} follows in a similar manner, but taking $\beta_0 = \alpha_{1_n^T}$ instead and noting that $w_{actual, P}(\beta_0) = w_{actual,P}(\alpha_{1_n^T}) \leq \sqrt{n}$. Also, note that if the edge $\{u_w, x_{\text{extra}}\}_2$ of $\alpha_w$ does not vanish, we use the local weight scheme of Lemma \ref{lemma:Astar-Delta-k-ones} to directly obtain a bound of $(\log n)^{O(k)} (d\sqrt{n})^{k+1}$ for Lemma \ref{lemma:Astar-Delta-k-w} and $(\log n)^{O(k)} \sqrt{d} n^{3/4} (d\sqrt{n})^{k}$ for Lemma \ref{lemma:1T-Delta-k-w}. 

Thus we complete the proofs of Lemmas \ref{lemma:Astar-Delta-k-w} and \ref{lemma:1T-Delta-k-w} by proving Lemma~\ref{lem:pay-for-extra-square} below. 

\begin{proof}[Proof of Lemma~\ref{lem:pay-for-extra-square}]
We enumerate the possible cases for the shape $\beta_j$. Because $\beta_j$ contains a left-critical edge and no vanishing right-critical edge, we have that $\beta_j \in \{\alpha_{A^*,3}, \alpha_1, \alpha_{2a}, \alpha_{3a}, \alpha_{3b}, \alpha_{4}\}$. In each case, we refer to the definition of $w_{\text{actual},P}(\beta_j)$ in Section~\ref{sec:defn-w-actual} to verify that \eqref{eqn:actual_condition} holds. Note that because $w_{\text{actual}}(u_w) \geq w_{\text{ideal}}(u_w)$ by definition of $w_{\text{actual}}$, it suffices to show
\[
w_{\text{actual},P}(\beta_j)w_{\text{actual},P}(x_{\text{extra}}) \geq w_{\text{ideal},P}(\beta_j)w_{\text{ideal},P}(x_{\text{extra}})
\]
in order to conclude that
\[
w_{\text{actual},P}(\beta_j)w_{\text{actual},P}(\beta_{k+1}) \geq w_{\text{ideal},P}(\beta_j)w_{\text{ideal},P}(\beta_{k+1}).
\]

\begin{itemize}
    \item \textbf{Case $\beta_j = \beta_0 = \alpha_{A^{*},3}$:}  $w_{\text{ideal},P}(\beta_{0}) = \frac{1}{\sqrt{d}} \sqrt[4]{d} \sqrt{n} = \sqrt{n} / \sqrt[4]{d}$ and $w_{\text{ideal},P}(x_{\text{extra}}) = \sqrt[4]{d}$ since we know the square in $\beta_0$ is not isolated. On the other hand, $w_{\text{actual},P}(x_{\text{extra}}) = \sqrt{d} \sqrt[4]{n}$ and $w_{\text{actual},P}(\beta_{j}) = n^{3/4} / \sqrt{d}$. We immediately observe that
    \[
    w_{\text{actual},P}(\beta_j)w_{\text{actual},P}(x_{\text{extra}}) \geq \sqrt{n} = w_{\text{ideal},P}(\beta_j)w_{\text{ideal},P}(x_{\text{extra}}).
    \]
    
    \item \textbf{Case $\beta_j = \alpha_1$:} By symmetry, it suffices to consider the case where the edge $\{u_w,x_{\text{extra}}\}_2$ of $\alpha_w$ is identified with the edge $\{u, x_1\}$ of $\alpha_1$. Note that this means $\tau(u) = LR$ , $\tau(x_1) = LR$ and $\tau(x_{\text{extra}}) = L$, so $w_{\text{ideal},P}(u) = w_{\text{ideal},P}(x_1) = 1$, $w_{\text{ideal},P}(x_2) \leq \sqrt{d}$, $w_{\text{ideal},P}(v) \leq \sqrt{n}$ and $w_{\text{ideal},P}(x_{\text{extra}}) \leq \sqrt{d}$. Assembling this information, we see:
    \[
     w_{\text{actual},P}(\beta_j)w_{\text{actual},P}(x_{\text{extra}}) \geq n^{3/4} \sqrt{d} \cdot (\sqrt{d}/ \sqrt[4]{n})  = d \sqrt{n} \geq w_{\text{ideal},P}(\beta_j)w_{\text{ideal},P}(x_{\text{extra}}),
    \]
    and $c(\beta_j) w_{\text{actual},P} (\beta_j) = O(n^{3/4} \sqrt{d})$.
    
    \item \textbf{Case $\beta_j = \alpha_{2a}$:} We divide the argument for this case into two sub-cases, based on whether or not the edge $\{u,x\}_2$ in $\alpha_{2a}$ vanishes. 
    \begin{itemize}
        \item \textbf{Sub-case $\{u,x\}_2$ does not vanish:} Note that $\tau(u) \in \{L, LR\}$, $\tau(x) \in \{R, LR\}$, $\tau(v) \in \{R, LR\}$, $\tau(x_{\text{extra}}) = L$ and $u,x, x_{\text{extra}}$ are not isolated, so $w_{\text{ideal},P}(u) \leq n^{1/4}, w_{\text{ideal},P}(x) \leq d^{1/4}, w_{\text{ideal},P}(v) \leq n^{1/2}$ and $w_{\text{ideal},P}(x_{\text{extra}}) \leq d^{1/4}$. Assembling this information, we see:
        \[
        w_{\text{actual},P}(\beta_j)w_{\text{actual},P}(x_{\text{extra}}) \geq n^{3/4} \sqrt{d} \geq w_{\text{ideal},P}(\beta_j)w_{\text{ideal},P}(x_{\text{extra}})
        \]
        and $c(\beta_j ) w_{\text{actual},P} (\beta_j) = O(n)$.
        
        \item \textbf{Sub-case $\{u,x\}_2$ vanishes:} If $\{u,x\}_2$ vanishes, it cannot be right-critical, so either $\tau(u) = LR$ or $\tau(v) = LR$. By symmetry, it suffices to consider the case that $\tau(u) = LR$. Note that $\tau(x) \in \{R, LR\}, \tau(v) \in \{R,LR\}$, and $\tau(x_{\text{extra}}) = L$, so $w_{\text{ideal},P}(u) = 1, w_{\text{ideal},P}(x) = \sqrt{d}, w_{\text{ideal},P}(v) \leq \sqrt{n}$ and $w_{\text{ideal},P}(x_{\text{extra}}) \leq \sqrt{d}$. Assembling this information, we see:
        \[
        w_{\text{actual},P}(\beta_j)w_{\text{actual},P}(x_{\text{extra}}) \geq d^{3/2} n^{1/4} \geq d \sqrt{n} \geq w_{\text{ideal},P}(\beta_j)w_{\text{ideal},P}(x_{\text{extra}})
        \]
        and $c(\beta_j)w_{\text{actual},P}(\beta_j) = O(d \sqrt{n})$.
    \end{itemize}

    \item \textbf{Case $\beta_j = \alpha_{3a}$:} Note that $w_{\text{ideal},P}(\beta_j) \leq d$.
    We may immediately conclude 
    \[
    w_{\text{actual},P}(\beta_j)w_{\text{actual},P}(x_{\text{extra}}) \geq d \sqrt{n} \cdot (\sqrt{d}/\sqrt[4]{n}) \geq d^{3/2} \geq w_{\text{ideal},P}(\beta_j)w_{\text{ideal},P}(x_{\text{extra}})
    \]
    and $c(\beta_j) w_{\text{actual},P} (\beta_j) = O(d\sqrt{n})$.
    
    \item \textbf{Case $\beta_j = \alpha_{3b}$ or $\alpha_{4}$:} Note that $w_{\text{ideal},P}(\beta_j) \leq \sqrt{d}$. So, we have 
    \[
    w_{\text{actual},P}(\beta_j)w_{\text{actual},P}(x_{\text{extra}}) \geq d \geq w_{\text{ideal},P}(\beta_j)w_{\text{ideal},P}(x_{\text{extra}})
    \]
    and $c(\beta_j) w_{\text{actual},P} (\beta_j) \leq O(d^{3/2} \sqrt[4]{n})$.
\end{itemize}
\end{proof}
%\input{ellipsoid-fitting}
%\newpage
\appendix

\section{Connection to an average-case discrepancy problem}
\label{sec:disc}

Recently, Aubin, Perkins, and Zdeborov\'{a}~\cite{AubPerZde19} and Turner, Meka and Rigollet~\cite{turner2020balancing} studied the discrepancy of random matrices. Formally, they showed that if $A$ is an $m \times n$ matrix with i.i.d. standard Gaussian entries and $m = \Theta(n)$, then $\disc (A) = \Theta (\sqrt{n})$ with high probability, where the discrepancy of $A$ is defined to be $\disc(A) = \min_{\sigma \in \{\pm 1\}^n} \norm{A \sigma}_{\infty}$. Since the proof of the lower bound in this result is via a union bound over $\sigma \in \{\pm 1\}^n$, we pose the following question: is there a \emph{computationally efficient} algorithm for certifying a lower bound on $\disc(A)$ for random $A$? By certification algorithm, we mean an algorithm that on input $A$ always outputs a value that lower bounds $\disc(A)$, but for random $A$, the value is close to the true value $\Theta(\sqrt{n})$ with high probability. This question is inspired by a long line of work on certifying unsatisfiability of random constraint satisfaction problems (see e.g.\ \cite{raghavendra2017strongly} and references therein), but also has an application to the detection problem in the negatively-spiked Wishart model defined below.

Consider the problem of distinguishing which of the following two distributions a matrix $A \in \R^{m \times n}$ is generated from:
\begin{itemize}
    \item \textbf{Null:} $A_{ij} \sim \calN(0,1)$, for all $i \in [m], j\in [n]$ independently.
    \item \textbf{Planted:} The rows $A_i$ are independently sampled from $\calN(0, I_n - \frac{1}{n} vv^T)$, where $v \sim \Unif \{\pm 1\}^n$.
\end{itemize} 
As mentioned, under the null model, $\disc (A) = \Theta (\sqrt{n})$ with high probability~\cite{turner2020balancing}. On the other hand, it is straightforward to verify that $\disc (A) = 0$ under the planted model. Hence, any algorithm that can certify non-trivial lower bounds on the discrepancy of a Gaussian matrix $A$ can also solve the above detection problem. Bandeira, Kunisky, and Wein~\cite{bandeira2019computational} show that in the regime $m = \alpha n$, for $\alpha > 0$ a constant, distinguishing the above two distributions is hard for the class of low-degree polynomial distinguishers when $\alpha < 1$ and easy when $\alpha > 1$. While the class of low-degree polynomial algorithms is conjectured to match the performance of all polynomial-time algorithms for a wide variety of average-case problems~\cite{hopkins2018statistical,kunisky2022notes}, the above result does not have any formal implication for the powerful class of SDP-based algorithms.

Define the following SDP relaxation (also known as \emph{vector discrepancy}~\cite{nikolov2013komlos}) of discrepancy:
\begin{align*}
   \SDP (A) := & \min_{X \in \R^{n \times n}} \max_{i \in [m]} A_i^T X A_i\\
   & \text{s.t. } X \succeq 0 \\
   & \diag(X) = 1_n.
\end{align*}
It can be verified that for all $A$, it holds that $\SDP(A) \leq \disc(A)^2$. We now state a formal connection, implicit in the  work of Saunderson et al.~\cite{saunderson2012diagonal}, between the ability of the SDP to certify a non-trivial lower bound on the discrepancy and the ellipsoid fitting problem.

\begin{theorem}[\cite{saunderson2012diagonal}]
\label{thm:sdp-to-ellipsoid-fitting}
Let $A \in \R^{m \times n}$ have i.i.d.\ standard Gaussian entries and $m \leq n$. Then
\[
\Prob (\SDP (A) = 0) = \Prob (v_1, \ldots, v_n \text { have the ellipsoid fitting property}),
\]
where $v_1, \ldots, v_n$ are independent samples from $\calN(0, I_d)$ and $d= n-m$.
\end{theorem}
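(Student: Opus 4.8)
The plan is to establish the two events $\{\SDP(A) = 0\}$ and $\{v_1,\ldots,v_n \text{ have the ellipsoid fitting property}\}$ are literally the same event (under an appropriate coupling of the randomness), which immediately gives equality of probabilities. First I would unpack what $\SDP(A) = 0$ means. Since $A_i^T X A_i \geq 0$ whenever $X \succeq 0$, we have $\SDP(A) \geq 0$ always, and $\SDP(A) = 0$ holds if and only if there exists $X \succeq 0$ with $\diag(X) = 1_n$ and $A_i^T X A_i = 0$ for every $i \in [m]$ — equivalently (since $X \succeq 0$), $X^{1/2} A_i = 0$ for all $i$, i.e.\ every row $A_i$ lies in $\ker(X)$. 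So $\SDP(A) = 0$ iff the rowspace of $A$, which is an (a.s.) $m$-dimensional subspace of $\R^n$, is contained in $\ker(X)$ for some PSD $X$ with unit diagonal.

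Next I would pass to the orthogonal complement. Let $V = (\mathrm{rowspace}\,A)^\perp$, an $(n-m)$-dimensional subspace, and write $P$ for the orthogonal projector onto $V$. The condition ``$\mathrm{rowspace}(A) \subseteq \ker(X)$'' for $X \succeq 0$ is equivalent to $X = P Y P$ for some PSD $Y$ (i.e.\ $X$ is supported on $V$). Thus $\SDP(A) = 0$ iff there is a PSD matrix supported on $V$ whose diagonal entries are all $1$. Now parametrize $V$: pick any orthonormal basis and let $B \in \R^{n \times (n-m)}$ have these basis vectors as columns, so $P = BB^T$ and the rows $b_1, \ldots, b_n \in \R^{n-m}$ of $B$ are the key objects. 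A PSD matrix supported on $V$ is exactly $B Z B^T$ for $Z \succeq 0$ in $\R^{(n-m)\times(n-m)}$, and its $i$-th diagonal entry is $b_i^T Z b_i$. Hence $\SDP(A) = 0$ iff there exists $Z \succeq 0$ with $b_i^T Z b_i = 1$ for all $i \in [n]$ — which is precisely the statement that $b_1,\ldots,b_n \in \R^{n-m}$ have the ellipsoid fitting property, with $d = n-m$.

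It remains to identify the distribution of the $b_i$. Here is where I expect the main (though still standard) work: I need that when $A$ has i.i.d.\ standard Gaussian entries, the rows of the orthonormal basis matrix $B$ for the random $(n-m)$-dimensional subspace $V = (\mathrm{rowspace}\,A)^\perp$ are distributed (jointly, as the event we care about is rotation-invariant) so that the ellipsoid fitting property has the same probability as for genuine i.i.d.\ $\calN(0, I_d)$ vectors $v_1,\ldots,v_n$. The subspace $V$ is Haar-distributed on the Grassmannian $\mathrm{Gr}(n-m, n)$ (by rotation invariance of the Gaussian row distribution of $A$), so $B$ can be taken as the first $n-m$ columns of a Haar-random orthogonal matrix in $O(n)$; its rows $b_i$ are then the rows of such a truncated Haar matrix. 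The ellipsoid fitting property is invariant under right-multiplication of the data matrix $(b_1 \mid \cdots \mid b_n)^T$ by any invertible $(n-m)\times(n-m)$ matrix (replacing $Z$ by a congruent matrix), so only the \emph{column space} of the $n \times (n-m)$ matrix with rows $b_i$ matters — and that column space, for a Gaussian matrix $G \in \R^{n \times (n-m)}$ with i.i.d.\ entries, is also Haar on $\mathrm{Gr}(n-m,n)$. Therefore the event has the same probability whether we use the truncated-Haar rows or the rows of $G$; and the rows of $G$ are i.i.d.\ $\calN(0, I_{n-m}) = \calN(0, I_d)$. This matches the claimed $v_1,\ldots,v_n$, completing the proof. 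The one subtlety to be careful about is the a.s.\ genericity assumptions ($\mathrm{rank}(A) = m$, etc.), which hold with probability $1$ and do not affect either probability.
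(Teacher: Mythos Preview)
Your proposal is correct and follows essentially the same approach as the paper: both identify $\SDP(A)=0$ with the condition that the rowspace of $A$ lies in $\ker(X)$ for some PSD $X$ with unit diagonal, translate this to the ellipsoid fitting property for any basis of the orthogonal complement, and then match distributions via rotation invariance on the Grassmannian. The only difference is that the paper packages the kernel-to-ellipsoid-fitting equivalence as a black-box citation (Lemma~\ref{lem:EF_vs_realizable} from \cite{saunderson2012diagonal}), whereas you prove it directly inline via the factorization $X = BZB^T$ and the $GL$-invariance of the ellipsoid fitting property.
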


Combining Theorems~\ref{thm:sdp-to-ellipsoid-fitting} and \ref{thm:main}, we conclude that the SDP fails to solve the detection problem in the negatively-spiked Wishart model when $m < n - \sqrt{n} \polylog(n)$. In particular, the inability of the SDP to distinguish between instances with discrepancy 0 and $\Theta (\sqrt{n})$ matches the \emph{worst-case} hardnes of approximation result due to Charikar, Newman and Nikolov~\cite{charikar2011tight}. Further, if Conjecture~\ref{conj:ellipsoid-fitting} is true, the threshold for success of the SDP is exactly $m = n - 2\sqrt{n}$. These results complement those of Mao and Wein~\cite{mao2021optimal} by confirming the phase transition for the SDP takes place at the same finite-scale corrected value $m = n - \sqrt{n} \polylog(n)$ as for low-degree polynomials.

The proof of Theorem~\ref{thm:sdp-to-ellipsoid-fitting}, which we provide for the sake of completeness, makes use of the following lemma. 
\begin{lemma}[Lemma 2.4 and Proposition 3.1 of \cite{saunderson2012diagonal}]
\label{lem:EF_vs_realizable}
Let $\calU \subseteq \R^{n}$ be a subspace. There exists $X \in \R^{n \times n}$ with $X \succeq 0$ and $\diag(X) = 1_n$ such that $\calU$ is contained in the kernel of $X$ if and only if there is a matrix $V$ whose row span is the orthogonal complement of $\calU$ and whose columns have the ellipsoid fitting property.
\end{lemma}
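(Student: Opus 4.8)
The plan is to prove both implications by passing between a PSD matrix $X \in \R^{n \times n}$ with $\calU \subseteq \ker X$ and a factorization $X = V^T Y V$ in which $V$ has row span $\calU^\perp$ and $Y \succeq 0$ serves as the fitting ellipsoid for the columns of $V$. Throughout, write $r = \dim \calU^\perp$ and fix a matrix $P \in \R^{r \times n}$ whose rows form an orthonormal basis of $\calU^\perp$, so that $PP^T = I_r$ and $\Pi := P^T P$ is the orthogonal projector onto $\calU^\perp$; correspondingly $I - \Pi$ is the orthogonal projector onto $\calU$.

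For the forward direction, suppose $X \succeq 0$, $\diag(X) = 1_n$, and $\calU \subseteq \ker X$. Every column of $I - \Pi$ lies in $\calU \subseteq \ker X$, so $X(I - \Pi) = 0$, i.e. $X = X \Pi$; transposing and using symmetry of $X$ gives $X = \Pi X$, hence $X = \Pi X \Pi = P^T(PXP^T)P$. I would then take $V := P$ and $Y := P X P^T$. Here $Y \succeq 0$ because it is a congruence ($y^T Y y = (P^T y)^T X (P^T y) \ge 0$), $V$ has row span $\calU^\perp$ by construction, and for each $i$, writing $v_i$ for the $i$-th column of $V$, one computes $v_i^T Y v_i = (P^T Y P)_{ii} = (\Pi X \Pi)_{ii} = X_{ii} = 1$. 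Thus $v_1, \dots, v_n$ have the ellipsoid fitting property, with fitting ellipsoid $Y$.

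For the reverse direction, suppose $V \in \R^{d \times n}$ has row span $\calU^\perp$ and its columns $v_1, \dots, v_n$ have the ellipsoid fitting property, witnessed by $Y \succeq 0$ with $v_i^T Y v_i = 1$ for all $i$. I would set $X := V^T Y V$. Then $X \succeq 0$, and $X_{ii} = v_i^T Y v_i = 1$ gives $\diag(X) = 1_n$; moreover, for any $u \in \calU = (\text{row span of } V)^\perp = \ker V$ we have $Xu = V^T Y V u = 0$, so $\calU \subseteq \ker X$, as required.

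I do not anticipate a serious obstacle; the only subtlety worth flagging is that the hypothesis only provides the containment $\calU \subseteq \ker X$ rather than equality, so a naive rank factorization $X = W^T W$ would yield a $W$ with row span $(\ker X)^\perp \subseteq \calU^\perp$, possibly a strict subspace. Choosing $V = P$ to be the fixed orthonormal-basis matrix of $\calU^\perp$ and letting the (possibly rank-deficient) matrix $Y = P X P^T$ absorb any rank deficiency of $X$ resolves this cleanly, since the ellipsoid fitting property only requires $Y \succeq 0$ and not $Y \succ 0$.
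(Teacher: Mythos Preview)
The paper does not supply its own proof of this lemma; it is quoted directly from \cite{saunderson2012diagonal} (their Lemma~2.4 and Proposition~3.1) and used as a black box in the proof of Theorem~\ref{thm:sdp-to-ellipsoid-fitting}. Your proposal is a correct, self-contained argument: the forward direction correctly uses $\calU \subseteq \ker X$ to write $X = \Pi X \Pi = P^T(PXP^T)P$ and reads off the ellipsoid $Y = PXP^T$ for the columns of $V = P$, while the reverse direction builds $X = V^T Y V$ and checks the three required properties. Your remark about handling the possible gap between $\calU$ and $\ker X$ by letting $Y$ be rank-deficient is exactly the right observation.
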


\begin{proof}[Proof of Theorem~\ref{thm:sdp-to-ellipsoid-fitting}]
We begin the proof with a definition from~\cite{saunderson2012diagonal}: a subspace $\calU$ has the ellipsoid fitting property if there exists a matrix whose row span is $\calU$ and whose columns satisfy the ellipsoid fitting property. By definition, $\SDP (A) = 0$ means that there exists $X \in \R^{n \times n}$ with $X \succeq 0$ and $\diag(X) = 1_n$ satisfying $A_i^T X A_i = 0$ for $i = 1, \ldots, m$. Equivalently, the subspace $\calU = \spn \{ A_1, \ldots, A_m\}$ is contained in the kernel of $X$. Defining $\calU^\perp$ to be the orthogonal complement of $\calU$, Lemma~\ref{lem:EF_vs_realizable} tells us that $\SDP (A) = 0$ is equivalent to $\calU^\perp$ having the ellipsoid fitting property. However, note that $\calU^\perp$ has the same distribution as the span of $v_1, \ldots, v_n$, independent samples from $\calN(0, I_d)$ with $d= n-m$. Altogether, we have:
\begin{align*}
\Prob (\SDP (A) = 0) &= \Prob (\calU^{\perp} \text{ has the ellipsoid fitting property}) \\
&= \Prob (v_1, \ldots, v_n \text{ have the ellipsoid fitting property}). \qedhere
\end{align*}
\end{proof}

\section{Invertibility lemma}
\begin{lemma}
\label{lemma:invertible}
If $n > d(d+1)/2$, then $\calA \calA^*$ is not invertible for any $v_1, \ldots, v_n$. If $n \leq d(d+1)/2$, then $\calA \calA^*$ is invertible with probability 1.
\end{lemma}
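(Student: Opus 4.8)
The plan is to recognize $\calA\calA^*$ as a Gram matrix. Since $(\calA\calA^*)_{ij} = \langle v_iv_i^T, v_jv_j^T\rangle_F = \langle v_i,v_j\rangle^2$, the matrix $\calA\calA^*$ is the Gram matrix, with respect to the Frobenius inner product, of the rank-one symmetric matrices $v_1v_1^T,\ldots,v_nv_n^T$. A Gram matrix is invertible precisely when the underlying vectors are linearly independent, so $\calA\calA^*$ is invertible if and only if $v_1v_1^T,\ldots,v_nv_n^T$ are linearly independent in the space $\mathrm{Sym}_d$ of symmetric $d\times d$ matrices. Equivalently, one may observe that $\calA\calA^* = (\calA^*)^*\calA^*$, so $\mathrm{rank}(\calA\calA^*)=\mathrm{rank}(\calA^*)$, and that the image of $\calA^*(c)=\sum_i c_i v_iv_i^T$ lies in $\mathrm{Sym}_d$, which has dimension $d(d+1)/2$.

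The first claim then follows immediately: if $n > d(d+1)/2 = \dim\mathrm{Sym}_d$, then for any $v_1,\ldots,v_n$ the matrices $v_iv_i^T$ are linearly dependent, so $\mathrm{rank}(\calA\calA^*)\le d(d+1)/2 < n$ and the $n\times n$ matrix $\calA\calA^*$ is singular.

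For the second claim I would use a genericity argument. The quantity $\det(\calA\calA^*)$ is a polynomial in the $nd$ coordinates of $v_1,\ldots,v_n$, so it is either identically zero or nonzero outside a set of Lebesgue measure zero, which has probability zero under the Gaussian law. It thus suffices to exhibit a single configuration for which the $v_iv_i^T$ are linearly independent. I would fix $d(d+1)/2$ vectors whose outer products form a basis of $\mathrm{Sym}_d$, namely $e_1,\ldots,e_d$ together with $e_i+e_j$ for $1\le i<j\le d$. Since $(e_i+e_j)(e_i+e_j)^T = e_ie_i^T+e_je_j^T+(e_ie_j^T+e_je_i^T)$, modulo $\mathrm{span}\{e_ke_k^T:k\in[d]\}$ these reduce to the symmetrized elementary matrices $e_ie_j^T+e_je_i^T$, which are supported on disjoint off-diagonal positions and so are linearly independent; hence all $d+\binom{d}{2}=d(d+1)/2$ outer products are linearly independent and form a basis of $\mathrm{Sym}_d$. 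For $n\le d(d+1)/2$, taking the $v_i$ to be any $n$ of these vectors makes the $v_iv_i^T$ linearly independent, so their Gram matrix is positive definite and $\det(\calA\calA^*)>0$ at that point. Therefore the polynomial is not identically zero, and $\calA\calA^*$ is invertible with probability $1$.

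There is essentially no serious obstacle here; the only steps needing any care are the explicit construction of the rank-one basis of $\mathrm{Sym}_d$ and the standard fact that a not-identically-zero polynomial vanishes only on a Lebesgue-null set (hence a set of Gaussian measure zero).
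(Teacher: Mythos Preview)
Your proof is correct. The first claim (the dimension count when $n>d(d+1)/2$) is handled exactly as in the paper, via the observation that $\calA\calA^*$ is the Gram matrix of the $v_iv_i^T$ in $\mathrm{Sym}_d$.

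For the second claim the two arguments genuinely diverge. You invoke the standard genericity principle: $\det(\calA\calA^*)$ is a polynomial in the entries of the $v_i$, and you exhibit a single deterministic configuration (the $e_i$ and $e_i+e_j$) at which it is nonzero, so its zero set has Lebesgue---hence Gaussian---measure zero. The paper instead argues probabilistically: conditioning on $v_2,\ldots,v_n$, it lets $\Pi$ be the projector onto $\mathrm{span}(v_2v_2^T,\ldots,v_nv_n^T)^\perp$, picks a nonzero $M$ in its range, and shows $\langle M, v_1v_1^T\rangle = \sum_i \lambda_i \langle w_i,v_1\rangle^2$ is a nontrivial linear combination of independent $\chi^2_1$ variables, hence nonzero almost surely. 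Your route is slightly more elementary and self-contained (no distributional facts about Gaussians beyond absolute continuity are used), and it yields the full linear independence in one stroke; the paper's route avoids constructing an explicit rank-one basis of $\mathrm{Sym}_d$ but implicitly needs to be repeated or inducted over all $i$ to conclude linear independence of the full collection.
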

\begin{proof}
Consider the vectors $v_1 v_1^T, \ldots, v_n v_n^T$ in the $d(d+1)/2$-dimensional vector space $\mathbb{S}^{d \times d}$ of symmetric $d \times d$ matrices. Since $\calA \calA^*$ is the Gram matrix of the vectors $v_1 v_1^T, \ldots, v_n v_n^T$, it $\calA \calA^*$ is invertible iff $v_1 v_1^T, \ldots, v_n v_n^T$ are linearly independent. If $n > d(d+1)/2$, then clearly there must be a linear dependency since the vector space $\mathbb{S}^{d \times d}$ has dimension $d(d+1)/2$. 

We now show linear independence with probability 1 provided $n \leq d(d+1)/2$. Defining $\Pi: \mathbb{S}^{d \times d} \rightarrow \mathbb{S}^{d \times d}$ to be the projector onto the orthogonal complement of $\mathsf{span}(v_2 v_2^T, \ldots, v_n v_n^T)$, the proof will be complete by showing $\Prob(\Pi (v_1 v_1^T) = 0) = 0$. Observe that since $\Pi$ is a projector, all of its eigenvalues are 0 or 1 and since $n \leq d(d+1)/2$, $\Pi$ has rank at least 1, so it has an eigenvector $M \in \mathbb{S}^{d \times d}$ with eigenvalue 1. As a consequence, we have that $\Prob (\Pi (v_1 v_1^T) = 0) \leq \Prob (\ip{M}{v_1 v_1^T} = 0)$. Since $M \in \mathbb{S}^{d \times d}$, we may write its eigendecomposition as $M = \sum_{i=1}^d \lambda_i w_i w_i^T$, where $\lambda_1, \ldots, \lambda_d \in \R$ and $\{w_1, \ldots, w_d\}$ form an orthonormal basis of $\R^d$. Now, we have that
\[
\ip{M}{v_1 v_1^T} = \sum_{i=1}^d \lambda_i \ip{w_i}{v_1}^2.
\]
We see that $\ip{M}{v_1 v_1^T}$ is a non-zero linear combination (because $M \neq 0$) of $d$ independent and identically distributed random variables $\ip{w_1}{v_1}^2, \ldots, \ip{w_d}{v_1}^2$ with distribution $\chi^2_1$ each. Hence, we conclude that $\Prob (\ip{M}{v_1 v_1^T} = 0) = 0$, completing the proof.
\end{proof}

\section{Details of experiments}
\label{sec:experiment-details}
In this section, we elaborate on how the plots in Figure~\ref{fig:plots} were generated. The plots corresponding to the SDP and the least-squares construction appeared in~\cite{saunderson_parrilo_willsky13}, but for different ranges of $(n,d)$. To generate Figure~\ref{fig:sdp}, we used the the CVXPY package to test feasibility of the original ellipsoid fitting SDP. We implemented two shortcuts to reduce the computation time. First, for Figure~\ref{fig:sdp}, we performed the simulation only for $n \leq d(d+1)/2$ since the linear system will be infeasible with probability 1 for $n > d(d+1)/2$. For all $n > d(d+1)/2$, we filled in the cell corresponding to $(n,d)$ with black without actually performing the simulation, since by Lemma~\ref{lemma:invertible} the ellipsoid fitting property will fail with probability 1 in this regime.  Second, for all of Figures~\ref{fig:sdp}, \ref{fig:ls}, \ref{fig:ip}, for each $n$, we performed the simulations starting from $d=1$ until encountering 5 consecutive values of $d$ for which all 10 of the trials were successful and then filled in all remaining cells corresponding to larger values of $d$ with white. We believe that this shortcut does not affect the final appearance of the plot in any noticeable way.

We remark that for the least-squares construction, the $n = cd^2$ scaling of the phase transition is only apparent for larger values of $n,d$ in Figure~\ref{fig:ls} and that the transition from infeasibility to feasibility is much coarser than in Figures~\ref{fig:sdp} and \ref{fig:ip}. To accentuate these effects, we reproduce the plots in Figure~\ref{fig:plots} on a $\log_2$ scale in Figure~\ref{fig:log-plots} so that the parabola $n = cd^2$ becomes a line with slope 2. 

\begin{figure}
    \centering
    \begin{subfigure}[b]{0.3\textwidth}
        \includegraphics[width=\textwidth]{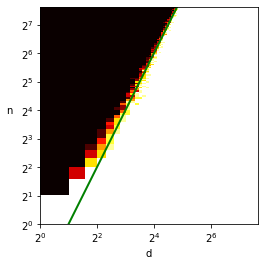}
        \caption{Figure~\ref{fig:sdp} on a $\log_2$ scale}
        \label{fig:sdp-log}
    \end{subfigure}
    ~ %add desired spacing between images, e. g. ~, \quad, \qquad, \hfill etc. 
      %(or a blank line to force the subfigure onto a new line)
    \begin{subfigure}[b]{0.3\textwidth}
        \includegraphics[width=\textwidth]{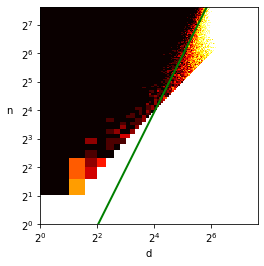}
        \caption{Figure~\ref{fig:ls} on a $\log_2$ scale}
        \label{fig:ls-log}
    \end{subfigure}
    ~ %add desired spacing between images, e. g. ~, \quad, \qquad, \hfill etc. 
    %(or a blank line to force the subfigure onto a new line)
    \begin{subfigure}[b]{0.3\textwidth}
        \includegraphics[width=\textwidth]{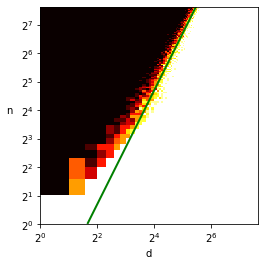}
        \caption{Figure~\ref{fig:ip} on a $\log_2$ scale}
        \label{fig:ip-log}
    \end{subfigure}
    \caption{Figure~\ref{fig:plots} on a $\log_2$ scale}
    \label{fig:log-plots}
\end{figure}

\section{Probabilistic norm bounds}
\label{sec:norm_bounds} 

We restate Theorem \ref{thm:graph-matrix-norm-bound} below for convenience. 

\begin{theorem}[Theorem \ref{thm:graph-matrix-norm-bound}]
	% \label{thm:graph-matrix-norm-bound}
	Given $D_V,D_E \in \mathbb{N}$ such that $D_E \geq D_V \geq 2$ and $\epsilon > 0$, with probability at least $1 - \epsilon$, for all shapes $\alpha$ on square and circle vertices such that $|V(\alpha)| \leq D_V$ and $|E_{\alpha}| \leq D_E$, $|U_{\alpha}| \leq 1$, and $|V_{\alpha}| \leq 1$,
 %and $U_{\alpha}$ and $V_{\alpha}$ either both contain at least one circle vertex or both contain at least one square vertex (which may be the same vertex), 
 \[
	\|M_{\alpha}\| \leq \left((2D_E + 2)\ln(D_V) + \ln(11n) + \ln\left(\frac{1}{\epsilon}\right)\right)^{|V(\alpha)| + |E(\alpha)|}n^{\frac{\varphi(V(\alpha)) - \varphi(S_{min}) + \varphi(\mathrm{Iso}(\alpha))}{2}}
	\]
	where $S_{min}$ is a minimum vertex separator of $\alpha$.
\end{theorem}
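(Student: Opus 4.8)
\textbf{Proof proposal for Theorem~\ref{thm:graph-matrix-norm-bound}.}
The plan is to establish the bound by the (standard) trace moment method for graph matrices, following the blueprint of \cite{ahn2016graph} (see also \cite{ghosh2020sum}); the only new bookkeeping in our appendix is to verify that the hypotheses of that analysis apply verbatim to shapes in the sense of Definition~\ref{def:graph_matrix}, and to account for the change from summing over ribbons (as in \cite{ahn2016graph}) to summing over injective maps. As noted after the definitions, this change costs only a constant factor per vertex (Remark~2.17 of \cite{ahn2016graph}), so it is absorbed into the $(\,\cdot\,)^{|\calV(\alpha)|+|E(\alpha)|}$ prefactor. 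Concretely: fix a shape $\alpha$ with $|U_\alpha|,|V_\alpha|\le 1$ and an even integer $q\ge 1$. Since $M_\alpha M_\alpha^T$ is a square matrix indexed by tuples of size $|U_{\alpha,\circ}|+|U_{\alpha,\Box}|\le 1$, the symmetrization inequality gives $\E\,\norm{M_\alpha}_{op}^{2q}\le \E[\mathrm{Tr}((M_\alpha M_\alpha^T)^{q})]$, which we then expand combinatorially.

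Expanding the trace, $\mathrm{Tr}((M_\alpha M_\alpha^T)^q)$ is a sum, over all ways of gluing $2q$ copies of $\alpha$ and $\alpha^T$ in alternating cyclic order along their boundary tuples and then choosing an injective labelling of the circle vertices into $[n]$ and the square vertices into $[d]$, of a product of normalized Hermite polynomials evaluated at the underlying Gaussian coordinates $v_{i,k}$. Taking the expectation and using that $\E_{Z\sim\calN(0,1)}[h_{\ell_1}(Z)\cdots h_{\ell_m}(Z)]=0$ whenever the labels incident to some $(\text{circle},\text{square})$ slot do not ``pair up'' — in particular whenever such a slot is used exactly once, or an odd number of times with the wrong parity — only those glued configurations survive in which every edge slot is traversed at least twice. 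Each surviving term is bounded in absolute value by a product of Hermite-expansion coefficients, which is at most an absolute constant raised to the total edge weight involved; this contributes at most a $(\mathrm{poly}(q,D_E))^{O(q\,|E(\alpha)|)}$ factor overall.

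The combinatorial core is to count surviving glued configurations, and this is where I expect the main work — and the main obstacle — to lie. One processes the $2q$ copies in cyclic order; each time a fresh circle (resp.\ square) vertex is introduced one pays a factor of $n$ (resp.\ $d=n^{\varphi(\Box)}$), while re-using a previously seen vertex costs only a $\mathrm{poly}(q,|\calV(\alpha)|)$ factor in the count. Because every edge is traversed at least twice, a charging/encoding argument shows that the number of genuinely new vertices introduced by each copy of $\alpha$ (beyond the first) has total $\varphi$-weight at most $\varphi(\calV(\alpha))-\varphi(S_{\min})$: to ``close'' a copy and re-enter the walk one must pass through a separator between $U_\alpha$ and $V_\alpha$, whose cheapest instance has weight $\varphi(S_{\min})$; plus an extra $\varphi(\mathrm{Iso}(\alpha))$ for isolated vertices, which are never forced to be reused. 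Together with an initial factor $n^{\varphi(U_\alpha)}\le n$ for the diagonal index summed by the trace, this yields, for a function $f=\mathrm{poly}(D_V,D_E)$,
\[
\E\big[\mathrm{Tr}\big((M_\alpha M_\alpha^T)^{q}\big)\big]\;\le\;\big(f(D_V,D_E)\,q\big)^{O\left(q(|\calV(\alpha)|+|E(\alpha)|)\right)}\; n^{\varphi(U_\alpha)}\; n^{\,q\,\left(\varphi(\calV(\alpha))-\varphi(S_{\min})+\varphi(\mathrm{Iso}(\alpha))\right)}.
\]
Making the separator and the isolated vertices appear with exactly these weights — rather than with a crude overcount — is the delicate part, but it is precisely the content of the walk-enumeration arguments of \cite{ahn2016graph}, which we would adapt.

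Finally, apply Markov's inequality: $\Prob(\norm{M_\alpha}_{op}>t)\le t^{-2q}\,\E[\mathrm{Tr}((M_\alpha M_\alpha^T)^q)]$. Set the right-hand side equal to $\epsilon/N$, where $N$ is the number of shapes with $|\calV(\alpha)|\le D_V$ and $|E(\alpha)|\le D_E$; by a simple count (Proposition~\ref{prop:number-of-shapes-bound}) we have $N\le D_V^{O(D_E)}$. Union bounding over all such $\alpha$, choosing $q\asymp|\calV(\alpha)|+|E(\alpha)|$, taking the $2q$-th root, and using $\varphi(U_\alpha)\le 1$ to bound $n^{\varphi(U_\alpha)/(2q)}\le n^{1/(2q)}$, the leftover factors $(N/\epsilon)^{1/(2q)}$, $n^{1/(2q)}$, and $(f(D_V,D_E)q)^{O(|\calV(\alpha)|+|E(\alpha)|)}$ all combine (after optimizing constants so the arithmetic matches) into a base of the form $(2D_E+2)\ln(D_V)+\ln(11n)+\ln(1/\epsilon)$ raised to the power $|\calV(\alpha)|+|E(\alpha)|$, with the dominant $n$-power being $n^{(\varphi(\calV(\alpha))-\varphi(S_{\min})+\varphi(\mathrm{Iso}(\alpha)))/2}$. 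This is exactly the claimed inequality. (Alternatively, one can quote the corresponding norm bound of \cite{ahn2016graph} as a black box and only carry out the normalization reconciliation of Remark~2.17 there; we would present the self-contained trace argument above for completeness.)
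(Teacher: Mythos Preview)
The paper does \emph{not} redo the trace-method argument: it simply quotes Corollary~8.16 of \cite{ahn2016graph} as a black box, uses $|U_\alpha|,|V_\alpha|\le 1$ to replace $\varphi(S_{\min})\le 1$, sets $\epsilon'=\epsilon/(11D_V^{2D_E+2})$, and union-bounds over the at most $11D_V^{2D_E+2}$ relevant shapes (Proposition~\ref{prop:number-of-shapes-bound}). So the route the paper actually takes is precisely the ``alternatively'' you mention at the end, and the entire proof is a few lines of arithmetic manipulation of the cited bound.

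Your self-contained trace argument is correct in broad outline, but the choice $q\asymp|\calV(\alpha)|+|E(\alpha)|$ is a genuine error, not just a constant to be optimized. With $m:=|\calV(\alpha)|+|E(\alpha)|$ and $q\asymp m$, the leftover factor $n^{1/(2q)}=\exp(\Theta(\ln n/m))$ is $n^{\Theta(1)}$ for bounded-size shapes (e.g.\ $m=4$) and cannot be absorbed into a base of size $O(\ln n)$ raised to the $m$th power. To kill both the $n^{\varphi(U_\alpha)/(2q)}$ and the $(N/\epsilon)^{1/(2q)}$ factors you need $q$ to scale with $\ln(n/\epsilon')$, not with $m$; this is exactly why the exponent in Corollary~8.16 of \cite{ahn2016graph} involves $\lceil \ln(n/\epsilon')/(6m)\rceil$. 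With the correct $q$, the $(Cq)^{O(m)}$ prefactor becomes $(C\ln(n/\epsilon'))^{O(m)}$ and the stated base emerges. Everything else in your sketch (the separator-based vertex count, the union bound via Proposition~\ref{prop:number-of-shapes-bound}) lines up with what \cite{ahn2016graph} actually proves.
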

\begin{proof}
Corollary~8.16 of \cite{ahn2016graph} says that for all $\epsilon' > 0$ and all shapes $\alpha$ with square and circle vertices and no isolated vertices outside of $U_{\alpha} \cup V_{\alpha}$, with probability at least $1-\epsilon'$,
	\begin{align*}
		\|M_{\alpha}\| &\leq 2|V_{\circ}(\alpha)|^{|V_{\circ}(\alpha)|}|V_{\Box}(\alpha)|^{|V_{\Box}(\alpha)|}n^{\frac{\varphi(V(\alpha)) - \varphi(S_{min}) + \varphi(\mathrm{Iso}(\alpha))}{2}} \\
		&\qquad \cdot \, \left(6e\left\lceil\frac{\ln(\frac{n^{\varphi(S_{min})}}{\epsilon'})}{6(|V(\alpha) \setminus (U_{\alpha} \cap V_{\alpha})| + |E(\alpha)|)}\right\rceil\right)^{|E(\alpha)| + |V(\alpha) \setminus (U_{\alpha} \cap V_{\alpha})|}.
	\end{align*}
	The result is trivial if $|V(\alpha)| \leq 1$ or $E(\alpha) = \emptyset$ so we can assume that $|V(\alpha)| \geq 1$ and $E(\alpha) \geq  1$. For the shapes $\alpha$ we are considering, $\varphi(S_{min}) \leq 1$ so for each such shape $\alpha$, for all $\epsilon' > 0$, 
	\[
	\|M_{\alpha}\| \leq |V(\alpha)|^{|V(\alpha)| + |E(\alpha)|}n^{\frac{\varphi(V(\alpha)) - \varphi(S_{min}) + \varphi(\mathrm{Iso}(\alpha))}{2}}\left(6e\left\lceil\frac{\ln\left(\frac{n}{\epsilon'}\right)}{6|V(\alpha)|}\right\rceil\right)^{|V(\alpha)| + |E(\alpha)|}.
	\]
	We will now apply this to all such shapes $\alpha$ with $\epsilon' = \frac{\epsilon}{11{D_V}^{(2D_E + 2)}}$ and take a union bound. Since $D_E \geq D_V \geq |V(\alpha)|$, we have that 
	$\ln\left(\frac{n}{\epsilon'}\right) \geq (2D_E + 2)\ln(D_V) \geq 2D_V \geq 2|V(\alpha)|$, so 
	\[	\left\lceil\frac{\ln\left(\frac{n}{\epsilon'}\right)}{6|V(\alpha)|}\right\rceil \leq \frac{\ln\left(\frac{n}{\epsilon'}\right)}{2|V(\alpha)|}.
	\]
	Thus, for each such shape $\alpha$, with probability at least $1 - \frac{\epsilon}{11{D_V}^{(2D_E + 2)}}$,
	\[
	\|M_{\alpha}\| \leq \left((2D_E + 2)\ln(D_V) + \ln(11n) + \ln\left(\frac{1}{\epsilon}\right)\right)^{|V(\alpha)| + |E(\alpha)|}n^{\frac{\varphi(V(\alpha)) - \varphi(S_{min}) + \varphi(\mathrm{Iso}(\alpha))}{2}}.
	\]
Using the following proposition and taking a union bound, we have that with probability at least $1 - \epsilon$, the above bound holds for all such shapes $\alpha$, as needed.
\end{proof}

\begin{proposition}
\label{prop:number-of-shapes-bound}
If $D_V,D_E \in \mathbb{N}$ and $D_V \geq 2$ then there are at most $4{D_V}^{(2D_E + 2)}$ shapes $\alpha$ on square and circle vertices such that $|V(\alpha)| \leq D_V$, $|E_{\alpha}| \leq D_E$, $|U_{\alpha}| \leq 1$, and $|V_{\alpha}| \leq 1$.
%and $U_{\alpha}$ and $V_{\alpha}$ either both contain at least one circle vertex or both contain at least one square vertex (which may be the same vertex).
\end{proposition}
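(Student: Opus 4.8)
The plan is to count isomorphism classes of shapes with the stated parameters — this is all that is needed, since $M_\alpha$ depends only on the isomorphism class of $\alpha$ — and to organize the count so that the dominant contribution is exactly $D_V^{2D_E}$, coming from the edges, times genuinely lower‑order factors. First I would dispose of isolated vertices cheaply: write each shape $\alpha$ as a ``core'' $\hat\alpha$, obtained by deleting all isolated vertices outside $U_\alpha\cup V_\alpha$, together with two integers $a,b\ge 0$ recording the number of deleted isolated circles and squares. This data determines $\alpha$ up to isomorphism, and $a+b\le D_V$, so there are at most $\binom{D_V+2}{2}$ choices for $(a,b)$. In the core $\hat\alpha$ every vertex is incident to an edge or lies in $U_{\hat\alpha}\cup V_{\hat\alpha}$; since the at‑most‑$D_E$‑weight edge set has only positive labels it has at most $D_E$ edges and hence at most $2D_E$ edge‑incident vertices, so $|\calV(\hat\alpha)|\le \min(D_V,2D_E+2)$.

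Next I would count cores. Fix the number $p$ of circle vertices and $q$ of square vertices, $p+q\le\min(D_V,2D_E+2)$ (at most $\binom{D_V+2}{2}$ choices), declare the circles to be $\{1,\ldots,p\}$ and the squares $\{p+1,\ldots,p+q\}$ (this over‑counts, which is harmless), choose $U_{\hat\alpha}$ and $V_{\hat\alpha}$ (each empty or a single vertex, at most $(p+q+1)^2$ choices for the pair), and finally choose the multiset of labeled edges. The crux is the last choice. Listing the edge multiset in non‑decreasing order injectively encodes it as a sequence of triples $(c_i,s_i,l_i)_{i\le r}$ with $r\le D_E$, $c_i\in\{1,\ldots,p\}$, $s_i\in\{p+1,\ldots,p+q\}$, $l_i\ge 1$ and $\sum_i l_i\le D_E$; bounding the number of such sequences by the number of all (unordered) ones and summing first over the label vector (there are $\binom{k-1}{r-1}$ positive length‑$r$ vectors summing to $k$), then over the $(pq)^r$ choices of endpoints, then over $k=\sum_i l_i$, a short geometric series collapses to
\[
\sum_{k=0}^{D_E}\sum_{r=0}^{k}(pq)^r\binom{k-1}{r-1}=(1+pq)^{D_E},
\]
the $k=0$ term being the empty multiset. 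Now AM--GM gives $pq\le (p+q)^2/4\le D_V^2/4$, and since $D_V\ge 2$ we have $1+pq\le D_V^2/2$, so the number of edge multisets is at most $(D_V^2/2)^{D_E}=2^{-D_E}D_V^{2D_E}$: the exponent of $D_V$ is exactly $2D_E$, and the extra factor $2^{-D_E}$ is the saving that absorbs the polynomial factors.

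Multiplying everything together, the number of shapes is at most $\binom{D_V+2}{2}^{2}\,(\min(D_V,2D_E+2)+1)^{2}\,2^{-D_E}\,D_V^{2D_E}$, and it remains to check that the product of the non‑$D_V^{2D_E}$ factors is at most $4D_V^{2}$. This is a routine (if slightly fiddly) verification using $D_V\ge 2$, and in the regime $D_E\ge D_V$ relevant to Theorem~\ref{thm:graph-matrix-norm-bound} one has the further slack $2^{-D_E}\le 2^{-D_V}$, against which the remaining combinatorial factors are polynomial in $D_V$; the general case is analogous, making fuller use of the bound $\min(D_V,2D_E+2)$. The step I expect to be the main obstacle is precisely the edge count: a naive treatment — letting each endpoint of an edge be any of the $\le D_V$ vertices and counting the label assignments separately — produces a spurious $2^{D_E}$ factor and an exponent of $4D_E$ rather than $2D_E$ on $D_V$. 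What saves the day is the observation that a labeled edge is pinned down by one of at most $pq\le D_V^2/4$ circle--square pairs, together with the geometric‑series identity above that folds in the edge labels for free; the remaining bookkeeping (isolated vertices, the $(p,q)$ split, $U$ and $V$) is then genuinely lower order.
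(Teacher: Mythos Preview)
Your argument is essentially sound, but it is far more elaborate than the paper's. The paper's proof is three lines: (i) specify the structure of $U_\alpha,V_\alpha$ (each empty, a circle, or a square, and whether they coincide when of the same type --- $11$ choices); (ii) specify the number of circle and square vertices in $W_\alpha=\calV(\alpha)\setminus(U_\alpha\cup V_\alpha)$, at most $D_V^2$ choices; (iii) list the edges in $D_E$ ordered slots, where each slot is either a pair of endpoints or $\emptyset$, giving $(\binom{D_V}{2}+1)^{D_E}\le D_V^{2D_E}$. Multiplying yields $11\,D_V^{2D_E+2}$. There is no need to peel off isolated vertices into a separate core, and no need for the generating function identity $\sum_{k,r}(pq)^r\binom{k-1}{r-1}=(1+pq)^{D_E}$: the crude ordered--slot encoding of step~(iii) already gets the exponent $2D_E$ on $D_V$ directly, because an edge is determined by at most $\binom{D_V}{2}\le D_V^2$ data and an edge of label $l$ simply occupies $l$ slots.

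Your claim that the leftover combinatorial factors $\binom{D_V+2}{2}^2(\min(D_V,2D_E+2)+1)^2 2^{-D_E}$ are bounded by $4D_V^2$ is not in fact ``routine'': for $D_V=D_E=2$ one gets $36\cdot 9\cdot\tfrac14=81>16=4D_V^2$, and for $D_E=1$ and large $D_V$ your bound is $\Theta(D_V^6)$ against a target of $4D_V^4$. (The paper's own proof also lands at $11D_V^{2D_E+2}$ rather than the stated $4D_V^{2D_E+2}$, so the constant $4$ is slightly off in any case; the only thing the union bound in Theorem~\ref{thm:graph-matrix-norm-bound} needs is any bound of the form $c\,D_V^{2D_E+2}$.) What your approach does buy is that your edge--multiset count is genuinely injective across both labels and multi-edges, whereas the paper's slot encoding conflates a single label-$2$ edge with two parallel label-$1$ edges; this distinction is irrelevant once one restricts to proper shapes, as in the application.
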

\begin{proof}
We can specify a non-empty shape $\alpha$ as follows:
\begin{enumerate}
    \item Specify whether $U_{\alpha}$ and $V_{\alpha}$ have a circle vertex, a square vertex, or are empty. If $U_{\alpha}$ and $V_{\alpha}$ both have circle vertices or both have square vertices, specify whether they are the same vertex. There are a total of $11$ choices for this.
    \item Specify the number of circle vertices and square vertices in $V(\alpha) \setminus (U_{\alpha} \cup V_{\alpha})$. There are at most $D_V^2$ choices for this.
    \item For each of the $D_E$ possible edges, either specify its two endpoints or $\emptyset$ if it does not exist. There are at most $\binom{D_V}{2} + 1 \leq D_V^2$ choices for each possible edge.\qedhere
\end{enumerate}
\end{proof}

\section{Proof of Lemma \ref{lemma:woodbury}} 
\label{appendix:woodbury}
For convenience we restate the lemma below. 
\begin{lemma}[Lemma \ref{lemma:woodbury}]
% \label{lemma:woodbury}
	Let $B = \Gamma + \alpha I_n$.  We have
	\begin{align}
	(\calA \calA^*)^{-1} 1_n = \frac{1}{s^2 - r u} \cdot 
		\bigg( (1 + 1_n^T B^{-1} w)  B^{-1} 1_n  - (1_n^T B^{-1} 1_n) B^{-1} w   \bigg)
	\end{align}
where $r, s, u$ are defined as
\begin{align*}
	 \begin{pmatrix}
		r & s \\
		s & u
	\end{pmatrix} := \begin{pmatrix}
		1_n^T B^{-1} 1_n & 1 + 1_n^T B^{-1} w \\
		1 + 	1_n^T B^{-1} w & -d + w^T B^{-1} w
	\end{pmatrix}. 
\end{align*}
\end{lemma}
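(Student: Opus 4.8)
The plan is to apply the Woodbury matrix identity (more precisely, the Sherman--Morrison--Woodbury formula for a rank-two update) to the decomposition $\calA \calA^* = B + W$, where $W = w 1_n^T + 1_n w^T + d 1_n 1_n^T$. First I would write $W$ as a product $W = X C X^T$ for a suitable $n \times 2$ matrix $X$ and $2 \times 2$ matrix $C$; the natural choice is $X = \begin{pmatrix} 1_n & w \end{pmatrix}$, and then one checks that
\[
X C X^T = c_{11} 1_n 1_n^T + (c_{12} + c_{21}) 1_n w^T + c_{22} w w^T,
\]
which does not immediately match $W$ because $W$ has no $w w^T$ term and has a $1_n w^T + w 1_n^T$ term. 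So the correct choice is $C = \begin{pmatrix} d & 1 \\ 1 & 0 \end{pmatrix}$, giving $X C X^T = d\, 1_n 1_n^T + 1_n w^T + w 1_n^T = W$ exactly. With $B$ invertible (which holds with high probability by Lemma~\ref{lemma:B-concentrates}, since $\norm{B - \alpha I_n}_{op} \ll \alpha$), the Woodbury identity yields
\[
(\calA \calA^*)^{-1} = (B + X C X^T)^{-1} = B^{-1} - B^{-1} X \left( C^{-1} + X^T B^{-1} X \right)^{-1} X^T B^{-1}.
\]

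Next I would compute the relevant $2 \times 2$ matrix. We have $C^{-1} = \begin{pmatrix} 0 & 1 \\ 1 & -d \end{pmatrix}$ and
\[
X^T B^{-1} X = \begin{pmatrix} 1_n^T B^{-1} 1_n & 1_n^T B^{-1} w \\ w^T B^{-1} 1_n & w^T B^{-1} w \end{pmatrix},
\]
so that
\[
C^{-1} + X^T B^{-1} X = \begin{pmatrix} 1_n^T B^{-1} 1_n & 1 + 1_n^T B^{-1} w \\ 1 + 1_n^T B^{-1} w & -d + w^T B^{-1} w \end{pmatrix} = \begin{pmatrix} r & s \\ s & u \end{pmatrix},
\]
which is exactly the matrix in the statement (using that $B^{-1}$ is symmetric so $w^T B^{-1} 1_n = 1_n^T B^{-1} w$). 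Its inverse is $\frac{1}{ru - s^2} \begin{pmatrix} u & -s \\ -s & r \end{pmatrix}$.

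Finally I would apply the identity to the vector $1_n$. We have $X^T B^{-1} 1_n = \begin{pmatrix} 1_n^T B^{-1} 1_n \\ w^T B^{-1} 1_n \end{pmatrix} = \begin{pmatrix} r \\ s-1 \end{pmatrix}$, and then
\[
(\calA \calA^*)^{-1} 1_n = B^{-1} 1_n - B^{-1} X \cdot \frac{1}{ru - s^2} \begin{pmatrix} u & -s \\ -s & r \end{pmatrix} \begin{pmatrix} r \\ s - 1 \end{pmatrix}.
\]
Carrying out the $2\times 2$ multiplication gives $\frac{1}{ru-s^2}\begin{pmatrix} ur - s(s-1) \\ -sr + r(s-1)\end{pmatrix} = \frac{1}{ru-s^2}\begin{pmatrix} ur - s^2 + s \\ -r \end{pmatrix}$, and since $B^{-1} X \begin{pmatrix} a \\ b \end{pmatrix} = a B^{-1} 1_n + b B^{-1} w$, we obtain
\[
(\calA \calA^*)^{-1} 1_n = B^{-1} 1_n - \frac{(ur - s^2 + s) B^{-1} 1_n - r\, B^{-1} w}{ru - s^2} = \frac{(1 - s) B^{-1} 1_n + r\, B^{-1} w}{ru - s^2}.
\]
Recalling $s = 1 + 1_n^T B^{-1} w$ and $r = 1_n^T B^{-1} 1_n$, this becomes $\frac{1}{s^2 - ru}\big( (1 + 1_n^T B^{-1} w) B^{-1} 1_n - (1_n^T B^{-1} 1_n) B^{-1} w\big)$ after multiplying numerator and denominator by $-1$, matching the claim. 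The only real subtlety, which I would flag rather than belabor, is ensuring $ru - s^2 \neq 0$ (equivalently $C^{-1} + X^T B^{-1} X$ invertible) so that the formula is well-defined; this is guaranteed with high probability since $\calA \calA^*$ itself is invertible by Lemma~\ref{lemma:invertible} and $B$ is invertible, but for the algebraic identity it suffices to work on the event where all relevant inverses exist. The bulk of the argument is thus routine linear algebra; there is no genuine obstacle, only bookkeeping of the $2 \times 2$ block.
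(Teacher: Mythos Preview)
Your proof is correct and follows exactly the same route as the paper: write $W = XCX^T$ with $X = (1_n \; w)$ and $C = \begin{pmatrix} d & 1 \\ 1 & 0 \end{pmatrix}$, apply Woodbury, and simplify the resulting $2\times 2$ algebra. One small slip: after combining over a common denominator the coefficient of $B^{-1}1_n$ should be $-s$, not $(1-s)$ (indeed $(ru - s^2) - (ur - s^2 + s) = -s$); with that correction your final line follows as written, and this matches the paper's computation verbatim.
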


\begin{proof}
	
The Woodbury formula \cite{woodbury1950inverting} states that
\begin{align}
	(B + UCV)^{-1} 1_n
	= B^{-1} 1_n - B^{-1} U ( C^{-1} + V B^{-1} U)^{-1}  V B^{-1} 1_n.
\end{align}

We set $B$ as above. Let $U \in \mathbb{R}^{ n \times 2 }$ be defined by
\begin{align*}
	U_{ij} = \begin{cases}
	\, \,1 \quad &\text{ if } j = 1, \text{ and} \\
	\, \, w_{i} = \| v_i \|_2^2 - d &\text{ if } j = 2.	
	\end{cases} 
\end{align*}
So the columns of $U$ are $1_n$ and $w$. Let $V = U^T$, and set
\begin{align*}
	C = \begin{pmatrix}
		d & 1 \\
		1 & 0 
	\end{pmatrix}.
\end{align*}
Observe that $UCV = W$. Next,
\begin{align*}
	C^{-1} + VB^{-1} U
	&=  \begin{pmatrix}
		0 & 1 \\
		1 & -d 
	\end{pmatrix}
+ \begin{pmatrix}
	1_n^T B^{-1} 1_n & 1_n^T B^{-1} w \\
	1_n^T B^{-1} w & w^T B^{-1} w
\end{pmatrix} \\
&= \begin{pmatrix}
	1_n^T B^{-1} 1_n & 1 + 1_n^T B^{-1} w \\
1 + 	1_n^T B^{-1} w & -d + w^T B^{-1} w
\end{pmatrix}
=: \begin{pmatrix}
	r & s \\
	s & u
\end{pmatrix}.
\end{align*}
Thus
\begin{align*}
(	C^{-1} + VB^{-1} U )^{-1}
= \frac{1}{ru - s^2} \begin{pmatrix}
	u & -s \\ 
	-s & r 
\end{pmatrix},
\end{align*}
and
\begin{align*}
V B^{-1} 1_n
	=  \begin{pmatrix}
		1_n^T B^{-1} 1_n \\
		w^T B^{-1} 1_n
	\end{pmatrix} = \begin{pmatrix}
	r \\
	s - 1
\end{pmatrix}.
\end{align*}
Hence
\begin{align*}
	(	C^{-1} + VB^{-1} U )^{-1} V B^{-1} 1_n
	&= \frac{1}{ru - s^2} \cdot \begin{pmatrix}
	ru - s(s-1) \\
	-sr + r(s - 1)
	\end{pmatrix}
= \frac{1}{ru - s^2} \cdot  \begin{pmatrix}
	ru - s^2 + s \\
	-r
\end{pmatrix}. 
\end{align*}
Next, since $U$ has first column $1_n$ and second column $w$, 
\begin{align*}
	(AA^*)^{-1} 1_n &= B^{-1} 1_n - B^{-1} U ( C^{-1} + V B^{-1} U)^{-1}  V B^{-1} 1_n \\
	&= B^{-1} 1_n -  \frac{1}{ru - s^2} B^{-1} U  \cdot  \begin{pmatrix}
		ru - s^2 + s \\
		-r
	\end{pmatrix}
\\&= \big(1 - \frac{ru - s^2 + s}{ru - s^2}  \big) B^{-1} 1_n
+ \frac{r}{ru - s^2}  B^{-1} w
\\&= \frac{1}{ru - s^2} \cdot 
\big( -s B^{-1} 1_n  + r B^{-1} w \big)
\\&= \frac{1}{s^2 - r u} \cdot 
\bigg( (1 + 1_n^T B^{-1} w)  B^{-1} 1_n  - (1_n^T B^{-1} 1_n) B^{-1} w   \bigg). \qedhere
\end{align*}
\end{proof}

\section{Hermite polynomials}
\label{sec:hermite-calculations}
Here, we provide some technical results regarding Hermite polynomials that are useful when applying Proposition~\ref{prop:resolve-multi-edges}. Throughout, we use the convention $\mathbb{N} = \{1,2,3\ldots\}$ (with $0$ not included). 

\begin{lemma}\label{lem:multiplyHermitebyx}
For all $j \in \mathbb{N}$,
\[
h_1(x)h_j(x) = xh_{j}(x) = \sqrt{j+1}h_{j+1}(x) + \sqrt{j}h_{j-1}(x).
\]
\end{lemma}
\begin{proof}
Since the normalized Hermite polynomials $\{h_j: j \in \mathbb{N} \cup \{0\}\}$ are orthonormal with respect to the inner product 
\[
\langle{h_i,h_j}\rangle := \E_{x \sim N(0,1)}[h_i(x)h_j(x)],
\]
we have that 
\[
xh_j(x) = \sum_{k = 0}^{\infty}{\E_{y \sim N(0,1)}[yh_j(y)h_k(y)]h_k(x)}.
\]
We now make the following observations:
\begin{enumerate}
    \item If $k < j-1$ then $\E_{y \sim N(0,1)}[yh_j(y)h_k(y)] = 0$ because $yh_k(y)$ is a degree $k+1$ polynomial and $h_j(y)$ is orthogonal to all polynomials of degree less than $j$.
    \item If $k > j+1$ then $\E_{y \sim N(0,1)}[yh_j(y)h_k(y)] = 0$ because $yh_j(y)$ is a degree $j+1$ polynomial and $h_k(y)$ is orthogonal to all polynomials of degree less than $k$.
    \item If $k = j$ then 
    $\E_{y \sim N(0,1)}[yh_j(y)h_k(y)] = 0$ because $yh_j(y)h_k(y)$ is an odd polynomial.
    \item If $k = j-1$ then the leading term of $yh_k(y) = \frac{x^j}{\sqrt{(j-1)!}}$ so we can write $yh_k(y) = \sqrt{j}h_j(y) + p$ where $p$ has degree at most $j-1$. This implies that 
    $\E_{y \sim N(0,1)}[yh_j(y)h_k(y)] = \E_{y \sim N(0,1)}[\sqrt{j}(h_j(y))^2] = \sqrt{j}$.
    \item If $k = j+1$ then the leading term of $yh_j(y) = \frac{x^k}{\sqrt{j!}}$ so we can write $yh_j(y) = \sqrt{j+1}h_{k}(y) + p$ where $p$ has degree at most $k-1$. This implies that 
    $\E_{y \sim N(0,1)}[yh_j(y)h_k(y)] = \E_{y \sim N(0,1)}[\sqrt{j+1}(h_{k}(y))^2] = \sqrt{j+1}$.\qedhere
\end{enumerate}
\end{proof}

%\aw{in the case $j=1$, the two lemmas below refer to $h_{-1}(x)$? is this defined to be zero or something?}

\begin{corollary}\label{cor:xsquaredHermite}
For all $j \in \mathbb{N}$ with $j \geq 2$,
\[
{x^2}h_{j}(x) = \sqrt{(j+1)(j+2)}h_{j+2}(x) + (2j+1)h_{j}(x) + \sqrt{j(j-1)}h_{j-2}(x).
\]
\end{corollary}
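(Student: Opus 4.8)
The plan is to obtain this identity by applying Lemma~\ref{lem:multiplyHermitebyx} twice, removing one factor of $x$ at a time. First I would write $x^2 h_j(x) = x\cdot\bigl(x h_j(x)\bigr)$ and invoke the lemma once to get $x h_j(x) = \sqrt{j+1}\,h_{j+1}(x) + \sqrt{j}\,h_{j-1}(x)$; multiplying by $x$ and distributing gives
\[
x^2 h_j(x) = \sqrt{j+1}\,x h_{j+1}(x) + \sqrt{j}\,x h_{j-1}(x).
\]

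Next I would apply Lemma~\ref{lem:multiplyHermitebyx} to each of the two surviving products. For the first term the lemma yields $x h_{j+1}(x) = \sqrt{j+2}\,h_{j+2}(x) + \sqrt{j+1}\,h_j(x)$, and for the second it yields $x h_{j-1}(x) = \sqrt{j}\,h_j(x) + \sqrt{j-1}\,h_{j-2}(x)$. This last invocation is the only place the hypothesis $j\geq 2$ enters: it ensures the index $j-1$ lies in $\mathbb{N}$ so that Lemma~\ref{lem:multiplyHermitebyx} is applicable. Substituting these expansions back in and collecting the coefficient of $h_j(x)$, which is $(j+1)+j = 2j+1$, produces exactly
\[
x^2 h_j(x) = \sqrt{(j+1)(j+2)}\,h_{j+2}(x) + (2j+1)\,h_j(x) + \sqrt{j(j-1)}\,h_{j-2}(x).
\]
There is no genuine obstacle here; the one point requiring care is the bookkeeping of index ranges so that every application of Lemma~\ref{lem:multiplyHermitebyx} is legitimate, which is precisely what the assumption $j\geq 2$ guarantees.
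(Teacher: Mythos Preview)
Your proof is correct and follows exactly the same approach as the paper: apply Lemma~\ref{lem:multiplyHermitebyx} once to expand $x h_j(x)$, then once more to each of $x h_{j+1}(x)$ and $x h_{j-1}(x)$, and collect terms. Your remark that the hypothesis $j\geq 2$ is needed precisely so that Lemma~\ref{lem:multiplyHermitebyx} applies at index $j-1$ is also correct.
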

\begin{proof}
By Lemma \ref{lem:multiplyHermitebyx},
\begin{align*}
{x^2}h_{j}(x) &= x\sqrt{j+1}h_{j+1}(x) + x\sqrt{j}h_{j-1}(x) \\
&=\sqrt{j+1}(\sqrt{j+2}h_{j+2}(x) + \sqrt{j+1}h_{j}(x)) + \sqrt{j}(\sqrt{j}h_{j}(x) + \sqrt{j-1}h_{j-2}(x)) \\
&= \sqrt{(j+1)(j+2)}h_{j+2}(x) + (2j+1)h_{j}(x) + \sqrt{j(j-1)}h_{j-2}(x).\qedhere
\end{align*}
\end{proof}
\begin{corollary}
For all $j \in \mathbb{N}$,
\[
h_2(x)h_{j}(x) = \frac{x^2 - 1}{\sqrt{2}}h_j(x) =  \sqrt{\frac{(j+1)(j+2)}{2}}h_{j+2}(x) + \sqrt{2}{j}h_{j}(x) + \sqrt{\frac{j(j-1)}{2}}h_{j-2}(x).
\]
\end{corollary}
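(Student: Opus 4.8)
The plan is to obtain this identity as a one-line consequence of Corollary~\ref{cor:xsquaredHermite}. First recall from Definition~\ref{def:hermite} that $h_2(x) = \tfrac{1}{\sqrt 2}(x^2-1)$, which gives the first equality $h_2(x)h_j(x) = \tfrac{x^2-1}{\sqrt 2}h_j(x)$ immediately. For the second equality I would write
\[
\frac{x^2-1}{\sqrt 2}\,h_j(x) = \frac{1}{\sqrt 2}\bigl(x^2 h_j(x) - h_j(x)\bigr)
\]
and substitute the expansion of $x^2 h_j(x)$ supplied by Corollary~\ref{cor:xsquaredHermite}.

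Concretely, for $j \geq 2$, Corollary~\ref{cor:xsquaredHermite} yields
\[
x^2 h_j(x) - h_j(x) = \sqrt{(j+1)(j+2)}\,h_{j+2}(x) + 2j\,h_j(x) + \sqrt{j(j-1)}\,h_{j-2}(x),
\]
since $(2j+1) - 1 = 2j$. Dividing through by $\sqrt 2$ and simplifying the three coefficients — using $\tfrac{\sqrt{(j+1)(j+2)}}{\sqrt 2} = \sqrt{\tfrac{(j+1)(j+2)}{2}}$, $\tfrac{2j}{\sqrt 2} = \sqrt 2\,j$, and $\tfrac{\sqrt{j(j-1)}}{\sqrt 2} = \sqrt{\tfrac{j(j-1)}{2}}$ — produces exactly the claimed formula.

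It remains to treat the boundary case $j = 1$, which is not covered by Corollary~\ref{cor:xsquaredHermite}; here I would compute directly, e.g.\ using Lemma~\ref{lem:multiplyHermitebyx} twice (or just expanding the low-degree polynomials), to get $x^2 h_1(x) = \sqrt 6\,h_3(x) + 3 h_1(x)$, so that $\tfrac{x^2-1}{\sqrt 2}h_1(x) = \sqrt 3\,h_3(x) + \sqrt 2\,h_1(x)$. This agrees with the stated formula at $j=1$ under the convention that the $h_{j-2}$ term is omitted, which is consistent since its coefficient $\sqrt{j(j-1)/2}$ vanishes at $j=1$. This small boundary check is the only place any care is needed; the general case is a routine substitution, so I do not anticipate a genuine obstacle.
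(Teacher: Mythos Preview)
Your proposal is correct and is exactly the intended derivation: the paper states this as an immediate corollary of Corollary~\ref{cor:xsquaredHermite} via $h_2(x) = (x^2-1)/\sqrt{2}$, and the accompanying Remark handles the $j=1$ boundary just as you do, noting that the coefficient $\sqrt{j(j-1)/2}$ vanishes there.
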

\begin{remark}
Note that for the case $j = 1$, $\sqrt{j(j-1)} = 0$. Thus, for $j = 1$, even though $h_{j-2}(x) = h_{-1}(x)$ is undefined it does not matter as $\sqrt{j(j-1)}h_{j-2}(x) = 0$ regardless of what $h_{-1}(x)$ is.
\end{remark}

\section{Analysis of the Pseudo-Calibration Construction}
\label{sec:pseudocalibration}
In this section, we describe and analyze the construction of $M$ from \cite{ghosh2020sum}. This construction is obtained by using pseudo-calibration (for background on pseudo-calibration, see \cite{barak2019nearly}) on the following distributions.
\begin{enumerate}
    \item[] Random: Sample $n$ vectors $v_1,\ldots,v_n$ from $\calN(0, I_d)$.
    \item[] Planted: First sample a hidden direction $u$ from $\{-\frac{1}{\sqrt{d}},\frac{1}{\sqrt{d}}\}^d$ and $n$ random $\pm{1}$ variables $b_1,\ldots,b_n$. Then sample $n$ vectors $v'_1,\ldots,v'_n$ from $\calN(0, I_d)$ and replace each vector $v'_i$ with $v_i = v'_i - \langle{v'_i,u}\rangle{u} + {b_i}u$.
\end{enumerate}
For the planted distribution, the rank one matrix $M = u{u^T}$ satisfies ${v_i^T}M{v_i} = 1$ for all $i \in [n]$. For the random distribution, there is no hidden direction $u$ but with high probability, pseudo-calibration will still give us a matrix $M$ such that ${v_i^T}M{v_i} = 1$ for all $i \in [n]$. In order to describe this matrix $M$, we need a few definitions.
\begin{definition}
    Given values $\{\alpha_{i,a}: i \in [n], a \in [d]\}$, we make the following definitions:
    \begin{enumerate}
    \item Define $|\alpha|$  to be $|\alpha| = \sum_{i = 1}^{n}{\sum_{a=1}^{d}{\alpha_{i,a}}}$
    \item Define $\alpha_i$ to be $\alpha_i = \sum_{a=1}^{d}{\alpha_{i,a}}$
    \item Define $\alpha^T_{a}$ to be $\alpha^T_{a} = \sum_{i=1}^{n}{\alpha_{i,a}}$ 
    \item Define $\alpha!$ to be $\alpha! = \prod_{i = 1}^{n}{\prod_{a=1}^{d}{\alpha_{i,a}!}}$
    \item Define $h_{\alpha}(v_1,\ldots,v_n)$ to be $h_{\alpha}(v_1,\ldots,v_n) = \prod_{i = 1}^{n}{\prod_{a=1}^{d}{h_{\alpha_{i,a}}((v_i)_a)}}$
    \end{enumerate}
\end{definition}
Let $T = \Omega(\log n)$ be a truncation parameter. By Lemma 4.4 of \cite{ghosh2020sum}, the construction given by pseudo-calibration with truncation parameter $T$ is as follows.
\begin{definition}
    Define $\tilde{E}[1]$ to be 
\[
\tilde{E}[1] = \sum_{\alpha: |\alpha| \leq T, \text{ For all } i \in [n], a \in [d] , \alpha_{i} \text{ and } \alpha^T_{a}\text{ are even}}{\frac{\left(\prod_{i=1}^{n}{\sqrt{{\alpha_i}!}h_{\alpha_i}(1)}\right)}{\sqrt{\alpha!}d^{\frac{|\alpha|}{2}}}h_{\alpha}(v_1,\ldots,v_n)}  
\]
\end{definition}
\begin{definition}
    For all $a \in [d]$, we define $\tilde{E}[x_a^2]$ to be $\tilde{E}[x_a^2] = \frac{1}{d}\tilde{E}[1]$. For all distinct $a,b \in [d]$, define $\tilde{E}[{x_a}{x_b}]$ to be 
\[
\tilde{E}[{x_a}{x_b}] = \sum_{\alpha: |\alpha| \leq T, \text{ For all } i \in [n], c \in [d]\setminus \{a,b\} , \alpha_{i} \text{ and } \alpha^T_{c}\text{ are even}, \atop 
\alpha^T_{a} \text{ and } \alpha^T_{b} \text{ are odd.}}{\frac{\left(\prod_{i=1}^{n}{\sqrt{{\alpha_i}!}h_{\alpha_i}(1)}\right)}{\sqrt{\alpha!}d^{\frac{|\alpha|}{2}+1}}h_{\alpha}(v_1,\ldots,v_n)}  
\]
\end{definition}
\begin{remark}
    These equations have different coefficients than Lemma 4.4 of \cite{ghosh2020sum} because we are using the normalized Hermite polynomials. 
\end{remark}
\begin{definition}
    For all distinct $a,b \in [d]$, we take $M_{ab} = \frac{\tilde{E}[x_{a}x_{b}]}{\tilde{E}[1]}$. For all $a \in [d]$, we take $M_{aa} = \frac{\tilde{E}[x_a^2]}{\tilde{E}[1]} = \frac{1}{d}$.
\end{definition}
\subsection{Verifying $M$ is PSD}
While the pseudo-calibration construction is more complicated than the least squares and identity perturbation constructions, it is actually easier to give a rough analysis for it. The reason is that $\tilde{E}[1]M$ can be directly decomposed into shapes. Moreover, all of the shapes $\alpha$ appearing in $\tilde{E}[1]M$ have the following properties

\begin{enumerate}
\item $M_{\alpha}$ appears in $\tilde{E}[1]M$ with coefficient $\lambda_{\alpha} = O(d^{-(\frac{|E(\alpha)|}{2}+1)})$ where we take $|E(\alpha)|$ to be the sum of the labels of the edges in $E(\alpha)$.
\item Let $U_{\alpha} = (u)$ and $V_{\alpha} = (v)$, every square vertex has even degree and has degree at least $2$. If $u \neq v$ then $u$ and $v$ have odd degree. If $u = v$ then $u$ has even degree (which may be $0$).

Note that we take the degree of a vertex to be the sum of the labels of the edges incident to that vertex.
\item Every circle vertex has even degree and has degree at least $4$.
\end{enumerate}
Using the same logic that we used to prove Lemma \ref{lem:min-vertex-sep}, each such shape $\alpha$ contains a path from $U_{\alpha}$ to $V_{\alpha}$ so the minimum weight vertex separator of $\alpha$ is a single square. By Theorem \ref{thm:graph-matrix-norm-bound}, with high probability, $||M_{\alpha}||$ is $\tilde{O}(n^{\frac{|\calV_{\circ}(\alpha)|}{2}}d^{\frac{|\calV_{\Box}(\alpha)| - 1}{2}})$. We now make the following observations:
\begin{enumerate}
\item Since every square vertex except $u,v$ has degree at least $2$, $|E(\alpha)| = \sum_{w \in \calV_{\Box}(\alpha)}{deg(w)} \geq 2|\calV_{\Box}(\alpha)| - 2$ which implies that $|\calV_{\Box}(\alpha)| \leq \frac{|E(\alpha)|}{2} + 1$
\item Since every circle vertex has degree at least $4$, $|E(\alpha)| = \sum_{w \in \calV_{\circ}(\alpha)}{deg(w)} \geq 4|\calV_{\circ}(\alpha)|$ which implies that $|\calV_{\circ}(\alpha)| \leq \frac{|E(\alpha)|}{4}$
\end{enumerate}
Putting these observations together, with high probability, $\lambda_{\alpha}||M_{\alpha}||$ is 
\[
\tilde{O}(n^{\frac{|\calV_{\circ}(\alpha)|}{2}}d^{\frac{|\calV_{\Box}(\alpha)| - 1 - |E(\alpha)|}{2} - 1})
\leq \tilde{O}\left(\frac{1}{d}\left(\frac{\sqrt[8]{n}}{\sqrt[4]{d}}\right)^{|E(\alpha)|}\right)\]
This implies that the dominant term is $\frac{1}{d}{I_d}$ as it is the only 
 term that appears which has no edges. Thus, with high probability, $\tilde{E}[1]M$ is PSD. As noted in Remark 5.9 of \cite{ghosh2020sum}, with high probability, $\tilde{E}[1]$ is $1 \pm o(1)$ so this implies that with high probability, $M$ is PSD, as needed.
\begin{remark}
    This analysis is very similar to the analysis on p.21 of \cite{ghosh2020sum} for attempt 1 where each edge splits its factor of $\frac{1}{\sqrt{d}}$ between its two endpoints. While this attempt fails for the higher degree setting of \cite{ghosh2020sum}, it succeeds for degree $2$, which is what we are analyzing here.
\end{remark}
\subsection{Verifying that ${v_i^T}Mv_i \approx 1$}
As discussed in Section 7 of \cite{ghosh2020sum}, pseudo-calibration guarantees that the constraints ${v_i^T}Mv_i = 1$ are satisfied up to a very small truncation error which can be easily repaired. However, looking at the entries of $M$ directly, it is not at all easy to see why ${v_i^T}Mv_i \approx 1$. In this subsection, we show how to directly verify that ${v_i^T}Mv_i \approx 1$. In particular, we give a direct proof that for each $\alpha$ such that $|\alpha| \leq T-2$ (where $T = \Omega(\log n) $ is the truncation parameter), the coefficient of $h_{\alpha}$ in $\sum_{a=1}^{d}{\sum_{b=1}^{d}{\tilde{E}[{x_a}{x_b}](v_i)_a(v_i)_b}}$ matches the coefficient of $h_{\alpha}$ in $\tilde{E}[1]$. This analysis is a special case of the analysis on p.42-45 of \cite{ghosh2020sum}.

There are several ways that $h_{\alpha}$ can appear in $\sum_{a=1}^{d}{\sum_{b=1}^{d}{\tilde{E}[{x_a}{x_b}](v_i)_a(v_i)_b}}$. These ways are as follows:
\begin{enumerate}
    \item For some $a \in [d]$ and $b \in [d] \setminus \{a\}$, $h_{\alpha_{i,a}-1}((v_i)_a)$ is multiplied by $(v_i)_a$ and $h_{\alpha_{i,b}-1}((v_i)_b)$ is multiplied by $(v_i)_b$, giving $\sqrt{\alpha_{i,a}\alpha_{i,b}}h_{\alpha_{i,a}}((v_i)_a)h_{\alpha_{i,b}}((v_i)_b)$.

    Letting $\alpha'$ be $\alpha$ where $\alpha_{i,a}$ and $\alpha_{i,b}$ are decreased by $1$, the coefficient of $h_{\alpha'}$ in $\tilde{E}[x_{a}x_{b}]$ is 
    \[
        \frac{\left(\prod_{i=1}^{n}{\sqrt{{\alpha_i}!}h_{\alpha_i}(1)}\right)}{\sqrt{\alpha!}d^{\frac{|\alpha|}{2}}} \cdot \frac{\sqrt{\alpha_{i,a}\alpha_{i,b}}h_{\alpha_{i}-2}(1)}{\sqrt{\alpha_i(\alpha_i-1)}h_{\alpha_i}(1)}
    \]
    so the total contribution from these terms is 
    \[
    \frac{\left(\prod_{i=1}^{n}{\sqrt{{\alpha_i}!}h_{\alpha_i}(1)}\right)}{\sqrt{\alpha!}d^{\frac{|\alpha|}{2}}}\sum_{a=1}^{d}{\sum_{b \in [d] \setminus \{a\}}{\frac{\alpha_{i,a}\alpha_{i,b}h_{\alpha_{i}-2}(1)}{\sqrt{\alpha_i(\alpha_i-1)}h_{\alpha_i}(1)}}}
    \]
    \item For some $a \in [d]$, $h_{\alpha_{i,a}-2}((v_i)_a)$ is multiplied by $(v_i)_a^2$, giving $\sqrt{\alpha_{i,a}(\alpha_{i,a}-1)}h_{\alpha_{i,a}}((v_i)_a)$.

    Letting $\alpha'$ be $\alpha$ where $\alpha_{i,a}$ is decreased by $2$, the coefficient of $h_{\alpha'}$ in $\tilde{E}[x_{a}^2]$ is 
    \[
        \frac{\left(\prod_{i=1}^{n}{\sqrt{{\alpha_i}!}h_{\alpha_i}(1)}\right)}{\sqrt{\alpha!}d^{\frac{|\alpha|}{2}}} \cdot \frac{\sqrt{\alpha_{i,a}(\alpha_{i,a}-1)}h_{\alpha_{i}-2}(1)}{\sqrt{\alpha_i(\alpha_i-1)}h_{\alpha_i}(1)}
    \]
    so the total contribution from these terms is 
    \[
    \frac{\left(\prod_{i=1}^{n}{\sqrt{{\alpha_i}!}h_{\alpha_i}(1)}\right)}{\sqrt{\alpha!}d^{\frac{|\alpha|}{2}}}\sum_{a=1}^{d}{\frac{\alpha_{i,a}(\alpha_{i,a}-1)h_{\alpha_{i}-2}(1)}{\sqrt{\alpha_i(\alpha_i-1)}h_{\alpha_i}(1)}}
    \]
    Together, the terms in cases 1 and 2 give a total contribution of 
    \[
        \frac{\left(\prod_{i=1}^{n}{\sqrt{{\alpha_i}!}h_{\alpha_i}(1)}\right)} {\sqrt{\alpha!}d^{\frac{|\alpha|}{2}}} \cdot \frac{\sqrt{\alpha_{i}(\alpha_{i}-1)}h_{\alpha_{i}-2}(1)}{h_{\alpha_i}(1)}
    \]
    
    \item For some $a \in [d]$ and $b \in [d] \setminus \{a\}$, $h_{\alpha_{i,a}-1}((v_i)_a)$ is multiplied by $(v_i)_a$ and $h_{\alpha_{i,b}+1}((v_i)_b)$ is multiplied by $(v_i)_b$, giving $\sqrt{\alpha_{i,a}(\alpha_{i,b}+1)}h_{\alpha_{i,a}}((v_i)_a)h_{\alpha_{i,b}}((v_i)_b)$.

    Letting $\alpha'$ be $\alpha$ where $\alpha_{i,a}$ is decreased by $1$ and $\alpha_{i,b}$ is increased by $1$, the coefficient of $h_{\alpha'}$ in $\tilde{E}[x_{a}x_{b}]$ is 
    \[
        \frac{\left(\prod_{i=1}^{n}{\sqrt{{\alpha_i}!}h_{\alpha_i}(1)}\right)}{\sqrt{\alpha!}d^{\frac{|\alpha|}{2}}} \cdot \frac{\sqrt{\alpha_{i,a}}}{d\sqrt{(\alpha_{i,b}+1)}}
    \]
    so the total contribution from these terms is 
    $\frac{(d-1)\alpha_i}{d}\frac{\left(\prod_{i=1}^{n}{\sqrt{{\alpha_i}!}h_{\alpha_i}(1)}\right)}{\sqrt{\alpha!}d^{\frac{|\alpha|}{2}}}
    $
    By symmetry, we have the same contribution from the terms where $h_{\alpha_{i,a}+1}((v_i)_a)$ is multiplied by $(v_i)_a$ and $h_{\alpha_{i,b}-1}((v_i)_b)$ is multiplied by $(v_i)_b$ so the total contribution from all of these terms is $\frac{2(d-1)\alpha_i}{d}\frac{\left(\prod_{i=1}^{n}{\sqrt{{\alpha_i}!}h_{\alpha_i}(1)}\right)}{\sqrt{\alpha!}d^{\frac{|\alpha|}{2}}}
    $
    \item For some $a \in [d]$, $h_{\alpha_{i,a}}((v_i)_a)$ is multiplied by $(v_i)_a^2$, giving $(2\alpha_{i,a}+1)h_{\alpha_{i,a}}((v_i)_a)$.

    The coefficient of $h_{\alpha}$ in $\tilde{E}[x_{a}^2]$ is 
    \[
        \frac{1}{d}\frac{\left(\prod_{i=1}^{n}{\sqrt{{\alpha_i}!}h_{\alpha_i}(1)}\right)}{\sqrt{\alpha!}d^{\frac{|\alpha|}{2}}}
    \]
    so the total contribution from these terms is 
    \[
        \left(\frac{2\alpha_{i}}{d} + 1\right)\frac{\left(\prod_{i=1}^{n}{\sqrt{{\alpha_i}!}h_{\alpha_i}(1)}\right)}{\sqrt{\alpha!}d^{\frac{|\alpha|}{2}}}
    \]
    Together, the terms in cases 3 and 4 give a total contribution of 
    \[
        (2\alpha_i + 1)\frac{\left(\prod_{i=1}^{n}{\sqrt{{\alpha_i}!}h_{\alpha_i}(1)}\right)}{\sqrt{\alpha!}d^{\frac{|\alpha|}{2}}}
    \]
    \item For some $a \in [d]$ and $b \in [d] \setminus \{a\}$, $h_{\alpha_{i,a}+1}((v_i)_a)$ is multiplied by $(v_i)_a$ and $h_{\alpha_{i,b}+1}((v_i)_b)$ is multiplied by $(v_i)_b$, giving $\sqrt{(\alpha_{i,a}+1)(\alpha_{i,b}+1)}h_{\alpha_{i,a}}((v_i)_a)h_{\alpha_{i,b}}((v_i)_b)$.

    Letting $\alpha'$ be $\alpha$ where $\alpha_{i,a}$ and $\alpha_{i,b}$ are increased by $1$, the coefficient of $h_{\alpha'}$ in $\tilde{E}[x_{a}x_{b}]$ is 
    \[
        \frac{\left(\prod_{i=1}^{n}{\sqrt{{\alpha_i}!}h_{\alpha_i}(1)}\right)}{\sqrt{\alpha!}d^{\frac{|\alpha|}{2}}} \cdot \frac{\sqrt{(\alpha_i+1)(\alpha_i + 2)}h_{\alpha_{i}+2}(1)}{d^2\sqrt{(\alpha_{i,a}+1)(\alpha_{i,b}+1)}h_{\alpha_i}(1)}
    \]
    so the total contribution from these terms is 
    \[
    \frac{\left(\prod_{i=1}^{n}{\sqrt{{\alpha_i}!}h_{\alpha_i}(1)}\right)}{\sqrt{\alpha!}d^{\frac{|\alpha|}{2}}} \cdot \frac{d(d-1)\sqrt{(\alpha_i+1)(\alpha_i + 2)}h_{\alpha_{i}+2}(1)}{{d^2}h_{\alpha_i}(1)}
    \]
    \item For some $a \in [d]$, $h_{\alpha_{i,a}+2}((v_i)_a)$ is multiplied by $(v_i)_a^2$, giving $\sqrt{(\alpha_{i,a}+2)(\alpha_{i,a}+1)}h_{\alpha_{i,a}}((v_i)_a^2)$.

    Letting $\alpha'$ be $\alpha$ where $\alpha_{i,a}$ is increased by $2$, the coefficient of $h_{\alpha'}$ in $\tilde{E}[x_{a}^2]$ is 
    \[
        \frac{\left(\prod_{i=1}^{n}{\sqrt{{\alpha_i}!}h_{\alpha_i}(1)}\right)}{\sqrt{\alpha!}d^{\frac{|\alpha|}{2}}} \cdot \frac{\sqrt{(\alpha_i+1)(\alpha_i + 2)}h_{\alpha_{i}+2}(1)}{d^2\sqrt{(\alpha_{i,a}+2)(\alpha_{i,a}+1)}h_{\alpha_i}(1)}
    \]
    so the total contribution from these terms is 
    \[
    \frac{\left(\prod_{i=1}^{n}{\sqrt{{\alpha_i}!}h_{\alpha_i}(1)}\right)}{\sqrt{\alpha!}d^{\frac{|\alpha|}{2}}} \cdot \frac{d\sqrt{(\alpha_i+1)(\alpha_i + 2)}h_{\alpha_{i}+2}(1)}{{d^2}h_{\alpha_i}(1)}
    \]
    Together, the terms in cases $5$ and $6$ give a total contribution of 
    \[
    \frac{\left(\prod_{i=1}^{n}{\sqrt{{\alpha_i}!}h_{\alpha_i}(1)}\right)}{\sqrt{\alpha!}d^{\frac{|\alpha|}{2}}} \cdot \frac{\sqrt{(\alpha_i+1)(\alpha_i + 2)}h_{\alpha_{i}+2}(1)}{h_{\alpha_i}(1)}
    \]
\end{enumerate}
    Putting everything together, it is sufficient to show that 
    \[
        \sqrt{\alpha_{i}(\alpha_{i}-1)}h_{\alpha_{i}-2}(1) + (2\alpha_{i} + 1)h_{\alpha_{i}}(1) + \sqrt{(\alpha_i+1)(\alpha_i + 2)}h_{\alpha_{i}+2}(1) = h_{\alpha_{i}}(1)
    \]
    To show this, recall that by Corollary \ref{cor:xsquaredHermite}, for all $j \in \mathbb{N} \cup \{0\}$ and all $x \in \mathbb{R}$,
    \[
        {x^2}h_{j}(x) = \sqrt{(j+1)(j+2)}h_{j+2}(x) + (2j+1)h_{j}(x) + \sqrt{j(j-1)}h_{j-2}(x).
    \]
    Plugging in $x = 1$ and $j = \alpha_{i}$, the result follows.

\section{Analysis of the Identity Perturbation Construction} 
\label{appendix:identity_perturbation}
%%% \pax{I am writing this rather tersely}

In this section, we provide an analysis of the identity perturbation construction and show that it is PSD provided that $ n \leq d^2/\text{polylog}(d)$. Again without loss of generality, we assume that $n \geq d$. Recall that
\[
X := X_{\mathrm{IP}} = \frac{1}{d} I_d + \calA^*(c)
\]
where $c$ is chosen such that $\calA(X) = 1_n$. By direct calculation and the invertibility of $(\calA \calA^*)^{-1}$, there is a unique vector $c$ satisfying the constraint $\calA(X) = 1_n$, and it is given by $c = -\frac{1}{d} (\calA \calA^*)^{-1}w$, where recall $w_i = \|v_i\|^2 - d$. 
% Thus to prove that $X \succeq 0$, it suffices to show that 
% \begin{align}
%     \label{eqn:XIP_main} 
%     \| \calA^*(\calA \calA^*)^{-1} w \|_{op}
%     = o(1). 
% \end{align}
Again by the Woodbury formula \cite{woodbury1950inverting}, it holds that 
\begin{align}
	(\calA \calA^*)^{-1}w = (B + UCV)^{-1} w
	= B^{-1}w - B^{-1} U ( C^{-1} + V B^{-1} U)^{-1}  V B^{-1} w
\end{align}
where $B, U, C,$ and $V$ are defined in Section \ref{appendix:woodbury}.
Using a similar calculation as in Section \ref{appendix:woodbury} and recalling also the definitions of $r, s,$ and $u$ given there, we obtain
% \[
% (\calA \calA^*)^{-1}w = B^{-1}w - B^{-1} U ( C^{-1} + V B^{-1} U)^{-1}  V B^{-1} w
% = (\frac{ u+ sd }{ru - s^2}) B^{-1} 1_n
% - (\frac{s + rd}{ru - s^2}) B^{-1} w. 
% \]

% ===============
% Using a similar calculation as in Section \ref{appendix:woodbury}, we have
\begin{align*}
(\calA \calA^*)^{-1}w &= B^{-1}w - B^{-1} U ( C^{-1} + V B^{-1} U)^{-1}  V B^{-1} w
\\&= B^{-1}w - \frac{1}{ru - s^2} B^{-1} U 
\begin{pmatrix}
    u & -s \\
    -s & r
\end{pmatrix} \begin{pmatrix}
    s - 1 \\
    u + d 
\end{pmatrix} 
\\&= (\frac{ u+ sd }{ru - s^2}) B^{-1} 1_n
- (\frac{s + rd}{ru - s^2}) B^{-1} w
\end{align*}
Hence, 
\begin{align}
\label{eqn:XIP_decomp}
    X =   \frac{1}{d} I_d + \frac{1}{d}(\frac{ u+ d }{s^2 - ru}) \calA^* B^{-1} 1_n
+  \frac{1}{d}(\frac{d(s-1) }{s^2 - ru}) \calA^* B^{-1} 1_n
- \frac{1}{d}(\frac{s + rd}{s^2 - ru}) \calA^* B^{-1} w.
\end{align}
By \eqref{eqn:wwT_trace}, it holds that with high probability
\[
u + d = w^T B^{-1} w = \Theta(\frac{1}{d^2}) \| w \|_2^2, 
\]
which implies in particular that $u + d = w^T B^{-1} w > 0$. Moreover, $s^2 - ru = \Omega(n/d)$ with high probability by \eqref{eqn:s2-ru}, so also $s^2 - ru \geq 0$. It follow from Lemma \ref{lemma:Astar-Binv-1} that the second term of \eqref{eqn:XIP_decomp} is PSD with high probability. 

Next we show that the third term of \eqref{eqn:XIP_decomp} satisfies
\begin{align}
\label{eqn:XIP_decomp3}
    \| \frac{1}{d} (\frac{d(s-1)}{s^2 - ru}) \calA^* B^{-1} 1_n \|_{op} = o(1/d)
\end{align}
with high probability.
In the proof of Lemma \ref{lemma:1T-Binv-w}, we in fact showed that $|s-1| = |1_n^T B^{-1} w| = \tilde o(n/d^2)$. Moreover, the proof of Lemma \ref{lemma:Astar-Binv-1} implies also that $\| \calA^* B^{-1} 1_n \|_{op} = O(n/d^2)$. Thus \eqref{eqn:XIP_decomp3} follows from $s^2 - ru = \Omega(n/d)$ (see \eqref{eqn:s2-ru}) assuming that $n \leq d^2/\text{polylog}(d)$. 

Moreover, by Lemmas \ref{lemma:1T-Binv-1}, \ref{lemma:1T-Binv-w}, and \ref{lemma:Astar-Binv-w} as well as \eqref{eqn:s2-ru}, we obtain that the last term of \eqref{eqn:XIP_decomp} is $o(1/d)$ in operator norm assuming that $n \geq d$ and $n \leq d^2/\text{polylog}(d)$.

Combining the results for the last three terms of \eqref{eqn:XIP_decomp}, we conclude that $X = X_{\mathrm{IP}} \succeq 0$ with high probability, as desired. \qed 

% =====================================

% Applying Lemmas \ref{lemma:1T-Binv-1}--\ref{lemma:Astar-Binv-w} and \eqref{eqn:s2-ru}, we obtain \eqref{eqn:XIP_main}. 

\section{Notes on a previous approach}
\label{sec:prev-approach}
In this section, we discuss the mistake appearing in a previous version of this paper, sketch how this mistake can be repaired, and explain why we instead use the Woodbury matrix identity in the current paper.

The approach used in the previous version of this paper was as follows. Taking $M = (d^2 + d)I + d{1_n}1_n^T$ and $\Delta = M - AA^{*}$,\footnote{Note that this $\Delta$ is different from the one used in the rest of the current paper.} we have that
\[
X_{\text{LS}} = \calA^{*}((\calA \calA^{*})^{-1}1_n) = \calA^{*}((I_n - M^{-1}\Delta)^{-1}M^{-1}1_n).
\]
Expanding $(I_n - M^{-1}\Delta)^{-1}$ as a Neumann series:
\[
(I_n - M^{-1}\Delta)^{-1} = I_n + M^{-1}\Delta + \sum_{j=2}^{\infty}{(M^{-1}\Delta)^{j}}
\]
and observing that $M^{-1}1_n = \frac{1}{d^2 + d + dn}1_n$, we have that 
\[
(d^2 + d + dn)X_{\text{LS}} = \calA^{*}(1_n) + \calA^{*}((M^{-1}\Delta)1_n) + \calA^{*} \left(\left(\sum_{j=2}^{\infty}{(M^{-1}\Delta)^{j}}\right)1_n \right).
\]
It is a standard fact that when $n= \Omega(d)$, with high probability $\lambda_{min}(\calA^{*}(1_n)) = \lambda_{min}(\sum_{i=1}^n v_i v_i^T)$ is $\Omega(n)$. In order to show that $X_{\text{LS}} \succeq 0$ with high probability, it is sufficient to show that with high probability:
\begin{enumerate}
    \item $\|M^{-1}\Delta\|_{op} < 1$,
    \item $\|\calA^{*}(M^{-1}\Delta 1_n)\|_{op} = o(n)$,
    \item For all $j \geq 2$, $\|\calA^{*} ((M^{-1}\Delta)^{j}1_n)\|_{op} = o(n)$.
\end{enumerate}
Proposition 5.2 of the previous version of this paper claimed that with high probability, $\|\Delta{1_n}\|_{\infty} = \tilde{O}(d\sqrt{n})$ which implies that $\|M^{-1}\Delta{1_n}\|_{\infty} = \tilde{O}(\frac{d\sqrt{n}}{d^2}) = o(1)$. In turn, this implies that $\|M^{-1}\Delta{1_n}\|_{2} = o(\sqrt{n})$. By Lemma 3 of \cite{saunderson2011subspace}, with high probability $\|A^{*}\|_{2 \to op} = \Theta(d + \sqrt{n})$ so this implies that for all $j \geq 1$, $\|A^{*}(M^{-1}\Delta)^{j}1_n\| = o(n)$.

Unfortunately, this proposition is  incorrect. As we discuss below, the correct bound on $\|\Delta{1_n}\|_{\infty}$ is $\tilde{O}(n\sqrt{d})$. This gives $\|M^{-1}\Delta{1_n}\|_{\infty} = \tilde{O} \left(\frac{n\sqrt{d}}{d^2} \right) = \tilde{O} \left(n / d^{3/2} \right)$. This is sufficient when $n \ll d^{3/2}$ but is not sufficient when $n \gg d^{3/2}$. This means that in order to prove our result using this approach, we cannot consider $\calA^{*}$ and $(M^{-1}\Delta)^{j}1_n$ separately. Instead, we must analyze their product $\calA^{*}((M^{-1}\Delta)^{j}1_n)$.

\begin{remark}
If we showed the stronger statement $\|(M^{-1}\Delta)^{j}1_n\|_{\infty} = o(1)$ for all $j$ then we would have that every coordinate of 
$(I_n - M^{-1}\Delta)^{-1}1_n = 1_n + M^{-1}{\Delta}1_n + \sum_{j=2}^{\infty}{(M^{-1}\Delta)^{j}1_n}$ is positive. This implies that $X_{\text{LS}} \succeq 0$. We found experimentally that this is true when $n \ll d^{3/2}$. However, when $n \gg d^{3/2}$, $(I_n - M^{-1}\Delta)^{-1}{1_n}$ has negative coordinates, so the interaction between $\calA^{*}$ and $(I_n - M^{-1}\Delta)^{-1}{1_n}$ is crucial.
\end{remark}
\subsection{Computing $\Delta$ and $M^{-1}\Delta$}
In order to discuss why Proposition 5.2 of the previous version of this paper is incorrect and how to actually carry out this approach, it is helpful to express $\Delta$ and $M^{-1}\Delta$ in terms of graph matrices. Recall that:
\begin{align*}
{\calA}{\calA}^{*} &= M_{\alpha_1} + 2M_{\alpha_{2a}} + \sqrt{2}M_{\alpha_{2b}} + \sqrt{2}M_{\alpha_{2c}} + dM_{\alpha_{2d}} + 2M_{\alpha_{3a}} + 2\sqrt{2}(d-1)M_{\alpha_{3b}} + (d^2 - d)M_{\alpha_{3c}} \\
&\qquad+ \sqrt{24}M_{\alpha_{4}} + 6\sqrt{2}M_{\alpha_{3b}} + 3dM_{\alpha_{3c}}.
\end{align*}
where $\alpha_1$, $\alpha_{2a}$, $\alpha_{2b}$, $\alpha_{2c}$, $\alpha_{2d}$, $\alpha_{3a}$, $\alpha_{3b}$, $\alpha_{3c}$, and $\alpha_{4}$ are the following proper shapes:
\begin{enumerate}
    \item $\alpha_1$ is the same as in Section~\ref{sec:shapes}.
    \item $U_{\alpha_{2a}} = (u)$ and  $V_{\alpha_{2a}} = (v)$ where $u, v$ are circle vertices, $W_{\alpha_{2a}} = \{w\}$ where $w$ is a square vertex, and $E(\alpha_{2a}) = \{\{u,w\}_2, \{w,v\}_2\}$.
    \item $U_{\alpha_{2b}} = (u)$ and  $V_{\alpha_{2b}} = (v)$ where $u, v$ are circle vertices, $W_{\alpha_{2b}} = \{w\}$ where $w$ is a square vertex, and $E(\alpha_{2b}) = \{\{u,w\}_2\}$.
    \item $U_{\alpha_{2c}} = (u)$ and  $V_{\alpha_{2c}} = (v)$ where $u, v$ are circle vertices, $W_{\alpha_{2c}} = \{w\}$ where $w$ is a square vertex, and $E(\alpha_{2c}) = \{\{w,v\}_2\}$.
    \item $U_{\alpha_{2d}} = (u)$ and  $V_{\alpha_{2d}} = (v)$ where $u, v$ are circle vertices, $W_{\alpha_{2d}} = \{\}$, and $E(\alpha_{2d}) = \{\}$.
    \item $U_{\alpha_{3a}} = V_{\alpha_{3a}} = (u)$ where $u$ is a circle vertex, $W_{\alpha_{3a}} = \{w_1,w_2\}$ where $w_1,w_2$ are square vertices, and $E(\alpha_{3a}) = \{\{u,w_1\}_2, \{u,w_2\}_2\}$.
    \item $U_{\alpha_{3b}} = V_{\alpha_{3a}} = (u)$ where $u$ is a circle vertex, $W_{\alpha_{3b}} = \{w\}$ where $w$ is a square vertex, and $E(\alpha_{3b}) = \{\{u,w\}_2\}$.
    \item $U_{\alpha_{3c}} = V_{\alpha_{3c}} = (u)$ where $u$ is a circle vertex, $W_{\alpha_{3c}} = \{\}$, and $E(\alpha_{3c}) = \{\}$.
    \item $U_{\alpha_{4}} = V_{\alpha_{4}} = (u)$ where $u$ is a circle vertex, $W_{\alpha_{4}} = \{w\}$ where $w$ is a square vertex, and $E(\alpha_{4}) = \{\{u,w\}_4\}$.
\end{enumerate}
%Insert picture here.

Since $M_{\alpha_{2d}} = 1_{n}1_{n}^T - I_n$ and $M_{\alpha_{3c}} = I_n$, we have that:
\[
{\calA}{\calA}^{*} = (d^2 + d)Id_{n} + d{1_n}{1_n^T} + M_{\alpha_1} + 2M_{\alpha_{2a}} + \sqrt{2}M_{\alpha_{2b}} + \sqrt{2}M_{\alpha_{2c}} + 2M_{\alpha_{3a}} + (2\sqrt{2}d + 4\sqrt{2})M_{\alpha_{3b}} + \sqrt{24}M_{\alpha_{4}}.
\]
Since $M = (d^2 + d)I_{n} + d{1_n}1_n^T$, 
\[
\Delta = M - {\calA}{\calA}^{*} = -M_{\alpha_1} - 2M_{\alpha_{2a}} - \sqrt{2}M_{\alpha_{2b}} - \sqrt{2}M_{\alpha_{2c}} - 2M_{\alpha_{3a}} - (2\sqrt{2}d + 4\sqrt{2})M_{\alpha_{3b}} - \sqrt{24}M_{\alpha_{4}}.
\]
We now compute $M^{-1}{\Delta}$. Since $M = (d^2 + d)I + d{1_n}1_n^T$, we have that $M^{-1} = \frac{1}{d^2 + d}\left(I - \frac{1}{n + d + 1}{1_n}1_n^T\right)$.
\begin{definition}
Define $\alpha_J$ to be the shape with $U_{\alpha_J} = (u)$, $V_{\alpha_J} = (v)$, $W_{\alpha_J} = \emptyset$, and $E(\alpha_J) = \emptyset$.
\end{definition}
Since ${1_n}1_n^T = I + M_{\alpha_J}$, we have
\[
M^{-1} = \frac{1}{d^2 + d}\left(\frac{n+d}{n+d+1}I - \frac{1}{n + d + 1}M_{\alpha_J}\right).
\]
We now compute the product of $M_{\alpha_J}$ with each of the graph matrices appearing in $\Delta$.
\begin{enumerate}
    \item $M_{\alpha_J}M_{\alpha_1} = M_{\alpha_5} + M_{\alpha_1}$ where $M_{\alpha_5}$ is the shape such that $U_{\alpha_{5}} = (u)$ and  $V_{\alpha_{5}} = (v)$ where $u, v$ are circle vertices, $W_{\alpha_{5}} = \{w_{\circ},w_1,w_2\}$ where $w_{\circ}$ is a circle vertex and $w_1,w_2$ are square vertices, and $E(\alpha_{1}) = \{\{w_{\circ},w_1\}, \{w_{\circ},w_2\}, \{w_1,v\},\{w_2,v\}\}$.
    \item $M_{\alpha_J}M_{\alpha_{2a}} = M_{\alpha_{6a}} + M_{\alpha_{2a}}$ where $M_{\alpha_{6a}}$ is the shape such that $U_{\alpha_{6a}} = (u)$ and $V_{\alpha_{6a}} = (v)$ where $u, v$ are circle vertices, $W_{\alpha_{6a}} = \{w_{\circ},w\}$ where $w_{\circ}$ is a circle vertex and $w$ is a square vertex, and $E(\alpha_{6a}) = \{\{w_{\circ},w\}_2, \{w,v\}_2\}$.
    \item $M_{\alpha_J}M_{\alpha_{2b}} = M_{\alpha_{6b}} + M_{\alpha_{2c}}$ where $M_{\alpha_{6b}}$ is the shape such that $U_{\alpha_{6b}} = (u)$ and $V_{\alpha_{6b}} = (v)$ where $u, v$ are circle vertices, $W_{\alpha_{6b}} = \{w_{\circ},w\}$ where $w_{\circ}$ is a circle vertex and $w$ is a square vertex, and $E(\alpha_{6b}) = \{\{w_{\circ},w\}_2\}$.
    \item $M_{\alpha_J}M_{\alpha_{2c}} = (n-2)M_{\alpha_{2c}} + M_{\alpha_{2b}}$.
    \item $M_{\alpha_J}M_{\alpha_{3a}} = M_{\alpha_{7a}}$
    where $M_{\alpha_{7a}}$ is the shape such that $U_{\alpha_{7a}} = (u)$ and $V_{\alpha_{7a}} = (v)$ where $u, v$ are circle vertices, $W_{\alpha_{7a}} = \{w_1,w_2\}$ where $w_1,w_2$ are square vertices, and $E(\alpha_{7a}) = \{\{v,w_1\}_2, \{v,w_2\}_2\}$.
    \item $M_{\alpha_J}M_{\alpha_{3b}} = M_{\alpha_{7b}}$
    where $M_{\alpha_{7b}}$ is the shape such that $U_{\alpha_{7b}} = (u)$ and $V_{\alpha_{7b}} = (v)$ where $u, v$ are circle vertices, $W_{\alpha_{7b}} = \{w\}$ where $w$ is a square vertex, and $E(\alpha_{7b}) = \{\{v,w\}_2\}$.
    \item $M_{\alpha_J}M_{\alpha_{4}} = M_{\alpha_{8}}$
    where $M_{\alpha_{8}}$ is the shape such that $U_{\alpha_{8}} = (u)$ and $V_{\alpha_{8}} = (v)$ where $u, v$ are circle vertices, $W_{\alpha_{8}} = \{w\}$ where $w$ is a square vertex, and $E(\alpha_{8}) = \{\{v,w\}_4\}$.
\end{enumerate}
%Insert pictures here.

Putting everything together, we have that 
\begin{align*}
    M^{-1}\Delta = &\frac{1}{(d^2 + d)(n + d + 1)}\Big(-(n+d)M_{\alpha_1} + \left(M_{\alpha_5} + M_{\alpha_1}\right) -2(n+d)M_{\alpha_{2a}} + 2\left(M_{\alpha_{6a}} + M_{\alpha_{2a}}\right)\\
    &-\sqrt{2}(n+d)M_{\alpha_{2b}} + \sqrt{2}\left(M_{\alpha_{6b}} + M_{\alpha_{2c}}\right)
    -\sqrt{2}(n+d)M_{\alpha_{2c}} + \sqrt{2}\left((n-2)M_{\alpha_{2c}} + M_{\alpha_{2b}}\right)\\
    &-2(n+d)M_{\alpha_{3a}} + 2M_{\alpha_{7a}} -(2\sqrt{2}d + 4\sqrt{2})(n+d)M_{\alpha_{3b}} + (2\sqrt{2}d + 4\sqrt{2})M_{\alpha_{7b}} \\
    &-\sqrt{24}(n+d)M_{\alpha_{4}} + \sqrt{24}M_{\alpha_{8}}\Big) \\
    =&\frac{1}{(d^2 + d)(n + d + 1)}\Big(-(n+d-1)M_{\alpha_1} -2(n+d-1)M_{\alpha_{2a}}
    -\sqrt{2}(n+d-1)M_{\alpha_{2b}}\\
    &-\sqrt{2}(d-1)M_{\alpha_{2c}} -2(n+d)M_{\alpha_{3a}} -(2\sqrt{2}d + 4\sqrt{2})(n+d)M_{\alpha_{3b}} -\sqrt{24}(n+d)M_{\alpha_{4}}\\
    &+ M_{\alpha_5} + 2M_{\alpha_{6a}}  + \sqrt{2}M_{\alpha_{6b}} + 2M_{\alpha_{7a}}  + (2\sqrt{2}d + 4\sqrt{2})M_{\alpha_{7b}} + \sqrt{24}M_{\alpha_{8}}\Big).
\end{align*}

Now that we have obtained a graph matrix decomposition of $M^{-1}\Delta$, we demonstrate why Proposition~5.2 of the previous version of this paper is incorrect. Note that $M^{-1}\Delta$ contains the term \[ -\frac{1}{(d^2 + d)(n + d + 1)}\sqrt{2}(n+d - 1)M_{\alpha_{2b}}. \]
We now make the following observations:
\begin{enumerate}
\item $M_{\alpha_{2b}}1_{n} = (n-1)M_{\alpha_{w}}$.
\item $M_{\alpha_{w}}$ is an $n \times 1$ vector, $\|M_{\alpha_{w}}\|_{2} = \tilde{O}(\sqrt{dn})$, and $\|M_{\alpha_{w}}\|_{\infty} = \tilde{O}(\sqrt{d})$.
\end{enumerate}
Putting these observations together, the contribution to $\|M^{-1}{\Delta}1_n\|_{\infty}$ from the $M_{\alpha_{2b}}$ term of $M^{-1}{\Delta}$ is $\tilde{O}(\frac{n\sqrt{d}}{d^2}) = \tilde{O}(\frac{n}{d^{3/2}})$. It can be checked that the contribution to $\|M^{-1}{\Delta}1_n\|_{\infty}$ from the other terms of $M^{-1}{\Delta}$ is smaller. Thus, the correct bound is $\norm{M^{-1} \Delta 1_n}_\infty = O(n/d^{3/2})$.

%Insert picture here.
\subsection{Proof sketch for repairing the argument}
While this proposition is not correct, the three statements needed for this approach to succeed are correct. For convenience, we recall these statements here.
\begin{enumerate}
    \item $\|M^{-1}\Delta\|_{op} < 1$,
    \item $\|\calA^{*}(M^{-1}\Delta 1_n)\|_{op} = o(n)$,
    \item For all $j \geq 2$, $\|\calA^{*} ((M^{-1}\Delta)^{j}1_n)\|_{op} = o(n)$.
\end{enumerate}
To show these statements, we can follow the proof of Lemma~\ref{lem:min-vertex-sep} to show that in all of the terms which appear in these expressions, the minimum vertex separator consists of one square vertex. We can then use the similar weighting schemes to bound these terms. Two notable cases for $w_{actual}$ are as follows.
\begin{enumerate}
\item For $\alpha_{2b}$ where the square appears again to the left, but not to the right, and the edge vanishes, we assign $\sqrt{n}$ to each vertex and $1$ to the square (which is an under-assignment). This means that each time $\alpha_{2b}$ appears, we may have a debt for this square. Fortunately, we can use the same ideas as before. In particular, we can pay off this debt by making the edge in $\alpha_{2b}$ a right-critical edge if it vanishes and finding the corresponding left-critical edge.
\item For $\alpha_{2c}$, because of $M^{-1}$ we have a coefficient of $O(\frac{1}{nd})$ rather than $O(\frac{1}{d^2})$. This allows us to assign weight $\sqrt{n}$ to the two circle vertices and $d$ to the square vertex which is sufficient to handle any debt. Note that this is one of the cases which can have a left-critical edge.
\end{enumerate}

While this approach can be made to work, we use the Woodbury matrix identity approach in the current paper for two reasons. First, it gives a better approximation to $\calA \calA^*$. Second, it requires less casework and only requires paying off debt for one square, instead of several such squares. 

\section*{Acknowledgements}
We thank Sinho Chewi for helpful discussions during the early stages of this project and Tselil Schramm for helpful conversations and making us aware of the work of Amelunxen et al.~\cite{amelunxen2014living}. We also thank Yue Lu, Subhabrata Sen and Nati Srebro for helpful discussions. Prayaag Venkat and Alex Wein thank the Simons Institue for hosting them for the Fall 2021 program on Computational Complexity of Statistical Inference, during which part of this work was done.

%\newpage
\bibliographystyle{alpha}
\bibliography{bib}
\end{document}